\documentclass{article}
\usepackage[T1]{fontenc}
\usepackage[letterpaper,margin=1.00in]{geometry}
\usepackage{amsmath, amssymb, amsthm, amsfonts}
\usepackage{bbm}
\usepackage{amscd}
\usepackage{mathrsfs}

\usepackage{comment} % \begin{comment}\end{comment} to comment stuff that may be uncommented later on
\usepackage{ifthen}
\usepackage{tikz}
\usetikzlibrary{positioning,decorations.pathreplacing}
\usepackage{ bbold }
\usepackage{appendix}
\usepackage{graphicx}
\usepackage{algorithm}
\usepackage[noend]{algpseudocode}
\usepackage{epstopdf}
\usepackage{wrapfig}
\usepackage{paralist}

%\usepackage{bibentry} % for the bibliography at the beinning of the thesis
%\nobibliography*

\usepackage{changepage} % change page margins

\usepackage{xcolor}
\usepackage{color, colortbl}
\definecolor{Gray}{gray}{0.9}

\usepackage{mathdots} % diagonal dots
\usepackage{mathabx} % bigtimes, clashes with wasysym
\usepackage[disable]{todonotes}
%[textsize=tiny] normally
%[disable] to disable

\usepackage{listings} % listing code
% e.g. \begin{lstlisting}[language=Python] for python

%\setlength{\marginparwidth}{2cm}
%apparently can fix todonotes if they are outside of the page

\usepackage{pdflscape} % \clearpage\begin{landscape}...\end{landscape}\clearpage for landscape page
\usepackage{pifont} % DROP if dont satisfy thesis constraints

%%%DANGEROUS!!!!
%\usepackage[numbers,sectionbib,longnamesfirst]{natbib}
\usepackage[backend=biber,style=alphabetic,sorting=nty,maxnames=10]{biblatex}         % citation style AUTHOR (YEAR), or AUTHOR [NUMBER]
\let\citet\textcite
\addbibresource{ref.bib}

\usepackage{framed}
\usepackage[framemethod=tikz]{mdframed}
\usepackage[bottom]{footmisc}
\usepackage{enumitem}
\setitemize{noitemsep,topsep=3pt,parsep=3pt,partopsep=3pt}
\usepackage[font=small]{caption}
%\captionsetup{font=small}
\usepackage{xspace}

\usepackage{thmtools} % not sure what is it good for but probably has to do something with thm-restate
\usepackage{thm-restate} % \begin{restatable}{theorem}{thmmain}..\end{restatable}
% and then \thmmain*
% common error: \begin{restatable}{theorem}[thmmain]

% CL: Moved this here so it is ``seen'' by hyperref. Otherwise I got ``destination with same identifier'' issues for all theorems, lemmas, etc. with an up-to-date latex installation at home (worked on the crappy out-of-date one at work, though :D).

\usepackage{hyperref}
\hypersetup{
    unicode=false,          % non-Latin characters in Acrobat’s bookmarks
    colorlinks=true,        % false: boxed links; true: colored links
    linkcolor=red,          % color of internal links (change box color with linkbordercolor)
    citecolor=darkgreen,        % color of links to bibliography
    filecolor=magenta,      % color of file links
    urlcolor=cyan           % color of external links
}

\newtheorem{theorem}{Theorem}[section] 
\newtheorem{lemma}[theorem]{Lemma}
\newtheorem{meta-theorem}[theorem]{Meta-Theorem}

\newtheorem{corollary}[theorem]{Corollary}

\newtheorem{definition}[theorem]{Definition}
\newtheorem{problem}[theorem]{Problem}

\usepackage[capitalize, nameinlink,noabbrev]{cleveref}
%capitalize capitalizes the first letter
%nameinlink probably puts the name in link

%\crefformat{footnote}{#2\footnotemark[#1]#3}
\crefname{theorem}{Theorem}{Theorems}
\crefname{proposition}{Proposition}{Propositions}
\crefname{observation}{Observation}{Observations}
\crefname{lemma}{Lemma}{Lemmas}
\crefname{claim}{Claim}{Claims}
\crefname{problem}{Problem}{Problems}
\crefname{conjecture}{Conjecture}{Conjectures}
\crefname{question}{Question}{Questions}
\crefname{example}{Example}{Examples}
\crefname{fact}{Fact}{Facts}

\definecolor{darkgreen}{rgb}{0,0.5,0}

\algnewcommand\algorithmicswitch{\textbf{switch}}
\algnewcommand\algorithmiccase{\textbf{case}}

% New "environments"
\algdef{SE}[SWITCH]{Switch}{EndSwitch}[1]{\algorithmicswitch\ #1\ \algorithmicdo}{\algorithmicend\ \algorithmicswitch}%
\algdef{SE}[CASE]{Case}{EndCase}[1]{\algorithmiccase\ #1}{\algorithmicend\ \algorithmiccase}%
\algtext*{EndSwitch}%
\algtext*{EndCase}%
%----------------------------------------------------------------------

\renewcommand{\tilde}{\widetilde}

\newcommand{\fA}{\mathcal{A}}

\newcommand{\fC}{\mathcal{C}}

\newcommand{\fE}{\mathcal{E}}

\newcommand{\fG}{\mathcal{G}}
\newcommand{\fH}{\mathcal{H}}

\newcommand{\fP}{\mathcal{P}}

\newcommand{\eps}{\varepsilon}

\newcommand{\e}{\textrm{e}}
\newcommand{\poly}{\operatorname{poly}}
\newcommand{\polylog}{\poly\log}

\renewcommand{\phi}{\varphi}

\newcommand{\E}{\textrm{E}}

\newcommand{\R}{\mathbb{R}}
\newcommand{\N}{\mathbb{N}}

\renewcommand{\paragraph}[1]{\vspace{0.15cm}\noindent {\bf #1}:}

%--------------------------- Full Or Short --------------------------------------------- ---------------------------------------------------------------------------------------- ----------------------------------------------------------------------------------------

 %This is for printing purposes while writing/editing
\newcommand{\FullOrShort}{full}
\ifthenelse{\equal{\FullOrShort}{full}}{
  
  \newcommand{\fullOnly}[1]{#1}
  \newcommand{\shortOnly}[1]{}

  }{

    \newcommand{\fullOnly}[1]{}
    \newcommand{\IncludePictures}[1]{}
   
  }

%-----------------------------------------------------------------------------------

%%% thesis specific macros

\newcommand{\tbox}[1]{
\vspace{0.5em}
\begin{minipage}{\textwidth}
\begin{mdframed}
\begin{quote}
    \emph{#1}
 \end{quote}
\end{mdframed}
\end{minipage}
}

\newboolean{survey}
\setboolean{survey}{false}

\usepackage{titling}  % Add this package

\title{Invitation to Local Algorithms}
\author{Václav Rozhoň}
\date{December 2023}

\begin{document}
\maketitle

% \begin{titlingpage}
\begin{abstract}
    This text provides an introduction to the field of distributed local algorithms -- an area at the intersection of theoretical computer science and discrete mathematics. We collect many recent results in the area and demonstrate how they lead to a clean theory. We also discuss many connections of local algorithms to areas such as parallel, distributed, and sublinear algorithms, or descriptive combinatorics.
\end{abstract}
% \end{titlingpage}

% \section*{Introduction}
% \label{sec:introduction}

% \todo{check: mohsen, Sebastian, christoph, honza, jukka, Yannic, magnus, anton, aranka, fabian}

\tableofcontents
    
\section{Local Complexity Fundamentals}
\label{chap:1_local_complexity_fundamentals}

The field of local algorithms is an area on the border of theoretical computer science and discrete mathematics where a lot of progress has happened in the past decade. 
This text is trying to serve as an introductory material presenting a certain view of the field. It aims to be helpful to beginning researchers in the area or researchers working in adjacent areas. 

There are already many resources on various aspects of local algorithms: the classical book of \citet{peleg2000book}, survey of \citet{suomela_survey}, book of \citet{barenboimelkin_book}, lecture notes by \citet{ghaffari2020DistributedGraphAlgorithms_lecturenotes}, introductory text of \citet{suomela2020round_elimination_intro}, or a recent book by \citet{hirvonensuomela2020book}. 
Unlike other texts, this one primarily explores the field's conceptual framework and complexity-theoretical aspects, rather than delving into individual problems. 
If you find errors in the text, please let me know -- this is the first version of it so there will be many! Several researchers generously gave me feedback on a preliminary version of this text; my thanks go to Anton Bernshteyn, 
Yi-Jun Chang, 
Jan Grebík, 
Yannic Maus,
Seth Pettie, Jukka Suomela, and Goran Zuzic, and especially to Mohsen Ghaffari and Seri Khoury. 

\vspace{1em}

In the first section, we introduce local algorithms and local problems in \cref{sec:1first_example}. 
We then carefully discuss the appropriate formal definitions in \cref{sec:1formal_definitions}. The following sections \cref{sec:1sequential_vs_distributed,sec:1derandomization,sec:1network_decomposition} discuss the basic theory of local algorithms and aim to convey that we are after a very clean, fundamental, and robust concept. 
Finally, \cref{sec:3concrete_problems} surveys some known results for concrete local problems.

\subsection{First Example}
\label{sec:1first_example}

Consider a very long oriented cycle that we want to properly color with as few colors as possible (see \cref{fig:intro-example}). Two colors are enough if the number of vertices $n$ is even, otherwise we need three colors.  
% \begin{figure}
%     \centering
%     \includegraphics[width = \textwidth]{}
%     \caption{Caption}
%     \label{fig:2coloring}
% \end{figure}
However, there is something uneasy about the $2$-coloring solution even when it is possible -- the solution lacks any flexibility.  
A decision to color any particular vertex with one of the two colors already implies how all the other vertices are going to be colored. 

This lack of flexibility can be undesirable for all kinds of reasons, typically when we want to design a coloring algorithm that is in some way parallel or distributed. 
If we enlarge our palette to three colors, the problem seems to go away though: Now, coloring one vertex red still implies that its neighbors are not red, but other vertices can have an arbitrary color. 
% \begin{figure}
%     \centering
%     \includegraphics[width = \textwidth]{}
%     \caption{Caption}
%     \label{fig:3coloring}
% \end{figure}

Imagine that there is a computer in every vertex of the cycle and neighboring computers can communicate. 
The computers are trying to solve the coloring problem together. How many rounds of communication are needed until each computer outputs its color? 
A message-passing algorithm of this sort is known as a \emph{local algorithm} and we define it formally in \cref{sec:1formal_definitions}. 

It is possible to convince oneself that in the case of the 2-coloring problem, at least around $n/4$ rounds are necessary to solve it, even if $n$ is divisible by two.\footnote{Consider two opposing nodes $u,v$ in the cycle graph: In less than $n/4$ rounds of communication, there is no third node that could send a message to both $u$ and $v$. Intuitively, the two vertices then cannot know whether their distance is even or odd. This argument can be made into a rigorous lower bound after the model of local algorithms is properly defined in \cref{sec:1formal_definitions}. } But what about the 3-coloring scenario? Can we solve that problem in $10$ rounds of communication? Or $O(\log n)$? Or is it similarly hard to 2-coloring? 

% Before we answer this question, let us note that we need to be a bit careful about how we specify the computational model. There are two reasonable possibilities: 

% \begin{enumerate}
%     \item \emph{Randomized algorithms:} Every computer has a coin that it can use to sample independent random bits. 
%     \item \emph{Deterministic algorithms:} Every computer starts with a unique identifier (think of a MAC address) that has $O(\log n)$ bits. 
% \end{enumerate}

% The unique identifiers in case of deterministic algorithms are needed since otherwise deterministic algorithms would be extremely weak -- in our example of a cycle, every computer sees the same local neighborhoods and without an additional way of breaking the symmetry, all vertices have to output the same color simply because of the symmetry of the cycle.

% \begin{figure}
%     \centering
%     \includegraphics[width = \textwidth]{}
%     \caption{Caption}
%     \label{fig:model}
% \end{figure}

%Going back to our $3$-coloring problem, w
There indeed is a simple randomized local algorithm that solves our $3$-coloring problem after $O(\log n)$ rounds of exchanging messages. Let's describe it next. 
The algorithm should serve as an example that nontrivial local algorithms are indeed possible, though we will later see a better algorithm for the coloring problem. 

\begin{figure}
    \centering
    \includegraphics[width = \textwidth]{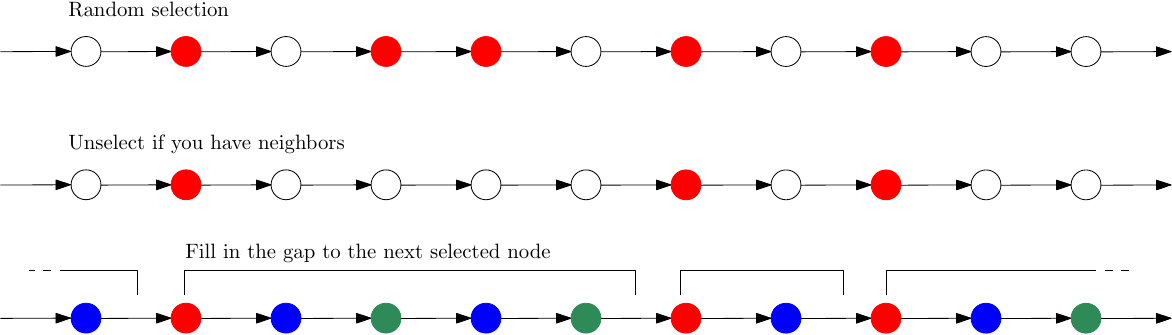}
    \caption{An example local algorithm that uses three colors to color a long cycle, a small part of which is shown. First, every vertex flips a coin and selects itself with probability $1/2$. Second, a vertex unselects itself whenever a neighbor is selected. Third, selected vertices color themselves red and each selected vertex is then responsible for coloring the subsequent vertices until the next selected one with alternating colors. }
    \label{fig:intro-example}
\end{figure}

\vspace{1em}

The $3$-coloring algorithm has two phases. In the first phase, every computer flips a coin and selects itself with probability 1/2 (top picture in \cref{fig:intro-example}). Subsequently, the vertex asks its neighbors whether they are also selected. If at least one neighbor is selected, the vertex unselects itself (middle picture in \cref{fig:intro-example}).

%We are now ready for the second phase of the algorithm. 
In the second phase of the algorithm, every selected vertex first colors itself red. 
Then, it is responsible for coloring the yet uncolored vertices until the next red vertex. 
The red node sends a message in the direction of edges, asking the subsequent vertices to color themselves by alternating the two remaining colors (see the bottom picture in \cref{fig:intro-example}). 
%Specifically,  colors itself red and then colors the following vertices by alternating blue and green colors, until we reach the next selected vertex (bottom picture in \cref{fig:intro-example}). 
This algorithm properly colors the oriented cycle with $3$ colors and the number of communication rounds that it needs is asymptotically at most as large as the length of the longest run of non-red vertices in our coloring. % in $O(\log n)$ rounds. 

% \begin{figure}
%     \centering
%     \includegraphics[width = \textwidth]{}
%     \caption{$O(\log n)$ pod dolním obrázkem}
%     \label{fig:3coloring_algorithm}
% \end{figure}

To understand this quantity, consider any run of $\ell$ consecutive nodes and let us upper bound the probability that the run does not contain any red node. % How long does this run need to be so that we are certain that it contains at least one red node? 
%two selected vertices that survived the unselection operations and that are consecutive, i.e., there is no other selected vertex on the shortest path between them. How far from each other can the two vertices be? 
%What are the lengths of the runs of unselected vertices between two consecutive selected vertices that survived the unselection? 
% It turns out that the distance between any two such vertices is only $O(\log n)$, with $1 - 1/\poly(n)$ probability. We will call this guarantee as ``with high probability'' later on. 
This is done by splitting the run into consecutive triples of vertices. For every triple, we know that with probability $1/8$, its middle node is initially selected, while its two neighbors are not. The selected middle node then remains selected after the end of the first phase and is colored red. % Every node in the middle of a triple has a probability $1/8$ of being a surviving selected node, independently of what happens outside the triple. 
Making this argument for every triple and using that the appropriate events are independent, we conclude that the probability of no red node in the run is at most $(7/8)^{\lfloor \ell/3\rfloor}$. 
Taking $\ell = O(\log n)$ and union bounding over all $n$ different runs of vertices of length $\ell$, we conclude that all of them contain at least one red node with $1 - 1/\poly(n)$ probability. We will call this guarantee ``with high probability'' later on. That is, our algorithm finishes after $O(\log n)$ communication rounds, with high probability.   % decreases exponentially with $\ell$. % having more than $3\cdot \ell$ nodes without any surviving selected node has probability only $2^{-\Omega(\ell)}$. 

\vspace{1em}

% Is this the best algorithm we can get? Not at all! Turns out that the local complexity of 3-coloring is $\Theta(\log^* n)$, both in the deterministic and the randomized model. We will see both the upper bound and the lower bound in \cref{sec:2log_star_regime}. 
Surprisingly, the fastest local algorithm for the $3$-coloring problem has a much better, albeit not constant complexity of $O(\log^* n)$.\footnote{The function $\log^* n$ measures how many times we need to take the logarithm of $n$ until we get a value of size at most $2$, i.e., $\log^* 2^2 = 1, \log^* 2^{2^2} = 2$ and so on.  }
However, instead of focusing on specific algorithms, this text is trying to give a bit more general understanding of what is going on here. 
For example, it turns out that if we come up with any local algorithm with complexity $O(\log n)$ for any reasonable problem defined on the cycle as we just did, there is a general theory that can turn this algorithm into a faster, $O(\log^* n)$-round, algorithm for the very same problem (see \cref{thm:classification_paths}). Clearly, something interesting is going on here!

% \paragraph{What local algorithms are about}
% The area of local algorithms is about huge graphs and how one can study locality in them. While we introduced local algorithms as message-passing protocols between computers, instead of asking f
% The basic question is: Which problems can we solve locally?  By that, we mean that if we want to understand just a part of the solution at one vertex, we want to be able to find out what it looks like just by looking at the local neighborhood of that vertex. 

% \tbox{
% The field of local algorithms studies a fundamental property of graph problems: locality. That is, we study how far a node needs to look to learn its part of the solution.  
% }

\subsection{Formal Definitions}
\label{sec:1formal_definitions}

In this section, we formally define local problems and algorithms. %, and round (local) complexity. 
%The reader is encouraged to go through it very quickly and return back if necessary. 

\paragraph{Local problems}
We will be mostly interested in the so-called local problems. Informally speaking, these are the problems on graphs such that if the solution is incorrect, we can find out by looking at a small neighborhood of one vertex. 

 %In a colored $r$-hop neighborhood, we moreover assign a color to every vertex of the neighborhood.  
Given a graph $G$ and its node $u \in V(G)$, the \emph{ball} $B_G(u, r)$\footnote{We write $B(u,r)$ when $G$ is clear from the context.  } around $u$ of radius $r$ is the subgraph of nodes around $u$ up to distance $r$. % and it is an 
More generally, an \emph{$r$-hop neighborhood} is a graph with one highlighted node $v$ such that the radius of that graph measured from $v$ is at most $r$.
%example of an $r$-hop neighborhood. 
%We will often write $B(u,r)$ instead of $B_G(u,r)$. Let's now define what a local problem is. 

\begin{definition}[A local problem]
\label{def:local_problem}
    Local problem\footnote{Our definition is a simplified variant of the definition of the so-called locally checkable labeling problem by \citet*{naorstockmeyer}, discussed later in \cref{subsec:classification}. } $\Pi$ with checkability radius $r$ is formally a triplet $(S, r, \fP)$. Here, $S$ is a finite set of allowed labels and each $\fP$ is a set of allowed $S$-colored $r$-hop neighborhoods. 
    A solution to $\Pi$ in a graph $G$ is an assignment of color from $S$ to every vertex of $G$ such that for every $u \in V(G)$ we have $B_G(u, r) \in \fP$. 
\end{definition}

For example, $3$-coloring is a local problem for $S = \{\texttt{R},\texttt{G},\texttt{B}\}$, $r = 1$, and $\fP$ containing all properly colored $1$-hop neighborhoods. 
 On the other hand, a non-example of a local problem is coloring an input graph on $n$ vertices with $n$ colors: the local problem should not have different constraints for graphs of different sizes.  
Of course, while the theory of local algorithms is simplest for local problems as we defined them, the applications of local algorithms are not limited to local problems. 

%Sometimes, it is useful to consider local problems where the checkability radius or the number of colors grow with $n$ or problems defined for directed graphs, hypergraphs, colored graphs, etc. 

%However, in the rest of this survey, we are going to mostly discuss only local problems according to \cref{def:local_problem}. 
%This does not mean that other problems are not interesting and in \cref{chap:4_applications}, we will see a few applications of the local algorithms theory for many other problems. 

% \begin{figure}
%     \centering
%     \includegraphics[width = \textwidth]{}
%     \caption{Caption}
%     \label{fig:enter-label}
% \end{figure}

\begin{figure}
    \centering
    \includegraphics[width = \textwidth]{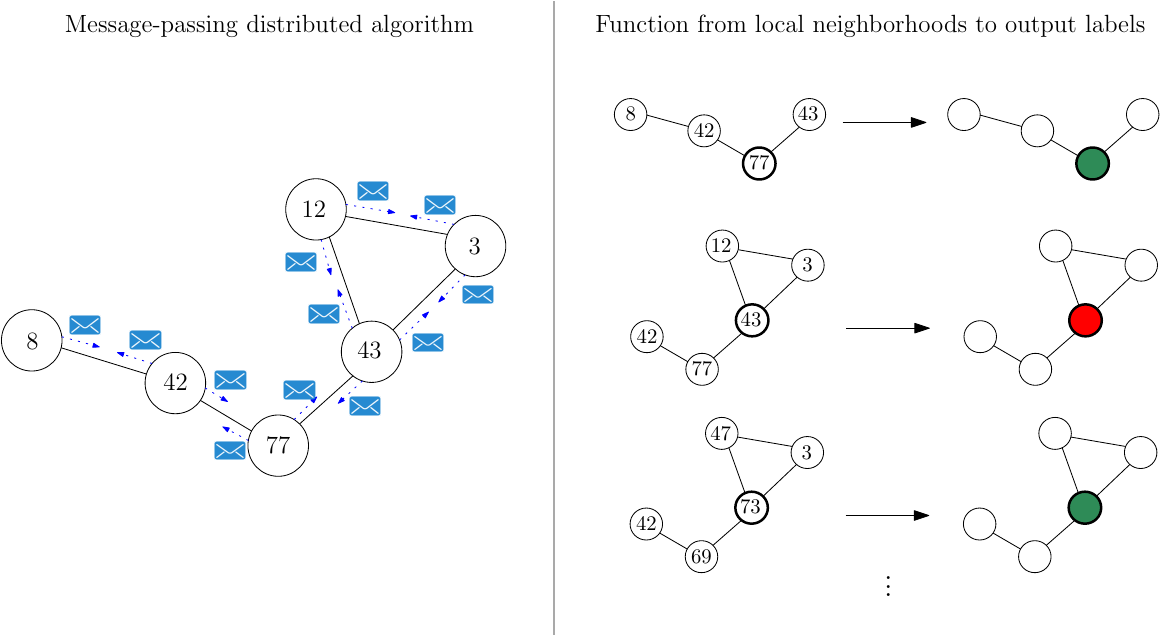}
    \caption{This picture shows the two fundamentally different ways of understanding local algorithms. \\
    Left: A $t(n)$-round local algorithm can be seen as  a distributed protocol where in each round, each node can send any message to any of its neighbors. The computers start with the knowledge of their unique identifier (or a random string). \\
    Right: A local algorithm with round complexity $t(n)$ can be seen as a function that maps each possible $t(n)$-hop neighborhood to an output label. 
    Applying this function to every vertex of the input graph always has to solve our problem: For example, if our problem is a coloring problem, the first two local neighborhoods in the above table need to map the two vertices with labels 77 and 43 to different colors, since the two labeled 2-hop neighborhoods could be a part of the same graph (which is, in fact, shown on the left). }
    \label{fig:two-definitions}
\end{figure}

\paragraph{Local algorithms}
%Next, let's formally define \emph{local algorithms} and \emph{round complexity}. 
There are two equivalent ways of thinking about local algorithms\footnote{In the literature, these algorithms are often referred to as ``distributed algorithms in the LOCAL model of computing''. We use the shorter and less formal term ``local algorithm'' for better readability.  } and both of them are important (see \cref{fig:two-definitions}). An intuitive, algorithmic definition was already sketched in \cref{sec:1first_example}: We assume that there is a computing device at every node. 
For simplicity, these devices are assumed to have unbounded computational power, thus excluding Turing machines from the definitions.
A $t(n)$-round local algorithm %\footnote{A popular alternative to ``a local algorithm'' is ``a distributed algorithm in the \local model of computing''}  
is a protocol where these devices communicate for $t(n)$ synchronous message-passing rounds using the edges of the input graph to send messages. 
At the beginning, the device in each node starts only with the information about its identifier/random string and the size of the graph, $n$. When the protocol finishes, each device outputs its part of the solution (e.g., its color). % solving a local problem $\Pi$, each device outputs an output color from $S$. 

%In one round, each device performs an arbitrary computation and sends an arbitrarily-sized message to any neighboring device. The round complexity is then defined as the number of rounds until all computers finish -- when they do, each computer is supposed to answer the color of the given node. 

It will also be helpful to understand an alternative and equivalent definition that extracts the essence of what we are measuring with local algorithms. In this alternative definition, a local algorithm with round complexity $t(n)$ is simply a function that we can apply to every ball $B(u, t(n))$ of the input graph to compute the output at a given node $u$. 
%with the aim of solving a local problem. 
Let us state it formally. 
\begin{definition}[Local algorithm]
\label{def:local_algorithm}
    A local algorithm $\fA$ with a round complexity %\footnote{Sometimes we will also call it the \emph{local complexity}. } 
    of $t(n)$ %solving a local problem $\Pi = (S, \fP)$, 
    is a function that accepts two inputs: firstly, the value $n$, and secondly, a labeled $t(n)$-hop neighborhood. 
    %The output of $\fA$ at $u$ is then a color from $S$ and we require that for  
\end{definition}
When we use this second definition, \emph{running} a local algorithm on an input graph $G$ simply means coloring each node $u \in V(G)$ with the output of $\fA_n(B_G(u, t(n)))$ where we used the first parameter of a local algorithm, the size of the underlying graph $n = |V(G)|$, as a subscript. \emph{Solving} a problem $\Pi$ on $G$ simply means that after running $\fA_n$ on $G$, the output colors satisfy constraints $\fP$ on all vertices of $G$. %The function $t(n)$ is denoted as the \emph{round complexity} of $\fA$. 

Moreover, in the case of \emph{deterministic} local algorithms, we assume that the nodes of the input graph are additionally labeled with unique identifiers from the range $[n^{O(1)}] = \{1, 2, \dots, n^{O(1)}\}$.\footnote{While assuming polynomial-range identifiers may look a bit arbitrary, we will see in \cref{chap:2_below_logn} that the notion of deterministic algorithms is very robust. We simply need a way of breaking the potential symmetry of the input graph -- think of what happens when you run a local algorithm on a vertex-transitive graph such as a cycle without any identifiers or randomness!} 
In the case of \emph{randomized} local algorithms, we assume that the nodes of the input graph are labeled with infinite bit strings. Solving a problem then means solving it with overall error probability at most $1/n^{O(1)}$, if the bit strings are sampled independently randomly.\footnote{Formally-minded readers may feel uneasy about the definitions not specifying the constant in the $n^{O(1)}$ expressions. We will see later in \cref{thm:slowdown} that the exact constant in the definition typically does not matter. Formally, when we say that there is a local algorithm, it means that for every $C$ there is an algorithm in the setup where we require the size of identifiers to be at most $n^C$ (or the error probability to be at most $1/n^C$). }% Recall that it is an extremely important computer-science tradition to be sloppy when dealing with expressions that use the $O$ notation. 

We notice that if there is a deterministic local algorithm solving some problem with round complexity $t(n)$, there is also a randomized local algorithm solving the same problem with the same round complexity. 
This is because any randomized algorithm can start by each node generating a random identifier from the range $[n^C]$: The probability that these identifiers are not unique, i.e., some two nodes have the same identifier, is at most $n^2 \cdot \frac{1}{n^{C}}$. Choosing $C$ large enough, this error probability can be made as small as any polynomial function of $n$. 

Finally, we remark that we can talk about local algorithms solving problems on graphs with additional structure (e.g. directed graphs) or on concrete graph classes. 
For example, in our introductory example from \cref{sec:1first_example}, it makes sense to think of all definitions relative not to the class of all graphs but to the class of graphs that are oriented paths. One interesting setup that we discuss mostly in \cref{chap:2_below_logn} is the class of bounded-degree graphs where we fix some constant $\Delta$ and analyze the class of graphs of degree at most $\Delta$. Notice that on these graphs, the set $\fP$ from the definition of local problems, as well as the support of the function $\fA$ from the definition of local algorithms, are finite. 

\paragraph{Equivalence of the two definitions}
Let's see a proof sketch of why the two definitions are equivalent. 
On the one hand, let's say we are given a function $\fA$ that maps $t(n)$-hop neighborhoods to output labels and we want to construct a $t(n)$-round message-passing protocol. Consider the protocol where in the $i$-th round, each vertex $u$ sends its neighbors everything there is to know about the ball $B(u, i)$: How the graph looks like and the values of identifiers/random strings at every node of $B(u,i)$. 
Each node $v$ can then internally use this information from its neighbors to learn everything there is to know about the ball $B(u, i+1)$.  
After $t(n)$ rounds of communication, each vertex $v$ thus knows its whole $t(n)$-hop neighborhood $B(v, t(n))$. 
At this point, the vertices stop communicating and each one applies the function $\fA$ locally to its ball which solves the problem solved by $\fA$. 

On the other hand, assume that we have a $t(n)$-round communication protocol and want to turn it into a function $\fA$ that takes $t(n)$-hop neighborhood as inputs. 
We notice that if we know the $t(n)$-hop neighborhood $B(u, t(n))$ of a node $u$, we can simulate the first round of the protocol in that ball and get to know the state of all vertices in $B(u, t(n)-1)$ after the first round. Continuing like this inductively, we conclude that starting with the knowledge of $B(u, t(n))$, we can learn the state of the protocol at the center node $u$ after $t(n)$ rounds.

\tbox{There are two different but equivalent ways of understanding local algorithms. \\
1. They are message-passing protocols running for some number of rounds. \\
2. The output at each node is a function of its local neighborhood. }

Importantly, it will be very helpful for us to keep \emph{both} definitions in mind: When we design local algorithms, the message-passing definition is more helpful as it is natural to think as ``first, we run the protocol $\fA^1$, then the protocol $\fA^2$''. On the other hand, when we prove lower bounds, the formal definition of \cref{def:local_algorithm} is easier to use. % since we don't want to make any implicit assumptions about what a supposed protocol looks like internally. 
When we think of applications to distributed/parallel algorithms, the protocol-design definition is preferable since this is how the actual parallel/distributed algorithms are implemented. In some other applications, like applications to descriptive combinatorics, the formal definition is perhaps a bit more natural.

%I chose the more formal and crisp definition in \cref{def:local_algorithm} mainly because it is easy to make that definition precise and because I want to highlight that round complexity is a very clean and well-defined combinatorial property of a problem. 
%This is also the reason why the measure of complexity is called \emph{round complexity}, I will probably also sometimes call it a round complexity. 

% \paragraph{Solving problems with deterministic and randomized algorithms}
% The round complexity of a local problem $\Pi$ is simply the complexity of the fastest algorithm solving $\Pi$. Notice how we need to require the identifiers to be unique in the following definition, so as to break the symmetry of the input graph. 

% \begin{definition}[Round complexity of a problem]
% \label{def:round_complexity}
%     We say that a deterministic local algorithm $\fA$ solves $\Pi$ if for any $n$, any $n$-node graph $G$, and any labeling of nodes of $G$ with unique identifiers from $[n^{O(1)}]$, if for each $u \in V(G)$ we apply $\fA$ to each neighborhood $B_G(u, f(n))$ and color $u$ with the output of $\fA$, the resulting coloring is always a solution to $\Pi$. 
    
%     In the case of randomized algorithms, we require that if we sample each $0/1$ string at each node independently and uniformly, the $\fA$ is correct with probability $1- 1/n^{O(1)}$. 
% \end{definition}

\subsection{Sequential vs Distributed Local Complexity}
\label{sec:1sequential_vs_distributed}

This section presents one of the most fundamental results in the area of local algorithms. 
Currently, it may be very unclear what kind of problems can be solved with a local algorithm of round complexity, say, $\poly\log n$.
This will become much clearer, since we will next see that, up to $\poly\log n$, the model of local algorithms is the same as the model of so-called \emph{sequential local algorithms} that are much easier to understand. 
%This section should make this much more clear. 

\paragraph{The case of maximal independent set}
As an example, let's think of a concrete local problem known as the \emph{maximal independent set} problem. 
In this problem, every node must be labeled either \texttt{selected} or \texttt{unselected}.
The constraint is that each \texttt{selected} node should not neighbor any other \texttt{selected} node, while each \texttt{unselected} node should neighbor at least one \texttt{selected} node\footnote{This is a much easier problem than the \emph{maximum independent set} problem where we additionally maximize the number of \texttt{selected} nodes. }. 

Is there a local algorithm constructing a maximal independent set in a polylogarithmic number of rounds? This is not clear at all! The answer to this question is positive, and perhaps the simplest algorithm is the randomized algorithm of Luby \cite{luby86_lubys_alg,alon86lubys_algorithm}. This algorithm in fact served as the foundational example that later led \citet*{linial92LOCAL_definition} to define local algorithms. But Luby's algorithm is a non-trivial algorithm\footnote{We can briefly describe the algorithm: It runs in $O(\log n)$ rounds and in each round, every vertex chooses a random number from $[0,1]$. If its number is the largest among its neighbors, the vertex goes in the independent set and is removed from future iterations, together with its neighbors. After $O(\log n)$ rounds, all vertices are removed with high probability and the algorithm terminates.  } and just by staring at the maximal independent set problem, it is quite unclear whether a fast local algorithm exists, or not. 

This stands in stark contrast with the ``sequential'' world: If we do not care about all vertices outputting the answer ``at once'', we can compute a maximal independent set with the following simple algorithm: We choose any order of vertices and iterate over them in that order. Whenever we consider a vertex $u$, we look at its neighbors, and if at least one of them is already \texttt{selected}, we mark $u$ as \texttt{unselected}. Otherwise, we mark $u$ as \texttt{selected}. 

Here is a curious property of the above algorithm: We can still think of it as a ``local'' algorithm. 
Indeed, each node makes its decision by examining its $1$-hop neighborhood.
The only difference is that the algorithm is a \emph{sequential local algorithm} where we iterate over nodes in an arbitrary order, not a \emph{distributed local algorithm}\footnote{We sometimes use the name \emph{distributed local algorithm} to stress that we are talking about a local algorithm and not a sequential local one. } as we defined it in \cref{def:local_algorithm} where all nodes have to output their answer at once. 

A fundamental result of local complexity is the fact that these two definitions are equally powerful, up to polylogarithmic factors. Hence, in the concrete example of the maximal independent set, we can think of this problem as being ``easy'' not because of a clever algorithm like Luby's, but because of the above simple sequential algorithm. 

\tbox{
The distributed round complexity of any local problem equals its sequential local complexity, up~to~$\polylog(n)$. 
}

\paragraph{Formal definition of sequential local algorithms}
We next make this principle formal. We need to start by defining a general sequential local algorithm. 
Here is a definition of a deterministic sequential local algorithm, made slightly more powerful than the maximal-independent-set algorithm by allowing the output of each node to be not just the final color, but the node can also keep additional information that future vertices can read from that node. 

\begin{definition}[Sequential local algorithms]
\label{def:sequential_algorithm}
A sequential local algorithm of local complexity $t(n)$ is a function $\fA$ that takes two inputs, $n$ and a labeled $t(n)$-hop neighborhood. 
Its output for a neighborhood $B(u, t(n))$ around a node $u$ is a pair $(s, t)$, with $s$ being the output at $u$ and $t$ being additional information stored at $u$.
An input neighborhood to $\fA$ has some nodes labeled by these pairs. 

Running a sequential local algorithm on a graph means iterating over its vertices in some order and each time applying $\fA$ to produce the answer at the particular vertex. 
When we run $\fA$ on a node $v$, the algorithm has access to all already produced pairs $(s,t)$ at the vertices in $B(u, t(n))$ on which $\fA$ has already been run. 
    Solving a problem with a sequential local algorithm means that \emph{regardless of the order} in which we choose the vertices, this process results in a solution to the problem. 
\end{definition}
Notice that we do not require unique identifiers in the definition; we will see later in \cref{thm:bad_id_into_good_id} that they are not needed for local problems.  We can also define (oblivious) randomized algorithms where first an adversary chooses an order in which we iterate over vertices; then we sample random bits in each vertex and run our sequential local algorithm. 

We will next prove the following theorem by \citet*{ghaffari_kuhn_maus2017slocal}.%\footnote{Some version of this theorem was implicitly understood in the late 80s and led to the development of algorithms for network decompositions \cite{awerbuch89,linial_saks93low_diameter_decompositions}. }   
\begin{theorem}[\citet*{ghaffari_kuhn_maus2017slocal}]
    \label{thm:sequential_vs_distributed_complexity}
    Let $\fA$ be a deterministic (randomized) sequential local algorithm with local complexity $t(n)$. Then, there is a deterministic (or randomized, respectively) distributed local algorithm simulating $\fA$ with round complexity $t(n) \cdot \tilde{O}( \log^3(n))$.\footnote{We use $\tilde{O}(t(n))$ to denote the complexity $O(t(n) \cdot \log^{O(1)} t(n))$. } %\footnote{For randomized algorithms, the round complexity can be improved to $O(t(n) \cdot \log^2 n)$, see \cref{sec:1network_decomposition}. }
\end{theorem}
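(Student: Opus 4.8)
The plan is to reduce the statement to the existence of an efficient \emph{deterministic network decomposition}, available from \cref{sec:1network_decomposition}. Two observations drive the reduction. First, a sequential local algorithm $\fA$ of complexity $t=t(n)$ solves the problem for \emph{every} vertex ordering, so the distributed simulation is free to impose whatever processing order is convenient; it must only guarantee that, when $\fA$ is executed at a node $u$, it is handed the correct topology of $B_G(u,t)$ together with every output-pair $\fA$ has already written inside that ball. Second, if $\mathrm{dist}_G(u,v)>2t$ then $B_G(u,t)$ and $B_G(v,t)$ are disjoint, so the executions of $\fA$ at $u$ and at $v$ neither read nor write each other's data and may be performed in parallel, in either relative order. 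This is precisely the independence structure captured by a network decomposition of the power graph $H:=G^{2t}$.

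Concretely, first I would compute, with the deterministic algorithm of \cref{sec:1network_decomposition}, a partition of $V(H)=V(G)$ into $C=O(\log n)$ color classes $V_1,\dots,V_C$ such that every connected component (a ``cluster'') of $H[V_i]$ has diameter $O(\log n)$ in $H$, hence $O(t\log n)$ in $G$. Since $H$ has the same vertices and identifiers as $G$ and one round of $H$-communication is simulated by $2t$ rounds of $G$, this costs $\tilde{O}(\log^3 n)$ rounds of $H$, i.e.\ $\tilde{O}(t\log^3 n)$ rounds of $G$. I would then process the classes in the order $V_1,\dots,V_C$, treating all clusters of a given class simultaneously and, inside a cluster $X$, processing the vertices in increasing order of identifier. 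For a fixed $X$: every vertex of $X$ first gathers, by BFS in $G$, the union $\bigcup_{v\in X}B_G(v,t)$ together with all identifiers and all output-pairs written there so far --- this set has $G$-diameter $O(t\log n)$, so $O(t\log n)$ rounds suffice even though $X$ need not be connected in $G$ (one grows radius-$2t$ balls to recover the $H$-adjacencies, builds a BFS tree of $X$ of depth $O(\log n)$ whose edges are $G$-paths of length $\le 2t$, and pipelines the gather along it); a leader of $X$ then simulates $\fA$ locally on the vertices of $X$ in identifier order; finally the outputs are broadcast back. Over all $C=O(\log n)$ classes this is $O(t\log^2 n)$ rounds, dominated by the decomposition, for a total of $\tilde{O}(t\log^3 n)$.

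For correctness, observe that the simulation realizes one fixed vertex ordering --- all of $V_1$, then all of $V_2$, and so on, ties inside each cluster broken by identifier --- and since $\fA$ is correct for \emph{every} order, it suffices to check that the simulation genuinely runs $\fA$ in this order. When $v\in V_i$ is processed: each vertex of $B_G(v,t)$ in an earlier class $V_j$ ($j<i$) already carries its final output-pair, delivered to the leader during the gather; each vertex of $B_G(v,t)$ in $V_i$ is at $G$-distance $\le t\le 2t$ from $v$, hence $H$-adjacent to $v$ and therefore in $v$'s own cluster $X$, so whatever $\fA$ has written there (in identifier order) is present in the local simulation; and each vertex of $B_G(v,t)$ in a later class is still unlabeled, matching the order. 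Two distinct clusters of the same class $V_i$ never interfere, since their vertices are pairwise non-$H$-adjacent, i.e.\ at $G$-distance $>2t$, so their length-$t$ balls are disjoint and parallel processing coincides with sequential processing of the clusters. The randomized case is identical up to one point: to keep the processing order independent of $\fA$'s own coins, first let each node draw a uniform identifier from $[n^{C'}]$ (distinct with probability $1-1/\poly(n)$), compute the deterministic decomposition from these, and only then reveal $\fA$'s random bits; the induced order is then a fixed function of the graph and the identifiers --- a legitimate adversarial order --- so the error guarantee of the oblivious randomized $\fA$ is preserved.

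The step I expect to be the real obstacle, and the source of the $\tilde{O}(\log^3 n)$ factor, is the deterministic network decomposition of $H$; everything after it is bookkeeping, the only mild subtlety being the gather-and-simulate routine for a cluster that is scattered inside $G$, handled by the BFS-tree pipelining above.
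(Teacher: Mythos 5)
Your proposal is correct and follows essentially the same route as the paper: compute a deterministic network decomposition of a power graph of $G$ (the paper uses $G^{t(n)}$, you use $G^{2t}$ — both work, since outputs are written only at the node being processed), then sweep the color classes in order, handling each cluster in parallel via a gather--simulate--broadcast at a leader, with the decomposition construction dominating the $t(n)\cdot\tilde O(\log^3 n)$ bound. Your extra care about weak-diameter clusters and about fixing the vertex order before revealing the randomness in the randomized case are correct refinements of points the paper treats more briefly.
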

We note that it is known that there are local problems such that their sequential and distributed local complexity differ by a factor of $\Omega(\log n / \log\log n)$. \cite{gavoille2009complexity} 

\paragraph{Network decompositions}
A crucial tool that we will rely on in this section and the next one is the concept of a \emph{network decomposition}. 
Network decomposition is a clustering of the input graph into clusters of small diameter\footnote{A diameter of a graph $G$ is defined as $\max_{u,v} d_G(u,v)$ where $d_G(u,v)$ is the distance between $u$ and $v$ in $G$. }  (see \cref{fig:sequential}). 
%It is a very versatile tool and we will see soon how it enables us to simulate any sequential local algorithm. 

\begin{restatable}[Network decomposition]{definition}{defdecomposition}
\label{def:network_decomposition}
    A $(c,d)$-network decomposition of a graph $G$ is a coloring of $G$ with $c$ colors. We require that vertices of each color induce a graph such that each of its connected components (that we call \emph{clusters}) has diameter at most $d$. 
\end{restatable}

We defer the discussion about the existence of network decompositions to \cref{sec:1network_decomposition}. For now, we will simply state the guarantees of the currently best deterministic network decomposition construction. 

%let's take it as granted that there is a distributed local algorithm with round complexity $\tilde{O}(\log^3 n)$ that constructs a network decomposition with $c = O(\log n)$ and $d = \tilde{O}(\log n)$ (see \cref{thm:best_decomposition}). 

\begin{theorem}[\cite{ghaffari2024near}]
\label{thm:best_decomposition}
There is a deterministic local algorithm that outputs a ($O(\log n)$, $O(\log n)$)-network decomposition in $\tilde{O}(\log^2 n)$ rounds.     
\end{theorem}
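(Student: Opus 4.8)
The plan is to use the \emph{ball carving} (or \emph{head start}) technique of Rozhoň and Ghaffari, combined with the quantitative refinements of the cited work. The only resource a deterministic local algorithm may exploit is the set of unique identifiers, i.e.\ a distinct string of $b=\Theta(\log n)$ bits at every vertex. The construction consumes these bits one at a time over $\Theta(\log n)$ \emph{phases}, with phase $i$ producing the $i$-th color class $V_i$; since there are $O(\log n)$ phases this immediately gives $c=O(\log n)$ colors.

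I would maintain the invariant that after phase $i$ the still-live vertices are partitioned into \emph{clusters} such that (a) distinct clusters carry distinct labels which are $(b-i)$-bit strings (a common prefix of the identifiers the cluster was assembled from), and (b) every cluster has small diameter. Phase $i+1$ deletes the last bit of every label: a cluster whose last bit is $0$ keeps its shortened label, while a cluster whose last bit is $1$ either \emph{merges} into the unique sibling with the same shortened label, if one exists, or is \emph{dropped} into the new color class $V_{i+1}$ (if no sibling exists there is no label clash and the cluster simply keeps the shortened label). The merge is attempted by a BFS \emph{competition} in which each $0$-cluster grows its ball layer by layer, breaking ties between clusters contesting the same region by identifier priority, trying to reach a contested $1$-sibling and absorb it together with the vertices caught in between — some of which join the enlarged cluster while the rest are handed to $V_{i+1}$; a potential/doubling argument bounds the number of growth steps by $O(\log n)$, and a $1$-sibling not absorbed by then is dropped. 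One then checks that $G[V_{i+1}]$ has connected components of diameter $O(\log n)$ (it is a union of cut-off pieces of old small-diameter clusters), that the merged clusters stay low-diameter, and that the invariant is restored with label length $b-i-1$. After $b$ phases every label is empty, so at most one live cluster remains; it forms the final color class, and $c=O(\log n)$ overall.

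I expect the entire difficulty to lie in the quantitative bookkeeping that makes the parameters \emph{near-optimal} rather than merely polylogarithmic, with two essentially independent points. First, diameter $O(\log n)$ and not $O(\log^2 n)$: a crude analysis lets each of the $\Theta(\log n)$ merge phases add $\Theta(\log n)$ to a cluster's radius, so a cluster surviving many phases would reach diameter $\Theta(\log^2 n)$; avoiding this accumulation is the key structural improvement and seems to require a hierarchical, multi-scale organization of the merges, or a charging argument guaranteeing that each vertex is engulfed by a growing ball only $O(1)$ times over the whole run. Second, round complexity $\tilde O(\log^2 n)$: since there are $\Theta(\log n)$ phases, each phase must run in $\tilde O(\log n)$ rounds, whereas naively a phase performs $\Theta(\log n)$ growth steps, each of which — to be decided consistently across a cluster of diameter $\Theta(\log n)$ — costs $\Theta(\log n)$ rounds, i.e.\ $\Theta(\log^2 n)$ per phase; removing this bottleneck requires pipelining the within-cluster aggregations of successive growth steps along the clusters' BFS trees, or otherwise parallelizing the growth across scales. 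The delicate step, I believe, is checking that the guarantee ``only a small fraction of clusters is dropped per phase'' survives both speedups simultaneously; by contrast, correctness of the output (each $G[V_i]$ having small-diameter components, every vertex colored exactly once) follows routinely from the phase invariant. I would close by noting that the parameters are essentially optimal: $\Omega(\log n)$-type arguments in the spirit of \cref{sec:1first_example} rule out $o(\log n)$ colors or $o(\log n)$ diameter on, e.g., the path, and no deterministic $o(\log^2 n)$-round construction with these parameters is currently known.
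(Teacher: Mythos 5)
First, a framing point: the text does not actually prove \cref{thm:best_decomposition} --- it is quoted as a black-box citation, and the self-contained constructions given later (\cref{thm:network_decomposition_deterministic,cor:network_decomposition_deterministic}) only achieve $O(\log^7 n)$ and $O(\log^9 n)$ rounds with weaker diameter control. Your proposal sits in the same ball-carving/bit-competition family as those weaker constructions, but it has a concrete correctness gap in how the color classes are formed: you place the \emph{dropped} vertices of phase $i+1$ into the color class $V_{i+1}$ and keep the surviving clusters alive, whereas the standard scheme does the opposite (the clustered vertices, which form pairwise non-adjacent low-diameter clusters, become a color class, and the deleted boundary vertices are \emph{re-clustered from scratch} in the next of $O(\log n)$ outer iterations). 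Your claim that $G[V_{i+1}]$ has components of diameter $O(\log n)$ because it is ``a union of cut-off pieces of old small-diameter clusters'' does not hold: the cut-off shells of different clusters can be adjacent to one another and chain into a single connected component of essentially unbounded diameter. Some re-clustering of the dropped vertices is unavoidable, and once you add it you are back to the nested $O(\log n)\times O(\log n)$ phase structure of \cref{thm:network_decomposition_deterministic}.

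Second, the two quantitative improvements you flag --- preventing the $\Theta(\log n)$ merge phases from accumulating $\Theta(\log^2 n)$ diameter, and executing each phase in $\tilde{O}(\log n)$ rather than $\Theta(\log^2 n)$ rounds --- are precisely the content of the cited result, and your write-up only names candidate strategies (``hierarchical, multi-scale organization of the merges,'' ``pipelining the within-cluster aggregations'') without carrying any of them out or verifying that the ``few vertices dropped per phase'' guarantee survives them. As written, the argument establishes at best something comparable to \cref{cor:network_decomposition_deterministic}, not an $(O(\log n), O(\log n))$-decomposition in $\tilde{O}(\log^2 n)$ rounds. The closing optimality remark is also off: on a path a $(2, O(1))$-network decomposition exists and is computable in $\Theta(\log^* n)$ rounds, so no $\Omega(\log n)$ bound on colors or diameter holds there; the relevant existential trade-off is $c \cdot d = \Omega(\log n)$ on general graphs.
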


\paragraph{Proof of \cref{thm:sequential_vs_distributed_complexity}}
Armed with the algorithm for network decomposition, from \cref{thm:best_decomposition}, let us prove \cref{thm:sequential_vs_distributed_complexity}. 

\begin{figure}
    \centering
    \includegraphics[width = \textwidth]{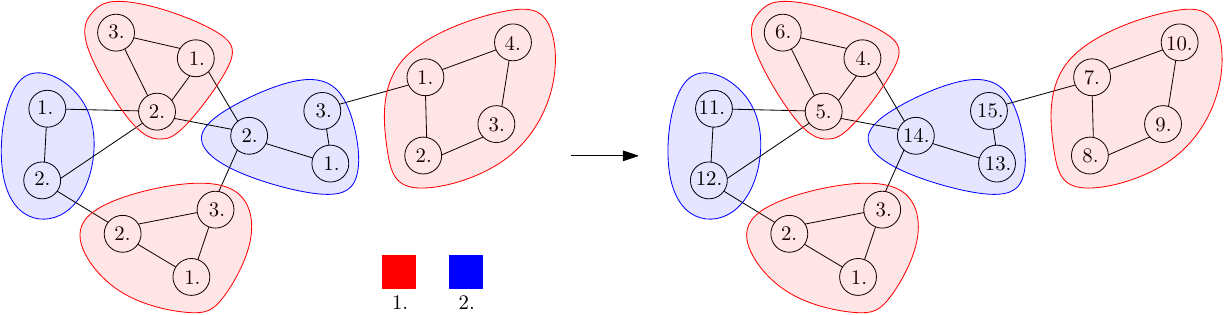}
    \caption{This picture shows a network decomposition with $c = 2$ color classes and $d = 2$ diameter. It also shows how network decomposition is used to convert an input sequential local algorithm (of local complexity $1$) into a distributed local algorithm in \cref{thm:sequential_vs_distributed_complexity}. \\
    Left: The color classes of the network decompositions are ordered as (red, blue). We iterate over the color classes and in one iteration, we consider each cluster separately and simulate an input sequential local algorithm in it (see the node ordering inside each cluster). When the algorithm is simulated in blue clusters, it has access to the output of neighboring red vertices. \\
    Right: The partial simulations of the sequential local algorithm in each cluster are consistent with a single run of that algorithm over all vertices. }
    \label{fig:sequential}
\end{figure}

\begin{proof}[Proof of \cref{thm:sequential_vs_distributed_complexity}]
    An important concept employed throughout this text is working within the power graph:
    Given a graph $G$ and a parameter $r$, we define the power graph $G^r$ to be the graph with $V(G^r) = V(G)$ where two vertices are connected whenever their distance in $G^r$ is at most $r$. 

    We start with a sequential local algorithm $\fA$ of complexity $t(n)$. We will work in the power graph $G^{t(n)}$ and construct a network decomposition in it with $c = O(\log n)$ colors and diameter $d' = \tilde{O}(\log n)$ in $G^t(n)$ via \cref{thm:best_decomposition}. 
    Consider this network decomposition in the context of the original graph $G$:
    We constructed clusters of actual diameter $d \le t(n) \cdot d' = t(n) \cdot \tilde{O}( \log n)$ in $G$. Moreover, two clusters from the same color class have distance at least $t(n)+1$ in $G$. Finally, since every communication round in $G^r$ can be simulated in $r$ communication rounds in $G$, the round complexity of constructing our network decomposition is $t(n) \cdot \tilde{O}(\log^3 n)$ by \cref{thm:best_decomposition}. 

    We will now use our clustering to simulate $\fA$. We will simulate an order of iterating over the vertices where we first iterate over all the vertices in the first color class, then all the vertices in the second color class, and so on.  
    For a fixed color class, we will arbitrarily simulate the algorithm $\fA$ in each cluster $C$ independently of all other clusters of the same color (see \cref{fig:sequential}). 

    Notice that every two clusters $C_1, C_2$ of the same color $i$ are far enough so that the $t(n)$-hop neighborhood of any vertex $u \in C_1$ never contains a vertex $u' \in C_2$. % in its $t(n)$-hop neighborhood. 
    Hence, simulations in different clusters of the $i$-th color do not interact and our simulation is a faithful simulation of iterating over all the vertices of $G$ in a certain order and applying the sequential algorithm $\fA$ to them.  

    Finally, let us discuss how the simulation of $\fA$ is implemented with a local algorithm. Our local algorithm simulating $\fA$ is a message-passing protocol with $c$ phases where in the $i$-th phase, each cluster $C$ of color $i$ first chooses a leader node, e.g., the node with the smallest identifier. This node collects all information about the output of $\fA$ so far up to the distance $t(n)$ from $C$. Then, the leader node internally simulates $\fA$ on $C$ and sends the result of that simulation back to all nodes in $C$. 
All this can be done in a number of rounds proportional to the diameter of $C$ and $t(n)$. Thus, the overall round complexity of the simulation is  $c \cdot t(n) \cdot \tilde{O}(\log n) = t(n) \cdot \tilde{O}(\log^2 n)$. 
%Instead of message-passing algorithms, we can also make the same analysis thinking about local functions from \cref{def:local_algorithm}; each phase of the message-passing algorithm can be seen as a function that looks at distance $\tilde{O}(t(n) \cdot \log^2 n)$. We can then combine these partial functions into a final function that looks at distance $c \cdot \tilde{O}(t(n) \cdot \log n) $. 
\end{proof}

% \paragraph{Importance for concrete problems}
% For concrete problems, it is always good to start with the question: Is there a sequential local algorithm with a small local complexity for the problem? If so, we also know that there is a (reasonably) efficient distributed local algorithm by \cref{thm:sequential_vs_distributed_complexity}.  

\subsection{Derandomization}
\label{sec:1derandomization}

By now, we understand that the sequential local complexity is closely related to the distributed round complexity. 
However, we still do not understand the power of randomness. 
There might be scenarios where a problem's randomized (sequential or distributed) local complexity is significantly lower than its deterministic (sequential or distributed) complexity.
Interestingly, this never happens for local problems. 
Distributed local algorithms for them can be derandomized with $\poly\log(n)$ slowdown in round complexity.\footnote{It is important to restrict ourselves to local problems. Otherwise, consider the following silly counterexample problem: We are to mark some vertices of the input graph so that at least $n/3$ but at most $2n/3$ vertices are marked. Using randomness, this problem can be solved in $0$ round complexity: every vertex simply flips a coin. But imagine trying to solve the problem deterministically: If the input graph has no edges, we are pretty screwed!
}

\tbox{
Any local problem has the same deterministic and randomized round complexity, up~to~$\polylog(n)$. 
}

Before proving this result, let us contemplate how it fits into the big picture. Thus far, we have seen \emph{six} plausible definitions of how to measure the local complexity of a problem. There are the following three different ways of thinking about it, and for each one of them, we can define both the deterministic and the randomized complexity:
\begin{enumerate}
    \item (\emph{distributed protocol}) There are computers at nodes, we design a message-passing protocol, and we measure the number of rounds of this protocol. 
    \item (\emph{distributed local complexity}) Output at each node is a function of its local neighborhood. 
    \item (\emph{sequential local complexity}) We iterate over the nodes in an arbitrary order and settle each output at a node by looking at its local neighborhood. 
\end{enumerate}

We now understand that all of these definitions are equivalent, up to $\polylog(n)$ and for local problems.  

\paragraph{Formal statement of derandomization}
Formally, we will prove the following statement by \citet*{ghaffari_harris_kuhn2018derandomizing}, using the derandomization method of conditional expectations. 
%Notice that we are using a bit more general definition of a local problem where the checkability radius of it can scale with the size of the input. 

\begin{theorem}[\citet*{ghaffari_harris_kuhn2018derandomizing}]
\label{thm:derandomization_sequential}
    Let $\Pi$ be any local problem of randomized round complexity $t(n)$. Then, its deterministic sequential local complexity is $O(t(n))$. 
\end{theorem}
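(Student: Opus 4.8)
The plan is to reduce the randomized distributed complexity to the deterministic \emph{sequential} complexity by combining two ingredients: the trivial direction that a distributed local algorithm is also a sequential local algorithm, and the method of conditional expectations applied in the sequential setting. First I would take a randomized distributed local algorithm $\fA$ of round complexity $t(n)$ solving $\Pi$. By \cref{def:local_algorithm}, for each node $u$ the output $\fA_n(B_G(u,t(n)), \text{random strings on } B_G(u,t(n)))$ depends only on the labeled $t(n)$-hop ball of $u$ together with the random bits assigned to the vertices in that ball, and with probability $1 - 1/n^{O(1)}$ the resulting coloring satisfies every constraint $\fP$. Crucially, because $\Pi$ is a \emph{local} problem with checkability radius $r$, correctness is the conjunction of the $n$ local events $E_u = \{B_G(u,r) \in \fP\}$, each of which depends only on the random bits in $B_G(u, t(n)+r)$, a ball of radius $O(t(n))$.

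Next I would set up the derandomization. The deterministic sequential algorithm iterates over the vertices in the adversary's order; when it reaches a vertex $v$ it must commit to the random bits that $\fA$ would use at $v$ (a finite or, if infinite, suitably truncated bit string — truncation to $\poly(n)$ bits is harmless since only $n$ events, each reading finitely many balls, matter, and this can be made precise as in the footnotes on the $n^{O(1)}$ conventions). The key quantity is the expected number of violated constraints, $\sum_u \Pr[\neg E_u]$, which is at most $1/n^{O(1)} < 1$ under fully random bits. The sequential algorithm maintains the invariant that, conditioned on all bit strings chosen so far at previously processed vertices, this conditional expectation stays below $1$; at vertex $v$ it chooses the bit string for $v$ that minimizes the new conditional expectation, which by averaging cannot increase it. Since each event $E_u$ reads only bits within a ball of radius $O(t(n))$ around $u$, the conditional expectation of $\Pr[\neg E_u]$ given the committed bits is a function that $v$ can evaluate from its own $O(t(n))$-hop neighborhood — the committed bit strings stored at nearby already-processed vertices are exactly the ``additional information'' that \cref{def:sequential_algorithm} permits a node to store and that later nodes may read. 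At the end every vertex has committed, the expected number of violations is still below $1$, and since that number is a non-negative integer it is $0$, so the problem is solved regardless of the processing order.

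Filling in the details: when the algorithm processes $v$, the relevant conditional expectation is $\sum_{u : d(u,v) \le O(t(n))} \Pr[\neg E_u \mid \text{bits committed so far in } B(v, O(t(n)))]$ — the far-away terms cannot depend on $v$'s choice and can be ignored — and this sum, as a function of $v$'s candidate bit string, is computable from data available in $B_G(v, O(t(n)))$: the topology, the identifiers (or absence thereof; none are needed), and the stored bit strings at processed neighbors. One must check monotonicity: the node $v$ picks the argmin over its own (truncated) bit string, and linearity of expectation plus the law of total expectation guarantee the global conditional expectation does not rise. Thus the resulting deterministic sequential local algorithm has local complexity $O(t(n))$, giving the theorem.

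The main obstacle I expect is the bookkeeping around \emph{which} random bits a node needs to see and store, and making the ``truncate the infinite random string'' step rigorous. A distributed randomized algorithm formally reads an infinite bit string at each node, but the conditional-expectation argument needs each node to commit to and broadcast a finite object; one has to argue that truncating to $\poly(n)$ bits per node changes the success probability by at most $1/n^{O(1)}$ (standard, since there are $n$ events each reading finitely many balls, but it needs a sentence), and then that the argmin over a finite set is well-defined and efficiently evaluable by a computationally-unbounded node. A secondary subtlety is confirming that the stored pair $(s,t)$ in \cref{def:sequential_algorithm} genuinely lets a later node recover the committed bits of an earlier node within distance $O(t(n))$ — this is immediate once we let $t$ include the bit string, but it is the point where the extra expressiveness of \cref{def:sequential_algorithm} over a plain MIS-style algorithm is actually used.
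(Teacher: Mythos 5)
Your overall approach is the paper's approach: the method of conditional expectations, run as a sequential local pass that fixes the random bits node by node so that the conditional expectation of $\sum_u X(u)$ (with $X(u)$ the failure indicator at $u$, depending only on bits in $B(u, t(n)+r)$) never increases, starting from a value below $1$. That part of your argument is sound, and your worries about truncating the infinite bit strings and about storing the committed bits in the auxiliary field $t$ of \cref{def:sequential_algorithm} are legitimate but minor.

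However, there is a genuine gap at the end: your single pass only \emph{commits random bits}; it never produces the actual output labels of $\Pi$. Under \cref{def:sequential_algorithm}, when the adversary's order reaches a vertex $v$, the algorithm must output the final label $s$ at $v$ at that moment. But the label that $\fA$ would output at $v$ depends on the random bits of all nodes in $B(v, t(n))$, and at the time $v$ is processed some of those nodes have not yet been visited, so their bits are not yet committed; hence your algorithm cannot emit $s$ when it is required to. Saying ``the expected number of violations is an integer below $1$, hence $0$'' only shows that the \emph{bit assignment} is good, i.e.\ that running $\fA$ afterwards with these bits would solve $\Pi$ --- it does not by itself yield a sequential local algorithm solving $\Pi$. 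Closing this requires either a second pass that simulates $\fA$ on the committed bits, merged with the first pass into a single sequential local algorithm (this is exactly \cref{thm:sequential_pipelining}, whose proof needs the radius-doubling and the consistency bookkeeping of recording which nearby simulations were already fixed), or an explicit modification of your pass in which a processed vertex pre-commits the bits of all nodes in its $t(n)$-ball while respecting commitments stored at earlier vertices --- which is the same bookkeeping in disguise. As written, the proposal stops one step short of the theorem's statement.
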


%Let us prove this statement next; we can do it using the classical derandomization method of conditional expectation. 

\begin{proof}
We are given a randomized local algorithm $\fA$ with round complexity $t(n)$ for a local problem $\Pi = (S, \fP)$ with checkability radius $r$. We will next describe a deterministic sequential local algorithm that writes an infinite sequence of bits into each node $u$ of the input graph $G$ so that if we then simulate $\fA$ with these bits, it solves $\Pi$. 

For a vertex $u \in V(G)$, define the failure indicator $X(u)$ as the indicator random variable for the event that if we run $\fA$ with random bits, it fails at $u$ at solving the local problem $\Pi = (S, \fP)$. By failure at $u$ we mean that $B(u, r) \not\in \fP$. We notice that $X(u)$ depends only on the output of $\fA$ at an $r$-hop neighborhood of $u$, and thus it ultimately depends only on random bits in the $(r + t(n))$-hop neighborhood of $u$. Moreover, the probability of failure at $u$ is less than $1/n$ by $\fA$ being a randomized algorithm solving $\Pi$. This implies that $\E\left[\sum_{u \in V(G)} X(u)\right] < 1$. 

Next, we will consider the following sequential local algorithm. We iterate over the nodes in an arbitrary order, and whenever it is a node $u_k$'s turn, we fix the random bits $B(u_k)$ at this node to a value $b(u_k)$ such that 
\begin{align*}
&\E\left[\sum_{v \in V(G)} X(v) \,\Bigm|\, \forall i \in [k]: B(u_i) = b_i \right] 
\\& \le \E\left[\sum_{v \in V(G)} X(v) \,\Bigm|\, \forall i \in [k-1]:  B(u_i) = b_i\right].
\end{align*}
In words, we set the random bits so that the expected number of errors does not increase. 

First, such a value $b(u_k)$ of random bits at $u_k$ definitely exists, since the right-hand side of the above inequality simply averages over many possible instantiations of $B(u_k)$ (that is, we rely on the law of total expectation). 
Second, we can compute this value of random bits by looking only at the $(r+t(n))$-hop neighborhood of $u_k$, since the values $X_v$ for $v$ outside of $B(u_k, r + t(n))$ are independent of the choice of random bits at $u_k$. 

After this sequential algorithm with local complexity $(r+t(n))$ finishes, we have set the random bits at every vertex $u \in V(G)$ in a way that makes $$
\E\left[\sum_{u \in V(G)} X(u) \, \Big|\, \forall i \in [n] : B(u_i) = b_i \right] < 1.
$$
But all values $X(u)$ are now deterministic, so we conclude that no failure occurs if we run $\fA$ with these bits. 

Finally, after this derandomization procedure is run, we also have to run $\fA$. We will defer the discussion of how to combine two sequential local algorithms into one to \cref{thm:sequential_pipelining}. This lemma implies that there exists a single sequential local algorithm with local complexity $O(t(n))$ that simulates first running the derandomization procedure and then running $\fA$ with the bits computed by that procedure. 
\end{proof}

Putting \cref{thm:sequential_vs_distributed_complexity,thm:derandomization} together, we get the following derandomization theorem for (distributed) local algorithms. 
\begin{theorem}[\citet*{ghaffari_harris_kuhn2018derandomizing}]
\label{thm:derandomization}
    Let $\Pi$ be any local problem of checkability $r$ and randomized round complexity $t(n)$. Then, its deterministic round complexity is $t(n)  \cdot \tilde{O}(\log^3 n )$. 
\end{theorem}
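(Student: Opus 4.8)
The plan is to derive \cref{thm:derandomization} by composing the two theorems already proved in this section, with no new combinatorial content: \cref{thm:derandomization_sequential} turns a randomized distributed algorithm into a deterministic \emph{sequential} local one, and \cref{thm:sequential_vs_distributed_complexity} turns a deterministic sequential local algorithm back into a deterministic distributed one at a multiplicative cost of $\tilde{O}(\log^3 n)$. So the whole proof is a matter of lining up parameters along the route ``randomized distributed $\to$ deterministic sequential $\to$ deterministic distributed''.

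First I would take the hypothesized randomized local algorithm for $\Pi$ of round complexity $t(n)$, where $\Pi$ has checkability radius $r$. By \cref{thm:derandomization_sequential}, $\Pi$ then has deterministic sequential local complexity $O(t(n))$ (the proof of that theorem actually exhibits a sequential local algorithm of locality $r + t(n)$, and since $r$ is a fixed constant of the problem — equivalently one may assume $t(n) \ge r$ — this is $O(t(n))$).

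Next I would feed that deterministic sequential local algorithm into \cref{thm:sequential_vs_distributed_complexity}. Because the input is deterministic, that theorem outputs a \emph{deterministic} distributed local algorithm simulating it — hence a deterministic distributed algorithm solving $\Pi$ — with round complexity $O(t(n)) \cdot \tilde{O}(\log^3 n) = t(n) \cdot \tilde{O}(\log^3 n)$, which is exactly the claimed bound. (Here the $\tilde{O}(\log^3 n)$ factor is precisely the one coming from running the network-decomposition construction of \cref{thm:best_decomposition} in the power graph $G^{t(n)}$, as in the proof of \cref{thm:sequential_vs_distributed_complexity}.)

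There is no genuine obstacle; the only thing that needs a little care is the bookkeeping of constants. One should check that the $(r+t(n))$-hop locality used by the method of conditional expectations in \cref{thm:derandomization_sequential}, the polynomial identifier range required to run a deterministic algorithm through \cref{thm:best_decomposition}, and the undetermined constants hidden in the ``randomized round complexity'' / ``with high probability'' conventions are all mutually consistent, so that the composed object really is a legitimate deterministic local algorithm of round complexity $t(n)\cdot\tilde{O}(\log^3 n)$. Everything substantive has already been done in the two cited theorems.
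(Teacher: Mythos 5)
Your proposal is correct and is exactly the paper's own derivation: the paper obtains \cref{thm:derandomization} by composing \cref{thm:derandomization_sequential} (randomized distributed $\to$ deterministic sequential with locality $O(t(n))$, absorbing the constant checkability radius $r$) with \cref{thm:sequential_vs_distributed_complexity} (deterministic sequential $\to$ deterministic distributed at a multiplicative $\tilde{O}(\log^3 n)$ cost via \cref{thm:best_decomposition}). Your parameter bookkeeping matches the paper's, so there is nothing to add.
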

Conversely, it's known that for some local problems, the gap between randomized and deterministic local complexity can be as large as $\Omega(\log n / \log\log n)$ \cite{balliu2020separation_det_rand_by_log}.

\paragraph{Leftover: composing sequential algorithms}
We will briefly discuss how two sequential local algorithms run one after the other can be composed into a single one of larger local complexity. %This can be done similarly to how we would prove that two distributed local algorithms per \cref{def:local_algorithm} can be composed. 
%This is a very intuitive property to have, but while it is ``clear'' that two standard local algorithms can be applied one after the other, at least when we view local algorithms as message-passing protocols between nodes, it is less clear for sequential local algorithms.  

\begin{lemma}[{\citet*[Observation 2.1, Lemma 2.2]{ghaffari_kuhn_maus2017slocal}}]
    \label{thm:sequential_pipelining}
    Let $\fA_1, \fA_2$ be two deterministic (randomized) sequential local algorithms with local complexities $t_1(n), t_2(n)$. Then, there is a single deterministic (randomized, respectively) sequential local algorithm $\fA$ of complexity $2(t_1(n) + t_2(n))$\footnote{In general, $k$ local sequential algorithms can be simulated with complexity $2\sum_{i = 1}^k t_i(n)$. } that simulates the output of first running $\fA_1$, and then running $\fA_2$ on the output of $\fA_1$. 
\end{lemma}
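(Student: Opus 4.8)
The plan is to simulate $\fA_1$ and $\fA_2$ in a single pass over the vertices, using the additional information each node stores to record both the intermediate output of $\fA_1$ and the final output of $\fA_2$, together with a flag indicating how far along the simulation is at that node. The key difficulty is that $\fA_2$ at a node $u$ needs to read the $\fA_1$-outputs in $B(u, t_2(n))$, but when we first visit $u$ in our combined order, the neighbors of $u$ within distance $t_2(n)$ may not yet have had $\fA_1$ run on them. So a single monolithic visit per vertex does not work; we need to revisit vertices. The trick (this is the reason for the factor of $2$) is to let $\fA$ visit each vertex \emph{twice}: the combined algorithm works on the ``doubled'' vertex set, i.e. we think of the order over $V(G)$ chosen by the adversary as inducing an order over $V(G) \times \{1,2\}$ in which we first process $(u,1)$ for all $u$ in the given order, and only afterwards process $(u,2)$ for all $u$ in the given order. (Formally, since a sequential local algorithm must work for \emph{every} order, one instead argues that any order on $V(G)$ can be handled by a fixed strategy; see below.)

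Concretely, I would have $\fA$ do the following. The state stored at a node is a triple (phase $\in \{0,1,2\}$, output-of-$\fA_1$, output-of-$\fA_2$), where phase $0$ means ``not yet visited'', phase $1$ means ``$\fA_1$ has been simulated here'', phase $2$ means ``$\fA_2$ has been simulated here''. When $\fA$ is run on a node $u$: if $u$ is in phase $0$, it looks at $B(u, t_1(n))$, reads the $\fA_1$-outputs (phase-$\geq 1$ data) of the already-processed neighbors there, computes $\fA_1$'s output and auxiliary data at $u$, and advances $u$ to phase $1$. If $u$ is already in phase $1$ \emph{and} every vertex in $B(u, t_2(n))$ is in phase $\geq 1$, then $\fA$ reads those $\fA_1$-outputs, simulates $\fA_2$ at $u$, records its output, and advances $u$ to phase $2$; the final reported color at $u$ is the $\fA_2$-color. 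If $u$ is in phase $1$ but some neighbor within distance $t_2(n)$ is still in phase $0$, $\fA$ does nothing (leaves $u$ in phase $1$). The locality of $\fA$ is $\max(t_1(n), t_2(n)) \leq t_1(n) + t_2(n)$ per visit; since each vertex is visited at most twice before it settles, and the intermediate no-op visits can be folded in, the total work per vertex corresponds to at most $2(t_1(n)+t_2(n))$-hop neighborhoods, which is where the claimed bound comes from. For $k$ algorithms one iterates the same idea with $k+1$ phases, giving $2\sum_i t_i(n)$.

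The main correctness claim to verify is that this process, run in \emph{any} adversarial order over visits, produces exactly the output of: first running $\fA_1$ to completion in some order, then running $\fA_2$ to completion (in a possibly different order) on the frozen $\fA_1$-output. For the $\fA_1$ part this is immediate — the phase-$0$-to-$1$ transitions, restricted to the visits where they happen, are exactly a run of $\fA_1$ in the induced order, and $\fA_1$ by hypothesis produces a valid output regardless of order. For the $\fA_2$ part, the point is that $\fA_2$ is only ever simulated at $u$ once \emph{all} of $B(u,t_2(n))$ has its final $\fA_1$-value, so $\fA_2$ at $u$ sees exactly the data it would see in a genuine run of $\fA_2$ on the completed $\fA_1$-labeling; hence the phase-$1$-to-$2$ transitions form a valid partial run of $\fA_2$, and since $\fA_2$ works for every order, the result is a solution. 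One also has to check that the process actually terminates with every vertex in phase $2$: this needs the observation that once every vertex is in phase $\geq 1$ (which happens after each vertex has been visited at least once), the condition for the phase-$1$-to-$2$ transition is met everywhere, so any remaining visits push vertices to phase $2$; since a sequential local algorithm is required to terminate with all outputs settled, we may assume the adversary eventually visits each vertex enough times.

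I expect the main obstacle to be purely presentational: making precise what ``run one sequential local algorithm after another'' means when sequential local algorithms must be order-oblivious, and phrasing the ``visit each vertex twice'' idea so that it genuinely fits \cref{def:sequential_algorithm} (which allows only one function $\fA$ applied once per vertex in the chosen order). The clean way to handle this is to not literally visit twice, but to let $\fA$, on each visit, do \emph{as much as it currently can} (advance $0\to1$, or $1\to2$ if ready, or nothing), and to observe that (i) the number of ``productive'' visits per vertex is at most $2$, and (ii) unproductive visits don't change anything, so the $t(n)$-hop neighborhood $\fA$ needs to inspect is $O(t_1(n)+t_2(n))$; the constant $2$ is then just the bookkeeping cost of possibly needing the neighborhood at radius $t_1(n)$ on one visit and radius $t_2(n)$ on another, bounded by $t_1(n)+t_2(n)$, times the two productive visits. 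Everything else is routine.
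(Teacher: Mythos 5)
There is a genuine gap, and it sits exactly at the spot you flagged as ``purely presentational''. In the model of \cref{def:sequential_algorithm}, the adversary picks an arbitrary \emph{order on the vertices} and $\fA$ is applied \emph{once} per vertex, at its turn, at which point the output $s$ at that vertex is produced; there is no second visit and no notion of the adversary ``eventually visiting each vertex enough times''. Your construction needs two productive visits per vertex: a vertex $u$ that comes early in the order (before the rest of $B(u,t_2(n))$ has been processed) can only advance to phase $1$ on its single visit, and under your scheme it then never gets the chance to run $\fA_2$, so its final output is never the $\fA_2$-color. The fallback you propose (``do as much as it currently can'') does not repair this, because a lazy/no-op visit is the only visit $u$ will ever get. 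So as written the algorithm is not a sequential local algorithm in the sense of the definition, and the correctness and complexity accounting built on the two-visit picture (including your explanation of the factor $2$) does not go through.

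The missing idea is to be \emph{eager} rather than lazy: when it is $u$'s turn, $\fA$ itself simulates $\fA_1$ at \emph{all} nodes of $B(u, t_2(n))$ at which $\fA_1$ has not yet been simulated, and then immediately runs $\fA_2$ at $u$ using these values, producing $u$'s final output in this single visit. Consistency of the scattered $\fA_1$-simulations is the real issue here (once $\fA_1$ has been simulated at some $v$ by an earlier visit, it must not be re-simulated with a different outcome later), and it is handled by storing at $u$, as the auxiliary information $t$, the outcomes of all $\fA_1$-simulations that $u$ performed inside $B(u, t_2(n))$; a later vertex $w$ first scans its $2t_2(n)$-hop neighborhood to recover which nodes of $B(w, t_2(n))$ already carry a recorded $\fA_1$-value (such a record can live at a node up to distance $t_2(n)$ beyond $B(w,t_2(n))$), and only simulates $\fA_1$ at the remaining ones. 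This is what the radius bound really comes from: one needs radius $t_2(n)+t_1(n)$ to simulate $\fA_1$ at a node at distance $t_2(n)$ from $u$, and radius $2t_2(n)$ to collect previously stored simulation records, giving local complexity $\max\bigl(t_1(n)+t_2(n),\, 2t_2(n)\bigr) \le 2\bigl(t_1(n)+t_2(n)\bigr)$ --- not ``two visits times one radius''. With this eager bookkeeping, the union of all $\fA_1$-simulations corresponds to running $\fA_1$ over the vertices in some consistent order, and each $\fA_2$-step at $u$ sees the finished $\fA_1$-labeling on $B(u,t_2(n))$, which is the correctness argument you were aiming for.
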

\begin{proof}
    We would like $\fA$ to work as follows: We iterate over the nodes and when it is the turn of a node $u$, we first simulate $\fA_1$ for all nodes in $B(u, t_2(n))$. Then, we use the computed information to simulate $\fA_2$ at $u$ to compute the final output at $u$. 

The only difficulty is that once $u$ simulates $\fA_1$ for a node $v \in B(u, t_2(n))$, we cannot simulate $\fA_1$ at $v$ again in the future since we want to have the guarantee that all simulations of $\fA_1$ taken together correspond to a consistent iteration over the nodes of the input graph and running $\fA_1$ on them. 

Thus, our algorithm $\fA$ will additionally store at $u$ the output of simulations of $\fA_1$ within $B(u, t_2(n))$.
When it is the turn of a vertex $u$, 
$\fA$ starts by looking at its $2t_2(n)$-hop neighborhood and fixing the answers of $\fA_1$ for nodes $v \in B(u, t_2(n))$ at which $\fA_1$ was already simulated in the past. 
Only then we simulate $\fA_1$ for the rest of the nodes in $B(u, t_2(n))$ and run $\fA_2$ after that. 
The final local complexity of $\fA$ is $\max\left(t_1(n) + t_2(n), 2t_2(n)\right)$.  
\end{proof}

% \paragraph{Importance for concrete problems}
% We will later see that deterministic algorithms are used as subroutines in randomized algorithms (see \cref{sec:2lll_regime}), so being able to derandomize algorithms is surprisingly crucial for the design of fastest randomized algorithms. There was a time where our only deterministic local algorithms for problems like maximal independent set were those provided by the general derandomization of \cref{thm:derandomization}, but now we have often have faster, problem-specific, algorithms (see \cref{sec:3concrete_problems}).    %
%It turns out that designing deterministic algorithms for specific local problems is surprisingly hard. In fact, there was a time when we already knew a deterministic $\polylog(n)$-round algorithm for network decomposition, but the only deterministic $\polylog(n)$-round algorithm for maximal independent set, or various coloring problems (see \cref{sec:3concrete_problems}) were based on derandomizing randomized algorithms by \cref{thm:derandomization}. Last few years, we also have some other techniques for the design of deterministic algorithms and those typically lead to better polylogarithmic factors. See \cref{sec:3concrete_problems} for more discussion. 

\subsection{Network Decompositions}
\label{sec:1network_decomposition}

Let us recall the definition of a network decomposition: %\footnote{In the literature, one can often ...}

\defdecomposition*

We will next discuss constructions of network decompositions\footnote{In the literature, one can very often encounter numerous variants of network decompositions with names like low-diameter clusterings, padded decompositions, or sparse neighborhood covers. }: the missing piece in proof of \cref{thm:sequential_vs_distributed_complexity,thm:derandomization}. %Our task is to construct a network decomposition with $c = O(\log n)$ color classes and $d = O(\log n)$ diameter. 

\paragraph{Existence of network decompositions}
First of all, it is unclear whether network decomposition of the input graph, say with parameters $c,d = O(\log n)$ always exists. Let's confirm this by constructing it using the folklore sequential \emph{ball-carving} algorithm. 

\begin{theorem}[Ball-carving algorithm]
\label{thm:network_decomposition_sequential}
    An $(O(\log n),$ $ O(\log n))$-network decomposition exists for any graph $G$. %; in particular, its sequential local complexity is $O(\log^2 n)$. 
\end{theorem}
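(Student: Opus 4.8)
I would run the folklore sequential \emph{ball-carving} process, which builds the color classes one at a time. Maintain the set $U \subseteq V(G)$ of still-uncolored vertices, initialized to $V(G)$. To produce color $c$, set an auxiliary ``untouched'' set $P := U$, and while $P \neq \emptyset$: pick an arbitrary $v \in P$, let $\rho \geq 0$ be the smallest radius satisfying $|B_{G[P]}(v,\rho+1)| \leq 2\,|B_{G[P]}(v,\rho)|$, give color $c$ to every vertex of the cluster $C := B_{G[P]}(v,\rho)$, delete $C$ from $U$, and delete the one-hop-larger ball $B_{G[P]}(v,\rho+1)$ from $P$ (so the boundary layer of $C$ stays uncolored and will be handled in a later phase). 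When $P$ empties, move on to color $c+1$ with the shrunken $U$; stop once $U = \emptyset$. The resulting coloring will be the claimed network decomposition.

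\textbf{Key steps.} First, \emph{bounded radius}: if the stopping condition failed for every $\rho \in \{0,\dots,t-1\}$, then $|B_{G[P]}(v,t)| > 2^t\,|B_{G[P]}(v,0)| = 2^t$, which for $t = \ceil{\log_2 n}$ exceeds $n \geq |V(G[P])|$, a contradiction; hence every carved $\rho$ is at most $\ceil{\log_2 n}$. Since a shortest path inside $G[P]$ from any $u \in C$ to $v$ has all its vertices within distance $\rho$ of $v$ and therefore stays inside $C$, the induced subgraph on $C$ has $\diam(G[C]) \leq 2\rho = O(\log n)$. Second, \emph{same color is an independent set of clusters}: if a cluster $C_1$ (carved when the untouched set was $P_1$) and a later cluster $C_2 \subseteq P_1 \setminus B_{G[P_1]}(v_1,\rho_1+1)$ got the same color and there were an edge $xy$ with $x \in C_1$, $y \in C_2$, then $x,y \in P_1$ forces $xy \in E(G[P_1])$ and hence $y \in B_{G[P_1]}(v_1,\rho_1+1)$, contradicting the choice of $C_2$. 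Third, \emph{geometric progress}: within one phase the deleted sets $B_{G[P]}(v,\rho+1)$ are pairwise disjoint and partition the value of $U$ at the start of the phase, and by the stopping condition each of them loses at most as many uncolored ``boundary'' vertices as the number of colored vertices it contributes; so at least half of the phase's uncolored vertices get colored, $|U|$ at least halves per phase, and $\ceil{\log_2 n}+1$ colors suffice. Combining the three points gives an $(O(\log n), O(\log n))$-network decomposition.

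\textbf{Main obstacle.} The crux is designing the carving rule so that a single stopping inequality does triple duty: $|B(v,\rho+1)| \leq 2|B(v,\rho)|$ must be weak enough that it is always reached within $O(\log n)$ growth steps (which bounds the cluster diameter) yet strong enough that the discarded boundary never exceeds the colored ball (which forces the geometric shrinkage of $U$ and thus the $O(\log n)$ bound on the number of colors). Simultaneously, growing the ball inside the \emph{untouched} subgraph $G[P]$ rather than all of $G[U]$, and removing the \emph{closed} ball from $P$, is exactly what keeps clusters of one color pairwise non-adjacent. Once the rule is fixed in this way, the three verifications above reduce to routine counting and shortest-path arguments.
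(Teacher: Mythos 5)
Your proposal is correct and is essentially the paper's own argument: the same sequential ball-carving with the stopping rule $|B(v,\rho+1)|\le 2|B(v,\rho)|$, deletion of the boundary layer to keep same-color clusters non-adjacent, the doubling argument for radius $O(\log n)$, the charging of discarded boundary vertices to colored ones so each phase colors at least half of what remains, and $O(\log n)$ phases in total. Your write-up is if anything slightly more careful than the paper's (e.g., growing balls inside $G[P]$ and the shortest-path argument that the induced cluster diameter is at most $2\rho$), but the approach is the same.
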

\begin{proof}
    We will show how to construct the first color class of the network decomposition. In particular, we will find a family of vertex-sets $\fC = \{C_1, C_2, \dots, C_t\}$, where we call each $C_i \subseteq V(G)$ a \emph{cluster}, such that:
    \begin{enumerate}
        \item The clusters are not adjacent; i.e., there is no edge $uv \in E(G)$ with $u \in C_i, v\in C_j$, $i \not= j$, \label{item:sb1}
        \item each cluster $C_i$ has diameter $O(\log n)$, \label{item:sb2}
        \item at least $n/2$ vertices are clustered, i.e., $\left| \bigcup_{i = 1}^t C_i \right| \ge n/2$.   \label{item:sb3}
    \end{enumerate}

    To construct the clustering $\fC$, we iterate over the vertices of $G$ in an arbitrary order. 
    Each time we are at a vertex $u$, we consider the balls $B(u, 0), B(u, 1), \dots $ of increasing radii around it. In particular, think of gradually growing larger and larger balls around $u$, and once the size of the ball does not at least double, i.e., once we have $|B(u, i+1)| \le 2 \cdot |B(u, i)|$ for the first time, we let $B(u, i)$ be the next cluster of $\fC$. 
    Additionally, we remove all vertices from the boundary $B(u, i+1) \setminus B(u,i)$ from $G$. 
    After the ball around $u$ is grown, we continue this procedure with an arbitrary next vertex of $G$ that was not explored yet. \todo{tady obrázek}

    To analyze this algorithm, first notice that no two clusters are adjacent (\cref{item:sb1}), since after growing each cluster, we delete its boundary. 

    Moreover, we claim that all balls have diameter $O(\log n)$ (\cref{item:sb2}). To see this, note that the volume of each ball $B(u, i)$ grows as $|B(u,i)| \ge 2^i$ until we finish growing. 
    This implies that if we are not finished in iteration $1 + \log_2n$, the ball $B(u, 1 + \log_2 n)$ has to contain more than $n$ vertices, a contradiction. 
    
    Finally, we cluster at least half of the vertices (\cref{item:sb3}), because by definition, whenever a ball stops growing, we have $|B(u, i+1) \setminus B(u,i)| \le |B(u,i)|$, so the number of vertices unclustered because of the ball around $u$ can be charged to the number of vertices clustered around $u$.     

    We construct the desired network decomposition by simply repeating the above process $\log_2 n$ times: The clustered vertices in the $i$-th step are removed from the graph and form the $i$-th color class of the final decomposition. After $\log_2 n$ rounds, every vertex gets clustered. 

\end{proof}

%This algorithm produces clusters that have many nice properties. For example, the number of edges going between different clusters is at most a constant proportion of $|E(G)|$, with high probability. Moreover, the number of vertices that are in the same cluster as all of their neighbors is also at least a constant fraction of $|V(G)|$, with high probability. We can thus peel off the boundary of each cluster to get the clustering with disjoint clusters of diameter $O(\log n)$. % the same properties as the one constructed in the proof of \cref{thm:network_decomposition_sequential}. 

\paragraph{Deterministic distributed algorithms}
% Unfortunately, randomized algorithms are not quite enough for our applications in \cref{thm:sequential_vs_distributed_complexity,thm:derandomization}. 
% This motivates 
There is a long line of work on deterministic algorithms for constructing network decompositions \cite{awerbuch89,panconesi-srinivasan,rozhon_ghaffari2019decomposition,ghaffari_grunau_rozhon2020improved_network_decomposition,chang_ghaffari2021strong_diameter,elkin_haeupler_rozhon_grunau2022Clusterings_LSST,rozhon_haeupler_grunau2023simple_decomposition,ghaffari_grunau_haeupler_ilchi_rozhon2023improved_decomposition,ghaffari2024near}. 
The currently fastest deterministic algorithm for network decomposition is the algorithm by \citet*{ghaffari2024near}. We stated its guarantees in \cref{thm:best_decomposition}. % is a quite complicated algorithm based on the randomized algorithm of \citet*{miller2013parallel} discussed above, derandomized using the so-called local rounding technique (see \cref{sec:3concrete_problems}). 

Perhaps the simplest $\polylog n$-round algorithm is the algorithm of \citet*{rozhon_ghaffari2019decomposition} and its variant by \citet*{rozhon_haeupler_grunau2023simple_decomposition}. They both need $O(\log^7 n)$ rounds and construct clusters with diameter $O(\log^3 n)$. 
We will next explain the construction from \cite{rozhon_ghaffari2019decomposition} that in fact outputs clusters that only have a weaker guarantee of small \emph{weak-diameter}. 
A weak-diameter of a cluster $C$ is defined as $\max_{u,v \in C} d_G(u,v)$, as opposed to the (strong-) diameter which is defined as $\max_{u,v \in C} d_{G[C]}(u,v)$. That is, a cluster with a small weak diameter can be even disconnected, we only require that its vertices are close in the underlying graph $G$. 
A small weak-diameter is sufficient for our applications such as the proof of \cref{thm:sequential_vs_distributed_complexity} and we will discuss at the end of this section how one can convert the weak-diameter guarantee to the strong-diameter one in a black-box way. 

\begin{theorem}[Distributed ball-carving algorithm]
    \label{thm:network_decomposition_deterministic}
    There is a local algorithm with round complexity $O(\log^7 n)$ that constructs a network decomposition with $c = O(\log n)$ colors and $d = O(\log^3 n)$ weak-diameter. 
\end{theorem}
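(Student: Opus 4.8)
The plan is to implement the sequential ball-carving argument from the proof of \cref{thm:network_decomposition_sequential} in the distributed setting, where the difficulty is that we cannot grow clusters one at a time in a global order. Instead I would grow all clusters of one color class \emph{simultaneously in parallel}, using the node identifiers to break ties between competing clusters. Fix a color class round. Start with one candidate cluster per vertex (so cluster names are just identifiers), and repeat for $O(\log n)$ phases the following ``one step of ball growing'': every cluster tries to annex all currently-unclustered vertices at distance $1$ from it; when a vertex is contested by several clusters it joins the one with the smallest name. A cluster is declared \emph{finished} (added to the current color class) as soon as a growth step fails to at least double its size, and its boundary is then frozen/removed exactly as in the sequential proof.

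The key steps in order: (1) \textbf{Setup.} Describe the parallel ball-growing subroutine precisely, including the tie-breaking rule and the ``stop when size doesn't double'' rule; note that since the volume of a still-growing cluster at least doubles each phase, after $O(\log n)$ phases every cluster is finished, so one color class costs $O(\log n)$ growth steps. (2) \textbf{Non-adjacency.} Argue that after a color class is completed, no two of its clusters are adjacent, because the boundary layer separating a finished cluster from the rest is deleted; here one must be slightly careful that a vertex deleted as the boundary of one cluster is not simultaneously interior to another — the tie-breaking by smallest name makes the assignment of each vertex unambiguous, so this goes through. (3) \textbf{Progress.} Show at least half the currently-active vertices get clustered in each color class, by the same charging argument as in \cref{thm:network_decomposition_sequential}: each finished cluster's deleted boundary is no larger than the cluster itself, so iterating $O(\log n)$ color classes clusters everything, giving $c = O(\log n)$. (4) \textbf{Weak-diameter and round complexity.} Observe each cluster was built in $O(\log n)$ growth steps and hence every vertex of it is within $G$-distance $O(\log n)$ of the originating identifier, so the weak-diameter is $O(\log n)$ — better than the claimed $O(\log^3 n)$, which we may simply state as a weaker bound that comfortably holds. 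For the round count: each growth step is \emph{not} simply one round, because a cluster must propagate its current name (and the ``do I double?'' decision) to all its members, and the cluster may already have weak-diameter up to $O(\log n)$, so a single growth step costs $O(\log n)$ rounds of communication within the cluster; with $O(\log n)$ growth steps per color class and $O(\log n)$ color classes this is $O(\log^3 n)$ rounds, comfortably within $O(\log^7 n)$.

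The main obstacle I anticipate is step (2): in the truly parallel execution, when two clusters both want the same boundary vertex in the \emph{same} phase, we must ensure the final partition of deleted vs.\ kept vertices is globally consistent and that no edge survives between two clusters of the same color. In the sequential proof this is free because clusters are grown one after another; in parallel one has to argue that the smallest-name rule yields, for each contested vertex, a single ``owner'' cluster, that this vertex is then deleted as that owner's boundary, and that therefore any edge leaving a finished cluster has its far endpoint deleted. I would handle this by processing the competition carefully within each phase: first every active cluster broadcasts, then each unclustered vertex locally picks the minimum-name neighboring cluster and either joins it (if that cluster is still growing) or is marked as that cluster's boundary (and deleted) — and one checks this local rule has no conflicts. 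A secondary, more bookkeeping-heavy obstacle is tracking the round complexity honestly, since ``growth step'' hides a factor equal to the current cluster diameter; I would bound that diameter by $O(\log n)$ throughout and absorb everything into the stated $O(\log^7 n)$, which leaves ample slack (the $O(\log^7 n)$ in the literature comes from additional technical overheads in \cite{rozhon_ghaffari2019decomposition} that a clean exposition can afford to be lossy about).
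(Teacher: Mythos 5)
Your plan is a direct parallelization of the sequential ball-carving from \cref{thm:network_decomposition_sequential}, which is a genuinely different route from the paper's proof --- and the obstacle you flag in step (2) is not a bookkeeping issue but the entire difficulty of the theorem, and your proposed fix does not overcome it. The problem is what happens when two still-growing clusters $A$ and $B$ of the same color class meet. Your rule makes each contested vertex join a single owner (say $v$ joins $A$ and its neighbor $w$ joins $B$); then neither $v$ nor $w$ is deleted, the edge $vw$ survives, and $A$ and $B$ finish as adjacent clusters of the same color. If instead a finishing cluster deletes its entire graph-boundary, it must delete vertices already owned by other clusters (damaging them mid-growth), and the charging bound of step (3) collapses: the inequality ``deleted boundary $\le$ cluster size'' in the sequential proof comes from the doubling condition applied to genuine balls $B(u,i+1)\setminus B(u,i)$, whereas a cluster truncated by competition is not a ball, can stop doubling for reasons unrelated to local volume growth, and can have a graph-boundary far larger than itself. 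No local tie-breaking by identifier repairs this; this is essentially why a deterministic $\polylog n$-round network decomposition was open for three decades.

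The paper's proof circumvents the collision problem with a different mechanism (Rozh{o}\v{n}--Ghaffari): it runs $b=O(\log n)$ phases, one per bit of the cluster identifiers, and in phase $i+1$ only the clusters whose $(i+1)$-th bit is $0$ grow, by absorbing vertices that \emph{propose} from inactive clusters; an active cluster grows only if at least a $1/(2b)$ fraction of its size proposes, and otherwise deletes exactly the proposers. This caps deletions at $|C|/(2b)$ per stop and caps the number of growth steps at $O(b\log n)$ (since size can only increase by a $(1+1/(2b))$ factor per step), and non-adjacency follows not from deleting boundaries of balls but from the invariant that clusters in the same surviving connected component agree on their first $i$ bits after phase $i$ --- after all $b$ phases adjacent clusters would need identical identifiers. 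This structure is also why the stated bounds are what they are: the weak-diameter genuinely grows by $O(b\log n)=O(\log^2 n)$ per phase, giving $O(\log^3 n)$, and the $O(\log^7 n)$ round count is $O(\log n)$ color classes $\times$ $b=O(\log n)$ phases $\times$ $O(\log^2 n)$ steps $\times$ $O(\log^3 n)$ rounds per step to communicate within a cluster. Your claims that the diameter would really be $O(\log n)$ and that the $\log^7$ is dispensable slack are therefore artifacts of an algorithm that does not work as described.
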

\begin{proof}
    Similarly to \cref{thm:network_decomposition_sequential}, we show how to construct a family of clusters $\fC = \{C_1, C_2, \dots, C_t\}$ such that:
    \begin{enumerate}
        \item No two clusters are adjacent, \label{item:db1}
        \item each cluster $C_i$ has weak-diameter $O(\log^3 n)$, \label{item:db2}
        \item at least $n/2$ vertices are clustered. \label{item:db3}
    \end{enumerate}

    Recall that in deterministic local algorithms, every vertex starts with a unique $b = O(\log n)$-bit identifier. Our algorithm has $b$ phases. At the beginning of the algorithm, we start with a clustering $\fC_0$ that contains each node of $G$ as a trivial one-vertex cluster. Each cluster $C$ in our clustering is assigned a unique identifier $id(C)$ -- in particular, the identifier of a cluster is set to be the identifier of its unique vertex. 
    
    The clustering $\fC_i$ evolves during the following $b$ phases, with some vertices changing which cluster they belong to and some vertices being deleted from $\fC_i$. We will prove that after each phase $i \in [b]$, the clustering $\fC_i$ has the following guarantees:
    \begin{enumerate}
        \item Consider the graph $G_i = G[\bigcup_{C \in \fC_i} C]$, i.e., the graph induced by vertices that were not deleted yet. Consider any connected component $K$ of $G_i$. All clusters present in $K$ have the same first $i$ bits in their unique identifier. \label{item:pb1}
        \item Each cluster $C \in \fC_i$ has weak-diameter $i \cdot O(\log^2 n)$. \label{item:pb2}
        \item $\left| V(G_i) \right| \ge n - i \cdot \frac{n}{2b}$. \label{item:pb3}
    \end{enumerate}    
    Plugging $i = b$, the inductive guarantees of \cref{item:pb1,item:pb2,item:pb3} reduce to the final desired guarantees of  \cref{item:db1,item:db2,item:db3} from the beginning of the proof. Thus, we only need to show how one phase of the algorithm makes sure that if we start with the guarantees for some $i$, they also hold for $i+1$. 
% \setlength{\fboxsep}{0pt}
% % Optional: Set border thickness (e.g., 1pt)
% \setlength{\fboxrule}{1pt}
    \begin{figure}
        \centering
        %\fbox{
        \includegraphics[width = \textwidth]{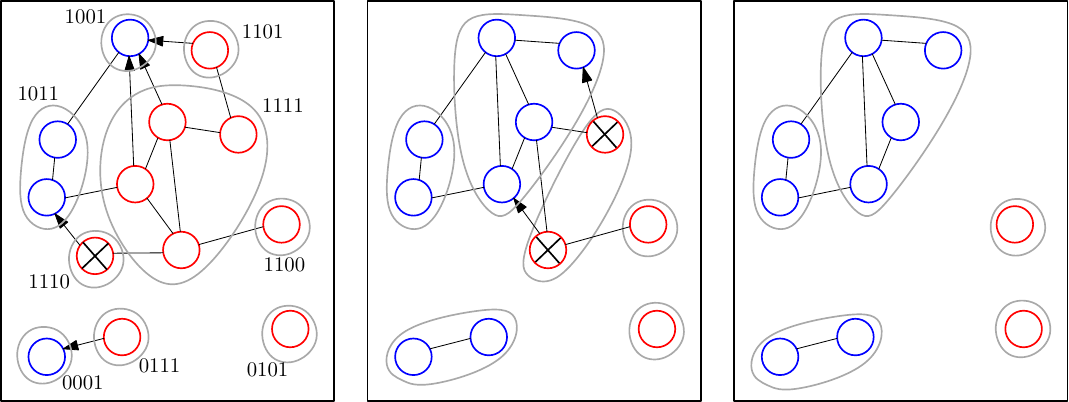}
        %}
        \caption{This picture illustrates the second phase of the algorithm from \cref{thm:network_decomposition_deterministic}, in a simple example graph. %For a reason not clear to the author, his younger self was reading bits from right to left, so in the second phase, we focus on the second bit from the right in the identifiers. \\ 
        Left: At the beginning of the second phase, the clusters are already separated according to their first bit. 
        In particular, if the identifier starts with $1$, the cluster is in the top connected component, while if the identifier starts with $0$, the corresponding cluster is in one of the two bottom connected components. In the second phase, a cluster is active (blue) if its second bit in the identifier is $0$, and inactive (red) otherwise. In the first step of this phase, each red vertex proposes to join an arbitrary neighboring blue cluster, provided that there is one (the arrows in the picture). The active cluster then either decides to grow (cluster $1001$) or the vertices that proposed to it are deleted (cluster $1011$). Note that if a cluster decides to delete its boundary, it does not neighbor any red nodes from that point on.  \\
        Middle: After one step of our algorithm, the inactive cluster with identifier $1111$ became disconnected but its weak-diameter remains the same as at the beginning of the phase. In the second step of this phase, each red vertex again proposes to join an arbitrary neighboring blue cluster. \\
        Right: The picture shows the resulting clustering after the second phase is finished.  Note that the only two adjacent clusters with identifiers $1011$ and $1001$ will be separated in the following, third, phase, as their identifiers differ on the third position. }
        \label{fig:distributed_decomposition}
    \end{figure}

    At the beginning of each phase $i+1 \in [b]$, we classify each cluster in $\fC_i$ as either \emph{active} or \emph{inactive}, where a cluster $C$ is active if and only if the $(i+1)$-th bit in $id(C)$ is equal to zero. 

    Next, we run a variant of the ball-carving algorithm from \cref{thm:derandomization_sequential} with $t = 10b\log_2 n$ steps. %In each step, all active clusters try to grow by a multiplicative factor of $1 + 1/(2b\log_2  n)$. More concretely, i
    In each step of this algorithm, each vertex $u$ from some inactive cluster first checks whether it neighbors with a vertex that is currently present in an active cluster. 
    If there are more such neighboring vertices, $u$ chooses an arbitrary such vertex in $C$ and then $u$ \emph{proposes to join $C$}. 

    Next, each cluster $C$ collects how many vertices proposed to join $C$. If there are at least $|C| / (2b)$ such vertices, then $C$ decides to grow: All the vertices that proposed to join $C$ leave their original cluster and join $C$. 
    On the other hand, if less than $|C| / (2b)$ proposed to join $C$, all these vertices are \emph{deleted}; they will not be present in $\fC_{i+1}$ and they will not participate in the rest of the algorithm. 
    Note that after an active cluster $C$ decides to delete the proposing vertices, it is neighboring only with nodes from other active clusters and thus it does not grow anymore in the rest of the phase. This finishes the description of one phase of the algorithm (see \cref{fig:distributed_decomposition} for an illustration). 

    To analyze the phase, we first check that each cluster stops growing during the phase. This is true because otherwise, at the end of the phase, the cluster would have to contain more than 
    \[
    \left( 1 + 1/(2b) \right)^{10b \log_2 n} > n
    \]
    nodes. In particular, the weak-diameter of each active cluster grows additively by at most $2 \cdot 10b \log_2 n = O(\log^2 n)$, while the weak-diameter of each inactive cluster does not increase. This proves \cref{item:db2}. 

    Next, consider any connected component $K$ of $G_i$. The clusters in $K$ already agree on the first $i$ bits of their identifier at the beginning of the phase $i+1$ by our inductive assumption. 
    Recall that all active clusters stop growing during the phase and delete the inactive nodes on their boundary; in particular no active and inactive cluster neighbor at the end of the algorithm. 
    Therefore, in $G_{i+1}$ the component $K$ further splits into connected components $K_1, K_2, \dots $ such that for each component $K_i$ we have that either all of its clusters are active or all are inactive. We conclude that clusters in each connected component of $G_{i+1}$ agree on the first $i+1$ bits in their identifier as needed in \cref{item:db1}. 

    Finally, whenever a cluster $C$ stops growing, we delete at most $|C| / (2b)$ nodes. Thus, during the phase, we delete at most $n/(2b)$ nodes which proves \cref{item:db3}. 

    Let us discuss the round complexity of the algorithm. The clusters have weak-diameter $O(\log^3 n)$. Hence, each growing step can be implemented with that round complexity. There are $t = O(\log^2 n)$ steps in one phase, $b = O(\log n)$ phases, and we need to repeat the overall algorithm $c = \log_2 n$ times. We conclude that the overall round complexity is $O(\log^7 n)$.  
\end{proof}

Next, let us show how we can use our current knowledge of local algorithms to construct a local algorithm for network decomposition with $c,d = O(\log n)$. 
\begin{corollary}
    \label{cor:network_decomposition_deterministic}
    There is a local algorithm with $O(\log^9 n)$ round complexity that constructs a network decomposition with $c = O(\log n)$ colors and $d = O(\log n)$ diameter. 
\end{corollary}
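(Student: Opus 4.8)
The plan is to avoid any new clustering construction and instead \emph{simulate} the sequential ball-carving algorithm of \cref{thm:network_decomposition_sequential}, which already outputs a \emph{strong}-diameter $(O(\log n),O(\log n))$-network decomposition; the only work is to observe that it is a legitimate sequential local algorithm of modest complexity and then to run it through the simulation of \cref{thm:sequential_vs_distributed_complexity}, using the self-contained weak-diameter construction of \cref{thm:network_decomposition_deterministic} as the engine rather than \cref{thm:best_decomposition}. In particular, no separate ``weak diameter $\Rightarrow$ strong diameter'' argument is needed: the strong-diameter (and the $O(\log n)$, rather than $O(\log^3 n)$, diameter) guarantee is inherited for free from the sequential algorithm being simulated.

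\textbf{Step 1: ball-carving as a sequential local algorithm.} One round of ball-carving --- the construction of a single color class in the proof of \cref{thm:network_decomposition_sequential} --- is a sequential local algorithm of complexity $O(\log n)$ in the sense of \cref{def:sequential_algorithm}: when it is a node $u$'s turn, $u$ inspects the $(\log_2 n + O(1))$-hop ball around it and reads from the side information stored there whether each nearby node has already been clustered/deleted in an earlier color class and whether a currently-growing ball has reached it; using only this view, $u$ executes the ball-growing rule (grow a ball around an as-yet-unprocessed alive vertex until its volume fails to double, turn that ball into a cluster, delete its boundary) and records its resulting status as its side output. Composing the $\log_2 n$ such rounds via the $k$-fold form of \cref{thm:sequential_pipelining} yields a single sequential local algorithm $\fA_{\mathrm{ND}}$ of complexity $2\log_2 n \cdot O(\log n) = O(\log^2 n)$ whose output, under \emph{any} vertex ordering, is a network decomposition of the input graph with $O(\log n)$ colors whose clusters are connected in $G$ and have (strong) diameter $O(\log n)$.

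\textbf{Step 2: distributed simulation.} Now I would run the argument from the proof of \cref{thm:sequential_vs_distributed_complexity} on $\fA_{\mathrm{ND}}$, with $t := O(\log^2 n)$, but with \cref{thm:network_decomposition_deterministic} in place of \cref{thm:best_decomposition}. We work in the power graph $G^{t}$ (same vertex set, hence the same identifiers) and build in it a network decomposition with $O(\log n)$ colors and weak diameter $O(\log^3 n)$; since each communication round in $G^{t}$ costs $t$ rounds in $G$, this takes $t \cdot O(\log^7 n) = O(\log^9 n)$ rounds in $G$. Viewed in $G$, clusters of the same color lie at distance $> t$ and each cluster has weak diameter $\le t \cdot O(\log^3 n) = O(\log^5 n)$. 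As in \cref{thm:sequential_vs_distributed_complexity}, we process the color classes one at a time; for a cluster $C$ of the current color, its leader floods $G$ to radius $O(\log^5 n) + t = O(\log^5 n)$ to learn every $t$-hop view of every node of $C$ together with the $\fA_{\mathrm{ND}}$-outputs produced so far, internally simulates $\fA_{\mathrm{ND}}$ on $C$, and broadcasts the answers back. The distance-$>t$ separation of same-color clusters guarantees that these per-cluster simulations are consistent with one global ordering, exactly as in \cref{thm:sequential_vs_distributed_complexity}. The reason a weak-diameter decomposition is enough here is that the \local model has no bandwidth constraint, so the leader of a (possibly disconnected) cluster can still flood over $G$. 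Each of the $O(\log n)$ color classes thus costs $O(\log^5 n)$ rounds, i.e. $O(\log^6 n)$ in total, and together with the $O(\log^9 n)$ spent on the power-graph decomposition the whole algorithm runs in $O(\log^9 n)$ rounds; its output is a run of $\fA_{\mathrm{ND}}$, hence a $(O(\log n),O(\log n))$-network decomposition.

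\textbf{Main obstacle.} The only genuinely delicate points are (a) confirming that ball-carving really fits \cref{def:sequential_algorithm} --- that ``a currently-growing ball has reached me'' and ``I have been deleted/clustered'' are encodable in the bounded side information so that every node's decision depends on an $O(\log n)$-hop view --- and (b) the bookkeeping of Step 2, in particular checking that the \emph{weak}-diameter guarantee of \cref{thm:network_decomposition_deterministic} (not a strong-diameter one) suffices inside the simulation, which, as noted, rests precisely on the absence of congestion in the \local model.
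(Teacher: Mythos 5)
Your proposal is correct and follows essentially the same route as the paper: view the sequential ball-carving algorithm of \cref{thm:network_decomposition_sequential} as a composition of $O(\log n)$ sequential local algorithms pipelined via \cref{thm:sequential_pipelining} into one of complexity $O(\log^2 n)$, and then simulate it via \cref{thm:sequential_vs_distributed_complexity} with the weak-diameter decomposition of \cref{thm:network_decomposition_deterministic} plugged in, observing that weak diameter suffices in the \local model. Your explicit round-complexity accounting ($O(\log^2 n)\cdot O(\log^7 n)=O(\log^9 n)$) just spells out what the paper leaves implicit.
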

\begin{proof}
    We observe that the algorithm from \cref{thm:network_decomposition_sequential} can be seen as a sequence of $O(\log n)$ sequential local algorithms per \cref{def:sequential_algorithm}, each one with local complexity $O(\log n)$. Using \cref{thm:sequential_pipelining}, we can thus view it as a single sequential local algorithm of local complexity $O(\log^2 n)$. 

    Thus, we can use \cref{thm:sequential_vs_distributed_complexity} to convert this algorithm into a local algorithm. While we phrased the guarantees of \cref{thm:sequential_vs_distributed_complexity} using the best available network decomposition algorithm, we can plug in the network decomposition from \cref{thm:network_decomposition_deterministic} instead; the reduction holds even if we use a network decomposition with a weak-diameter guarantee. 
\end{proof}

%\paragraph{Completeness of network decomposition}
Let us briefly go back to the observation from the proof of \cref{cor:network_decomposition_deterministic} that the algorithm from \cref{thm:network_decomposition_sequential} can be viewed as a sequential local algorithm. This observation helps us to appreciate network decomposition as the ``complete'' problem for turning sequential local algorithms into distributed ones: On the one hand, network decomposition allows us to do this task via \cref{thm:sequential_vs_distributed_complexity}; on the other hand, \emph{any} way of proving that theorem leads to a distributed algorithm for network decomposition via the proof of \cref{cor:network_decomposition_deterministic}.  

% Second, this observation is useful because known deterministic algorithms directly produce only clusterings with a diameter at least slightly larger than $O(\log n)$. These clusters suffice to prove \cref{thm:sequential_vs_distributed_complexity}, but it still makes sense to ask whether we can produce $O(\log n)$-diameter clusters with a deterministic distributed algorithm. Viewing the ball carving algorithm as a sequential local algorithm allows us to use \cref{thm:sequential_vs_distributed_complexity} to it and conclude that there is a $\poly\log(n)$-round deterministic local algorithm constructing an $(O(\log n), O(\log n))$-network decomposition. 

\paragraph{Randomized distributed algorithms}
%Let us now discuss distributed algorithms for network decomposition. 
There is a classic randomized algorithm by \citet*{linial_saks93low_diameter_decompositions} that constructs an $(O(\log n), O(\log n))$-network decomposition in $O(\log^2 n)$ distributed rounds. Let us sketch its beautiful variant by \citet*{miller2013parallel} also known as the MPX algorithm:

In their algorithm, every vertex independently chooses a random \emph{head start} sampled from an exponential distribution; that is, a head start of each node is $0$ with probability $1/2$, $1$ with probability $1/4$, and so on. 
Next, we run the breadth-first-search algorithm from all nodes at once with those head starts. That is, we first choose some number $T = O(\log n)$ such that with high probability, no node samples a head start larger than $T$. Then, we simulate a run of breadth-first search from an additional virtual node $u_0$ which is connected to every other node $u$ with an oriented edge of length $T- \text{head start}(u)$. All vertices reached by the breadth-first search from the same starting node $u$ form one cluster. The output of the algorithm are only those vertices such that all their neighbors are from the same cluster. 

One can prove that the output clustering from this algorithm contains at least a constant fraction of all vertices and uses disjoint clusters of diameter $O(\log n)$, with high probability. We repeat this process $c = O(\log n)$ times to construct a network decomposition. 
%We omit the proof as we will show a deterministic construction of network decompositions next. 

\subsection{Bounds for Concrete Problems}
\label{sec:3concrete_problems}

This survey emphasizes the general properties of local algorithms, but it is crucial to remember that their study is rooted in understanding specific local problems. One of the most significant problems that has driven much of the development in this field is the maximal independent set problem. This problem belongs to the group of four classical \emph{symmetry-breaking} problems, along with maximal matching, vertex coloring, and edge coloring that we discuss later in this section. 

\subsubsection{Maximal Independent Set}
Recall that the maximal independent set problem involves finding a subset of vertices in a graph that is both independent (no two vertices are adjacent) and maximal (adding any other vertex would violate independence). 
The first algorithms for this problem were the randomized algorithms by \citet{luby86_lubys_alg} and \citet{alon86lubys_algorithm}, which find a solution in $O(\log n)$ rounds with high probability.\footnote{\cref{sec:1sequential_vs_distributed} contains additional discussion related to maximal independent set and Luby's algorithm.}
Around the same time, Linial introduced the model of local algorithms \cite{linial92LOCAL_definition}, perhaps influenced by those algorithms. 

One of the most important open questions in this area is whether a sublogarithmic-round randomized algorithm exists for the maximal independent set problem.

\begin{problem}
    Is there a randomized $o(\log n)$-round algorithm for the maximal independent set problem? Is there an $\tilde{O}(\sqrt{\log n})$-round algorithm?\footnote{This text lists some open problems in the area. For a different list, see \cite{suomela_open}.}
\end{problem}\todo{add cit from seri}

Note that this problem specifically asks for a randomized algorithm. We discuss the deterministic case shortly. 

The second part of the question is motivated by the celebrated lower bound by \citet{kuhn16_jacm}, which shows that the complexity of the problem is $\Omega\left(\sqrt{\frac{\log n}{\log\log n}}\right)$.
More precisely, they show that for any maximum degree $\Delta$, we can derive a lower bound of $\Omega\left(\min\left( 
\sqrt{\frac{\log n}{\log\log n}},
\frac{\log\Delta}{\log\log \Delta}
\right)\right)$ for the maximal independent set problem. This motivates analyzing the complexity of the problem both as a function of $n$ and $\Delta$.

\paragraph{Algorithms for small $\Delta$}
An important related question is to understand the complexity of the problem as a function of the maximum degree, $\Delta$. In this context, remarkable progress has been made using the so-called shattering framework. The main idea is to develop fast randomized algorithms that solve the problem for most vertices, leaving only small ``islands'' of unsolved vertices to be resolved later by an appropriate deterministic local algorithm (see \cref{sec:2lll_regime} for an example usage of the technique).
This technique, developed in several papers \cite{barenboim_elkin_pettie_schneider2016shattering,fischer_ghaffari2017sublogarithmic,chang_li_pettie2018optimal_coloring}, culminated in an algorithm by \citet{ghaffari2016MIS} that solves the problem in $O(\log \Delta + \poly\log\log n)$ rounds.

On the lower bound side, the aforementioned bound by \citet{kuhn16_jacm} shows that $O(\log \Delta)$ complexity cannot be substantially improved for certain ranges of $\Delta$. 

Moreover, the celebrated round elimination technique developed over the past decade \cite{brandt2019automatic,brandt_et_al2016_LLL_lower_bound,balliu2019LB_matching_MIS,balliu2021improved_mis,brandt2020truly,balliu2022hide_and_seek,balliu2020ruling} (see \cref{sec:2log_star_regime,sec:2lll_regime} or the survey by \citet*{suomela_survey}) shows that this complexity is nearly optimal. In particular, \citet*{balliu2019LB_matching_MIS} use round elimination to establish a lower bound of $\Omega\left( \min\left(
\Delta, 
\frac{\log \log n}{\log\log\log n}
\right)\right)$.
We conclude that no algorithm can achieve round complexity $O(\log \Delta) + o\left(\frac{\log \log n}{\log\log\log n}\right)$ or even $o(\Delta) + o\left(\frac{\log \log n}{\log\log\log n}\right)$.

Interestingly, a deterministic algorithm by Barenboim, Elkin, and Kuhn \cite{BEK15,barenboimelkin2009mis,Kuhn2009WeakColoring} for the maximal independent set problem runs in $O(\Delta + \log^* n)$ rounds, which is also tight under this lower bound.

\paragraph{Deterministic algorithms}
For deterministic algorithms, the most significant open question is whether their complexity can match that of Luby's algorithm.

\begin{problem}
    Is there a deterministic $\tilde{\Theta}(\log n)$-round algorithm for the maximal independent set problem?
\end{problem}

For a long time, the deterministic complexity of the maximal independent set problem paralleled the best-known algorithms for network decompositions (discussed in \cref{sec:1network_decomposition}), which implies a polylogarithmic round complexity. 
Recently, a number of papers developed techniques such as local rounding and distributed derandomization \cite{hanckowiak01,FischerGK17,harris2019distributed,ghaffari_kuhn2022deterministic_vertex_coloring,faour2023local,ghaffari_grunau2023faster_MIS_matching}; this lead to decoupling of the complexity of the maximal independent set from network decomposition. Currently, the best-known algorithm by \citet{ghaffari2024near} achieves a complexity of $\tilde{O}(\log^{5/3} n)$ rounds.

On the lower bound side, \citet*{balliu2019LB_matching_MIS} used round elimination to show that the deterministic local complexity of finding a maximal independent set is $\Omega(\log n / \log\log n)$.

\section{The Bounded-Degree Regime}
\label{chap:2_below_logn}

In the previous section, we gained a good understanding of the fundamentals of local complexity -- in essence, if we care about the complexities up to $\poly\log n$, there are several equivalent definitions of it with sequential local algorithms being a particularly helpful model for designing algorithms. %if we want to understand whether a given problem allows a local algorithm with small round complexity, we should think about whether that problem is solvable with a sequential local algorithm with small local complexity. 

In this section, we will restrict our attention to bounded degree graphs and local problems of sublogarithmic complexity. Something surprising is going to happen: We will see that while, a priori, we would expect all kinds of problem complexities, there are only three distinctive classes of local problems. This \emph{sharp threshold phenomenon} is a consequence of remarkable results known as \emph{speedup theorems}. %Sometimes, we   \emph{sharp threshold phenomen}\todo{true?} and it is suspiciously similar to phase transitions known from statistical physics \cite{....}. 

As a rough roadmap for the rest of the section, we are going to give a more or less self-contained proof of the following \cref{thm:classification_basic}, with \cref{sec:2log_star_regime} focusing on the symmetry-breaking regime, \cref{sec:2lll_regime} focusing on the Lovász-local-lemma regime, and \cref{sec:2speedups_and_slowdowns} focusing on showing that there are gaps in between the regimes. 

\begin{restatable}[Classification of local problems with $o(\log n)$ complexity on bounded degree graphs]{theorem}{classification}
    \label{thm:classification_basic}
    Let us fix any $\Delta$ and the class of graphs of degree at most $\Delta$. Then, any local problem %\footnote{It is more usual to state this theorem for LCL problems defined in \cref{def:lcl}. However, in this setup, the two definitions coincide. } 
     with randomized round complexity $o(\log n)$ has one of the following three round complexities. 
    \begin{enumerate}
        \item \emph{Order-invariant regime:} The problem has $O(1)$ deterministic and randomized round complexity.   
        \item \emph{Symmetry-breaking regime:} The deterministic and randomized round complexity of the problem lies between $\Omega(\log\log^* n)$ and $O(\log^* n)$ (both the deterministic and the randomized complexity is the same function). %The deterministicsequential local complexity is $O(1)$. 
        \item \emph{Lovász-local-lemma regime:} The problem has deterministic round complexity between $\Omega(\log n)$ and $\tilde{O}(\log^4 n)$. Its randomized round complexity is between $\Omega(\log\log n)$ and $\tilde{O}(\log^4\log n)$. 
    \end{enumerate}
\end{restatable}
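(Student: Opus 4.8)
The plan is to prove each of the three regimes essentially separately, since the statement is a "trichotomy" assembled from three independent speedup/classification results, each of which will be developed in a later section. Throughout I would fix $\Delta$ and work in the class of graphs of maximum degree at most $\Delta$, so that the label set $S$, the constraint family $\fP$, and the support of any local function are all finite — this finiteness is what makes the whole classification possible, and it should be stated up front.

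\textbf{Step 1: the symmetry-breaking gap.} First I would establish the ``no man's land'' between $\omega(1)$ and $\Omega(\log\log^* n)$, and the corresponding upper-bound collapse from anything $o(\log\log^* n)$ down to $O(1)$. The tool here is the order-invariant speedup: a deterministic $t(n)$-round algorithm on bounded-degree graphs with $t(n) = o(\log^* n)$ can be made \emph{order-invariant}, i.e.\ it only uses the relative order of identifiers, not their values; this is a Ramsey-theoretic argument (colour all increasing $t$-tuples of identifiers by the algorithm's output, extract a monochromatic subset). An order-invariant algorithm running in $o(\log\log^* n)$ rounds can then be shown to run in $O(1)$ rounds. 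For the randomized side I would invoke the standard fact (to be proven in the $\log^*$-regime section, cf.\ the discussion around \cref{thm:classification_paths}) that on bounded-degree graphs a randomized $o(\log^* n)$ algorithm for a local problem implies a deterministic one of the same asymptotic complexity (randomness buys nothing below $\log^* n$). Combining these two gives: randomized complexity $o(\log\log^* n)$ $\Rightarrow$ $O(1)$ deterministic, and otherwise the complexity is sandwiched in $[\Omega(\log\log^* n), O(\log^* n)]$ with the deterministic and randomized complexities equal.

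\textbf{Step 2: the gap between $\omega(\log^* n)$ and $\Omega(\log n)$ (deterministic) / $\Omega(\log\log n)$ (randomized).} This is the harder gap. The plan is to use the randomized-to-deterministic and sequential-to-distributed machinery already in the excerpt together with a ``derandomized shattering'' argument, but the cleanest route is via the known speedup theorem: any local problem on bounded-degree graphs solvable in $o(\log n)$ deterministic rounds is in fact solvable in $O(\log^* n)$ rounds. I would sketch this through the standard reduction to a ``relaxed'' problem plus round elimination / the Ramsey argument lifting the relative-order structure beyond the $\log^* n$ threshold — this is exactly what \cref{sec:2speedups_and_slowdowns} is advertised to do, so I would defer the internals there. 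Granting it, a problem with randomized complexity $o(\log n)$ either has deterministic complexity $O(\log^* n)$ (landing it in regime 1 or 2 by Step 1), or deterministic complexity $\Omega(\log n)$. In the latter case, the derandomization theorem (\cref{thm:derandomization}) forces the randomized complexity to be $\Omega(\log\log n)$: if it were $o(\log\log n)$, derandomizing would give a deterministic algorithm of complexity $o(\log\log n)\cdot\polylog n$... — wait, that is not small enough, so here I would instead use the sharper known fact that on bounded-degree graphs randomized complexity $o(\log\log n)$ implies deterministic complexity $O(\log^* n)$ (the LLL-regime speedup, \cref{sec:2lll_regime}), contradicting $\Omega(\log n)$.

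\textbf{Step 3: the upper bounds in the LLL regime.} Here I would show that \emph{every} problem with deterministic complexity $\Omega(\log n)$ (equivalently, randomized $o(\log n)$ but not $O(\log^* n)$) has deterministic complexity at most $\tilde O(\log^4 n)$ and randomized complexity at most $\tilde O(\log^4\log n)$. The deterministic bound follows by observing that such problems reduce to an instance of the Lovász Local Lemma with a polynomially-bounded dependency degree, which on bounded-degree graphs can be solved deterministically in $\tilde O(\log^2 n)$ rounds via network decomposition (\cref{thm:best_decomposition}) — the exponent $4$ is slack I would not fight for. The randomized bound comes from the shattering framework: a randomized pre-processing solves all but ``shattered'' components of size $\poly\log n$, and then the deterministic $\tilde O(\log^4 \cdot)$ algorithm is run on those components with $n$ replaced by $\poly\log n$, yielding $\tilde O(\log^4\log n)$.

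\textbf{Main obstacle.} The genuinely hard part is Step 2 — the speedup theorem that collapses $o(\log n)$ deterministic to $O(\log^* n)$ and, symmetrically, $o(\log\log n)$ randomized to $O(\log^* n)$. Everything else is either finiteness bookkeeping, an application of theorems already in the excerpt, or deliberately lossy upper bounds. I would therefore structure the write-up so that Steps 1 and 3 are given in reasonable detail and Step 2 is stated as the content of the later sections \cref{sec:2speedups_and_slowdowns,sec:2lll_regime}, with only a proof sketch (Ramsey lift of relative orders + round elimination for the deterministic side, and a black-box derandomization/speedup for the randomized side) here.
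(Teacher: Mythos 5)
Your proposal follows essentially the same route as the paper's proof: the Lovász-local-lemma-regime upper bounds come from formulating any $o(\log n)$-randomized problem as a relaxed LLL instance (\cref{thm:deterministic_speedup}) and solving it via \cref{cor:deterministic_lll} and \cref{thm:fischer_ghaffari}; the collapse from randomized $o(\log\log n)$ down to $O(\log^* n)$ is exactly the exponential derandomization of \cref{thm:exponential_derandomization} composed with the deterministic speedup; and the collapse from $o(\log\log^* n)$ down to $O(1)$ is the Ramsey/order-invariance argument of \cref{thm:speedup_logstar}. The one slip is in your Step 1: the Ramsey conversion to an order-invariant algorithm requires the ball \emph{volume} to be $o(\log^* n)$, i.e.\ radius $o(\log\log^* n)$ on bounded-degree graphs (not radius $o(\log^* n)$ as you state, for which the hypergraph Ramsey numbers would exceed the polynomial identifier range), whereas it is the subsequent order-invariant speedup that tolerates any $o(\log n)$ rounds -- you have attached the two thresholds to the wrong halves of the argument, although the composition you actually need, namely $o(\log\log^* n)\Rightarrow O(1)$, still goes through.
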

% It is conjectured that the last bullet point of the theorem can be improved from $\poly\log$ and $\poly\log\log$ to simple $O(\log n)$ and $O(\log\log n)$, as we will later see in \cref{conj:chang_pettie}. 

% \paragraph{Roadmap for this chapter}

\subsection{The Symmetry-Breaking Regime}
\label{sec:2log_star_regime}

In the bounded-degree regime, the problems of maximal independent set or the closely related problem of $(\Delta+1)$-coloring play a bit similar role to network decomposition, as we will see in \cref{thm:sequential_vs_distributed_coloring}. These problems are known as basic \emph{symmetry-breaking} problems. This section shows that the round complexity of these problems is $\Theta(\log^* n)$ on bounded-degree graphs. 

\tbox{
Basic symmetry breaking problems like maximal independent set and $\Delta+1$ coloring are closely related. Their round complexity on bounded degree graphs is $\Theta(\log^* n)$. 
}

\subsubsection{Fast Coloring Algorithm and its Implications}
We will first discuss a classical local coloring algorithm of \citet*{linial92LOCAL_definition} improving upon earlier work from \cite{cole86,goldberg_plotkin_shannon88early_log_star_coloring}. 
This algorithm colors a graph of degree at most $\Delta$ with $O(\Delta^2)$ many colors in $O(\log^* n)$ rounds. We will later see in \cref{thm:sequential_vs_distributed_coloring} that this implies that on bounded-degree graphs, the local complexity of constructing the maximal independent set and $\Delta+1$ coloring is $O(\log^* n)$. %This result comes back to the work in late 80's by ; w
%We will only analyze a weaker version of Linial's algorithm here, see e.g. \cite{linial92LOCAL_definition,ghaffari2020DistributedGraphAlgorithms_lecturenotes} for the full proof. % of \citet*{linial92LOCAL_definition}. 

The main idea behind Linial's algorithm is as follows: At the beginning of the algorithm, we will think of the unique identifiers at the vertices as an input proper coloring with colors from the (very large) range $[n^{O(1)}]$. 
The heart of the proof is to provide a $1$-round local algorithm that takes as input a coloring with colors from a large range $[k]$, and outputs a coloring with colors from a much smaller range $[O(\Delta^2 \log k)]$. Repeating this process for $O(\log^* n)$ many rounds, we end up with a coloring with $\Delta^{O(1)}$ many colors. 

It remains to construct a suitable $1$-round algorithm. To do this, we will use the so-called cover-free families defined below. 

\begin{definition}
    \label{def:cover-free}
    Given a ground set $[k']$, a family of sets $S_1, \dots, S_k \subseteq [k']$ is called a $\Delta$-cover free family if for each set of indices $i_0, i_1, \dots, i_{\Delta} \in [k]$, we have $S_{i_0} \setminus \left( \bigcup_{j= 1}^\Delta S_{i_j} \right) \not= \emptyset$. 
That is, no set in the family is a subset of the union of some $\Delta$ other sets.
\end{definition}

Such families can be constructed with the following parameters. 
\begin{lemma}[{\cite{kleitman1981coverfree,erdos1982coverfree} or see \cite[Lemma 1.19, 1.20]{ghaffari2020DistributedGraphAlgorithms_lecturenotes}}]
\label{lem:coverfree_family}
    For any $k, \Delta$, there exists a $\Delta$-cover free family of size $k$ on a ground set of size $k' = O(\Delta^2 \log k)$. Moreover, if $k \le \Delta^3$, it exists for $k' = O(\Delta^2)$. 
\end{lemma}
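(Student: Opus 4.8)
\textbf{Proof plan for \cref{lem:coverfree_family}.}

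The plan is to use the probabilistic method with a random construction, followed by alteration. First I would fix the ground set to be $[k']$ for a value of $k'$ to be determined, and construct each set $S_i$ independently by including each element $x \in [k']$ in $S_i$ independently with some probability $p$ (the natural choice is $p = 1/(\Delta+1)$, which maximizes the relevant quantity below). For a fixed choice of distinct indices $i_0, i_1, \dots, i_\Delta$ and a fixed element $x$, the event that ``$x \in S_{i_0}$ and $x \notin S_{i_j}$ for all $j \in [\Delta]$'' has probability $p(1-p)^\Delta$; with $p = 1/(\Delta+1)$ this is at least some constant times $1/\Delta$ (using $(1 - 1/(\Delta+1))^\Delta \ge 1/e$). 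Hence the probability that \emph{no} element of $[k']$ witnesses $S_{i_0} \not\subseteq \bigcup_{j=1}^\Delta S_{i_j}$ is at most $(1 - p(1-p)^\Delta)^{k'} \le \exp(-k' p (1-p)^\Delta) \le \exp(-\Omega(k'/\Delta))$.

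Next I would take a union bound over all ordered tuples $(i_0, i_1, \dots, i_\Delta)$ of distinct indices from $[k]$; there are at most $k^{\Delta+1}$ such tuples, so the probability that the random family fails to be $\Delta$-cover-free is at most $k^{\Delta+1} \exp(-\Omega(k'/\Delta))$. To make this less than $1$ it suffices to take $k'/\Delta = \Theta(\Delta \log k)$, i.e. $k' = O(\Delta^2 \log k)$. By the probabilistic method there exists a $\Delta$-cover-free family of size $k$ on a ground set of that size, giving the first claim. (One could alternatively keep a larger failure probability and then delete one set from each bad tuple — the alteration method — but the straightforward union bound already suffices here since we only need existence and the bound $k^{\Delta+1}$ is crude anyway.)

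For the refined bound when $k \le \Delta^3$, the union-bound count $k^{\Delta+1}$ becomes at most $\Delta^{3(\Delta+1)}$, so $\log(k^{\Delta+1}) = O(\Delta \log \Delta)$, and the requirement becomes $k'/\Delta = \Theta(\Delta \log \Delta)$. This still looks like $O(\Delta^2 \log \Delta)$ rather than $O(\Delta^2)$, so a purely generic union bound is not quite enough — the sharper statement needs either a more careful argument (e.g. a Chernoff bound controlling the sizes of the $S_i$ together with a smarter counting of ``bad'' configurations, or an explicit algebraic/Reed–Solomon-type construction over a field of size $\Theta(\Delta)$ where codewords of a low-degree polynomial code give the sets) which removes the $\log \Delta$ factor in this regime. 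I expect \textbf{this last point — shaving the $\log\Delta$ to get $k' = O(\Delta^2)$ when $k \le \Delta^3$ — to be the main obstacle}; the cleanest route is probably to cite the explicit polynomial-based construction (evaluate degree-$O(1)$ polynomials over $\mathbb{F}_q$ with $q = \Theta(\Delta)$, identifying $S_i$ with the graph of the $i$-th polynomial in $\mathbb{F}_q \times \mathbb{F}_q$, so $k' = q^2 = O(\Delta^2)$ and two low-degree polynomials agree in few places, which yields the cover-free property for $k$ up to $q^{O(1)} = \Delta^{O(1)}$), rather than to push the probabilistic argument. Since the lemma is quoted from \cite{kleitman1981coverfree,erdos1982coverfree}, it would also be legitimate to simply invoke those references for the tight bound and present only the clean probabilistic proof of the $O(\Delta^2 \log k)$ version in full.
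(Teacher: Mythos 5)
The paper does not actually prove this lemma -- it is quoted from \cite{kleitman1981coverfree,erdos1982coverfree} and the lecture notes of \citet{ghaffari2020DistributedGraphAlgorithms_lecturenotes} -- so there is no in-paper proof to compare against; judged on its own, your argument is correct and is essentially the standard one found in those sources. The probabilistic part is complete: with inclusion probability $p=1/(\Delta+1)$ the witness probability per element is $p(1-p)^{\Delta}\ge \frac{1}{e(\Delta+1)}$, and the union bound over at most $k^{\Delta+1}$ tuples indeed forces only $k'=O(\Delta^2\log k)$. You are also right that this route leaves a spurious $\log\Delta$ in the regime $k\le\Delta^3$, and your proposed fix is the correct one; in fact your sketch is already a full proof after one line of verification, so you need not merely cite it. Take a prime $q$ with $2\Delta< q\le 4\Delta$, let the ground set be $\mathbb{F}_q\times\mathbb{F}_q$ (size $q^2=O(\Delta^2)$), and let each set be the graph $S_f=\{(x,f(x)):x\in\mathbb{F}_q\}$ of a polynomial $f$ of degree at most $2$; there are $q^3\ge\Delta^3\ge k$ such polynomials, each $S_f$ has size $q$, and two distinct polynomials of degree at most $2$ agree in at most $2$ points, so any $\Delta$ other sets cover at most $2\Delta<q$ elements of $S_{f_0}$, giving the $\Delta$-cover-free property with $k'=O(\Delta^2)$. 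The only nit is your phrase ``for $k$ up to $q^{O(1)}=\Delta^{O(1)}$'' with degree-$O(1)$ polynomials: the degree must grow with the exponent of $\Delta$ (degree $d$ handles $k\le q^{d+1}$ and needs $q>d\Delta$), which is harmless here since the lemma only claims the case $k\le\Delta^3$, where degree $2$ suffices.
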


We can now state and analyze Linial's algorithm. 

\begin{theorem}[\citet*{linial92LOCAL_definition}]
    \label{thm:coloring}
    The deterministic round complexity of constructing a coloring with $O(\Delta^2)$ many colors is $O(\log^* n)$. 
\end{theorem}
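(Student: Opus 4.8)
The plan is to use Linial's two-step strategy: first provide a single-round color-reduction subroutine, then iterate it $O(\log^* n)$ times. The key object is a $\Delta$-cover-free family from \cref{lem:coverfree_family}. Suppose the vertices are properly colored with colors from $[k]$; we want to produce, in one round, a proper coloring with colors from $[k']$ where $k' = O(\Delta^2 \log k)$. Fix a $\Delta$-cover-free family $S_1, \dots, S_k \subseteq [k']$ of size $k$ on a ground set of size $k' = O(\Delta^2 \log k)$ (this exists by \cref{lem:coverfree_family}). The subroutine at a vertex $u$ with current color $c(u)$ is: look at the colors $c(v_1), \dots, c(v_d)$ of its (at most $\Delta$) neighbors, and pick any element $x \in S_{c(u)} \setminus \bigcup_{i=1}^{d} S_{c(v_i)}$; such an element exists by the defining property of the cover-free family (here we use $d \le \Delta$), and we set the new color of $u$ to that element $x \in [k']$. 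This is clearly a $1$-round local algorithm: each vertex only needs to know its own color and its neighbors' colors.

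The correctness of one step: the new coloring is proper. Indeed, for an edge $uv$, $u$ chose a color $x_u \in S_{c(u)}$ avoiding $S_{c(v)}$ (among others), so in particular $x_u \notin S_{c(v)}$. Meanwhile $v$ chose $x_v \in S_{c(v)}$. Hence $x_u \neq x_v$, because $x_u \notin S_{c(v)}$ but $x_v \in S_{c(v)}$. So adjacent vertices receive distinct new colors.

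Next, the iteration count. We start with the identifiers, which form a proper coloring with colors in $[k_0] = [n^{O(1)}]$, so $\log k_0 = O(\log n)$. After one application we have colors in roughly $O(\Delta^2 \log k_0) = O(\Delta^2 \log n)$, and more generally if $k_i$ is the palette size after $i$ rounds, then $k_{i+1} = O(\Delta^2 \log k_i)$. Since $\Delta$ is a constant in the bounded-degree setting, the dominant behavior is $k_{i+1} = O(\log k_i)$ (up to the additive/multiplicative constants depending on $\Delta$), so the palette size shrinks roughly like an iterated logarithm and after $O(\log^* n)$ rounds we reach $k_i \le \Delta^3$, i.e.\ a palette of size $\poly(\Delta)$. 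At that point the second clause of \cref{lem:coverfree_family} lets us do one more reduction to a palette of size $k' = O(\Delta^2)$ (and one can even iterate a constant number of extra times to stabilize it). Throughout, each round keeps the coloring proper by the one-step argument above. Thus the total round complexity is $O(\log^* n)$ and the final number of colors is $O(\Delta^2)$.

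The main technical point to be careful about — the place I expect to be the real obstacle — is bounding the number of iterations precisely. One must verify that the recursion $k_{i+1} = c_\Delta \log k_i$ (with the multiplicative constant $c_\Delta$ depending on $\Delta$, and possibly an additive term) really does terminate after $O(\log^* n)$ steps and doesn't get stuck oscillating around a fixed point that is larger than $\poly(\Delta)$. The clean way to handle this is: as long as $k_i > \Delta^3$ (say), one checks that $c_\Delta \log k_i < k_i$, so the sequence is strictly decreasing while above the threshold; once it drops to $k_i \le \Delta^3$, the stronger bound $k' = O(\Delta^2)$ from \cref{lem:coverfree_family} applies and we are done in $O(1)$ more steps. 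Counting how fast a sequence governed by $k \mapsto O(\log k)$ descends from $n^{O(1)}$ to $O(1)$ is exactly the definition of $\log^*$, so the bound follows; one just has to be slightly attentive to how the constant $c_\Delta$ interacts with the $\log$ (e.g.\ it only affects the additive constant in the $\log^*$ count, since $\log^*(n^{O(1)}) = \log^* n + O(1)$ and each extra constant factor adds $O(1)$ rounds).
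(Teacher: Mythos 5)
Your proof is correct and follows essentially the same route as the paper's: a one-round color reduction from $[k]$ to $[O(\Delta^2\log k)]$ via a $\Delta$-cover-free family, iterated $O(\log^* n)$ times starting from the identifiers, with the stronger $k\le\Delta^3$ case of \cref{lem:coverfree_family} used for the final step down to $O(\Delta^2)$ colors. Your extra care about the fixed point of the recursion $k_{i+1}=O(\Delta^2\log k_i)$ is a welcome elaboration of a detail the paper passes over quickly, but it is not a different argument.
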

\begin{proof}
    We will show a one-round algorithm that turns an input proper $k$-coloring into a proper $O(\Delta^2 \log k)$ coloring. Moreover, in case $k \le \Delta^3$, we can turn the $k$-coloring into an $O(\Delta^2)$ coloring.

    The one-round algorithm simply interprets each input color from $[k]$ as a set in a $\Delta$-cover free family over the ground set $[k']$ for $k' = O(\Delta^2 \log k)$ (and $k' = O(\Delta^2)$ in case $k \le \Delta^3$). Such a family exists by \cref{lem:coverfree_family}. After each vertex $u$ with a color $S_u$ learns the colors $S_{v_1}, \dots, S_{v_d}$ of its neighbors, it simply chooses any color in the set $S_u \setminus \left( \bigcup_{i = 1}^d S_{v_i}\right)$ as its new color. This set is non-empty by the definition of a cover-free family, and the new coloring is proper. This finishes the description of our one-round algorithm. 
    
    We interpret the input unique identifiers as proper coloring and apply the one-round algorithm $O(\log^* n)$ many times. A quick calculation shows that the number of colors then drops to $O(\Delta^2 \log \Delta)$. After one more round and using the case $k \le \Delta^3$, the number of colors becomes $O(\Delta^2)$, as needed. \footnote{If we do not care about how the resulting number of colors depends on $\Delta$, there is a more self-contained one-round algorithm going back to \citet{cole86} that does not rely on cover-free families and leads to a coloring with $2^{O(\Delta\log \Delta)}$ many colors in $O(\log^* n)$ rounds. The algorithm turns a coloring with $2^k$ colors into a coloring with $2^{O(\Delta\log k)}$ colors as follows: Think of the input color of each vertex $u$ as a $k$-bit string $s_u$. The vertex sends this color to all its neighbors. After $u$ receives $d \le \Delta$ strings of its neighbors, $s_1, s_2, \dots, s_d$, it does the following. For each received $s_i$, the node $u$ identifies an index $j_i$ where $s_i$ differs from its own bit string $s_u$. Such an index exists since we assumed that we started with proper coloring. The new color of $u$ is defined as the concatenation $s'_u = j_1 \circ s_u[j_1] \circ \dots \circ j_d \circ s_{u}[j_d]$. That is, the vertex $u$ simply remembers only the values of bits of $s_u$ on the positions where those values reveal that $s_u$ differs from its neighbors. We observe that no two neighboring vertices can end up with the same string. Moreover, if we write $s'_u$ in binary, it has only $O(\Delta \cdot \log k)$ many bits. }
\end{proof}

\paragraph{Simulation of sequential local algorithms}
% Let us next exp
% We will next generalize the discussion about the relation of  $\Delta+1$ coloring with the maximal independent set. 
We notice that graph coloring can be seen as a special case of network decomposition discussed in \cref{chap:1_local_complexity_fundamentals} where each cluster has diameter $0$. In particular, coloring with a small number of colors allows us to turn sequential local algorithms into distributed local ones similarly to \cref{thm:sequential_vs_distributed_complexity}. 

\begin{theorem}\label{thm:sequential_vs_distributed_coloring}
    Let $\fA$ be a sequential local algorithm with local complexity $t(n)$. 
    Then, we can turn it into a distributed local algorithm of round complexity $O(\Delta^{O(t(n))} + t(n) \cdot \log^* n)$. If the sequential algorithm is deterministic, so is the distributed one. %, i.e., the round complexity of constructing a $\Delta+1$ coloring or maximal independent set in the power graph $G^{O(t(n))}$. 

\end{theorem}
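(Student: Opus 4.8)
The plan is to mimic the proof of \cref{thm:sequential_vs_distributed_complexity}, but replace the network decomposition with a proper coloring of the power graph $G^{t(n)}$, which plays the role of a network decomposition where every cluster is a single vertex (diameter $0$). First I would pass to the power graph $H = G^{t(n)}$: its maximum degree is at most $\Delta^{t(n)} =: \Delta'$, and any one communication round in $H$ is simulated by $t(n)$ rounds in $G$. By \cref{thm:coloring} applied to $H$, we obtain a proper coloring of $H$ with $c = O((\Delta')^2) = O(\Delta^{O(t(n))})$ colors in $O(\log^* n)$ rounds of $H$, hence in $O(t(n) \cdot \log^* n)$ rounds of $G$. (Here I am using that $\log^* n$ is unaffected by the blow-up in the number of vertices, since $|V(H)| = |V(G)| = n$; the degree bound $\Delta'$ is what enters the number of colors.)

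Next I would use this coloring to drive the simulation of $\fA$, exactly as in \cref{thm:sequential_vs_distributed_complexity}. Fix the color classes $V_1, \dots, V_c$ of the coloring of $H$. We simulate the order of iteration over vertices of $G$ in which we first process all vertices of $V_1$, then all of $V_2$, and so on. Within a single color class $V_i$, no two vertices are adjacent in $H = G^{t(n)}$, so their $t(n)$-hop neighborhoods in $G$ are mutually "independent" in the sense needed: when $\fA$ is run at $u \in V_i$, its decision depends only on $B_G(u, t(n))$ together with the outputs and stored pairs of vertices in that ball on which $\fA$ has already run, and none of those already-processed vertices lies in $V_i$ (since any such vertex is within distance $t(n)$ of $u$, hence adjacent to $u$ in $H$, hence of a different color). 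Therefore the per-color-class executions are consistent with one global run of $\fA$ over a legitimate vertex ordering, and the output solves the problem. In the deterministic case the coloring algorithm is deterministic, so the whole construction is; in the randomized case we additionally need the random bits at each node, which are given for free.

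For the round complexity of the simulation phase: in phase $i$, every vertex $u \in V_i$ needs to (a) know its own $B_G(u, t(n))$ and (b) learn the outputs and stored information of all already-processed vertices within distance $t(n)$ in $G$. Both are available after $t(n)$ rounds of communication in $G$ (gathering the $t(n)$-ball), so each of the $c$ phases costs $O(t(n))$ rounds, for a total of $c \cdot O(t(n)) = O(\Delta^{O(t(n))} \cdot t(n))$ rounds. Adding the cost of building the coloring, $O(t(n) \cdot \log^* n)$, gives the claimed bound $O(\Delta^{O(t(n))} + t(n) \cdot \log^* n)$.

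The main obstacle — and the one point worth stating carefully rather than waving through — is the bookkeeping for the \emph{stored} pairs $(s,t)$ produced by $\fA$ at previously processed vertices: one must argue that when $u$ runs $\fA$, every vertex $v \in B_G(u,t(n))$ that $\fA$ has already been applied to has already committed its pair, and that this pair has been propagated to $u$ within the $t(n)$ communication rounds of the current phase. This is exactly the property guaranteed by the fact that "already processed" means "in an earlier color class," and color classes are independent sets in the power graph; so the simulated ordering is a genuine linear order on $V(G)$ refining the color-class order, and \cref{def:sequential_algorithm} applies verbatim. (If $\fA$ is randomized we also use, as in the discussion after \cref{def:sequential_algorithm}, that the adversarial order is allowed to be any order, in particular the one induced by the coloring.)
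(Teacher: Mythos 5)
Your proposal is correct and follows essentially the same route as the paper: replace the network decomposition in the proof of \cref{thm:sequential_vs_distributed_complexity} by Linial's $O(\Delta^2(G^{t(n)}))$-coloring of the power graph $G^{t(n)}$ from \cref{thm:coloring}, then iterate over the color classes, using that same-colored vertices are at distance more than $t(n)$ in $G$ so their simulations do not interact. Your accounting of $c \cdot O(t(n)) = O(\Delta^{O(t(n))} \cdot t(n)) = \Delta^{O(t(n))}$ rounds for the simulation phase plus $O(t(n)\cdot\log^* n)$ for the coloring matches the claimed bound.
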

\begin{proof}
    We follow the outline of the proof of \cref{thm:sequential_vs_distributed_complexity} that simulated sequential local algorithms using network decompositions. %We notice that $\Delta+1$ coloring is a special case of network decomposition, with $k = \Delta+1$ classes and $d = 0$ diameter. 
    To simulate a sequential local algorithm $\fA$ of local complexity $t(n)$, we first construct a coloring of $G^{t(n)}$ with $O\left( \Delta^2(G^{t(n)}) \right)$ colors\footnote{The notation $\Delta(G)$ stands for the maximum degree of $G$. } using Linial's algorithm from \cref{thm:coloring}. 
    That algorithm needs $O\left(t(n) \cdot \log^* n\right)$ rounds where we multiply by $t(n)$ because one round of communication in $G^{t(n)}$ can be simulated in $t(n)$ rounds in $G$. 
    Subsequently, we iterate over all colors and simulate the sequential algorithm $\fA$ as in \cref{thm:sequential_vs_distributed_complexity}. That simulation takes additional $O(\Delta^2(G^{t(n)})) = \Delta^{O(t(n))}$ rounds. %Thus the total round complexity is $O\left(t(n) \cdot \left(\Delta(G^{t(n)})^2 + \log^* n\right)\right) = O(\Delta^{O(t(n))} + t(n) \cdot \log^* n)$. 
\end{proof}

As a corollary of \cref{thm:sequential_vs_distributed_coloring}, we get that maximal independent set or $\Delta+1$-coloring can be constructed in $O(\log^* n)$ rounds on bounded-degree graphs. 

\begin{corollary}
\label{cor:mis_coloring}
    Sequential local algorithms with local complexity $O(1)$ (such as algorithms for the maximal independent set or $\Delta+1$ coloring) can be simulated with round complexity $O(\log^* n)$ on bounded degree graphs.  
\end{corollary}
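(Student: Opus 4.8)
The plan is to invoke \cref{thm:sequential_vs_distributed_coloring} directly with $t(n) = O(1)$. The maximal independent set problem and $(\Delta+1)$-coloring both admit trivial sequential local algorithms of local complexity $1$: for MIS, iterate over the vertices in any order and mark $u$ as \texttt{selected} exactly when no neighbor of $u$ is already \texttt{selected}; for $(\Delta+1)$-coloring, iterate over the vertices and assign $u$ any color in $[\Delta+1]$ not used by an already-colored neighbor (such a color exists since $u$ has at most $\Delta$ neighbors). Each decision examines only $B(u,1)$, so these fit \cref{def:sequential_algorithm} with $t(n) = 1$, and more generally any sequential local algorithm of local complexity $O(1)$ is covered by the same argument.

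First I would state that plugging $t(n) = c$ for a constant $c$ into \cref{thm:sequential_vs_distributed_coloring} yields a distributed local algorithm of round complexity $O(\Delta^{O(c)} + c \cdot \log^* n)$. Then I would observe that in the bounded-degree regime $\Delta$ is a fixed constant, so $\Delta^{O(c)} = O(1)$, and the round complexity collapses to $O(1) + O(\log^* n) = O(\log^* n)$. Since the sequential algorithms above are deterministic, the resulting distributed algorithm is deterministic as well, by the second sentence of \cref{thm:sequential_vs_distributed_coloring}.

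There is essentially no obstacle here: the corollary is a one-line specialization of \cref{thm:sequential_vs_distributed_coloring}. The only point worth a sentence of care is the implicit constant in the exponent of $\Delta$ and the constant $c$ in $t(n) = c$; since both are absolute constants (not growing with $n$) and $\Delta$ is fixed, $\Delta^{O(c)}$ is a constant independent of $n$ and gets absorbed into the additive $O(\log^* n)$ term. I would also remark in passing — since the corollary parenthetically names MIS and $(\Delta+1)$-coloring as examples — that the exhibited $1$-round sequential algorithms for these two problems are precisely the "greedy" procedures whose correctness is order-independent, matching the requirement in \cref{def:sequential_algorithm} that the process yields a valid solution regardless of the order of iteration.
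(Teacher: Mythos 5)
Your proof is correct and follows exactly the paper's route: the corollary is an immediate specialization of \cref{thm:sequential_vs_distributed_coloring} with $t(n)=O(1)$, using that $\Delta^{O(1)}$ is a constant on bounded-degree graphs, together with the standard greedy sequential algorithms for maximal independent set and $(\Delta+1)$-coloring. Nothing is missing.
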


\paragraph{Understanding unique identifiers}
%In the proof of \cref{thm:deterministic_speedup} and even earlier in the proof of \cref{thm:sinkless_orientation_lb}, we have seen a general trick: If we simulate $t(n)$-round deterministic algorithm for a local problem of checkability radius $r$, it suffices that the identifiers are unique up to distance $O(t(n) + r)$, i.e., they can be replaced by a coloring of $G^{O(t(n)+r)}$\todo{use distance coloring?}. 
%This observation is very satisfying: one uneasy aspect of the definition of deterministic local complexity was that the uniqueness of identifiers is inherently a global concept that we cannot check locally. But now we see that for algorithms of $t(n)$-round complexity, only uniqueness up to distance $O(t(n))$ is required for deterministic algorithms to work. 
%The requirement of unique identifiers in the definition should make us a bit uneasy; 
Let us now discuss the unique identifiers from the range $[n^{O(1)}]$ in the definition of deterministic algorithms. We will use our understanding of coloring to see that the strength of the model of deterministic algorithms remains the same even if the identifiers come from a much larger range like $[2^n]$ or $[2^{2^n}]$. Moreover, we will understand why identifiers are not needed in the definition of sequential local algorithms. 

The following theorem will use an instance of a \emph{fooling argument}, variants of which we will enjoy employing later on. To motivate it, notice that it is a bit awkward that the definition of a \emph{local} algorithm talks about \emph{globally} unique identifiers. 
We will next ``fool'' a given deterministic local algorithm solving a local problem by supplying to it a distance coloring (i.e., coloring of the power graph) with $n^{O(1)}$ many colors instead of unique identifiers. The algorithm still has to work since a failure of the algorithm for input distance coloring would imply a failure for input unique identifiers.

\begin{theorem}
    \label{thm:bad_id_into_good_id}
    Let $\fA$ be a deterministic local algorithm of round complexity $t(n)$ for a local problem $\Pi$. Then, given any $s = n^{\omega(1)}$, there is a deterministic local algorithm $\fA'$ solving $\Pi$ in $O(t(n) \cdot \log^*_n(s))$ rounds\footnote{Here, $\log^*_n(s)$ returns how many times we need to take the logarithm of $s$, until its value drops below $n$, i.e., $\log^*_n(2^n) = 1, \log^*_n(2^{2^n}) = 2$ and so on. } and assumes that the identifiers are from range $[s]$. 

    Similarly, let $\fA$ be a sequential local algorithm of local complexity $t(n)$ for a local problem $\Pi$ in a model of sequential local algorithms where each node has additionally a unique identifier from $[n^{O(1)}]$. Then there is a sequential local algorithm $\fA'$ for $\Pi$ of complexity $O(t(n))$ in the standard model of sequential local algorithms without any identifiers. 
\end{theorem}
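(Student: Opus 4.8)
The plan is to handle the two claims separately, both via the fooling idea already hinted at in the paragraph preceding the statement, combined with Linial's coloring algorithm (\cref{thm:coloring}). For the first claim, suppose $\fA$ solves $\Pi$ in $t(n)$ rounds assuming identifiers from $[n^{O(1)}]$. Given identifiers from the much larger range $[s]$, the key observation is that what $\fA$ really needs is not globally unique identifiers but merely a proper coloring of the power graph $G^{t(n)}$: if two nodes $u,v$ with $d_G(u,v) \le t(n)$ receive distinct ``identifiers'', then $\fA$ cannot tell from $u$'s view that $v$'s identifier collides with some far-away node, so correctness at $u$ is preserved. Thus I would first run Linial's algorithm on $G^{t(n)}$, starting from the input $[s]$-identifiers viewed as a (very large) proper distance-$t(n)$ coloring. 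One round of communication in $G^{t(n)}$ costs $t(n)$ rounds in $G$, and Linial's one-round recoloring step shrinks a $k$-coloring to an $O(\Delta(G^{t(n)})^2 \log k)$-coloring; iterating brings the color range down from $s$ to $n^{O(1)}$ in $O(\log^*_n(s))$ iterations of the power-graph algorithm, i.e.\ $O(t(n) \cdot \log^*_n(s))$ rounds of $G$. Here I should be slightly careful: $\Delta(G^{t(n)})$ may be as large as $n^{O(1)}$ in general, so the target range after stabilization is $O(\Delta(G^{t(n)})^2 \cdot \mathrm{poly}) = n^{O(1)}$, which is exactly the range $\fA$ expects. Relabel these $n^{O(1)}$ colors as identifiers (ties across far-apart nodes are harmless by the fooling argument) and run $\fA$; total complexity $O(t(n)\cdot\log^*_n(s)) + t(n) = O(t(n)\cdot\log^*_n(s))$.

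For the second claim, I would argue that a sequential local algorithm never actually needs identifiers at all, because the iteration order itself already breaks symmetry. Given $\fA$ of complexity $t(n)$ that expects identifiers from $[n^{O(1)}]$, I want $\fA'$ of complexity $O(t(n))$ with no identifiers. The idea: have $\fA'$ first, as a preliminary sequential local subroutine, compute a proper coloring of $G^{t(n)}$ with $O(\Delta(G^{t(n)})^2) = n^{O(1)}$ colors, and then feed that coloring to $\fA$ as the identifier input. A proper distance-$t(n)$ coloring can be produced by a trivial sequential local algorithm of local complexity $t(n)$: when it is node $u$'s turn, look at all already-colored nodes within distance $t(n)$ in $G$ (there are at most $\Delta(G^{t(n)})$ of them) and pick any color not used among them from the palette $[\Delta(G^{t(n)})+1]$. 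This needs a palette of size $\Delta(G^{t(n)})+1 = n^{O(1)}$, which is the range $\fA$ expects; and by the fooling argument $\fA$ remains correct, since its correctness only ever depended on neighboring-within-$t(n)$ nodes having distinct labels. Finally, compose the coloring subroutine with $\fA$ using \cref{thm:sequential_pipelining}, which turns the two sequential local algorithms of complexities $t(n)$ and $t(n)$ into a single one of complexity $O(t(n))$; this is the promised $\fA'$.

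The main obstacle I anticipate is bookkeeping around the dependence on $\Delta(G^{t(n)})$ and making the fooling argument airtight. One has to state precisely that ``$\fA$ solving $\Pi$ with unique identifiers from $[n^{O(1)}]$'' really means ``$\fA$ outputs a correct solution on every input where, \emph{within each radius-$t(n)$ ball}, the labels are distinct'' --- which requires noting that $\fA$'s output at $u$ is a function only of $B_G(u,t(n))$ and its labels, hence a collision outside that ball is invisible to $u$, and local checkability of $\Pi$ at $u$ only inspects $B_G(u,r) \subseteq B_G(u,t(n))$ (assuming $r \le t(n)$; if $r > t(n)$ one first pads $t(n)$ up to $r$ at no asymptotic cost). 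I would also double-check the case analysis in \cref{lem:coverfree_family} is not needed here --- for the first claim the plain $O(\Delta^2\log k)$ bound suffices since we only need to reach range $n^{O(1)}$, not optimize the final $\Delta$-dependence. For the sequential claim one must make sure the naive greedy distance coloring genuinely fits \cref{def:sequential_algorithm}: the ``additional information'' stored at a node is just its own color, and reading the colors of already-processed nodes in $B(u,t(n))$ is exactly what that definition permits.
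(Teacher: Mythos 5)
Your overall route is the same as the paper's (iterate Linial's color reduction of \cref{thm:coloring} on a power graph to shrink the $[s]$-range labels down to $n^{O(1)}$, then fool $\fA$ by feeding it the resulting distance coloring as identifiers), and your complexity accounting is right. But there is a genuine gap in the correctness step: you color the power graph $G^{t(n)}$, and you justify this by saying that it suffices for any two nodes at distance at most $t(n)$ to get distinct labels. That is not sufficient. The input that $\fA$ reads at a node $u$ is the entire labeled ball $B(u,t(n))$, which contains pairs of nodes at mutual distance up to $2t(n)$; a proper coloring of $G^{t(n)}$ lets two such nodes share a label, in which case $u$'s view is one that can never arise from unique identifiers and $\fA$'s behavior on it is simply unconstrained by its correctness guarantee. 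Worse, even requiring distinct labels inside every single ball $B(v,t(n))$ (your refined condition, i.e.\ a proper coloring of $G^{2t(n)}$) is still not enough when the checkability radius $r$ is positive: the failure event of $\fA'$ at $u$ is determined by the outputs of all $v\in B(u,r)$, whose input balls jointly cover $B(u,t(n)+r)$, and the fooling argument needs one \emph{single} globally-unique-identifier instance consistent with all of these views simultaneously. Two nodes $w_1\in B(v_1,t(n))$, $w_2\in B(v_2,t(n))$ with $v_1,v_2\in B(u,r)$ can be at distance up to $2(t(n)+r)$ and must therefore also receive distinct labels; otherwise each of $v_1,v_2$ individually outputs something $\fA$ would output on \emph{some} valid instance, but not on the \emph{same} instance, and the constraint $\fP$ at $u$ can be violated.

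The repair is exactly what the paper does: compute the coloring on $G^{2(t(n)+r)}$ rather than $G^{t(n)}$. Then all labels in $B(u,t(n)+r)$ are pairwise distinct, one extends them to unique identifiers on all of $G$, and a failure of $\fA'$ at $u$ becomes a failure of $\fA$ on a legitimately labeled instance. Since $r$ is a constant of the problem, this changes nothing asymptotically in either part of the theorem, and the rest of your argument (including the greedy $\bigl(\Delta(G^{2(t(n)+r)})+1\bigr)$-coloring and the use of \cref{thm:sequential_pipelining} in the sequential case, where the paper likewise notes that Linial's reductions are unnecessary) goes through as you wrote it.
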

\begin{proof}%[Proof Sketch]
%We prove only the first claim in the theorem. 
Let $\fA$ be a deterministic algorithm of round complexity $t(n)$ solving a local problem $\Pi$ with local checkability $r$ in the model where the unique identifiers are from $[n^{O(1)}]$ (the constant in the exponent is assumed to be large enough). We construct a new algorithm $\fA'$ that works in the less powerful model with identifiers from $[s]$ as follows. We first compute a coloring of the power graph $G^{2(t(n) + r)}$ with $n^{O(1)}$ many colors, then we simulate $\fA$ with that coloring as identifiers. 

In particular, the coloring is constructed by iterating color reductions of Linial's algorithm of \cref{thm:coloring}. Recall that each color reduction reduces the range of color exponentially, thus after $\log^*_n(s)$ rounds, we reduce the input coloring with colors from $[s]$ to a coloring with colors of size at most  $O(\Delta^2(G^{2(t(n) + r)})) = n^{O(1)}$. %\footnote{We omit the calculation but note that this argument works only with the actual Linial's algorithm; our weaker algorithm presented in the proof of \cref{thm:coloring} could only construct colors coming from a range of roughly $[2^{O(n)}$]. }
% Note that $O(\Delta(G^{2(t(n)+r)})^2) = n^{O(1)}$ so the number of colors the algorithm outputs is only $n^{O(1)}$ as needed. 

% After the coloring is constructedSimulating one step of the color reduction algorithm takes $t(n)$ rounds so the round complexity of constructing the final coloring is $O(t(n) \cdot \log^*_{n}(s))$. The total round complexity is dominated by this complexity. 

Next, we prove that $\fA'$ is correct. Assume that $\fA'$ fails to solve $\Pi$ at a node $u$. Notice that this failure depends only on the $(t(n) + r)$-hop neighborhood of $u$ where the coloring constructed by $\fA'$ uses unique colors. In particular, we can extend this coloring of $B(u, t(n)+r)$ to a labeling of every node of the input graph with unique identifiers. The original algorithm $\fA$ fails to solve $\Pi$ at $u$ for these identifiers, a contradiction with $\fA$ being correct. 

The proof of the second claim in the theorem is very similar and, in fact, easier, since $\Delta+1$-coloring has constant sequential local complexity and thus Linial's color reductions are not needed. %Recall that two sequential algorithms can be merged into a single one via \cref{thm:sequential_pipelining}.  
\end{proof}

As a helpful corollary of the second part of \cref{thm:bad_id_into_good_id}, we can now see that any deterministic local algorithm $\fA$ solving some local problem can be converted into a deterministic sequential algorithm of the same asymptotic local complexity. 
This was actually not clear until now since our definition of deterministic sequential algorithms in \cref{def:sequential_algorithm} did not contain input unique identifiers.

\subsubsection{Lower bound for coloring via round elimination}
Finally, let us prove that the $\log^* n$ dependency for constructing maximal independent sets or colorings is necessary. 

\begin{theorem}[\citet*{naor1991lower_bound_ring_coloring,linial92LOCAL_definition}]
\label{thm:coloring_lower_bound}
    The local complexity of computing $\Delta+1$ coloring is $\Omega(\log^* n)$, even on graphs that are oriented paths.  
\end{theorem}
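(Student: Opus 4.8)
The plan is to use the round-elimination (or ``fooling'') argument in its classical form going back to Linial. Suppose for contradiction that there is a deterministic local algorithm $\fA$ for $(\Delta+1)$-coloring on oriented paths with round complexity $t(n) = o(\log^* n)$. First I would observe that on oriented paths of maximum degree $2$, a $(\Delta+1)$-coloring is just a proper $3$-coloring, so it suffices to rule out a fast proper $3$-coloring algorithm. The key structural observation is that such an algorithm is, by \cref{def:local_algorithm}, nothing but a function from labeled $t(n)$-hop neighborhoods of a directed path — which are just ordered tuples of $2t(n)+1$ distinct identifiers from $[n^{O(1)}]$ — to the color set $\{1,2,3\}$, with the property that any two tuples that could be consecutive windows in a common path receive different colors at their shared vertex. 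By the randomized-to-deterministic remark in the excerpt it is enough to handle the deterministic case; and by \cref{thm:bad_id_into_good_id} we may even take the identifiers to be any proper distance-coloring.

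The core of the argument is an iterated application of a ``one-step speedup'': from an algorithm that properly colors with $c$ colors using radius $k \ge 1$, I would build an algorithm that properly colors with roughly $2^c$ colors using radius $k-1$. Concretely, given the radius-$k$ algorithm $\fA_k$, define the radius-$(k-1)$ algorithm $\fA_{k-1}$ at a vertex $v$ to output the \emph{set} of all colors $\fA_k$ would assign to $v$ over all possible extensions of $v$'s $(k-1)$-ball to a $k$-ball consistent with the identifier ordering. One shows this set is nonempty and that neighboring vertices get disjoint such sets (because any pair of adjacent vertices lies in some common $k$-ball where $\fA_k$'s outputs differ), so reinterpreting the sets as colors gives a proper coloring with at most $2^c$ colors at radius $k-1$. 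Iterating $t(n)$ times starting from the $3$-coloring at radius $t(n)$ produces a \emph{$0$-round} algorithm that properly colors the path using a number of colors that is a tower of exponentials of height $t(n)$ applied to $3$; a round-$0$ algorithm is just a fixed function of a single vertex's identifier. Since $t(n) = o(\log^* n)$, this tower is $o(n)$ for large $n$, so when we run this function on a path whose $n$ identifiers are chosen to lie in $[\text{number of colors used}]$ — possible because that range is at least, say, $n$ after a harmless adjustment, or alternatively by using a monotone input coloring — the pigeonhole principle forces two adjacent vertices to share an identifier and hence a color, contradicting properness. (The standard clean way: one proves by induction that a radius-$k$, $c$-coloring algorithm on paths forces $c > \log^{(k)} n$ roughly, i.e. $c$ iterated-logarithms bound $n$; a $t(n)$-round $3$-coloring algorithm with $t(n) = o(\log^* n)$ then yields $3 > \log^{(t(n))} n \to \infty$, absurd.)

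I would organize the write-up as: (i) reduce to proper $3$-coloring of oriented paths and to deterministic algorithms; (ii) state the speedup lemma (radius $k$, $c$ colors $\Rightarrow$ radius $k-1$, $2^c$ colors) and prove nonemptiness and disjointness of the output sets via an explicit two-path gluing argument; (iii) iterate and track the color count as a tower function of $t(n)$; (iv) derive the contradiction from $t(n) = o(\log^* n)$ by a pigeonhole/counting argument on a suitably labeled path.

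The main obstacle I expect is being careful about what information the identifiers carry and making the gluing step rigorous: when I ``extend'' a $(k-1)$-ball of $v$ to a $k$-ball, I must ensure the extensions used on the two sides of an edge $uv$ are mutually consistent so that they genuinely form windows of one legal input path (distinct identifiers, correct orientation), which is exactly what lets me conclude $\fA_k(u) \ne \fA_k(v)$ and hence disjointness. Getting the bookkeeping of this consistency right — and correspondingly tracking the precise growth of the color bound so that it is honestly $o(n)$ under $t(n) = o(\log^* n)$ — is the delicate part; everything else is routine.
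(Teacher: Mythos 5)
Your overall strategy -- iterate a one-step speedup to reach a $0$-round algorithm, track the number of colors as a tower function, and finish by pigeonhole -- is exactly the paper's round-elimination approach. However, the key lemma as you state it has a genuine gap. First, the claim that neighboring vertices receive \emph{disjoint} sets is false: there may be one extension of $B(u,k-1)$ under which $\fA_k(u)$ is red and a \emph{different} extension of $B(v,k-1)$ under which $\fA_k(v)$ is red; since these are different input paths, properness of $\fA_k$ gives no contradiction. What one can hope to show, and all that is needed to reinterpret the sets as a proper coloring with $2^c$ colors, is that the two sets are merely \emph{different}.

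Second, and more seriously, even that weaker claim does not go through with your two-sided extension from $B(v,k-1)$ to $B(v,k)$. To show, say, that the true color $c=\fA_k(v)$ lies in $S_v$ but not in $S_u$, you would need to glue a hypothetical extension of $B(u,k-1)$ witnessing $c\in S_u$ together with the true labeling into a single legal path; but that extension assigns an identifier to a position that already lies inside $B(v,k-1)$ and carries its true identifier, so the two assignments conflict and no single path realizes both computations. This is precisely the consistency issue you flag as ``the delicate part,'' and resolving it is the content of the argument, not bookkeeping. The paper's fix is to eliminate \emph{half} a round at a time, alternating between node-centered and edge-centered algorithms (\cref{lem:coloring_elimination}), so that each step extends the view on only \emph{one} side and the gluing is always of a known ball with a single new vertex beyond it. Relatedly, the paper restricts to \emph{increasing} identifier sequences (not merely distinct ones): after swapping in the hypothetical identifier of the one new vertex, monotonicity is what guarantees the resulting labeling is still a legal input, whereas mere distinctness could be violated between the two ends of the glued window. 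You gesture at a ``monotone input coloring'' but do not use it where it is actually needed. Without the one-sided (half-round) structure and the monotonicity restriction, step (ii) of your plan fails as stated.
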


There are several known proofs of this theorem \cite{naor1991lower_bound_ring_coloring,linial92LOCAL_definition}; we will use the proof of \citet*{linial92LOCAL_definition} framed in the language of a powerful technique known as the \emph{round elimination} (see \cite{suomela_survey} for an introduction to it). 
In essence, given a local problem $\Pi$, round elimination is an automated technique that defines a problem $\Pi'$ such that the round complexity of the fastest algorithm for $\Pi'$ is exactly one round smaller than the round complexity of the fastest algorithm for $\Pi$ (unless the complexity of $\Pi$ was already zero). 
This is very helpful for proving lower bounds: If we start with some problem $\Pi$ and argue that even after $t$ rounds of round elimination, the problem $\Pi^{(t)}$ that we end up with is not solvable in zero rounds, we infer that the local complexity of $\Pi$ is at least $t$. 

%See \cite{} for the introduction to round elimination. We will now use its simplified variant to prove the lower bound for coloring. Our setup will include a line graph, while the general technique works on trees. We will also use its more complicated tree variant in the next \cref{sec:2lll_regime}. 

\paragraph{Preparations for the lower bound}
We will prove the lower bound in the extremely simple setup where we are promised that the input graph is an oriented path, i.e., the setup from \cref{sec:1first_example}. The lower bound will be for deterministic algorithms, so each node starts with a unique identifier. We will make a further restriction on the identifiers, we require that the input labeling with identifiers is increasing; that is, if for every oriented edge $e = uv$ going from $u$ to $v$, we have $ID(u) < ID(v)$. %All this makes the statement we are proving only stronger. %If we can prove the lower bound even in this very restricted setup, the lower bound also applies to the general setup of bounded degree graphs and unique identifiers. 

Next, let us discuss local algorithms. It will be helpful to think about them as functions in the spirit of \cref{def:local_algorithm}. 
A subtlety we need to be careful about is that in one step of round elimination, we don't want to directly convert a $t$-round local algorithm (that sees $2t+1$ vertices) to a $t-1$ round algorithm (that sees $2t-1$ vertices). Instead, we want to convert an algorithm that sees $t$ vertices to an algorithm that sees $t-1$ vertices. 
To this end, we define an \emph{edge-centered} $(t+1/2)$-round algorithm $\fA$ to be a local algorithm such that for an edge $e = uv$, the input to $\fA$ is the ball $B(uv, t-1)$ defined as $B(u, t-1) \cup B(v, t-1)$. The output of $\fA$ is a label for the edge $e$. For example, $1/2$-round local algorithm run on $e$ has access to the two identifiers $ID(u)$ and $ID(v)$ and it maps the two identifiers to a label of $e$. 

\paragraph{One half-round reduction}
Here comes the heart of the proof; we will show that any given $t$-round local algorithm for coloring vertices with $k$ colors can be converted to a $((t-1/2))$-round algorithm for coloring edges with $2^k$ colors.  

% \begin{figure}
%     \centering
%     \includegraphics[width=\textwidth]{}
%     \caption{Caption}
%     \label{fig:coloring_elimination}
% \end{figure}

\begin{lemma}
    \label{lem:coloring_elimination}
    Assume that for $t > 0$ we are given a node-centered $t$-round deterministic local algorithm $\fA$ that outputs a proper coloring with $k$ colors on oriented paths with increasing unique identifiers. 
    Then, there is an edge-centered $((t-1/2))$-round algorithm $\fA'$ that properly colors the edges with $2^k$ colors in the same setup. 
    Similarly, edge-centered $(t+1/2)$-round algorithms for $k$-coloring can be converted into node-centered $t$-round algorithms for $2^k$-coloring. 
\end{lemma}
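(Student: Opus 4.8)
The plan is to define the new edge problem in a way that encodes "what would the node algorithm output" without actually seeing one more vertex. Given the node-centered $t$-round algorithm $\fA$, recall that $\fA$ applied to a vertex $w$ looks at $B(w,t-1)$, i.e., the $t-1$ vertices on each side of $w$ plus $w$ itself — in an oriented path this is a window of $2t-1$ consecutive identifiers. I would build $\fA'$ as follows: on an edge $e=uv$ (oriented $u\to v$), the new algorithm sees $B(uv,t-1) = B(u,t-1)\cup B(v,t-1)$, which in the path is a window of $2t$ consecutive vertices. The key observation is that this window is exactly "enough to run $\fA$ at $u$, at $v$, and at every vertex strictly between them" — but there are no vertices strictly between $u$ and $v$ since they are adjacent, so it is exactly enough to run $\fA$ at $u$ and at $v$. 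That's not yet the reduction; the reduction has to capture the *missing* vertex to the right of $v$ (or left of $u$). So instead: $\fA'(e)$ should output the *set* of all colors that $\fA$ could possibly assign to the right endpoint $v$, ranging over all legal completions of the path. Since $\fA$ at $v$ sees $B(v,t-1)$, and the edge-ball $B(uv,t-1)$ contains all of $B(v,t-1)$ except possibly the single furthest vertex on $v$'s side — wait, it contains $B(v,t-2)$ fully but misses $v$'s outermost vertex. So the output is the set $\{\fA(\text{completion of }B(v,t-1))\}$ over all choices of that one missing identifier (subject to the increasing-ID promise). This is a subset of $[k]$, hence encodable as one of $2^k$ edge-labels.

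The main steps, in order: (1) Formally define $\fA'(e)$ as the subset $T_e \subseteq [k]$ of colors $\fA$ might output at the right endpoint of $e$ given everything visible in $B(uv,t-1)$ and quantifying over the unseen identifier(s) consistent with the increasing promise; check $\fA'$ is genuinely a $(t-1/2)$-round edge-centered algorithm, i.e., it depends only on $B(uv,t-1)$. (2) Prove properness of the edge coloring: if $e_1=uv$ and $e_2=vw$ share the vertex $v$, I must show $T_{e_1}\neq T_{e_2}$. Here $T_{e_1}$ is the set of possible $\fA$-outputs at $v$ as seen from the $e_1$-side, and $T_{e_2}$ is the set of possible $\fA$-outputs at $v$ as seen from the $e_2$-side. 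The crucial point is that the *actual* color $\fA$ assigns to $v$ lies in *both* $T_{e_1}$ and $T_{e_2}$; and by properness of $\fA$'s vertex coloring, $\fA(v)\neq\fA(u)$ and $\fA(v)\neq\fA(w)$. I then need to exhibit a color that is in one set but not the other — the standard argument is that $T_{e_1}$ is "the set of colors achievable by varying the identifier just past $w$" and similarly for $T_{e_2}$, and these two windows overlap in a way that forces a difference via a pigeonhole/consistency argument analogous to the classical Linial / Naor argument. (3) State the symmetric claim (edge-centered $(t+1/2)$-round $k$-coloring $\Rightarrow$ node-centered $t$-round $2^k$-coloring) and note its proof is the mirror image: an edge algorithm sees $B(uv,t-1)$; a node algorithm at $v$ sees $B(v,t)$, which contains the edge-balls of both edges incident to $v$, so it can read off $T_{e_1}$ and $T_{e_2}$ and, knowing both these subsets of the $k$-color space, it has enough information to pick a valid $2^k$-color-free choice — again by the cover-free / distinctness property.

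The step I expect to be the main obstacle is (2), the properness argument — specifically pinning down precisely which colors lie in $T_{e_1}\setminus T_{e_2}$ or vice versa. The naive hope "$\fA(v)\in T_{e_1}\cap T_{e_2}$ and the sets differ somewhere" is not automatically a contradiction-free argument: I need the sets themselves to differ, not merely to both contain $\fA(v)$. The right framing is that $T_{e_1}$ and $T_{e_2}$ are determined by *disjoint* pieces of "unseen" information (the identifier just beyond $w$ versus the identifier just beyond $u$), so one can fix one side and vary the other; combined with $\fA$ being a proper coloring for *every* completion, this should yield that, were $T_{e_1}=T_{e_2}$, we could build two global identifier assignments on which $\fA$ colors three consecutive vertices with a repeat — contradicting properness. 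Making this quantifier bookkeeping clean (and correctly handling the increasing-ID promise, which restricts the allowable unseen identifiers) is the technical crux; everything else is definitional. I would also double-check the half-round index bookkeeping, since off-by-one errors between "$t$ vertices seen" and "$t$ rounds" are exactly what the edge-centered formalism was introduced to tame.
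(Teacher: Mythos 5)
Your definition of $\fA'$ is exactly the paper's: on an edge $e=uv$ oriented $u\to v$, output the set $T_e\subseteq[k]$ of colors that $\fA$ could assign to the head $v$, ranging over the one identifier in $B(v,t)\setminus B(uv,t-1)$ (and the option that this node does not exist). But the properness argument --- which you correctly identify as the crux --- has a genuine gap, and the sketch you give of it is based on a misreading of your own definition. You write that for $e_2=vw$ the set $T_{e_2}$ is ``the set of possible $\fA$-outputs at $v$ as seen from the $e_2$-side.'' Under the definition you just gave, $T_{e_2}$ is the set of possible outputs at $w$ (the head of $e_2$), not at $v$. With the ``both sets describe $v$'' reading, both sets contain the actual color $\fA(v)$ and the argument stalls, which is why you end up searching for a symmetric ``the sets themselves must differ somewhere'' pigeonhole step and even invoke cover-free families, which play no role in this lemma.

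The correct argument is one-sided and short. Let $c=\fA(v)$ be the actual color of $v$, let $x$ be the node at distance $t$ from $v$ beyond $w$, and let $y$ be its successor. Then $c\in T_{e_1}$ because the true identifier of $x$ is among the extensions considered by $\fA'$ on $uv$. Suppose $c\in T_{e_2}$; then some identifier $ID'(y)$ makes $\fA$ output $c$ at $w$. Replace $ID(y)$ by $ID'(y)$ in the path. Since $y\notin B(v,t)$, the output of $\fA$ at $v$ is still $c$, and now $\fA$ colors the adjacent vertices $v$ and $w$ both with $c$ --- contradicting properness of $\fA$ (here one must check the modified labeling is still increasing, which is exactly where the increasing-identifier promise, rather than mere uniqueness, is used). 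Hence $c\in T_{e_1}\setminus T_{e_2}$ and $T_{e_1}\neq T_{e_2}$. Your plan never isolates this asymmetric containment, so as written the properness step would not go through; the rest (the $2^k$ encoding, the well-definedness of $\fA'$ as a function of $B(uv,t-1)$, and the mirror-image second direction) is fine modulo the off-by-one bookkeeping you already flagged.
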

\begin{proof}
    We will only discuss the conversion of a node-centered algorithm into an edge-centered one, the other case is very similar. 
    
    Given a $t$-round algorithm $\fA$, we define a $(t-1/2)$-round edge-centered algorithm $\fA'$ as follows (see \cref{fig:elimination1}): 
    Given an edge $e = uv$ so that $\fA'$ has access to $B(uv, t-1)$, we let $\fA'$ to consider all possible identifiers of the unique node $x \in B(v, t) \setminus  B(uv, t-1)$, i.e., the only node that $\fA$ sees but $\fA'$ doesn't. 
    For each possible value of the identifier $ID(x)$ of $x$ (we also consider the option that we are at the end of the path and the vertex $x$ does not exist), the algorithm $\fA'$ simulates $\fA$ on $B(v,t)$ and records the color that $\fA$ outputs. 
    The final output of $\fA'$ is a subset of $[k]$ that contains each color $c \in [k]$ whenever there is an identifier that makes $\fA$ output $c$. 
    Note that we can encode the set with $k$ bits -- that is, we view this set of colors from the range $[k]$ itself as a new color from the range $[2^k]$. This finishes the description of $\fA'$. 

    \begin{figure}
        \centering
        \includegraphics[width = \textwidth]{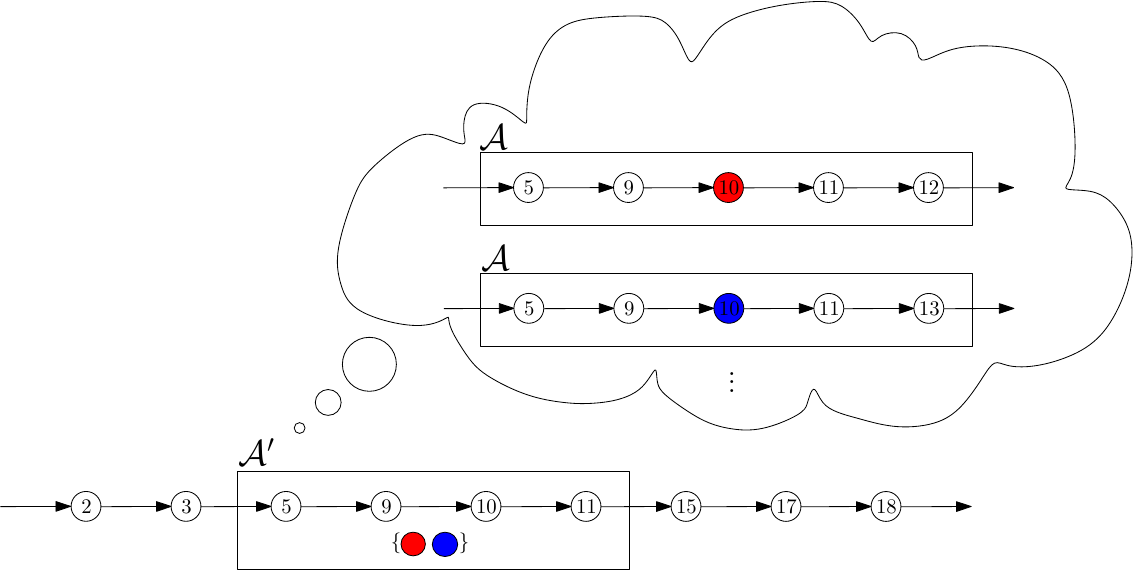}
        \caption{The definition of the $1.5$-round algorithm $\fA'$ derived from a $2$-round algorithm $\fA$: The algorithm $\fA'$ considers all compatible one-node extensions of its neighborhood containing $4$ nodes to $5$-node-neighborhoods (in the picture, this corresponds to the following node having the identifier $12, 13, \dots, n^{O(1)}$, and the possibility that the node does not exist). The algorithm considers all possible answers that $\fA$ returns for those neighborhoods (the picture shows that the red and the blue color are two of those possible answers). 
        The color that $\fA'$ uses to color the edge it is centered on is simply the set of all colors that $\fA$ returns for some extension (in the picture, it is the set containing the red and the blue color).  }
        \label{fig:elimination1}
    \end{figure}

    Our task is to prove that $\fA'$ returns a proper coloring of edges. Here is what this question reduces to: Consider any increasing labeling of the oriented path with identifiers and consider any three consecutive nodes $u,v,w$ (see \cref{fig:elimination2}). 
    %See \cref{fig:coloring_elimination} that also defines nodes $x, y, z$. 
    We again let $x$ be the unique node in $B(v,t) \setminus B(uv, t-1)$ and $y$ the subsequent neighbor of $x$. 
    Consider the new colors $C_1, C_2 \in [2^k]$ that $\fA'$ outputs on the two edges $uv$ and $vw$. 
    We need to prove that the two colors $C_1, C_2$ are different. We do that by focusing on the original color $c \in [k]$ that $\fA$ outputs at $v$. 

    \begin{figure}
        \centering
        \includegraphics[width = \textwidth]{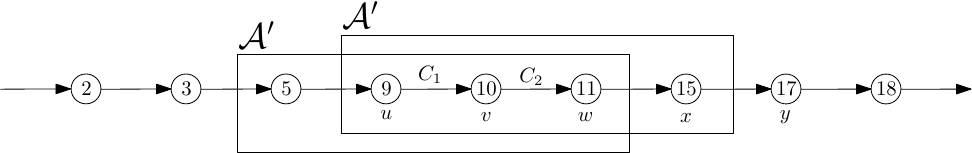}
        \caption{The situation from the proof of correctness of $\fA'$: We need to argue that the two colors $C_1, C_2$ are different. To do so, we define $c$ to be the color that $\fA$ outputs on the input $(5, 9, 10, 11, 15)$. By definition, we have $c \in C_1$. On the other hand, if $c \in C_2$, we get an existence of an identifier $ID'(y)$ such that $\fA$ returns $c$ on the input $(9, 10, 11, 15, ID'(y))$. But then we consider the input $(5, 9, 10, 11, 15, ID'(y))$ and notice that $\fA$ colors two consecutive vertices with the same color $c$, a contradiction with its correctness. }
        \label{fig:elimination2}
    \end{figure}

    On one hand, notice that $c \in C_1$ because the edge-centered algorithm $\fA'$ run at $uv$ considers the actual identifier $ID(x)$ as a possibility and thus includes $c$ in $C_1$. % as a possible output of $\fA$. 
    
    On the other hand, assume for contradiction that $c \in C_2$. That would imply that there is a certain identifier $ID'(y)$ that makes the output of $\fA$ at $w$ to be the color $c$. But now consider changing the valid increasing labeling of identifiers that we started with by letting the identifier of $y$ be equal to $ID'(y)$. Let's look at all the nodes in $B(vw,t)$. The identifiers on them are still a valid increasing identifier sequence.\footnote{This is the place in the argument where we need increasing and not just unique identifiers: With unique identifiers, the leftmost and the rightmost node of $B(vw, t)$ could now have the same identifier. } %\footnote{In fact, we need increasing and not unique identifiers at this place of the argument to disallow the unlucky possibility that the original identifier of $z$ is the same as $ID'(y)$}
    But after extending $B(vw,t)$ and its identifiers to a path on $n$ vertices labeled with increasing identifiers, $\fA$ fails to solve $k$-coloring on that graph since it outputs the same color $c$ at both $v$ and $w$, a contradiction with our assumption that $\fA$ is correct. 
\end{proof}

We can now prove \cref{thm:coloring_lower_bound} as follows: Assuming the existence of a $t = o(\log^* n)$-round algorithm for $3$-coloring of oriented paths, we apply \cref{lem:coloring_elimination} $2t$ times until we end up with a $0$-round algorithm $\fA_0$ that colors oriented paths with $2^{2^{\iddots^{2^3}}} < n$ colors where the inequality holds because the height of the exponentiation tower is $2t = o(\log^* n)$. 
But such $\fA_0$ is simply a function mapping an input identifier from the range $[n^{O(1)}]$ to a color in the smaller range $[n]$. Using the pigeonhole principle, we can find two identifiers that $\fA_0$ maps to the same color and argue that if these two identifiers happen to be present at two neighboring nodes, $\fA_0$ fails to output proper coloring. \footnote{Round elimination may used both to prove lower bounds \emph{and} to construct algorithms. In particular, it can be used to derive that the round complexity of $3$-coloring paths in $\frac12 \log^* n \pm O(1)$ \cite{rybicki_suomela2015exact_bounds_for_coloring}. }%\todo{aska Jukka, maybe that paper does it differently? }

\subsection{The Lovász Local Lemma Regime}
\label{sec:2lll_regime}

We will next discuss Lovász local lemma, a very expressible local problem closely related to the third regime of problems from \cref{thm:classification_basic}. 

\paragraph{Definition of Lovász local lemma}
Lovász local lemma is the following very general local problem \footnote{The Lovász-local-lemma problem does not quite satisfy the requirements for the local problem as we defined it in \cref{def:local_problem}, but, morally speaking, it can be seen as such. }. The problem is formally defined for bipartite graphs in which nodes of one part are labeled as \emph{random variables} and the nodes of the other part are labeled as \emph{bad events}. We denote the maximum degree of random variable nodes as $\Delta_{r.v.}$ and the maximum degree of bad event nodes as $\Delta_{b.e.}$. Each node $u$ labeled as a random variable is additionally labeled with a probability space $\Omega_u$. To simplify discussions, we will without loss of generality assume that each probability space is an infinite list of random bits (i.e., it is the uniform distribution on $[0,1]$). %\footnote{We can also think of each space as a sufficiently long but finite list of bits. }
Next, each node $v$ labeled as a bad event that neighbors with nodes $u_1, \dots, u_d$ with $d \le {\Delta_{b.e.}}$ is additionally labeled with an event $\fE_v$ on the space $\prod_{i = 1}^d \Omega_{u_i}$. 

Often, it is useful to work only with a graph induced by bad-event nodes where two bad-event nodes are connected if they share a common random variable. In that case, we will talk about the \emph{dependency-graph} formulation of Lovász local lemma and use $\Delta$ to denote its maximum degree. On the other hand, the setup with a bipartite graph with both bad-event nodes and random-variable nodes will be denoted as the \emph{variable-event-graph}. 

The crucial ingredient to the Lovász local lemma is a requirement on the bad events that, roughly speaking, says that we can use the union bound in the dependency-graph neighborhood of every bad event $\fE_v$ and conclude that with positive probability, neither $\fE_v$, nor its neighboring bad events occur. 
Namely, in the \emph{tight} version of Lovász local lemma, we are given a promise 
%\footnote{Our definition of a local problem from \cref{def:local_problem} does not allow ``promises'' on input labels. To turn local lemma into a proper instance of a local problem, we can postulate that if the promise is not satisfied, the local solution around the node is not bound by any constraints. } 
 that each bad event $\fE_v$ has probability at most $p$ where $p$ is defined by the following \emph{Lovász local lemma criterion}:  
    \begin{align}
        \label{eq:lll_classical_definition}
        p \cdot (\Delta + 1) \le 1/\e. 
    \end{align}

A foundational result of \citet*{erdos_lovasz1975lovasz_local_lemma} is that even in the tight formulation, it is always possible to solve any instance of the local lemma problem, by which we mean that one can instantiate random variables so that no bad event occurs. 

A \emph{$C$-relaxed} (or just polynomially-relaxed) version of Lovász local lemma, which is a bit more relevant to our applications, only requires that 
\begin{align}
    \label{eq:lll_definition}
    p \cdot \Delta^C \le 1
\end{align}
This section will show both fast algorithms and lower bounds for the polynomially-relaxed Lovász local lemma. 

\tbox{
Lovász local lemma is a very versatile and expressible local problem. Its deterministic round complexity on bounded degree graphs is $\Theta(\poly\log n)$, and its randomized complexity $\Theta(\poly\log\log n)$. 
}

\subsubsection{Fast Algorithms for the Local Lemma}

We first discuss how to solve any instance of the local lemma with a fast deterministic local algorithm. To this end, we will start with a randomized algorithm that we later derandomize by \cref{thm:derandomization}. In a breakthrough result in the area of constructive algorithms for the local lemma, \citet*{moser2010constructiveLLL} presented an algorithm for it in the tight formulation. Moreover, they presented a parallel variant of their algorithm that can be interpreted as a local algorithm with round complexity $O(\log^2 n)$. This complexity was later improved by \citet*{chung2017LLL} to $O(\log n)$ rounds on bounded degree graphs. Plugging their result to \cref{thm:derandomization}, we obtain the following theorem. 

\begin{theorem}
\label{cor:deterministic_lll}
    The deterministic round complexity of solving any instance of the tight version of Lovász local lemma on bounded degree graphs is $\tilde{O}\left( \log^4(n) \right)$. 
\end{theorem}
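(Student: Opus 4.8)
The plan is to get the deterministic algorithm by derandomizing a known fast \emph{randomized} local algorithm, feeding it into the generic derandomization results already at our disposal. First I would invoke the constructive local lemma: by \citet*{moser2010constructiveLLL}, with the round complexity refined by \citet*{chung2017LLL}, on bounded-degree graphs and under the tight criterion \eqref{eq:lll_classical_definition} there is a randomized local algorithm $\fA$ that assigns values to all the random variables so that no bad event occurs, running in $t(n) = O(\log n)$ rounds with failure probability at most $1/n$; by the robustness of the model we may take the failure probability to be $1/n^{C}$ for any constant $C$. The output of $\fA$ at a random-variable node $u$ is just the value it assigns $u$, a function of finitely many of the random bits in $B(u, t(n))$.

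Next I would check that the derandomization argument from the proof of \cref{thm:derandomization_sequential} applies essentially verbatim. Although the local lemma is not literally a local problem in the sense of \cref{def:local_problem}, on bounded-degree graphs it behaves like one of checkability radius $r = O(1)$: an assignment is incorrect precisely when some bad-event node $v$ has its event $\fE_v$ occur, and whether this happens is determined by the values of the variables adjacent to $v$. Defining $X(v)$ to be the indicator of the event that $\fE_v$ occurs, we get that $X(v)$ depends only on the random bits in the $(r+t(n))$-hop neighborhood of $v$, and $\E\bigl[\sum_v X(v)\bigr] \le n \cdot n^{-C} < 1$ for $C \ge 2$. Now I run the method of conditional expectations exactly as in that proof: iterate over the variable nodes in an arbitrary order and, at each node, fix its random string to a value that does not increase $\E\bigl[\sum_v X(v) \mid \text{bits fixed so far}\bigr]$; such a value exists by the law of total expectation, and each node can compute it by inspecting only bad events within distance $r+t(n)$, since the remaining $X(v)$ are independent of its bits. (The continuous spaces $\Omega_u$ are not a problem: the devices are computationally unbounded, and one may anyway truncate each $\Omega_u$ to the finitely many bits $\fA$ inspects.) At the end every $X(v)$ is determined and $\sum_v X(v) < 1$, hence $0$. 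By \cref{thm:sequential_pipelining} this derandomization procedure composes with $\fA$ itself into a single deterministic sequential local algorithm of local complexity $O(t(n)) = O(\log n)$.

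Finally I would hand this deterministic sequential local algorithm to \cref{thm:sequential_vs_distributed_complexity}, obtaining a deterministic distributed local algorithm of round complexity $O(\log n) \cdot \tilde{O}(\log^3 n) = \tilde{O}(\log^4 n)$; as $\Delta = O(1)$ there is no hidden degree dependence. Equivalently, one just invokes \cref{thm:derandomization} with $t(n) = O(\log n)$ and $r = O(1)$.

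The main obstacle — the only genuinely non-routine point — is the middle step: justifying that the derandomization machinery, stated and proved for local problems in the strict sense, transfers to the local-lemma problem. One must cash out the ``morally a local problem'' caveat by checking (i) that incorrectness of an LLL assignment is checkable in an $O(1)$-hop neighborhood, which uses that degrees are bounded; (ii) that the randomized algorithm's output is a function of finitely many random bits per node, so the conditional-expectation updates are well-defined and locally computable even though the $\Omega_u$ are continuous; and (iii) that there are $O(n)$ bad-event nodes, so a $1/\poly(n)$ failure probability does give $\E[\sum_v X(v)] < 1$. Each of these is routine individually, but together they are exactly what the footnote's informal phrasing leaves to be verified.
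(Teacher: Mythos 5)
Your proposal is correct and follows essentially the same route as the paper: the paper likewise takes the $O(\log n)$-round randomized algorithm of \citet*{moser2010constructiveLLL} as improved by \citet*{chung2017LLL} and plugs it directly into \cref{thm:derandomization} to get $\tilde{O}(\log^4 n)$. The only difference is that you explicitly verify the ``morally a local problem'' caveat (constant checkability radius, finitely many relevant bits, $O(n)$ bad events), which the paper leaves to a footnote; that verification is sound and does not change the argument.
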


\paragraph{Even faster randomized algorithm}
Next, we will show that there is an even exponentially faster randomized algorithm for the relaxed version of the local lemma. 
We will discuss the algorithm of \citet*{fischer_ghaffari2017sublogarithmic} with round complexity $\poly\log\log n$ which uses \emph{shattering}, a successful technique (cf. \cref{sec:3concrete_problems}) that goes back to the work on algorithmic local lemma by \citet*{beck1991algorithmic}. % In distributed algorithms, this technique found many successes and, as a rule of thumb, all $\poly\log\log n$ round complexities of concrete problems in \cref{sec:3concrete_problems} are proven via an application of shattering. 

The main idea goes as follows. Our algorithm will have two phases. In the first phase, we use a sequential local algorithm with constant complexity to do the following. 
Given an instance of the local lemma, the algorithm fixes \emph{most} of the random variables in such a way that \emph{most} bad events are satisfied (i.e., all random variables relevant for that bad event are fixed and the event does not occur). 
More concretely, the fraction of fixed random variables and satisfied events is $1 - 1/\Delta^{O(1)}$.  
The price for this outcome is that remaining, unfixed, bad events have their probability slightly increased from $1/\Delta^C$ to $1/\Delta^{C'}$ for some $C' < C$. 

Fortunately, we have an additional guarantee on the unfixed bad events. 
Due to the very small locality of the algorithm in the first phase, we can use an independence-like argument to show that the size of the connected components of unfixed bad events is $O(\Delta^{O(1)} \log n)$, with high probability (see \cref{fig:lll-example}). 
We also say that the graph \emph{shatters} into small components.  
We can thus run the best deterministic algorithm from \cref{cor:deterministic_lll} as the second phase of the algorithm, to solve the remaining instance of the $C'$-relaxed local lemma. 
This second phase takes $\poly\log\left( \Delta \log n\right)$ rounds which is also the round complexity of the overall algorithm.\footnote{Our example algorithm from \cref{sec:1first_example} can be seen as a simple shattering algorithm. }
We will now make the above discussion formal. 

\begin{figure}
    \centering
    \includegraphics{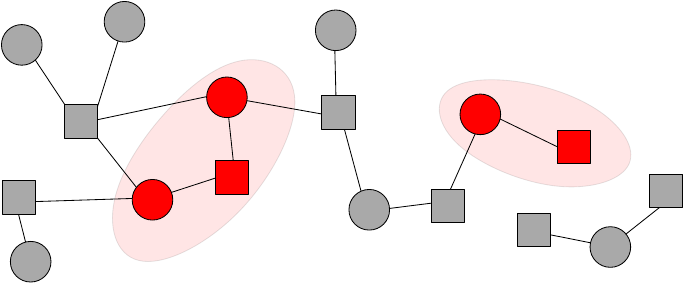}
    \caption{This picture shows an example variable-event graph corresponding to an instance of Lovász local lemma; the circles represent bad events and squares represent random variables. 
    The picture shows the situation after the first phase of Fischer-Ghaffari algorithm (\cref{thm:fischer_ghaffari_first_phase}). Most random variables are set to a fixed value (grey squares). A small proportion of the random variables remains unset (red squares) and the bad events neighboring with an unset random variable (red circles) form connected components of diameter $O(\log n)$.  }
    \label{fig:lll-example}
\end{figure}

\begin{theorem}[First phase of \citet*{fischer_ghaffari2017sublogarithmic}]
\label{thm:fischer_ghaffari_first_phase}
    There is a randomized sequential local algorithm $\fA$ of constant local complexity that gets as input a variable-event graph of an instance of $C$-relaxed Lovász local lemma for some large enough $C$. The algorithm fixes some bits of each random variable to concrete values so that conditioned on those fixed bits, we have the following two properties: 
    \begin{enumerate}
        \item Each bad event has probability at most $1/(3\Delta)$. 
        \item Up to $1/\poly(n)$ error probability, the residual dependency graph induced by bad events with non-zero probability has connected components of size $O(\Delta^3 \log n)$. 
    \end{enumerate}
\end{theorem}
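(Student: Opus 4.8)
The plan is to run the obvious ``online'' partial-sampling procedure and then control its damage by a martingale maximal inequality followed by a shattering argument. Fix the adversarial processing order of the random variables. Maintain at each random-variable node a revealed prefix of its bit string (initially empty) and at each bad-event node a flag ``surviving'' (initially off). When the procedure reaches a variable $x$: if $x$ is incident to a bad event whose flag is already on, \emph{freeze} $x$ (leave its prefix as is); otherwise reveal the bits of $x$ one at a time, and after each new bit recompute, for every incident bad event $\fE$, the conditional probability $q(\fE)$ of $\fE$ given the current revealed prefixes of $\fE$'s variables (these prefixes are read off the stored labels of the variable nodes within distance $2$ of $x$). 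Stop as soon as some $q(\fE)$ first exceeds $\tfrac1{6\Delta}$ --- then \emph{freeze} $x$ and turn on the flag of every such $\fE$ --- or as soon as all of the (finitely many, by assumption) bits of $x$ on which some incident event depends have been revealed --- then \emph{finalize} $x$ with its current prefix. This is a sequential local algorithm of local complexity $2$: every decision inspects only stored information within distance $2$ of $x$. (For the idealization $\Omega_u=[0,1]$, one first reduces to the case where each bad event depends on only finitely many bits of each incident variable by a standard truncation/limiting argument.) The ``fixed bits'' output by $\fA$ are the finalized and frozen prefixes.

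Property~1 is essentially built in. Revealing one bit of one variable of $\fE$ replaces $q(\fE)$ by the average of its two one-bit refinements, hence changes it by a factor at most $2$; since a variable is frozen the instant an incident event's conditional probability first exceeds $\tfrac1{6\Delta}$, no bad event's conditional probability ever exceeds $\tfrac1{3\Delta}$, and once $\fE$'s flag is on all of its remaining variables are frozen, so $q(\fE)$ is frozen as well. Thus, conditioned on the fixed bits, every bad event has probability at most $\tfrac1{3\Delta}$ --- in particular never $1$, so no bad event occurs and the residual instance is a genuine (indeed, for $\Delta$ large, a \emph{tight}) instance of the local lemma, which is exactly what the second phase will hand to \cref{cor:deterministic_lll}.

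For Property~2, call a bad event \emph{dangerous} if its flag is ever turned on. Three observations chain together. (i) If $\fE$ has all of its variables finalized then (since finalizing a variable reveals every bit any incident event depends on) $q(\fE)\in\{0,1\}$, hence by Property~1 equals $0$; so every bad event with positive conditional probability in the residual instance has a frozen variable, and a variable is frozen only if it belongs to a dangerous event (which itself is residual-positive). Consequently every residual-positive event lies within distance $1$ in the dependency graph $\fG$ of a residual-positive dangerous event, so a connected component of residual-positive events in $\fG$ has size at most $(\Delta+1)$ times the number of dangerous events it contains, and those dangerous events are connected in $\fG^{3}$. (ii) Each bad event is dangerous with probability at most $6\Delta p\le 6\Delta^{\,1-C}$; and more strongly, conditioned on \emph{all} the randomness of the variables outside $\fE$'s variable set, $q(\fE)$ is still a nonnegative martingale --- with respect to the filtration generated by the bits of $\fE$'s own variables revealed in whatever adaptive order the algorithm uses, treating everything outside as a fixed parameter --- starting from $P(\fE)\le p$, so by the martingale maximal (Ville) inequality the conditional probability that it ever exceeds $\tfrac1{6\Delta}$ is at most $6\Delta p$. (iii) Hence, for any set $D=\{\fE_1,\dots,\fE_m\}$ of pairwise non-adjacent events in $\fG$ (which therefore have pairwise disjoint variable sets), peeling off one event at a time and using (ii) conditionally gives $P(\text{all of }D\text{ dangerous})\le(6\Delta p)^{m}$. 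Now if some residual component has size $s$, then by (i) together with a greedy choice it contains a set $D$ of $m\ge s/\Delta^{O(1)}$ pairwise-$\fG$-non-adjacent dangerous events sitting inside a single $\fG^{3}$-connected set of size $\le s$; the number of such configurations rooted at a fixed vertex is $\Delta^{O(s)}$, so a union bound yields
\[
\Prob{\exists\text{ residual component of size}\ge s}\ \le\ n\cdot\Delta^{O(s)}\cdot\bigl(6\Delta^{\,1-C}\bigr)^{\,s/\Delta^{O(1)}}.
\]
Choosing $C$ a sufficiently large constant (large in terms of $\Delta$, which is a constant in the bounded-degree regime), the total exponent of $\Delta$ on the right is at most $\log_\Delta n-s$, so the probability is $n^{-\Omega(1)}$ once $s=\Omega(\log n)$; thus with high probability every residual component has size $O(\Delta^{O(1)}\log n)$, which is the claimed $O(\Delta^{3}\log n)$.

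The main obstacle is making (ii)--(iii) rigorous against the adversarial, sequential nature of the algorithm: the pattern of which bit of which variable is revealed when is itself a complicated random object entangled with the whole graph, so one cannot simply assert that the fates of distant events are independent. The point to get exactly right is that, for a fixed processing order and conditioned on \emph{all} bits outside $\fE$'s variables, the revelation order of $\fE$'s variables' bits is adapted to the filtration those bits generate (everything else being a fixed parameter), so $q(\fE)$ is a bona fide martingale there and the maximal inequality applies verbatim; this conditional bound, together with the disjointness of the variable sets of $\fG$-non-adjacent events, is what lets one multiply the per-event bounds without ever claiming genuine independence. The remaining work is routine: the measure-theoretic fine print of revealing a prefix of a point of $[0,1]$ so that the procedure terminates with a finite description at every variable, and the standard count of connected subgraphs of a bounded-degree power graph.
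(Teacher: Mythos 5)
Your proposal follows essentially the same route as the paper's proof: the same freezing procedure with dangerous threshold $1/(6\Delta)$, the same bit-by-bit doubling argument for Property 1, and for Property 2 the same scheme of bounding the threshold-crossing probability of a single event by $6\Delta p$ conditionally on all randomness outside its variable set, multiplying these conditional bounds over an independent set of dangerous events (whose variable supports are disjoint), and extracting such a set greedily from any large residual component before union-bounding over connected configurations. The one place your bookkeeping is lossier is the final union bound: you enumerate the full residual component of size $s$ (count $\Delta^{O(s)}$) while gaining only $(6\Delta^{1-C})^{s/\Delta^{O(1)}}$ from the $m\ge s/\Delta^{O(1)}$ independent dangerous events it contains, which forces $C$ to grow polynomially with $\Delta$; the paper instead enumerates the independent, power-graph-connected sets of dangerous events themselves (counted by $n\cdot\Delta^{O(m)}$ spanning walks in the sixth power of the dependency graph), which keeps $C$ a universal constant. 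This is a reordering of the same union bound rather than a gap, and is worth fixing only because the downstream speedup theorem is cleanest when $C$ does not depend on $\Delta$.
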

\begin{proof}\footnote{The original algorithm and its analysis by \citet*{fischer_ghaffari2017sublogarithmic} contains subtle flaws. Here, we follow the proof by \citet*{yannic}.  }
The algorithm $\fA$ is defined as follows. We iterate over the random variables and for each random-variable node $u$ with neighboring bad-event nodes $v_1, \dots, v_d$, we perform the following process. 
We sample the random bits of $\Omega_u$ one by one. While we do that, we consider the probabilities of neighboring bad events $P(\fE_i | \text{sampled bits})$ for each $1 \le i \le d$. 
The first time it happens that for some $i$ we have $P(\fE_i| \text{sampled bits}) > 1/(6\Delta)$, i.e., the bad-event probability crosses the \emph{dangerous threshold} of $1/(6\Delta)$, we stop sampling and do the following. 

We label $u$, together with all other random variables neighboring $v_i$ as \emph{frozen} (if more $v_i$'s jumped over the dangerous threshold at the same time, we do this to all of them). We never sample bits of frozen random variables in the future. In particular, we stop sampling bits of $u$ and continue with the next unfrozen random variable in the arbitrary order of our sequential local algorithm. This finishes the description of the algorithm. 

\vspace{1em}

To see that each bad event has probability at most $1/(3\Delta)$ (the first item in the theorem statement) during any point throughout the algorithm, we notice that if an event $\fE$ of probability $P(\fE) = p$ depends on a single bit of randomness $b$, we have, by Markov's inequality, that $P(\fE | b = 0), P(\fE | b = 1) \le 2p$. That is, the bad event probability in our process never increases multiplicatively by a factor larger than $2$, which together with the definition of the dangerous threshold implies the desired bound. 

To understand the size of connected components in the residual dependency graph (i.e., components of bad-event nodes of non-zero probability after we set the random bits), consider any fixed set $S$ of bad-event nodes with the following two properties:
\begin{enumerate}
    \item (\emph{independence}) no two nodes $v_1, v_2 \in S$ are neighboring in the dependency graph $G$,
    \item (\emph{connectivity}) $S$ is connected in $G^{6}$. 
\end{enumerate}
We will next prove that with high probability, no such set $S$ of size $\Omega(\log n)$ survives to the residual graph. 

First, we use the independence property to prove that the probability that all nodes in a fixed set $S$ crossed the dangerous threshold during our process is exponentially small in the number of nodes $|S|$. 
%That is, we upper bound the probability that each node $v \in S$ has a positive probability of $P(\fE_v) > 0$ after the run of our algorithm. 
Let us contemplate the behavior of the algorithm $\fA$ with respect to any bad-event node $v \in S$ and the random variables $u_1, \dots, u_d$ neighboring $v$ in the variable-event graph. 
We note that %we can view the bad-event probability $P(\fE_{v})$ as a martingale: E
if bits $b_1, \dots, b_{k-1}$ were already sampled and $b_k$ is sampled next, we have $E_{b_k}\left[ P(\fE_{v} | b_1, \dots, b_{k})\right] = P(\fE_{v} | b_1, \dots, b_{k-1})$. Hence, viewing $P(\fE_v)$ as a random variable that depends on bits sampled from $\Omega_u, u \in V(G)$, we conclude that its expectation is at most $1/\Delta^C$ and we can thus use Markov's inequality to conclude that the probability of $P(\fE_{v})$ crossing the dangerous threshold is at most $\frac{6\Delta}{\Delta^C} = 6/\Delta^{C-1}$. 
Moreover, we notice that if we first set the randomness of nodes in $V' = V(G) \setminus (u_1 \cup \dots \cup u_d)$ to whatever values, we can still make above argument and conclude that for all ways of setting $\Omega_u = \omega_u$ for all $u \in V'$ we have
\[
P(\fE_{v} | \forall u \in V': \Omega_u = \omega_u ) \le 6/\Delta^{C-1}. 
\]
% for any way of fixing the random bits of the algorithm $\fA$ for any set of nodes not containing $u_1, \dots, u_d$ to arbitrary values, we can make this argument for $v$. 
% This, and the independence of $S$, implies that if we condition that some bad events in $S$ crossed the dangerous threshold (in fact, we can even condition on exact values of their neighboring random variables), the probability that some other dangerous threshold 
Since we assumed that $S$ is an independent set of bad-event nodes in the dependency graph, % $v_1, v_2 \in S$ have a distance at least $4$ in $G$, the nodes $v_1', v_2'$ cannot share random variables. 
we can thus inductively prove that the probability of all nodes in $S$ crossing the dangerous threshold is at most $\left( \frac{6}{\Delta^{C-1}} \right)^{|S|}$.  %Since each node in $S'$ defines one of at most $\Delta+1$ possible nodes in $S''$, the probability of $S'$ and thus also $S$ surviving is at most 

We next count the number of possible sets $S$ of size $t$ that satisfy the connectivity property: Each such $S$ can be specified by fixing any node $u \in S$, and then specifying how one can walk for $2(|S|-1)$ steps in $G^6$ so that the walk defines a spanning tree of $G^6[S]$. Hence, the number of sets $S$ of size $t$ is at most $n \cdot (2\Delta^6)^{2(t-1)}$. We can thus upper bound the existence of some connected surviving set $S$ of size $t$ as 
$$
n \cdot (2\Delta)^{12t} \cdot \left(\frac{6}{\Delta^{C-1}}\right)^{t}.
$$
Choosing $C = O(1)$ and $t = O(\log n)$ large enough, we conclude that the size of this expression is at most $1/n^{O(1)}$. 

\vspace{1em}

Finally, for $t = O(\log n)$ from the above argument, consider any connected subset $S$ of $G$ of size at least $(2\Delta)^3 \cdot t$ and assume that $S$ survived to the residual dependency graph. 
Then, we can find its subset $S' \subseteq S$ of size at least $|S'| \ge t$ where $S'$ is independent in $G^3$, yet $S'$ is connected in $G^{4}$. 
To see this, consider a greedy algorithm that starts with $S' = \{u\}$ for arbitrary $u \in S$, iterates through vertices of $S$ and while we still can add at least one node $v \in S$ to $S'$ and keep the independence property, we choose any $v$ with distance $4$ to $S'$ and add it to $S'$. One can see that the final set $S'$ has to be connected in $G^{4}$. Also, $|S'| \ge t$ because adding a vertex to $S'$ disqualifies at most $(2\Delta^3)$ vertices to be added to $S'$ in the future. 

If all nodes of $S$ survived to the residual dependency graph, it means that every node in $S'$ has to have a neighboring node that crossed the dangerous threshold. The set $S'$ thus gives rise to a set $S''$ of size $|S''| = |S'| \ge t$ of nodes that all crossed the dangerous threshold. Moreover, by $S'$ being independent in $G^3$, we conclude that $S''$ is independent in $G$. Since $S'$ was connected in $G^4$, $S''$ is connected in $G^6$. The set $S''$ thus satisfies the requirements of a set that, as we have proven, does not occur in the residual graph with high probability and we can thus conclude the same for the original connected set $S$. 
\end{proof}

Putting \cref{thm:fischer_ghaffari_first_phase} (simulated as a distributed algorithm by \cref{thm:sequential_vs_distributed_coloring}) together with \cref{cor:deterministic_lll}, we conclude that the following result holds. % (\cref{sec:3concrete_problems} discusses the best known dependency on $\Delta$). 

\begin{theorem}[\citet*{fischer_ghaffari2017sublogarithmic}]
\label{thm:fischer_ghaffari}
    There exists a constant $C$ such that the randomized round complexity of any instance of the $C$-relaxed Lovász local lemma on bounded-degree graphs is $\tilde{O}\left( \log^4  \log n\right)$. 
\end{theorem}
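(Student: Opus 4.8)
The plan is to prove the theorem exactly as the surrounding text advertises: run the randomized sequential local algorithm of \cref{thm:fischer_ghaffari_first_phase} as a preprocessing ``shattering'' phase, and then clean up the residual instance with the deterministic algorithm of \cref{cor:deterministic_lll}. For the first phase, \cref{thm:fischer_ghaffari_first_phase} provides (for a large enough constant $C$, which will be the constant claimed in the theorem) a randomized sequential local algorithm of \emph{constant} local complexity; since we are on bounded-degree graphs, \cref{thm:sequential_vs_distributed_coloring} (equivalently \cref{cor:mis_coloring}) turns it into a distributed local algorithm with round complexity $O(\log^* n)$. After running it, every bad event has conditional probability at most $1/(3\Delta)$ and, with probability $1 - 1/\poly(n)$, the residual dependency graph induced by the bad events that still have nonzero probability breaks into connected components of size $N = O(\Delta^3\log n)$, which is $O(\log n)$ because $\Delta = O(1)$.

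The second phase is to solve, in parallel and independently, the residual local-lemma instance inside each such component. Two observations make each component a self-contained sub-instance. First, a free (unfixed) random variable relevant to some surviving bad event joins all surviving bad events touching it into one residual-dependency-graph component, so such a variable is ``owned'' by a single component, while a bad event whose conditional probability is already $0$ stays $0$ under any further fixing and can be ignored; hence fixing the remaining variables of one component does not interact with any other component. Second, on each component the new bound $p \le 1/(3\Delta)$ together with residual degree $\le \Delta$ still meets the local-lemma criterion (for $\Delta$ a large enough constant the tight criterion $p(\Delta+1)\le 1/\e$ holds; for the finitely many small values of $\Delta$ one invokes the relaxed version of \cref{cor:deterministic_lll}), so \cref{cor:deterministic_lll} produces a valid assignment. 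The reason to run it per component rather than globally is that its round complexity is governed by the number of vertices the algorithm actually sees, which inside an isolated component of size $N$ is at most $N$: formally one runs the algorithm on the subgraph induced by the component with the size parameter set to $\poly(N)$, and correctness transfers because the failure of a local algorithm is witnessed inside a bounded ball, and every such ball inside the component embeds into some valid global instance on $\poly(N)$ vertices. The only mismatch is that the unique identifiers live in $[n^{O(1)}]$, which is super-polynomial in $N = \Theta(\log n)$; this is precisely the situation handled by \cref{thm:bad_id_into_good_id} (or by one Linial-style color reduction), at an additive cost of $O(\log^* n)$ rounds. Thus the second phase runs in $\tilde{O}(\log^4 N) = \tilde{O}(\log^4\log n)$ rounds.

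Putting the phases together, the total round complexity is $O(\log^* n) + \tilde{O}(\log^4\log n) = \tilde{O}(\log^4\log n)$, and the total error probability is just the $1/\poly(n)$ failure probability of the shattering step in \cref{thm:fischer_ghaffari_first_phase}, since the second phase is deterministic and always succeeds once the residual instance is the valid one. The main obstacle I expect is the bookkeeping of the second phase: arguing rigorously that the deterministic local-lemma algorithm, when deployed on a component of size $N$, genuinely runs in time depending on $N$ rather than $n$ — this needs the self-containedness of the residual sub-instance, the identifier-range reduction via \cref{thm:bad_id_into_good_id}, and the check that the post-shattering probability bound still satisfies the local-lemma criterion for the relevant range of $\Delta$.
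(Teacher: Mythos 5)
Your proposal is correct and follows essentially the same route as the paper: run the shattering phase of \cref{thm:fischer_ghaffari_first_phase}, simulated distributedly via \cref{thm:sequential_vs_distributed_coloring}, and then finish by applying the deterministic algorithm of \cref{cor:deterministic_lll} independently on the $O(\log n)$-sized residual components, giving $O(\log^* n)+\tilde{O}(\log^4\log n)$ rounds. You in fact spell out some bookkeeping (self-containedness of residual components and the identifier-range reduction via \cref{thm:bad_id_into_good_id}) that the paper leaves implicit.
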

%The complexity can be improved to $O(\Delta/\log \Delta + \min(\log^4(\Delta\log n),$ $ \log^5\log n))$ by combining results of \citet*{fischer_ghaffari2017sublogarithmic} with the recent result of \citet*{davies2023improved_lll}. 
%It is unknown whether there is an algorithm with better  dependency on $\Delta$. 
%It is unclear whether one 
The dependency on $\Delta$ in the round complexity of Fischer-Ghaffari algorithm is polynomial. It is unknown whether this polynomial dependency can be improved to logarithmic. 
\begin{problem}
    \label{prob:lll}
    Is there a randomized local algorithm for relaxed Lovász local lemma with round complexity $O(\poly\log\Delta + \poly\log\log n)$? 
\end{problem}

\paragraph{Conjecture of \citet*{chang_pettie2019time_hierarchy_trees_rand_speedup}}
We note that \citet*[Conjecture 1]{chang_pettie2019time_hierarchy_trees_rand_speedup} conjecture that any instance of the relaxed local lemma can be solved with randomized $O(\log\log n)$ complexity. 
\begin{problem}[Chang-Pettie Conjecture]
\label{conj:chang_pettie}
Is it true that the randomized round complexity of $C$-relaxed Lovász Local Lemma for some large enough $C$ is $\Theta(\log \log n)$ on bounded degree graphs? % and the deterministic is $\Theta(\log n)$? 
\end{problem}
We will see later in \cref{subsec:speedups} that the randomized round complexity $O(\log\log n)$ implies the deterministic round complexity of $O(\log n)$ via \cref{thm:exponential_derandomization}. 
%Its positive resolution would positively resolve \cref{prob:subexponential_lll}. 

\paragraph{Self-contained algorithm} A bit unfortunate property of \cref{thm:fischer_ghaffari} is that it relies on the randomized  entropy-compression algorithms from \citet*{moser2010constructiveLLL} or \citet*{chung2017LLL}. However, we can prove \cref{thm:fischer_ghaffari} also in a self-contained way: First, we observe that after the first phase of the Fischer-Ghaffari algorithm finished, each surviving bad event simply collects the whole connected component of the residual graph; then we solve the problem in each residual component by applying the standard, existential, Lovász local lemma. 
This new randomized algorithm has complexity $O(\log n)$ on bounded degree graphs since this is the maximum diameter of residual components, with high probability. 
We can derandomize this algorithm using \cref{thm:derandomization} and use the resulting deterministic algorithm instead of the algorithm of \citet*{chung2017LLL} in the second phase of our shattering algorithm. This way, we get a simpler and more self-contained algorithm. Its downside is a worse value of $C$ and a worse dependence of round complexity on $\Delta$. 

\subsubsection{Lower Bound via Round Elimination}

Next, we will show that the deterministic local complexity of solving a certain specific instance of Lovász local lemma is $\Omega(\log n)$. Later, we will prove in \cref{thm:exponential_derandomization} that this also implies a randomized lower bound of $\Omega(\log\log n)$, thus showing that the round complexity of Fischer-Ghaffari algorithm from \cref{thm:fischer_ghaffari} is close to tight. 

The problem we choose for the lower bound is \emph{sinkless orientation}. We will define the problem only on trees of degree at most $\Delta$ where it is already hard. The task is to orient all the edges of the input tree so that no node is a \emph{sink}, which is defined as a node such that all $\Delta$ neighboring edges point towards it (nodes of degree less than $\Delta$ are not constrained in any way). 

Sinkless orientation can be seen as a specific instance of the local lemma: orienting each edge randomly corresponds to a random variable at each edge. Then, a vertex becoming a sink corresponds to a bad event of probability $2^{-\Delta}$. For large enough $\Delta$, this is much smaller than the polynomial criterion $1/\Delta^{C}$ from \cref{eq:lll_definition} which makes this problem a valid instance of the local lemma. 
We will next prove the following theorem. 
\begin{theorem}[\citet*{brandt_et_al2016_LLL_lower_bound}]
    \label{thm:sinkless_orientation_lb}
For any constant $\Delta$, the sinkless orientation problem has deterministic round complexity $\Omega(\log n)$ on the class of trees of degree at most $\Delta$. 
\end{theorem}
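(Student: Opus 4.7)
The plan is to adapt the round elimination strategy from \cref{lem:coloring_elimination} to the sinkless orientation problem. First, I would reformulate sinkless orientation as a half-edge labeling problem on $\Delta$-regular trees: each half-edge receives a label in $\{I, O\}$; the edge constraint is that the two half-edges of any edge carry opposite labels, and the node constraint is that each internal node has at least one $O$ among its $\Delta$ incident half-edges. This bipartite formulation is the natural input for round elimination, since half-edges let one alternate cleanly between node-centered and edge-centered views. The relevant slack here is $2^{-\Delta}$, namely the probability that a random orientation makes a given vertex a sink; this is the resource that round elimination will gradually consume.

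Next, I would define a sequence of relaxed problems $\Pi_0, \Pi_1, \ldots, \Pi_k$, with $\Pi_0$ being sinkless orientation, where each $\Pi_{i+1}$ is the round-elimination dual of $\Pi_i$. Concretely, if the half-edge alphabet at stage $i$ is $\Sigma_i$, then at stage $i+1$ the labels are certain nonempty subsets of $\Sigma_i$, and the node/edge constraints are transferred by the ``for all extensions'' rule that mirrors the proof of \cref{lem:coloring_elimination}: a $(t-\tfrac12)$-round algorithm for $\Pi_{i+1}$ views a single vertex or edge, enumerates all consistent one-hop extensions of its view (ranging over all possible identifiers assigned to the newly revealed half-edges), and outputs the set of labels that some $t$-round algorithm for $\Pi_i$ would produce on such an extension. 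Correctness is verified exactly as in \cref{lem:coloring_elimination}, by a fooling argument: if two neighbors produced a forbidden configuration under $\Pi_{i+1}$, one could plug in witnessing identifiers to contradict the correctness of the original algorithm for $\Pi_i$ on a tree with $n$ vertices and identifiers in $[n^{O(1)}]$.

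The heart of the argument is then a quantitative invariant tracking the ``flexibility'' of $\Pi_i$, measuring how constrained a node is in choosing its $O$-labeled half-edge. Starting from the $2^{-\Delta}$ slack in $\Pi_0$, I expect each round elimination step to shrink this flexibility by a factor bounded away from $1$, so that after $k = c(\Delta) \log n$ iterations the problem $\Pi_k$ has no $0$-round solution on any $\Delta$-regular tree with $n$ vertices and identifiers in $[n^{O(1)}]$: a pigeonhole argument over identifier values would produce, for any candidate $0$-round function, a small tree and an identifier assignment for which some node is forced into a sink configuration. Combined with the reduction chain, a $t$-round algorithm for sinkless orientation with $t = o(\log n)$ would yield a $0$-round algorithm for $\Pi_k$, a contradiction.

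The main obstacle is in step two: choosing the relaxations so that the reduction is valid (the subset-label semantics must compose correctly with both the edge-consistency and the sinkless constraints) while the flexibility invariant degrades by only a constant factor per step. If the degradation is too fast, one only gets an $\omega(1)$ rather than an $\Omega(\log n)$ lower bound; if it is too slow, the $0$-round impossibility at the end fails. Making this work is precisely where one needs $\Delta$ large enough to ``afford'' the $\log n$ rounds of elimination, paralleling why sinkless orientation sits in the Lovász-local-lemma regime of \cref{thm:classification_basic} and why the bound matches the $\tilde{O}(\log^4 n)$ upper bound of \cref{cor:deterministic_lll} up to polylogarithmic factors.
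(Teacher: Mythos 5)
You have the right technique (round elimination) and a sensible half-edge formulation, but the proposal leaves unresolved exactly the two points on which the actual proof turns. First, the core of the argument is not a sequence of relaxed problems $\Pi_0,\Pi_1,\dots$ with power-set labels whose ``flexibility'' decays by a constant factor per step. The proof in the paper (\cref{lem:sinkless_orientation_round_elimination}) exploits the fact that sinkless orientation is essentially a \emph{fixed point} of round elimination: a node-centered $t$-round algorithm for edge-grabbing (each node names the incident edge color it orients outward) yields an edge-centered $(t-\tfrac12)$-round algorithm for sinkless orientation, and vice versa, with \emph{no change of problem and no growth of the alphabet}. Because the problem never weakens, one can eliminate $\Theta(\log n)$ half-rounds without any quantitative invariant; the $\log n$ limit comes only from the requirement that the glued neighborhoods stay smaller than $n$ (and from the girth of the ID graph), not from a decaying slack. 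Your plan defers precisely this issue (``the main obstacle is in step two''), and the bookkeeping you sketch with the $2^{-\Delta}$ slack degrading per step is really the mechanism of the \emph{randomized} lower bound of Brandt et al., not of the deterministic statement you are asked to prove; for a deterministic bound via generic subset-labeled round elimination you would still have to show the derived problems remain $0$-round unsolvable after $\Omega(\log n)$ steps, which is the whole difficulty.

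Second, your fooling step (``plug in witnessing identifiers to contradict correctness on a tree with identifiers in $[n^{O(1)}]$'') ignores the consistency problem that the paper flags explicitly: when you glue an extension of $B(uv,t-1)$ to $B(u,t)$ with an extension to $B(v,t)$, the witnessing identifiers on one side may collide with identifiers already present on the other side, so the glued instance need not be a legal input and no contradiction follows. In the path lower bound (\cref{lem:coloring_elimination}) this is fixed by restricting to increasing identifiers; on trees the paper fixes it by working relative to an \emph{ID graph} $H$ (\cref{def:id_graph}): identifiers are only required to respect a locally checkable adjacency constraint in $H$, high girth guarantees they are unique within the relevant radius, and the final $0$-round impossibility is not a plain pigeonhole argument but uses the property that every color class of size at least $n/\Delta$ contains an edge of $H_i$, producing two possibly-adjacent nodes that both grab their shared edge. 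Without some such device (ID graph, or a randomized analysis), both your inductive correctness step and your terminal $0$-round contradiction are unsupported as written.
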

%We note that the parameter $Delta$ is in the definition of the problem. This is the reason why the problem becomes easier as $Delta$ gets larger. }. 

\paragraph{Preparations}
As in \cref{thm:coloring_lower_bound}, we will want to replace the uniqueness of identifiers with local constraints that imply that they are unique. In the proof of \cref{thm:coloring_lower_bound}, we worked with identifiers that were monotonically increasing (which implied they were unique), this time we will work with identifiers that are consistent with a so-called \emph{ID graph} 
\cite{korhonen_paz_rybicki_schmid_suomela2021supported_model,balliu2023sinkless_orientation_made_simple,brandt_chang_grebik_grunau_rozhon_vidnyanszky2021trees}. \footnote{The usual usage of round elimination is for randomized algorithms instead of using the ID graph, but this would require a longer setup and more calculations. } 

\begin{definition}[ID graph]
\label{def:id_graph}
    Given a parameter $\Delta$, an ID graph $H$ is a graph on a set $[n]$ that we associate with unique identifiers. Every edge of the graph is colored with one of $\Delta$ many colors and we write $H_i$ for the graph induced by the $i$-th color. We require that: 
    \begin{enumerate}
%        \item Every vertex $u \in V(H)$ has degree at most $\deg_H(u) \le \Delta^{10}$,
        \item The girth of $H$, i.e., the length of the shortest cycle in $H$, is at least $\gamma \log_{\Delta} n$ for some fixed $\gamma > 0$.
        \item Each independent set of each $H_i$ has less than $n/ \Delta$ vertices.  
    \end{enumerate}
\end{definition}
Such a graph exists, which can be proven using the same argument as how one proves that high-girth high-chromatic graphs exist \cite{brandt_chang_grebik_grunau_rozhon_vidnyanszky2021trees,korhonen_paz_rybicki_schmid_suomela2021supported_model}.

We will fix any ID graph and work in a model \emph{relative} to that ID graph. Here is what that means. First, we will assume that the input graph $G$ is always a tree of degree at most $\Delta$ such that its edges are, moreover, properly $\Delta$-colored on the input (i.e., each vertex is incident to edges of different colors). Additionally, we require that if two nodes $u,v$ neighbor in $G$ with an edge of color $i$, then their identifiers $ID(u), ID(v)$ neighbor in $H_i$. 

We notice that this is a local constraint on input identifiers that does not imply they are unique. However, notice that whenever two vertices $u,v \in V(G)$ have the same identifier, we can consider the path between $u$ and $v$ in $G$ and how it maps to a walk in $H$ that starts in $ID(u)$ and finishes in $ID(v) = ID(u)$. The proper edge-coloring of $G$ implies that the walk never goes from $x \in V(H)$ to $y \in V(H)$ and then back to $x$ in the subsequent step. This implies that the walk contains a cycle, thus the distance of $u$ and $v$ is at least as large as the girth of $H$. That is, the identifiers are unique up to a large distance which is pretty much the same as them being unique (cf. \cref{thm:bad_id_into_good_id}). 

Similarly to the lower bound of \cref{thm:coloring_lower_bound}, we will work with node-centered and edge-centered algorithms. For node-centered algorithms, solving sinkless orientation means that the algorithm outputs one of $\Delta$ many input edge colors at each node. Outputting a color $i$ means that $u$ decides that the edge of the color $i$ goes outwards from $u$. The local constraint on this output vertex-coloring is that no two neighboring nodes should select the same edge going in between. We will call this variant of the problem \emph{edge-grabbing}. On the other hand, edge-centered algorithms will solve the sinkless orientation as we described the problem: Each edge simply outputs how it is oriented and we have a local constraint at each vertex, requiring that it has at least one outgoing edge. 

\paragraph{Round elimination}
We proceed with performing the actual round elimination. Notice that the spirit of the following proof is very similar to the proof of \cref{lem:coloring_elimination}. % that we used in the proof of the coloring $\Omega(\log^* n)$ lower bound. 

\begin{lemma}
\label{lem:sinkless_orientation_round_elimination}
    Assume that we are given a node-centered deterministic $t$-round local algorithm $\fA$ of round complexity at most $t \le \log_{\Delta} n \,-1$ that solves the edge-grabbing problem on trees of degree at most $\Delta$ relative to some fixed ID graph $H$. Then, there is a $t-1/2$-round edge-centered deterministic local algorithm $\fA'$ that solves sinkless orientation in the same setup. 

    Similarly, $t+1/2$-round edge-centered algorithm for sinkless orientation implies a $t$-round node-centered algorithm for edge-grabbing. 
\end{lemma}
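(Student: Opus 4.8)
The plan is to mirror the structure of the proof of \cref{lem:coloring_elimination}, replacing the ``set of reachable colors'' construction by a parity/majority-type aggregation that is appropriate for an existential constraint (``at least one outgoing edge'') rather than a coloring constraint. I would treat only the first direction in detail (node-centered $t$-round algorithm $\fA$ for edge-grabbing $\Rightarrow$ edge-centered $(t-1/2)$-round algorithm $\fA'$ for sinkless orientation), since the reverse direction is analogous, as in \cref{lem:coloring_elimination}.

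First I would set up the local view. Given an edge $e=uv$, the algorithm $\fA'$ has access to $B(uv,t-1)=B(u,t-1)\cup B(v,t-1)$. Running $\fA$ at $u$ requires $B(u,t)$; the only missing data is the identifier of the single node $x\in B(u,t)\setminus B(uv,t-1)$ hanging off $u$ on the far side (and likewise a node $y$ hanging off $v$), together with the knowledge that this identifier is a legal extension, i.e. is adjacent in the appropriate color class $H_i$ of the ID graph to the identifier of the node of $B(u,t-1)$ it attaches to. So $\fA'$ can, for each legal choice of $ID(x)$ (including the choice ``$x$ does not exist, $u$ has degree $<\Delta$''), simulate $\fA$ at $u$ and record which edge-color $\fA$ outputs; call the set of colors that arise this way $R_u(e)\subseteq[\Delta]$, and symmetrically $R_v(e)$. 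The key structural fact from the edge-grabbing constraint is that neither $R_u(e)$ nor $R_v(e)$ can contain the color of $e$ itself: if $\fA$ at $u$ output the color of $e$ for some legal extension, we could complete that configuration to a full tree (relative to $H$) on which $u$ grabs $e$; if moreover $\fA$ at $v$ output the color of $e$ for some legal extension, completing \emph{both} extensions simultaneously — which is possible because the two extensions live on disjoint sides of $e$, and legality relative to the ID graph is checkable locally on each side — yields a tree on which $u$ and $v$ both grab $e$, contradicting correctness of $\fA$. Hence on every edge $e$, at least one of $R_u(e),R_v(e)$ is nonempty only among the $\le\Delta-1$ colors $\ne$ color$(e)$, but more is true: for the \emph{specific} true identifiers sitting at $x$ and $y$, the actual outputs $c_u\in R_u(e)$ and $c_v\in R_v(e)$ are realized.

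Now I would define the orientation. For edge $e=uv$, $\fA'$ must decide an orientation. The idea: orient $e$ \emph{towards} $u$ iff the color of $e$, from $u$'s perspective, is ``not grabbed by $u$'' in a way that is consistent across the graph — concretely, I would have $\fA'$ output, as a label on $e$, the pair of sets $(R_u(e),R_v(e))$ (encodable in $O(\Delta)$ bits, a constant), and then argue this labeling is enough to avoid sinks: a node $w$ with all $\Delta$ incident edges oriented towards it would force $\fA$ at $w$, on the \emph{true} global identifier assignment, to output some color $c$; the edge $e^*$ of color $c$ at $w$ then has $c\in R_w(e^*)$ by construction, but we arranged the orientation rule so that an edge $e^*=wz$ with $c=\mathrm{color}(e^*)\in R_w(e^*)$ is oriented \emph{away} from $w$ — contradiction. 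The cleanest way to make ``orient away from $w$'' well-defined from edge-local data is: $e=uv$ is oriented from $u$ to $v$ iff $\mathrm{color}(e)\notin R_u(e)$ (and we must check this is never ambiguous, i.e. that $\mathrm{color}(e)\in R_u(e)$ and $\mathrm{color}(e)\in R_v(e)$ cannot both happen — which is exactly the ``both grab $e$'' contradiction above; and that at least one holds — which follows because the true identifiers realize an output at each of $u$ and $v$, and $\fA$ being a correct edge-grabbing algorithm means on the true assignment $u$ and $v$ don't both grab $e$, so at least one of $c_u,c_v$ differs from $\mathrm{color}(e)$... wait, that gives ``not both grab'', i.e. at least one of $\mathrm{color}(e)\notin R_u(e)$-type conditions may fail). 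I would therefore adopt the asymmetric rule using an arbitrary fixed tie-break (e.g. the ID graph's vertex order, or the edge-coloring, which globally orients edges) to designate, for each edge, a ``primary'' endpoint $u$, and set the orientation of $e$ to point from $u$ to $v$ iff $\mathrm{color}(e)\notin R_u(e)$, pointing $v\to u$ otherwise; then a sink $w$ yields, via the edge $e^*$ of color $c=\fA(w)$, the relation $c\in R_w(e^*)$, and one checks that whichever of $w,z$ is primary on $e^*$, the rule orients $e^*$ away from $w$ — using that $c\in R_w(e^*)$ forces $c\notin R_z(e^*)$ (the no-double-grab fact) when $z$ is primary.

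The main obstacle I expect is precisely this last bookkeeping: making the orientation rule a genuine \emph{function of the edge-local view} while guaranteeing (i) well-definedness — every edge gets exactly one orientation — and (ii) the sink-freeness implication goes through for \emph{both} possible primary-endpoint choices on the critical edge. This is the analogue of the ``$c\in C_1$ but $c\notin C_2$'' argument in \cref{lem:coloring_elimination}, and the subtlety is that sinkless orientation's constraint is existential rather than a coloring, so the two sets $R_u(e),R_v(e)$ do not simply need to be \emph{distinct} — one of them needs to \emph{exclude} $\mathrm{color}(e)$ — and extracting that from correctness of $\fA$ requires the extendability step that fuses two one-node extensions on opposite sides of $e$ into one legal global instance relative to $H$, which is where the ID-graph girth and the ``independent sets of $H_i$ are small'' conditions (ensuring such extensions to a full $n$-vertex tree actually exist) get used, exactly as the increasing-identifier condition was used at the corresponding point of \cref{lem:coloring_elimination}. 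Once that extendability lemma is isolated and proved, the rest is the routine pigeon-hole-style argument. The reverse direction ($t+1/2$-round edge-centered for sinkless orientation $\Rightarrow$ $t$-round node-centered for edge-grabbing) I would handle symmetrically: a node $u$ collects, over all legal identifiers of its missing far node, the set of orientations $\fA$ assigns to $u$'s incident edges, and outputs the color of some edge that is oriented outward in \emph{all} these simulations — such an edge exists because $\fA$ never makes $u$ a sink, for any extension.
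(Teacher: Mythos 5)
Your overall route is the paper's: for an edge $e=uv$, simulate $\fA$ over all legal extensions of $B(uv,t-1)$ on each side, rule out that $u$ and $v$ can both (under some extensions) grab $e$ by fusing the two extensions -- they occupy disjoint sides of $e$ and consistency with the ID graph is an edge-local condition, and the bound $t\le \log_\Delta n-1$ keeps the fused instance legal -- and get sink-freeness because the true identifiers are among the extensions considered. However, the orientation rule you finally commit to is inverted, and as written it fails the very check you describe. You orient $e$ from the primary endpoint $u$ to $v$ iff $\mathrm{color}(e)\notin R_u(e)$. Now take a would-be sink $w$ whose true run of $\fA$ grabs the edge $e^*=wz$ of color $c$, so $c\in R_w(e^*)$ and, by the no-double-grab fact, $c\notin R_z(e^*)$. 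If $w$ is primary, the condition $c\notin R_w(e^*)$ fails and your rule orients $z\to w$; if $z$ is primary, the condition $c\notin R_z(e^*)$ holds and the rule again orients $z\to w$. In both cases the grabbed edge points \emph{toward} $w$, so sinks are not excluded. The fix is to orient away from an endpoint that could grab ($u\to v$ iff $\mathrm{color}(e)\in R_u(e)$), or simply to use the paper's rule: orient $u\to v$ if some extension makes $u$ grab $e$, orient $v\to u$ if some extension makes $v$ grab $e$ (mutually exclusive by your fusion argument), and orient arbitrarily if neither occurs -- then the primary-endpoint tie-break and your worry about ``at least one condition holding'' disappear entirely. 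Relatedly, drop the opening claim that neither $R_u(e)$ nor $R_v(e)$ can contain $\mathrm{color}(e)$: it is false (the true extension puts $\mathrm{color}(e)$ into $R_u(e)$ whenever $u$ actually grabs $e$); only ``not both'' is true, and that is all you need.

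Two smaller points. On trees of degree up to $\Delta$, $B(u,t)\setminus B(uv,t-1)$ is not a single node $x$ but the whole layer of up to $(\Delta-1)^{t}$ nodes at distance exactly $t$ from $u$ away from $v$; the enumeration must range over both the structure of that layer (which boundary nodes have children, which are leaves) and its identifiers, as in the paper -- this changes nothing structurally. Also, the girth and independent-set properties of the ID graph are not what makes extensions or fusions legal (edge-by-edge consistency with the $H_i$ plus the size bound suffices; the paper simply uses the fused ball $B(uv,t)$ itself as the counterexample instance); those properties are used elsewhere, for near-uniqueness of identifiers and for defeating $0$-round algorithms. Finally, in your sketch of the converse direction, ``such an edge exists because $\fA$ never makes $u$ a sink'' again needs the fusion step: if every incident edge had some bad extension orienting it inward, you must combine these per-edge extensions (they live on disjoint subtrees) into one legal instance on which $u$ is a sink.
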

\begin{proof}
We prove just the first part of the statement, the proof of the second part is similar, and doing it is a good exercise. 

We start with any node-centered algorithm $\fA$ with round complexity $t$ that solves the edge-grabbing problem. We define the edge-centered algorithm $\fA'$ of round complexity $t-1/2$ as follows. For an edge $e = uv$, the algorithm first considers all the possible extensions of the (known) ball $B(uv, t-1)$ to the ball $B(u, t)$ that is known to $\fA$ when it is run on $u$. By an extension, we mean first how the graph looks like (e.g., maybe some vertices on the boundary of $B(uv, t-1)$ turn out to be leaves), and second, what the identifiers are (they have to be consistent with $H$). We consider all valid extensions and if at least one of them leads to $\fA$ grabbing the edge $uv$ from $u$, then $\fA'$ orients the edge $uv$ from $u$ to $v$. 
After this is done, the algorithm makes an analogous reasoning for $v$; again, whenever at least one extension of $B(uv, t-1)$ to $B(v, t)$ decides to grab the edge $uv$, $\fA'$ orients it from $v$ to $u$. If no vertex ever decides to grab $uv$, $\fA'$ decides to orient it arbitrarily. This finishes the description of $\fA'$ (see \cref{fig:elimination-tree}). 

\begin{figure}
    \centering
    \includegraphics[width = .5\textwidth]{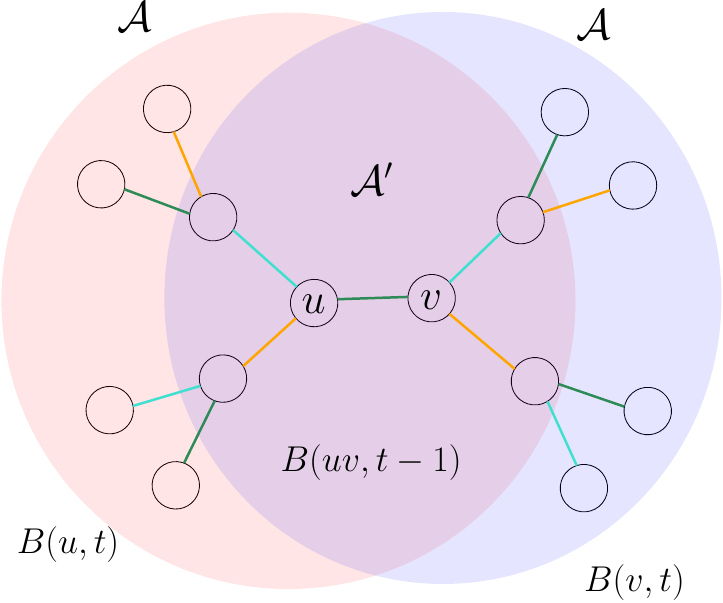}
    \caption{The picture shows the definition of the algorithm $\fA'$ for $t = 2$. Notice that our setup is a large $\Delta$-regular tree with edges colored with $\Delta$ colors. The algorithm $\fA'$ has access only to the identifiers in $B(uv, t-1)$ (the intersection of the red and the blue ball). To decide on the orientation of the edge $uv$, the algorithm considers all possible identifier extensions of $B(uv, t-1)$ to $B(u, t)$ (the red ball), and if at least one such extension makes $\fA$ run at $u$ grab the edge $uv$, $\fA'$ orients that edge towards $v$. We analogously orient this edge towards $u$ if at least one extension of $B(uv, t-1)$ to $B(v, t)$ (the blue ball) makes $\fA$ run at $v$ grab the edge $uv$. \\
    We notice that it cannot happen that $\fA'$ wants to orient the edge $uv$ in both directions: that would imply the existence of an identifier-labeling of $B(uv, t)$ on which $\fA$ is incorrect. }
    \label{fig:elimination-tree}
\end{figure}

To make sure that $\fA'$ is well-defined, we need to prove that it never happens that there is an extension of $B(uv, t-1)$ to both $B(u, t)$ and $B(v,t)$ such that $\fA$ run on $B(u, t)$ decides to grab the edge $uv$ and run on $B(v, t)$, it decides to grab the edge $vu$.
To see this, notice that putting the two extensions $B(u, t), B(v, t)$ together, the graph $B(uv,t)$ has at most $n$ nodes and its identifiers respect $H$.\footnote{This part of the argument needs to work with the ID graph instead of unique identifiers. } We observe that in case $\fA'$ is not well-defined, $\fA$ fails to solve edge-grabbing on $B(uv, t)$, a contradiction with the assumed correctness of $\fA$. %Thus, $\fA'$ is well-defined. 

Moreover, the algorithm $\fA'$ solves sinkless orientation: For any node $u$ of full degree $\Delta$, the original algorithm $\fA$ decided to grab a certain edge $uv$. When we run $\fA'$ on $uv$, $\fA'$ will by definition orient this edge from $u$ to $v$, thus $u$ cannot be a sink. 
\end{proof}

\paragraph{Finishing the proof}
Let us finish the proof of \cref{thm:sinkless_orientation_lb}. 
\begin{proof}[Proof of \cref{thm:sinkless_orientation_lb}]
    Consider an input graph $G$ which is a branching tree with $\eps \log n$ layers and every non-leaf vertex has degree $\Delta$. 
    Note that for small enough $\eps > 0$, the girth property of the ID graph implies that any labeling of $G$ with identifiers from $[n]$ that respects a fixed ID graph $H$ has unique identifiers. On the other hand, $G$ has $O\left( \Delta^{\eps \log n} \right)$ nodes, so the range from which the identifiers are coming is $|V(G)|^{O(1)}$. 
    Therefore, a deterministic local algorithm for sinkless orientation on $G$ implies a local algorithm for that problem on $G$ with identifiers consistent with $H$. 

    But \cref{lem:sinkless_orientation_round_elimination} shows that any $o(\log n)$-round algorithm that works relative to $H$ can be sped up to $0$ round complexity. A $0$-round algorithm $\fA_0$ is simply a function that maps an input identifier to one of $\Delta$ many colors, i.e., which edge the vertex decides to grab. We can thus think of $\fA_0$ as a coloring of $H$ with $\Delta$ many colors. But notice that for any such vertex coloring, we can consider the largest color class $i$ that has at least $n/\Delta$ colors. Then, we use the independence property of ID graphs from \cref{def:id_graph} to infer that the set of vertices of color $i$ cannot be independent in $H_i$, i.e., there is an edge $uv$ in $H$ where both $u$ and $v$, as well as the edge $uv$ are colored by the color $i$. We thus found two vertices that may be neighboring in the input graph $G$ and the algorithm $\fA_0$ decides to grab the edge connecting them from both endpoints of that edge, a contradiction with $\fA_0$ being correct.  
\end{proof}

\paragraph{Sinkless orientation as the ``simplest hard problem''}
Notice that if we formulate sinkless orientation as an instance of Lovász local lemma, the bad event probability is equal to $2^{-\Delta}$, that is, the bad event probability is exponentially small compared to the polynomial guarantee in the relaxed criterion of \cref{eq:lll_definition}. It turns out that this is the threshold where an instance of the local lemma is still ``hard''. 

%It is known that any instance with local-lemma criterion $p < 2^{-\Delta}$ can be solved in $O(\log^* n)$ rounds. 
%We understand extremely well what is the function of the degree for which the instance of the local lemma with the appropriate criterion starts being ``easy'' in the sense that we can solve it with $O(\log^* n)$ complexity. 

\todo{fix}
\begin{restatable}[{\cite{brandt_et_al2016_LLL_lower_bound,brandt_maus_uitto2019tightLLL,brandt_grunau_rozhon2020tightLLL}}]{theorem}{sharpie}
\label{thm:sharp_threshold_onedelta_formulation}
        On one hand, there is an instance of Lovász local lemma (namely sinkless orientation) with the criterion $p \cdot 2^\Delta \le 1$ that has deterministic round complexity $\Omega(\log n)$. 
    On the other hand, any instance of Lovász local lemma with the criterion $p \cdot 2^\Delta < 1$ has deterministic round complexity $O(\log^* n)$. 
\end{restatable}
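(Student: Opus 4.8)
\emph{Proof plan.} The two halves are of quite different flavour. The first half needs essentially no new work: sinkless orientation on $\Delta$-regular trees \emph{is} a local lemma instance in which each edge carries one uniform random bit (its orientation) and each degree-$\Delta$ vertex $v$ carries the bad event ``$v$ is a sink'', of probability exactly $p=2^{-\Delta}$, so $p\cdot 2^\Delta=1\le 1$; the dependency graph of this instance is the tree itself. Hence the first half is literally \cref{thm:sinkless_orientation_lb}, and it fits the picture of the preceding paragraph: $p\cdot 2^\Delta=1$ is exactly the threshold at which an instance can still be ``hard'', and sinkless orientation realises it.

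For the second half the plan is to show that every local lemma instance with $p\cdot 2^\Delta<1$ admits a \emph{deterministic sequential} local algorithm of \emph{constant} local complexity, and then to invoke \cref{cor:mis_coloring} (equivalently \cref{thm:sequential_vs_distributed_coloring}): on bounded-degree graphs, where $\Delta$ and the number of random variables per bad event are constants, a constant-complexity sequential local algorithm is simulated distributedly in $O(\log^* n)$ rounds, and determinism is preserved. The sequential algorithm iterates over the random-variable nodes in an arbitrary order and maintains, for every bad event $\fE_w$ with an uncommitted variable, the conditional probability $q_w$ of $\fE_w$ given the variables committed so far, under the invariant $q_w<1$. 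When the sweep reaches a variable $u$, incident to bad events $\fE_{w_1},\dots,\fE_{w_k}$ with $k=O(1)$, it commits $u$ to a value — formally, to a long enough prefix of its random bits, chosen deterministically from finitely many candidates — so that for every $w_i$ whose last uncommitted variable is $u$ the value drives $q_{w_i}$ to $0$, and for every other incident $w_i$ it keeps $q_{w_i}$ below $1$ without letting it grow too fast. Such a value exists: for the first kind of event this is possible precisely because $q_{w_i}<1$ beforehand (a positive-measure set of values of $u$ avoids $\fE_{w_i}$), and for the second kind one uses that the expectation of $q_{w_i}$ over a random $u$ equals its current value, so by Markov only a controlled fraction of choices of $u$ spoils any single $q_{w_i}$; a union bound over the $k$ incident events, closed by the slack in $p\cdot 2^\Delta<1$, leaves a valid choice. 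After the sweep every bad event has conditional probability $0$, so the assignment avoids all bad events, and since each commitment looks only at the committed variables reachable through the incident bad events — a ball of constant radius — the algorithm is sequential-local of complexity $O(1)$.

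The delicate point, and the reason this direction really needs \cite{brandt_maus_uitto2019tightLLL,brandt_grunau_rozhon2020tightLLL} rather than a one-line argument, is the \emph{calibration} of the commitment rule so that the bookkeeping closes under the \emph{sharp} criterion $p\cdot 2^\Delta<1$ with $\Delta$ the dependency-graph degree. A crude rule — ``never let an incident $q_w$ grow by more than a $\poly(\Delta)$ factor'' — already yields a valid constant-complexity sequential algorithm, but only under the weaker polynomial criterion $p\cdot\Delta^{O(1)}<1$; pushing the base down to $2$ requires a more careful potential/charging scheme that apportions a variable's ``blow-up budget'' among its incident events according to how close each is to being fully committed. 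I expect this quantitative calibration, not the reduction to sequential local algorithms sketched above, to be the main obstacle — it is precisely the content of the cited papers.
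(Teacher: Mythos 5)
The paper itself offers no proof of this statement: it is given as a citation, with only the preceding paragraph observing that sinkless orientation is an LLL instance with $p=2^{-\Delta}$, so there is little to compare against. Your first half is exactly that observation plus \cref{thm:sinkless_orientation_lb}, which is precisely how the paper intends the lower-bound direction to be read, and it is correct. Your second half correctly identifies the mechanism used in the cited works --- a deterministic sequential/greedy pass that fixes variables while controlling conditional probabilities of the bad events, followed by the standard $O(\Delta^{O(1)}+\log^* n)$ simulation via a distance coloring (\cref{thm:sequential_vs_distributed_coloring}) --- but, as you yourself flag, it stops short of the one step that makes the theorem true at the sharp threshold. For the record, the way the cited argument closes is to sweep over the \emph{bad events} (in the order given by a coloring of a suitable power graph), at each step fixing all remaining variables of the current event so that the event is permanently avoided, while maintaining the invariant that an unprocessed event's conditional probability is at most $p\cdot 2^{j}$ after $j$ of its dependency-graph neighbors have been processed; since $j\le\Delta$, the criterion $p\cdot 2^{\Delta}<1$ is exactly what keeps every event avoidable when its turn comes, and this is where the base $2$ and the exponent $\Delta$ (dependency degree, not variables per event) come from. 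Your per-variable sweep with a ``doubling budget'' would naturally calibrate against the number of variables per event rather than the dependency degree, which is the gap you correctly diagnose as the content of the cited papers; given that the paper also only cites them, your proposal is at the same level of completeness as the source.
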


A similar threshold phenomenon holds also if we work in the variable-event graph. 

\begin{theorem}[{\cite[Theorem 3.5]{fischer_ghaffari2017sublogarithmic}, \cite[Corollary 1.8]{bernshteyn2021localcont}}]
\label{thm:sharp_threshold_twodelta_formulation}
    On one hand, there is an instance of Lovász local lemma\footnote{The instance is sinkless orientation on trees where one color class has degree $\Delta_{r.v.}$ and the other has degree $\Delta_{b.e.}$. One formulates it as an instance of the local lemma by letting each vertex of one color class grab a random outgoing edge. The proof that this variant of sinkless orientation is still hard seems to be missing in the literature. } with the criterion $p \cdot \Delta_{r.v.}^{\Delta_{b.e.}} \le 1$ has deterministic round complexity $\Omega(\log n)$. 
    On the other hand, any instance of Lovász local lemma with the criterion $p \cdot \Delta_{r.v.}^{\Delta_{b.e.}} < 1$ has deterministic round complexity $O(\log^* n)$. 
\end{theorem}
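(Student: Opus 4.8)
The plan is to treat the two directions separately, following the template already used for the dependency-graph threshold in \cref{thm:sharp_threshold_onedelta_formulation}: a round-elimination lower bound on a sinkless-orientation-type instance, and an upper bound showing that the strict criterion makes the instance solvable by a sequential local algorithm of \emph{constant} local complexity, which then becomes an $O(\log^* n)$-round distributed algorithm via \cref{thm:sequential_vs_distributed_coloring}.

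\emph{Lower bound.} For the stated existential claim it suffices to exhibit one hard instance sitting on the threshold. I would take ordinary sinkless orientation on trees of degree at most $\Delta$, written in the variable--event graph: the random variables are the edges (two outcomes, degree $\Delta_{r.v.}=2$), the bad events are the vertices (degree $\Delta_{b.e.}=\Delta$), and a vertex is bad iff it is a sink, so $p=2^{-\Delta}$ and $p\cdot\Delta_{r.v.}^{\Delta_{b.e.}}=1$; its deterministic round complexity is $\Omega(\log n)$ by \cref{thm:sinkless_orientation_lb}, which already proves the first half. To obtain a hard instance for \emph{every} pair $\Delta_{r.v.}\ge 2,\ \Delta_{b.e.}\ge 1$ --- the point the footnote flags as missing from the literature --- I would use the biregular bipartite tree in which each random-variable node (degree $\Delta_{r.v.}$) grabs one of its $\Delta_{r.v.}$ incident edges uniformly at random, and an event node (degree $\Delta_{b.e.}$) is bad iff all its incident edges were grabbed towards it; the $\Delta_{b.e.}$ grabs are independent and each hits the event with probability $1/\Delta_{r.v.}$, so $p=\Delta_{r.v.}^{-\Delta_{b.e.}}$ and the criterion is again met with equality. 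Then I would run round elimination exactly as in \cref{lem:sinkless_orientation_round_elimination}: replace unique identifiers by identifiers consistent with an ID graph (\cref{def:id_graph}) adapted to biregular bipartite trees, alternate between the ``grabbing'' relaxation and the orientation problem while shaving $\tfrac12$ round at a time, collapse a hypothetical $o(\log n)$-round algorithm down to a $0$-round one, and derive the contradiction from the independence property of the ID graph exactly as at the end of the proof of \cref{thm:sinkless_orientation_lb}.

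\emph{Upper bound.} Suppose $p\cdot\Delta_{r.v.}^{\Delta_{b.e.}}<1$; since $\Delta_{r.v.},\Delta_{b.e.}$ are bounded, the slack $\delta=1-p\cdot\Delta_{r.v.}^{\Delta_{b.e.}}>0$ is a fixed constant for the instance. I claim the instance is solved by a deterministic sequential local algorithm of constant local complexity; granting this, \cref{thm:sequential_vs_distributed_coloring} applied with $t(n)=O(1)$ on the class of bounded-degree graphs yields a deterministic distributed algorithm of round complexity $O(\Delta^{O(1)}+\log^* n)=O(\log^* n)$, as wanted. The sequential algorithm iterates over the random-variable nodes in the adversarial order; when it reaches $x$ it inspects only a ball of a fixed radius $\rho=\rho(\Delta_{r.v.},\Delta_{b.e.})$ around $x$ in the variable--event graph --- which already contains every bad event affected by $x$ together with all of their variables --- and fixes the bits of $\Omega_x$ to a value chosen so that no bad event incident to $x$ has its conditional probability, given all bits fixed so far, pushed past a critical level. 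The budget for this is the slack $\delta$: a single fixing multiplies the conditional probability of an incident bad event by at most $\Delta_{r.v.}$, and one argues there is always a choice for which the accumulated ``debt'' along the history of each bad event stays below $\delta$, so that when the last variable of any bad event is fixed its conditional probability is still below $1$ and can therefore be driven to $0$. Maintaining this invariant to the end produces a fully fixed assignment under which no bad event occurs.

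\emph{Where the difficulty lies.} The routine parts are the two invocations of \cref{thm:sinkless_orientation_lb,thm:sequential_vs_distributed_coloring}. The real work is the constant-radius greedy step of the upper bound: pinning down the exact residual potential, proving that a safe value for $\Omega_x$ always exists while looking only inside the radius-$\rho$ ball, and checking that the slack $\delta$ genuinely absorbs the simultaneous constraints coming from the up to $\Delta_{r.v.}$ bad events incident to $x$. This is precisely where the strict inequality is used --- and, as the paper notes, where the original Fischer--Ghaffari analysis had to be repaired. A secondary gap is the round-elimination argument for the general-degree hard instance: it is morally identical to the $\Delta_{r.v.}=2$ case handled by \cref{lem:sinkless_orientation_round_elimination}, but making it rigorous requires the correct ID-graph notion for biregular bipartite trees and careful tracking of which of the two alternating problems lives on which side of the bipartition.
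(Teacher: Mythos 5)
The paper itself gives no proof of this theorem: it is quoted from Fischer--Ghaffari and Bernshteyn, and the footnote even concedes that the hardness of the general biregular sinkless-orientation instance is missing from the literature, so there is no in-paper argument to compare against. Judged on its own, your plan is essentially the argument behind the cited results and is sound. For the first half, your observation that ordinary sinkless orientation with edges as variables ($\Delta_{r.v.}=2$, $\Delta_{b.e.}=\Delta$, $p=2^{-\Delta}$, so the criterion holds with equality) inherits the $\Omega(\log n)$ bound from \cref{thm:sinkless_orientation_lb} does prove the literal existential statement, and it neatly sidesteps exactly the gap the footnote flags; your round-elimination sketch for arbitrary pairs $(\Delta_{r.v.},\Delta_{b.e.})$ remains a sketch, but that stronger claim is not needed for the statement as written (and the paper admits it is open).

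For the second half, the greedy step you single out as the real work does go through at constant radius, and more easily than your ``accumulated debt'' bookkeeping suggests. When fixing a variable $x$, condition on everything fixed so far and let $p_i>0$ be the current conditional probability of an incident bad event $\fE_i$; since the conditional probability is a mean-$p_i$ function of the value of $x$, Markov's inequality shows that the set of values pushing it strictly above $\Delta_{r.v.}\cdot p_i$ has measure strictly less than $1/\Delta_{r.v.}$, so a union bound over the at most $\Delta_{r.v.}$ incident events leaves a value of $x$ for which \emph{every} incident event grows by a factor of at most $\Delta_{r.v.}$. Hence after all (at most $\Delta_{b.e.}$) variables of a bad event are fixed, its conditional probability is at most $p\cdot\Delta_{r.v.}^{\Delta_{b.e.}}<1$, and being zero--one valued at that point it must be $0$; this final step is the only place the strict inequality is used. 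A radius-$2$ deterministic sequential local algorithm in the variable--event graph implements this, and \cref{thm:sequential_vs_distributed_coloring} converts it to $O(\Delta^{O(1)}+\log^* n)=O(\log^* n)$ rounds on bounded-degree instances, exactly as you say.
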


\subsection{Speedups and Slowdowns}
\label{sec:2speedups_and_slowdowns}

The following section covers speedup and slowdown theorems which are at the heart of why we understand that there are sharp thresholds in the local complexities in \cref{thm:classification_basic}. 
%In particular, the author finds \cref{thm:deterministic_speedup} particularly striking. 
%Also, speedups and slowdowns are the reason why we can sleep well at night, knowing that our definitions of local algorithms are very robust. For example, until now the reader may have wondered: How sensitive is the round complexity to changing the error probability from $1/n$ to $1/n^{10}$? And what if the identifiers are not from $[n^{O(1)}]$ but instead from $[2^n]$? %Isn't the uniqueness of identifiers a weirdly global property since we are talking about local algorithms? 
%These issues should become much clearer after reading this section. 

The elegant idea behind speedups and slowdowns is that we can simply ``lie'' to algorithms about the size of the input graph, an idea closely related to the fooling argument we have already seen in the proof of \cref{thm:bad_id_into_good_id}. To understand this technique, it may be useful to briefly recall that in \cref{def:local_algorithm}, we defined a local algorithm as a function that takes two inputs. Firstly, it is $n$, the size of the input graph, and secondly, it is a $t(n)$-hop neighborhood of a vertex that is additionally labeled by unique identifiers or random strings. We will use the notation $\fA_n$ to denote the algorithm $\fA$ when the first input is $n$, i.e., we will view $\fA$ as a sequence of functions $\fA_1, \fA_2, \dots$ 
In this section, we will contemplate what happens if we run $\fA_n$ on a graph of size $n' \not= n$. If $n' > n$, we are ``speeding up'' $\fA$, while if $n' < n$, we are ``slowing it down''. 

\subsubsection{Slowdowns}
\label{subsec:slowdowns}
Let's first see why slowdowns may be useful. 
As a first application, let us recall that we defined deterministic and randomized round complexities in \cref{def:local_problem} by requiring unique identifiers from the range $[n^{O(1)}]$ or error probability at most $1/n^{O(1)}$. 
We will next see that for algorithms with sufficiently small round complexity, we can replace $n^{O(1)}$ by $n$ in the definition without changing its strength. For deterministic algorithms, this complements \cref{thm:bad_id_into_good_id} that shows how to replace very large identifiers with polynomially-sized ones. 
%That is, we could have defined round complexities with just $n$ instead of $n^{O(1)}$ without anything going wrong for local problems and fast algorithms for them. 
This follows by plugging in $f(n) = n^{O(1)}$ into the following slowdown theorem. 

\begin{theorem}[\citet*{chang_kopelowitz_pettie2019exp_separation}]
    \label{thm:slowdown}
    Let $f$ be any increasing function with $f(n) \ge n$, let $\Pi$ be any local problem, and let us use $t_{f(n)}(n)$ to denote the deterministic (randomized) round complexity of solving $\Pi$ if the input identifiers are from the range $[f(n)]$ (the error probability is required to be $1/f(n)$, respectively).     
    Then, $$
    t_{n}(n) \le t_{f(n)}(n) \le t_{n}(f(n)).
    $$
    
    The theorem holds for local complexities defined with respect to any subclass of graphs closed on adding isolated vertices. \footnote{Looking at the proof, it is hard to come up with a reasonable class of graphs where the theorem does \emph{not} apply. }
\end{theorem}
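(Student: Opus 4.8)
The plan is to prove the two inequalities separately, and in each case to take an algorithm that is correct in one ``size regime'' and argue that, when fed the ``wrong'' value of $n$, it is still forced to produce a correct output because any failure would witness a failure of the original algorithm on a genuine input of the size it believes it is seeing. This is precisely the fooling idea already used in \cref{thm:bad_id_into_good_id} and in the informal discussion preceding the statement. Throughout, write $\fA_m$ for the algorithm $\fA$ with its first input fixed to $m$, so a correct algorithm for ``identifiers from $[f(n)]$'' is a sequence $(\fA_m)_m$ such that running $\fA_m$ on any $m$-vertex graph with identifiers from $[f(m)]$ solves $\Pi$.

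\emph{The inequality $t_n(n) \le t_{f(n)}(n)$.} Here I want to take an algorithm that works in the (harder) polynomial-identifier model and produce one in the (easier) linear-identifier model with no loss in round complexity. Let $\fA$ solve $\Pi$ with identifiers from $[f(n)]$ in $t_{f(n)}(n)$ rounds. I claim $\fA$ \emph{already} solves $\Pi$ when given identifiers from the smaller range $[n] \subseteq [f(n)]$: an input graph on $n$ vertices with identifiers from $[n]$ is in particular a valid input in the model where identifiers may come from $[f(n)]$, so correctness is inherited verbatim. For the randomized version the same observation applies: an error probability guarantee of $1/f(n)$ is in particular a guarantee of $1/n$ since $f(n) \ge n$. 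So this direction is essentially immediate and requires only that the smaller identifier set embeds into the larger one and that the error bound only gets weaker.

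\emph{The inequality $t_{f(n)}(n) \le t_n(f(n))$.} This is the substantive direction: I want to take an algorithm correct in the linear-identifier model and build one correct in the $[f(n)]$-identifier model, paying a slowdown from $t_n(\cdot)$ evaluated at $f(n)$ rather than $n$. Let $\fB = (\fB_m)_m$ solve $\Pi$ with identifiers from $[m]$ in $t_n(m)$ rounds on $m$-vertex graphs. Define a new algorithm $\fA$ by setting $\fA_m := \fB_{f(m)}$ — that is, when $\fA$ is told the graph has $m$ vertices, it simply lies to $\fB$ and tells it the graph has $f(m)$ vertices, runs $\fB_{f(m)}$, and outputs whatever it outputs. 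The round complexity of $\fA_m$ is that of $\fB_{f(m)}$, namely $t_n(f(m))$, giving the claimed bound. For correctness, fix any $m$-vertex input graph $G$ with identifiers from $[f(m)]$, and suppose $\fA_m = \fB_{f(m)}$ fails at some vertex $u$, i.e.\ $B_G(u,r)\notin\fP$. The failure depends only on the $(t_n(f(m))+r)$-hop neighborhood of $u$ in $G$; I now pad $G$ by adding $f(m) - m$ isolated vertices to obtain a graph $G'$ on exactly $f(m)$ vertices (this is where I use that the graph class is closed under adding isolated vertices), extend the identifier assignment arbitrarily to a genuine assignment from $[f(m)]$ on $f(m)$ vertices — possible precisely because the range has size $f(m)$ — and observe that $\fB_{f(m)}$ run on $G'$ agrees with its run on $G$ on the neighborhood of $u$ (isolated vertices are invisible at bounded radius), hence $\fB_{f(m)}$ fails at $u$ on the $f(m)$-vertex input $G'$, contradicting correctness of $\fB$. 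For the randomized version the same padding argument works: if $\fB_{f(m)}$ on $f(m)$-vertex graphs errs with probability at most $1/f(m)$, then on the padded graph $G'$ it errs with probability at most $1/f(m)$, and since the error event (some vertex of the original part fails) has the same probability on $G$ and on $G'$, algorithm $\fA_m$ errs with probability at most $1/f(m)$ on $m$-vertex inputs with $[f(m)]$-identifiers, as required.

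\emph{Main obstacle.} The only real subtlety is the padding step in the second inequality: one must be sure that the class of graphs under consideration tolerates the addition of isolated vertices (hence the hypothesis) and that the extended identifier assignment is a legitimate input — this is automatic because we enlarged both the vertex count and the identifier range to exactly $f(m)$, but it is the crux of why ``lying about $n$'' is harmless. A secondary point worth stating carefully is that in the randomized case the failure event must be defined so that it refers only to vertices of the original graph, so that its probability is genuinely unchanged by padding; once that bookkeeping is fixed, no further work is needed.
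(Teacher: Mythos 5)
Your proposal is correct and follows essentially the same route as the paper: the first inequality by noting that an algorithm tolerating identifiers from $[f(n)]$ (or achieving error $1/f(n)$) automatically handles the restricted setting, and the second by defining $\fA'_n := \fA_{f(n)}$ (``lying'' about the graph size) and refuting any failure by padding the $n$-vertex graph with $f(n)-n$ isolated vertices, which is harmless for a local problem. The only difference is that you also spell out the randomized bookkeeping, which the paper explicitly skips, and your treatment of it is sound.
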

\begin{proof}
We will prove just the deterministic version of the theorem. Notice that $t_n(n) \le t_{f(n)}(n)$ follows directly from our assumption $f(n) \ge n$ and the definition:  any algorithm expecting identifiers from the range $[f(n)]$ certainly works if they happen to be from the smaller range $[n]$. 

Next, consider any deterministic algorithm $\fA$ that solves $\Pi$ if the identifiers are from $[n]$ in round complexity $t(n)$. Our goal is to turn it into an algorithm $\fA'$ with round complexity $t'(n)$ that works if the identifiers are from $[f(n)]$. Think of $\fA$ as a sequence $\fA_1, \fA_2, \dots$ for each $n \in \N$. We define $\fA'$ by setting for each $n$ that $\fA'_{n} := \fA_{f(n)}$. 
That is, we ``lie'' to the algorithm $\fA$, telling it that the size of the graph is larger (namely $f(n)$) than what it actually is (namely $n$). 
Since the round complexity of $\fA$ on $f(n)$-sized instances is $t(f(n))$, for the complexity of $\fA'$ on $n$-sized instances we have $t'(n) = t(f(n))$. 
    
Moreover, since $\fA_{f(n)}$ assumes that the unique identifiers are from $[f(n)]$, $\fA'_n$ also assumes that the unique identifiers from $[f(n)]$, making it a well-defined algorithm for the definition of local algorithm where identifiers are supposed to be from $[f(n)]$ on instances of size $n$. 
    
Finally, we claim that $\fA'$ is a correct algorithm. To see this, suppose that $\fA'_n$ fails to solve $\Pi$ on some graph $G$ with $n$ vertices labeled with unique identifiers from $[f(n)]$. Then, we go back to $\fA$ and run it on a graph $G'$ defined as $G$ together with $f(n) - n$ additional isolated vertices, all labeled with unique identifiers from $[f(n)]$. We notice that since $\Pi$ is a local problem, a failure in $G$ implies a failure in $G'$, and we get a contradiction with $\fA$ being correct. 
\end{proof}

As an example of a non-local problem where the range of identifiers matters, consider the leader-election problem\footnote{This is a fundamental problem in the broader area of distributed computing. The maximal independent set problem can be seen as a local variant of this problem. } where exactly one node of the input graph is to be selected. If the identifiers are from $[n]$, we can simply select the node with identifier 1. Otherwise, there is no local algorithm for it if the input graph is empty. 

A similar argument to the proof of \cref{thm:slowdown} can be used to prove another sleep-well-at-night result: any local algorithm $\fA$ with round complexity $t(n)$ solving some local problem can be turned into an algorithm $\fA'$ of round complexity $t'(n) \le t(n)$ which is a non-decreasing function of $n$. 

The randomized version of \cref{thm:slowdown} turns out to be particularly interesting. \citet*{chang_kopelowitz_pettie2019exp_separation} used it to show that \emph{any} randomized algorithm can be derandomized, if we appropriately slow down its complexity. \footnote{Their technique is similar to Adelman's theorem that proves that $\text{BPP} \subseteq \text{P}/\text{poly}$. See \cite[Theorem 7.14]{arora_barak2009computational_complexity_modern_approach}. }

\begin{theorem}[\citet*{chang_kopelowitz_pettie2019exp_separation}]
    \label{thm:exponential_derandomization}
    If a local problem $\Pi$ has randomized round complexity $t(n)$, its deterministic round complexity is $t\left(2^{O(n^2)}\right)$. 

    The theorem holds for any class of graphs closed on adding isolated vertices. 
\end{theorem}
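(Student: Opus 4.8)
The plan is to run the Adleman-style argument alluded to in the footnote (the proof that $\mathsf{BPP}\subseteq\mathsf{P}/\mathsf{poly}$): first amplify the success probability far enough, then fix a single ``advice table'' of random strings, indexed by identifiers, that works simultaneously for every labeled graph of the given size.

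\emph{Step 1: amplification via slowdown.} Let $N=n^{O(1)}$ be the identifier range of the deterministic algorithm we intend to build. The number of graphs in our class on a fixed vertex set of size $n$, with identifiers drawn from $[N]$, is at most $N^n\cdot 2^{\binom{n}{2}}=2^{O(n^2)}$; pick an increasing function $f$ with $f(n)\ge n$ and $f(n)\ge N^n\cdot 2^{\binom{n}{2}}$, so $f(n)=2^{O(n^2)}$. By the randomized form of \cref{thm:slowdown} we have $t_{f(n)}(n)\le t_n(f(n))=t(f(n))$; concretely, taking the given randomized algorithm and ``lying'' to it that the input has $f(n)$ vertices yields a randomized local algorithm $\fA$ that, on any $n$-vertex instance in our class, runs in $t(f(n))$ rounds and errs with probability at most $1/f(n)$. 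This is the one place where locality of $\Pi$ (and closure of the class under adding isolated vertices) is genuinely used: the reduction pads an $n$-vertex instance to an $f(n)$-vertex one by adding isolated vertices, which changes neither any node's neighborhood nor its local constraint, so the failure probability does not increase.

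\emph{Step 2: fixing the advice table.} Recall from \cref{def:local_algorithm} that $\fA$ feeds each node an independent infinite random string. I would turn $\fA$ into a deterministic algorithm $\fB$ on identifiers from $[N]$ by fixing a table $\rho$ that assigns an infinite bit string $\rho(j)$ to each possible identifier $j\in[N]$, and letting $\fB$ simulate $\fA$ with the string at a node $v$ taken to be $\rho(\mathrm{id}(v))$. To show a good $\rho$ exists, sample it at random (the $\rho(j)$ i.i.d.\ uniform). For any \emph{single} $n$-vertex graph $G$ in the class with identifiers in $[N]$, the nodes of $G$ carry distinct identifiers, so $(\rho(\mathrm{id}(v)))_{v\in V(G)}$ is distributed exactly as the i.i.d.\ random strings $\fA$ expects; hence $\Pr_\rho[\fB\text{ fails on }G]$ equals the failure probability of $\fA$ on $G$, which is below $1/f(n)$. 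Since there are at most $f(n)$ such graphs, a union bound gives $\Pr_\rho[\,\exists\,G:\ \fB\text{ fails on }G\,]<1$, so some table $\rho^\star$ is good for all of them at once; we hard-wire $\rho^\star$ into $\fB$. An infinite advice table is harmless because local algorithms are granted unbounded computation; in fact only the finitely many entries of $\rho^\star$ indexed by identifiers occurring inside a $t(f(n))$-hop ball are ever consulted. The resulting $\fB$ is deterministic, correct on all $n$-vertex instances in the class with identifiers from $[n^{O(1)}]$, and runs in $t(f(n))=t(2^{O(n^2)})$ rounds, which is exactly the claimed bound.

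\emph{Main obstacle.} The argument is conceptually routine once Step 1 is in hand; the crux is simply that, in the local model, the only lever for pushing the error below the reciprocal of the graph count $2^{\Theta(n^2)}$ that the union bound demands is to inflate the pretended input size, and this is precisely what \cref{thm:slowdown} supplies (and where locality of $\Pi$ is essential). A minor point to keep straight is that the count of labeled $n$-vertex graphs is $2^{O(n^2)}$ no matter which fixed polynomial bounds the identifier range, so the final round complexity is insensitive to that choice.
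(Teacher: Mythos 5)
Your proof is correct and follows essentially the same route as the paper: slow the algorithm down via \cref{thm:slowdown} to push the error below $2^{-\Omega(n^2)}$, then pick a random map from identifiers to random strings and union-bound over the $2^{O(n^2)}$ labeled $n$-vertex instances to fix one that works everywhere (your ``advice table'' $\rho$ is the paper's hash function $h$). Your added remark that distinctness of identifiers makes the strings $(\rho(\mathrm{id}(v)))_v$ genuinely i.i.d.\ on any fixed graph is a nice explicit justification of a step the paper states only implicitly.
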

\begin{proof}
    Let $\fA$ be a randomized local algorithm with error $1/n^{O(1)}$ and round complexity $t(n)$ solving $\Pi$. We start by applying \cref{thm:slowdown} with $f(n) = 2^{O(n^2)}$ to slow down $\fA$ to complexity $t\left( 2^{O(n^2)} \right)$ while pushing the error probability down to $2^{-\Omega(n^2)}$. We still call this new algorithm $\fA$.

    The deterministic algorithm $\fA'$ for $\Pi$ will work as follows. Let $C$ be such that the unique identifiers are coming from $[n^C]$. We will use a certain function $h$ (a hash function that we soon specify) that maps each identifier from $[n^C]$ to an infinite bit string. We define $\fA'$ as follows: It first uses $h$ to map each input identifier to an infinite bit string, and next it simulates $\fA$ using each bit string as the string of random bits. 
    
    To construct the function $h$ that makes $\fA'$ correct, we simply choose a random one and notice that on any concrete graph, $\fA'$ fails to solve $\Pi$ with probability at most $2^{-\Omega(n^2)}$ with the randomness over the choice of $h$. This is because on any concrete graph, $\fA'$ with random $h$ is equivalent to running $\fA$. But notice that the number of distinct graphs on $n$ vertices labeled with polynomial identifiers is $2^{O(n^2)} \cdot (n^C)^n = 2^{O(n^2)}$.  %\footnote{We notice that there are only $2^{O(n \log n)}$ constant degree graphs on $n$ vertices so the parameters can be improved a little bit in this case. This improvement turns out not to be needed for our applications. }
    Hence, choosing the failure probability small enough, we can union-bound over all distinct labeled graphs and still get that random $h$ works, with positive probability. We thus conclude that for some $h$, $\fA'$ is a valid deterministic algorithm for $\Pi$, as needed. 
\end{proof}

As a nice corollary, we are getting a very precise understanding of what is happening for the randomized round complexity of Lovász local lemma: On one hand, we have seen a randomized algorithm with round complexity $\poly\log\log(n)$ (\cref{thm:fischer_ghaffari}) that used a deterministic algorithm with round complexity $\poly\log(n)$ (\cref{cor:deterministic_lll}) as a subroutine. But we can now see that \emph{any} $\poly\log\log(n)$ round randomized algorithm would imply a $\poly\log(n)$-round deterministic one via \cref{thm:exponential_derandomization}. A similar observation applies not just for the local lemma, but for many other problems with randomized algorithms based on shattering (see \cref{sec:3concrete_problems}). This explains why in local complexity deterministic algorithms are a big deal -- not only they are used as subroutines of randomized algorithms, but we in fact understand that to make progress in the randomized world, progress in the deterministic world is often necessary. 

\subsubsection{Speedups}
\label{subsec:speedups}

While slowdowns are very useful, the real fun starts when we try to speed up algorithms by lying to them that the size of the graph is \emph{smaller} than it actually is. In fact, what can possibly go wrong if we tell an algorithm that the size of the graph is constant, i.e., we claim that $n = O(1)$? As it turns out, not much! One potential problem is that the algorithm may find out that its $t(n)$-hop neighborhood contains more than $n$ nodes and ``crash''. The other problem is that the algorithm now requires very small identifiers (if we started with a deterministic algorithm) or that the failure probability of the algorithm increased to constant (if we started with a randomized one). This is where our story connects with our discussions of coloring and Lovász local lemma.  

\begin{theorem}[\citet*{chang_pettie2019time_hierarchy_trees_rand_speedup,chang_kopelowitz_pettie2019exp_separation}]
\label{thm:deterministic_speedup}
Let $\fA$ be a local algorithm for a local problem $\Pi$ with round complexity $t(n) = o(\log n)$.  Then there is a local algorithm $\fA'$ solving $\Pi$ such that
\begin{enumerate}
    \item If $\fA$ was deterministic, so is $\fA'$ and its round complexity is $O(\log^* n)$. 
    \item If $\fA$ was randomized, so is $\fA'$ and its round complexity is the same as the randomized complexity of solving relaxed Lovász local lemma, i.e., it is $\tilde{O}\left( \log^4\log(n) \right)$.  
\end{enumerate}

The theorem holds for the class of bounded degree graphs and any of its subclasses closed on taking subgraphs and adding isolated vertices. More generally, for round complexities defined relative to a subclass of bounded degree graphs of restricted growth, the complexity $t(n)$ needs to be such that we always have $|B(u, t(n))| = o(n)$ for any node $u$. 

\end{theorem}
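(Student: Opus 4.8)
The plan is to ``lie'' to $\fA$ about the size of the input graph — telling it the graph has only some constant number $n_0$ of vertices — and then to argue, via a fooling argument in the spirit of the proof of \cref{thm:bad_id_into_good_id}, that this still produces a correct solution because $\Pi$ is a \emph{local} problem. First I fix the maximum degree $\Delta$ and let $r$ be the checkability radius of $\Pi$. Since $t(n) = o(\log n)$, I can pick a constant $n_0$ — depending on $\Delta$, $r$, the function $t$, and a few constants named below — large enough that $t(m) \le \tfrac12 \log_\Delta m$ for all $m \ge n_0$. For such $n_0$, every ball $B_G(u, t(n_0)+r)$ in a graph of maximum degree $\le \Delta$ has at most $\Delta^{t(n_0)+r+1} \le n_0$ vertices, so $\fA_{n_0}$ never ``sees'' more than $n_0$ nodes and will not crash when run on a graph of size $n \ge n_0$. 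The only remaining obstacle to just running $\fA_{n_0}$ is that it expects small identifiers (deterministic case) or only guarantees small error (randomized case); the two cases diverge here.

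In the deterministic case I would first compute a proper coloring of the power graph $G^{2(t(n_0)+r)}$ with $O(D^2)$ colors, where $D := \Delta^{2(t(n_0)+r)} = O(1)$, by running Linial's color reductions (\cref{thm:coloring}) inside that power graph, starting from the genuine identifiers in $[n^{O(1)}]$ (which, being globally unique, form a proper coloring of every power graph). Since one round in $G^{2(t(n_0)+r)}$ costs $O(1)$ rounds in $G$ and $D$ is constant, this takes $O(\log^* n)$ rounds. I then treat the resulting $O(D^2) \le n_0^{O(1)}$ colors as identifiers and run $\fA_{n_0}$, costing $t(n_0) = O(1)$ more rounds. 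For correctness: if $\fA_{n_0}$ failed at a node $u$, the failure would depend only on $B_G(u, t(n_0)+r)$, on which the supplied colors are pairwise distinct (any two vertices of this ball lie within distance $2(t(n_0)+r)$, and $G^{2(t(n_0)+r)}$ was properly colored). Taking the induced subgraph on this ball (a subgraph of $G$, hence in the class), padding with isolated vertices up to $n_0$ vertices (the class is closed under this), and relabeling with genuine unique identifiers from $[n_0^{O(1)}]$ agreeing with the supplied colors, one gets a size-$n_0$ instance on which $\fA$ fails at $u$ — a contradiction. So this yields an $O(\log^* n)$-round deterministic algorithm.

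In the randomized case I would instead reinterpret the failures of $\fA_{n_0}$ as an instance of the relaxed Lovász local lemma on $G$ and solve it. Run $\fA_{n_0}$ with the genuine independent random strings, and for each node $u$ let $X(u)$ be the event ``$\fA_{n_0}$ fails at $u$''. The same fooling argument shows $P[X(u)] \le 1/n_0^{c}$, where $1/n_0^{c}$ is the error guarantee of $\fA$; by the robustness of the definition of randomized algorithms I may take $c$ as large as I wish. The event $X(u)$ depends only on the random strings of the nodes in $B_G(u, t(n_0)+r)$, so $X(u)$ and $X(v)$ are independent unless $d_G(u,v) \le 2(t(n_0)+r)$; this is a Lovász-local-lemma instance whose dependency graph is a subgraph of $G^{2(t(n_0)+r)}$, hence of constant maximum degree $\Delta' = \Delta^{O(t(n_0))} = O(1)$, with bad-event probability $p \le 1/n_0^{c}$. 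Fixing first the relaxation constant $C$ for which \cref{thm:fischer_ghaffari} applies, then $c$ a large enough constant multiple of $C$, then $n_0$ large, I get $p\cdot(\Delta')^C \le 1$, a valid $C$-relaxed instance; the conditional bad-event probabilities are computable since each $X(u)$ depends on a constant-size neighborhood (I may assume $\fA$ reads finitely many random bits per node). Then \cref{thm:fischer_ghaffari} produces, in $\tilde{O}(\log^4\log n)$ rounds, a setting of the random strings under which no $X(u)$ occurs, and running $\fA_{n_0}$ with these strings ($O(1)$ extra rounds) outputs a valid solution to $\Pi$ on $G$.

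For a subclass of bounded-degree graphs of restricted growth, only the opening paragraph changes: instead of the crude bound $\Delta^{t(n_0)+r+1}$ I would use the hypothesis $|B_G(u, t(n))| = o(n)$ to pick a constant $n_0$ with $|B_G(u, t(n_0)+r)| \le n_0$ for every graph in the class, and the rest goes through verbatim (the randomized case still needs the dependency graph to be bounded-degree, which it is since the class is). I expect the main obstacle to be nothing deep but rather getting the order of the constants right: one must fix $\Delta$, $r$ and the checkability data, then $C$ from \cref{thm:fischer_ghaffari}, then the error exponent $c$ of $\fA$, and only then the constant $n_0$, so that \emph{simultaneously} $t(n_0) \le \tfrac12\log_\Delta n_0$, every failure neighborhood embeds into a size-$n_0$ instance, the supplied colors fit the identifier range $[n_0^{O(1)}]$, and $p\cdot(\Delta')^C \le 1$. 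Everything else is the fooling argument plus black-box invocations of Linial's algorithm and the Fischer--Ghaffari algorithm.
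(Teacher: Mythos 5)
Your proposal is correct and follows essentially the same route as the paper's proof: choose a constant $n_0$ with $|B(u,t(n_0)+r)|\le n_0$, simulate $\fA_{n_0}$ with fake identifiers obtained from a Linial coloring of the power graph $G^{2(t(n_0)+r)}$ in the deterministic case, and in the randomized case reformulate the per-node failure events as a constant-degree $C$-relaxed Lovász-local-lemma instance solved by \cref{thm:fischer_ghaffari}, with correctness in both cases via the fooling argument (restrict to the failure ball, pad with isolated vertices). Your added care about the order in which $\Delta, r, C, c, n_0$ are fixed is consistent with the paper's remarks and does not change the argument.
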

\begin{proof}
    Let $\Pi$ be a local problem with a checkability radius of $r$ and $\fA$ be a deterministic (randomized) local algorithm for $\Pi$. 
    %In view of \cref{thm:slowdown}, w
    We will assume that the identifiers are coming from a range $[n^C]$ (alternatively, the failure probability is $1/n^C$) for sufficiently large $C$.  
    We will view $\fA$ as a sequence $\fA_1, \fA_2, \dots$, and we start by choosing $n_0$ to be a large enough constant so that 
\begin{align}
\label{eq:speedup}
    \Delta^{0} + \Delta^{1} + \dots + \Delta^{t(n_0) + r} \le n_0. 
\end{align}
    We notice that such an $n_0$ exists by the requirement $t(n) = o(\log n)$.     
    Our algorithm $\fA'$ will simulate $\fA_{n_0}$ for \emph{any} input $n$ (for $n < n_0$ we define $\fA'_n = \fA_n$). 

    \begin{figure}
        \centering
        \includegraphics[width = \textwidth]{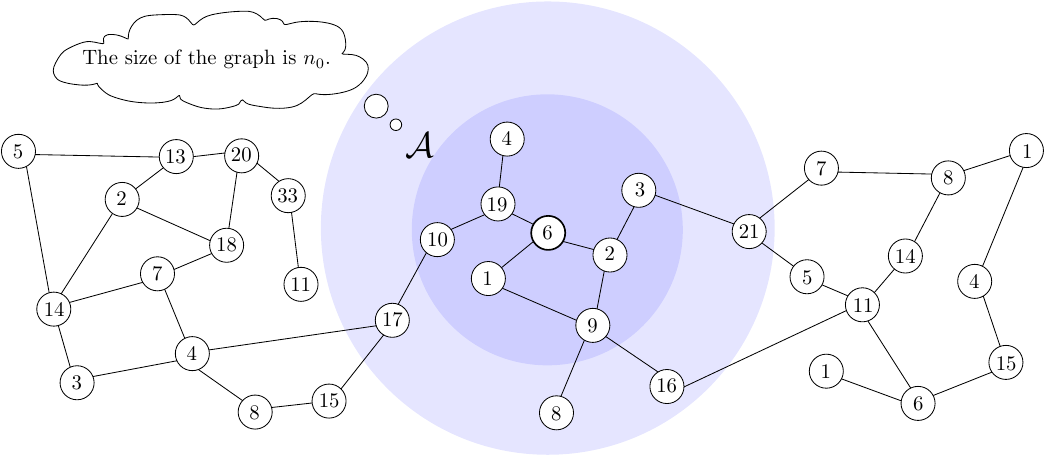}
        \caption{To speed up an algorithm $\fA$, we tell it that the size of the graph is some constant $n_0$, whatever the true size $n$ is. In this case, the round complexity of the algorithm becomes $2$ (see the smaller ball in the picture) and we assume that the checkability radius of the local problem solved by $\fA$ is $1$. \\
        The algorithm $\fA$ assumes that the identifiers are unique numbers of size up to $n_0^{O(1)}$. 
        Since $n \gg n_0$, we cannot supply such identifiers. However, we notice that if $\fA$ is supplied with $n_0^{O(1)}$-sized identifiers that are unique only in every ball of radius $3$ (see the larger ball in the picture), the algorithm still has to work. This is because a failure of $\fA$ around some node $u$ implies a failure of $\fA$ on the graph $B(u, 3)$ which has size at most $n_0$ and is labeled with unique identifiers. }
        \label{fig:speedup}
    \end{figure}

    Let us first handle the case of deterministic algorithms (see \cref{fig:speedup}). There, $\fA'$ needs to supply identifiers to its simulation of $\fA_{n_0}$. This is done as follows. We compute a $(\Delta(G')+1)$-coloring of the power graph $G' = G^{2(t(n_0)+r)}$ such that the number of colors used is $\Delta(G')+1 = \Delta^{O(t(n_0) + r)} \le n_0^{C}$ where we used that $C$ is large enough. We use those colors as identifiers for our simulation of $\fA_{n_0}$. This finishes the description of $\fA'$. 
    
    On one hand, the round complexity of the overall algorithm is $O(\log^* n)$ on bounded-degree graphs. On the other hand, suppose that $\fA'$ fails to solve the problem $\Pi$ at some node $u$. We notice that \cref{eq:speedup} yields that the $(t(n_0)+r)$-hop neighborhood of $u$ contains at most $n_0$ nodes. Moreover, this neighborhood is colored by $\fA'$ with unique colors from the range $[n_0^{C}]$. Recall that $\fA_{n_0}$ is defined on neighborhoods with at most $n_0$ nodes labeled with identifiers from $[n_0^{C}]$. Thus the algorithm $\fA'$ is well-defined. Additionally, a failure of $\fA'$ at $u$ would imply a failure of $\fA$ on a neighborhood of size at most $n_0$ labeled with some unique identifiers from $[n_0^{C}]$. Adding isolated vertices to that neighborhood to make its size $n_0$, we reach a contradiction with $\fA_{n_0}$ being correct on $n_0$-sized instances. 

    \vspace{1em}
    We handle the case of randomized algorithms similarly: We again try to simulate $\fA_{n_0}$ for large enough $n_0$. We notice that we do not need to supply any identifiers in the simulation. However, if we simply simulate $\fA$, we fail to solve $\Pi$ at any fixed vertex $u$ with probability $1/n_0^{C}$ as this is the failure probability of $\fA_{n_0}$. Thus, we instead formulate the process of simulating $\fA$ as \emph{an instance of Lovász local lemma} on a new variable-event graph $G'$ as follows. Any original vertex $u$ of the input graph $G$ will be thought of as two new vertices in $G'$. 
    One vertex, $u_{r.v}$, is a random-variable vertex, and it corresponds to the random string at $u$. The other vertex, $u_{b.e.}$, corresponds to the event that if we run $\fA_{n_0}$ at $u$ and use the random strings sampled from the random-variable vertices $v_{r.v.} \in B(u_{b.e.}, t(n_0))$, $\fA_{n_0}$ fails to solve $\Pi$ at $u$. 

    We notice that the event at $u_{b.e.}$ depends only on the $(t(n_0) + r)$-hop neighborhood of $u$ in $G$. Thus the degree of this local-lemma instance (in the sense of its dependency-graph formulation) is at most $\Delta^0 + \Delta^1 + \dots + \Delta^{t(n_0) + r}$ which is in turn at most $n_0$ by \cref{eq:speedup}. On the other hand, the probability of each bad event is at most $1/n_0^C$. 
    %Recall that the constant in the exponent can be as large as we wish due to our assumption that the error probability of $\fA$ is large enough. 
    We can thus formulate the simulation of $\fA_{n_0}$ as an instance of a $C$-relaxed local lemma. For large enough $C$, we can then apply \cref{thm:fischer_ghaffari} to solve that instance. 

\vspace{1em}
    Finally, we notice that both in the deterministic and the randomize case, we can generalize the requirement $t(n) = o(\log n)$ to requiring that $t$ satisfies for each $u$ that $|B(u, t(n))| = o(n)$. 
\end{proof}
We note that in the statement of \cref{thm:deterministic_speedup}, we do not literally require $t(n) = o(\log n)$ (or $|B(u, t(n))| = o(n)$), it should just be that for some large enough $n_0$, we have $t(n_0) \le \eps \log n_0$, where $\eps$ is some small constant which is a function of $\Delta, r$, and $C$. 

\paragraph{Additional intuition yielded by the proof}
The proof of \cref{thm:deterministic_speedup} tells us a bit more: Basically, it allows us to think of local problems with deterministic round complexity $o(\log n)$ or, equivalently, $O(\log^* n)$, as ``those problems that can be solved in a constant number of rounds, after we compute a distance coloring''. Moreover, the problems with randomized round complexity $o(\log n)$ or, equivalently $\poly\log\log n$, can be thought of as ``those problems that we can view as an instance of the local lemma''. 

\tbox{Problems with deterministic round complexity $o(\log n)$ can be, in fact, solved in constantly many rounds, after a suitable coloring used as ``fake identifiers'' is computed. \\ Problems with randomized round complexity $o(\log n)$ can be formulated as instances of polynomially relaxed Lovász local lemma. }

\paragraph{Speedup below $\log^* n$}
We will next sketch how one can speed up algorithms with round complexity $o(\log \log^* n)$ to complexity $O(1)$. Why the weird complexity $o(\log\log^* n)$? The following speedup argument relies more on the volume than on the radius\footnote{Compare with \cref{thm:classification_volume} classifying volume complexities where we have $o(\log^* n)$ speedup instead. }, so the importance of radius $o(\log\log^* n)$ is that the number of nodes the algorithm sees is $\Delta^{o(\log\log^* n)} = o(\log^* n)$. 
The $\log^* n$ volume is then the right threshold for which we can argue that any deterministic algorithm can be turned into a drastically simpler \emph{order-invariant} algorithm. \footnote{While the name order-invariant suggests that the algorithm may not care about the order of identifiers in some way, it is the other way around, the algorithm cares \emph{only} about the order of the identifiers in its neighborhood. That is, the algorithm is \emph{comparison-based}. } 

An order-invariant algorithm is a deterministic local algorithm with an additional restriction: it does not have direct access to the identifiers written on the nodes; it only knows their \emph{order}, meaning that it can only compare their relative size. 
A bit more formally, while an input to a classical local algorithm is a graph $G$ together with a map $f: V(G) \rightarrow [n^{O(1)}]$ that maps vertices to unique identifiers, an input to an order-invariant algorithm is $G$ together with a function $f': V(G) \times V(G) \rightarrow \{<, > \}$ that is consistent with a total linear order on $V(G)$ and that can tell the algorithm for each pair of nodes which one is larger. Of course, $t(n)$-round algorithm run at $u$ only has access to $f'$ restricted to $B(u, t(n)) \times B(u, t(n))$.

For example, in the setup of \cref{thm:coloring_lower_bound} where we worked with an oriented path with increasing labels, an order-invariant algorithm would see the same input order at every vertex. 
This already shows that order-invariant algorithms struggle to solve interesting local problems on graphs that are paths.

\begin{theorem}[\citet*{naorstockmeyer,chang_kopelowitz_pettie2019exp_separation}]
    \label{thm:speedup_logstar}
    Let $\fA$ be a deterministic local algorithm for a local problem $\Pi$ with round complexity $o(\log \log^* n)$. Then, there is a deterministic local algorithm $\fA'$ solving $\Pi$ with local complexity $O(1)$. 
    
    The theorem holds for any subclass of bounded-degree graphs closed on taking subgraphs and adding isolated nodes. 
\end{theorem}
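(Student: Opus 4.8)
The plan is a two-step reduction: first turn $\fA$ into an \emph{order-invariant} algorithm of the same round complexity --- this is the Ramsey-theoretic argument of Naor--Stockmeyer, which works here because $t(n) = o(\log\log^* n)$ keeps the relevant Ramsey number polynomial in $n$ --- and then collapse the round complexity of an order-invariant algorithm to $O(1)$ by ``lying about $n$'' in the style of the proof of \cref{thm:deterministic_speedup}. The second step is clean precisely because an order-invariant algorithm consumes no identifiers, only the order on the vertices, which the genuine input labels always supply.

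\emph{Step 1: from $\fA$ to an order-invariant algorithm.} Fix $C$ with the identifiers in $[n^C]$, let $r$ be the checkability radius of $\Pi$, and set $N = N(n) := 1 + \Delta + \dots + \Delta^{t(n)+r}$, the maximum number of vertices in a $(t(n)+r)$-ball; the hypothesis gives $N = \Delta^{O(t(n))} = 2^{o(\log\log^* n)} = o(\log^* n)$. Colour each $N$-element set $X = \{x_1 < \dots < x_N\} \subseteq [n^C]$ by the tuple that records, for every rooted degree-$\le \Delta$ graph shape $H$ on $k \le N$ vertices equipped with a linear order on its vertices, the label $\fA_n$ outputs at the root of $H$ when $H$'s vertices receive $x_1 < \dots < x_k$ in order. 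There are $\le 2^{O(N^2)}$ such shapes, so the number of colours is $c \le |S|^{2^{O(N^2)}}$ (with $S$ the label set of $\Pi$), which is small enough that the $N$-uniform hypergraph Ramsey number $R_N(c;\,2N)$ --- the least $\nu$ forcing a $2N$-subset all of whose $N$-subsets share one colour --- is a tower of height $O(N) = o(\log^* n)$, hence at most $n^C$ for all large $n$. Ramsey's theorem thus yields $I_n \subseteq [n^C]$, $|I_n| = 2N$, monochromatic for this colouring. Writing $J_n$ for the $N$ smallest elements of $I_n$, one checks --- by extending any $k$-subset of $J_n$ to an $N$-subset of $I_n$ in which it occupies the $k$ lowest positions (there are $N \ge N-k$ elements of $I_n$ above $\max J_n$), and invoking monochromaticity --- that $\fA_n$ is order-invariant on inputs whose identifiers all lie in $J_n$: the output at the root depends only on the rooted shape and the relative order of the identifiers, not their values. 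Define $\fA'_n$ to be the order-invariant algorithm that, on a $t(n)$-hop neighbourhood with a total order on its vertices, assigns them the corresponding initial segment of $J_n$ order-preservingly and returns $\fA_n$'s output.

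\emph{Correctness of $\fA'_n$, and Step 2.} If $\fA'_n$ run on an $n$-vertex graph $G$ with genuine identifiers violates $\fP$ at a node $u$, this violation depends only on $W := B_G(u, r+t(n))$ (at most $N \le |J_n|$ vertices) and on the order of identifiers in $W$. Pick any order-preserving injection $\Psi : W \to J_n$; by order-invariance of $\fA_n$ on $J_n$, for each $v \in B_G(u,r)$ the output of $\fA'_n$ at $v$ equals $\fA_n$ applied to $B_G(v,t(n))$ with identifiers $\Psi|_{B_G(v,t(n))}$. Form $G'$ by taking $W$ with identifiers $\Psi$ and padding with isolated vertices carrying arbitrary distinct identifiers from $[n^C]$ up to $n$ vertices (the class is closed under added isolated vertices, and $N = o(n)$); then $\fA_n$ on $G'$ reproduces the outputs of $\fA'_n$ on $G$ throughout $B_G(u,r)$, so, $\Pi$ being local, $\fA_n$ violates $\fP$ at $u$ on $G'$, contradicting correctness of $\fA$. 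Now fix a constant $n_0$ large enough that Step 1 goes through at size $n_0$ and that $1 + \Delta + \dots + \Delta^{t(n_0)+r} \le n_0$ (possible since $t(n) = o(\log n)$, cf.\ \cref{eq:speedup}), and set $\fA'' := \fA'_{n_0}$ for every input size (the finitely many sizes below $n_0$ are absorbed into $\fA''$ directly). Then $\fA''$ is order-invariant, hence deterministic, with round complexity $t(n_0) = O(1)$; a violation of $\fP$ at a node $u$ on any $n$-vertex $G$ depends only on $B_G(u, t(n_0)+r)$, which has at most $n_0$ vertices, and on the order there, so restricting to this neighbourhood and padding with isolated vertices to $n_0$ vertices contradicts the correctness of $\fA'_{n_0}$ on size-$n_0$ instances. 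The same proof works for any subclass closed under subgraphs and added isolated vertices.

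The main obstacle is the bookkeeping in Step 1: one must set up the Ramsey colouring so that it simultaneously captures $\fA_n$'s behaviour on every rooted neighbourhood shape of up to $N$ vertices under every order pattern of the identifiers inside it, verify that the resulting order-invariant algorithm is genuinely well-defined --- this is where the ``occupies the lowest positions'' convention matters and why one takes a monochromatic set of size $2N$ rather than $N$ --- and, crucially, confirm that the colour count $c$ still leaves $R_N(c;2N)$ a tower of height $o(\log^* n)$ and therefore below $n^C$. This last point is exactly where the hypothesis $t(n) = o(\log\log^* n)$ enters, via $N = \Delta^{O(t(n))} = o(\log^* n)$; for subclasses of restricted growth, where a ball of radius $\rho$ has only $\poly(\rho)$ vertices, the identical argument runs under the weaker hypothesis $t(n) = o(\log^* n)$, matching the footnote.
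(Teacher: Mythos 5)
Your proposal is correct and follows essentially the same route as the paper's proof: a hypergraph-Ramsey argument to extract a monochromatic identifier set on which $\fA_n$ becomes order-invariant (your ``$2N$-set plus lowest-positions'' convention is just a different bookkeeping of the paper's ``record all assignments in the colour'' convention), followed by the lie-about-$n$ speedup of \cref{thm:deterministic_speedup}, which applies to order-invariant algorithms without any need to compute fake identifiers. The quantitative accounting (volume $o(\log^* n)$, colour count $2^{2^{O((\log^* n)^2)}}$, Ramsey tower of height $o(\log^* n)$ fitting inside $[n^C]$) matches the paper's.
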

\begin{proof}[Proof Sketch]
    %We only provide a proof sketch. 
    
    We will sketch how any deterministic local algorithm $\fA$ with round complexity $t(n) = o(\log \log^* n)$ can be turned into an order-invariant algorithm $\fA'$, using the hypergraph Ramsey theorem\footnote{Recall that in hypergraph Ramsey theorem we have a big set $B$, we color its small subsets $S$ and we want to find a medium-sized set $M$ such that all its small subsets $S \subseteq M$ have the same color. }. Fix an algorithm $\fA$ and consider the set $B = [n^{O(1)}]$ of identifiers that $\fA$ expects on input. Next, consider any subset $S \subseteq B$ of size $(2\Delta)^{t(n)} = o(\log^* n)$. Here, we have chosen the function $(2\Delta)^{t(n)}$ because it upper-bounds the maximum number of nodes in any $t(n)$-hop neighborhood that $\fA$ can see as its input. 
    
    Next, let us define the color of $S$. To do so, first consider the set $\fH$ of all possible $t(n)$-hop neighborhoods. Moreover, for each (unlabeled) neighborhood $H \in \fH$, consider all of at most $|S|!$ ways in which the identifiers from $S$ can be assigned to the vertices of $H$. For each labeled neighborhood, we record the output of $\fA$ when it is run on it, and the color of $S$ is defined to be the complete list of all such outputs. 
    
    We note that the number of possible colors of a set $S \subseteq B$ is relatively small, and in particular it can be upper-bounded by $2^{2^{O(\log^* n)^2}}$. 
    Recalling that we use $r$ to denote the checkability radius of $\Pi$, we aim to find a medium-sized set $M \subseteq B$ of size $(2\Delta)^{t(n) + r} = o(\log^* n)$ such that all subsets $S \subseteq M$ have the same color. 
    The known bounds for hypergraph Ramsey numbers \cite[§1, Theorem 2]{graham1991ramsey} allow us to conclude that $B$ is large enough to always contain such a set $M$.% \footnote{Roughly speaking, the size of the set $B$ has to grow as a tower function of the size of $S$. }% the size of the medium-sized set $M$ or the number of colors (i.e., types) turn out not to be that important in our case. What is important is the size of the small set and the necessary size of the big set $|B|$ has to be at least the tower function of the small-set size $|S|$. }

    We notice that the algorithm $\fA$ behaves in the order-invariant way if the input identifiers are coming from the set $M$. This allows us to define an order-invariant $t(n)$-round algorithm $\fA'$ for $\Pi$ as follows. 
    Given an input order on the vertices of $B(u, t(n))$, the algorithm $\fA'$ chooses an arbitrary subset $S \subseteq M$ and maps the identifiers in $S$ to vertices of $B(u, t(n))$ in the unique order-preserving way. Concretely, whenever $v_1 < v_2$ according to the input order, we also have $ID(v_1) < ID(v_2)$. Our algorithm then simply outputs what $\fA$ would output on this input. This finishes the description of $\fA'$. 

    To see that we are on the right track, we first notice that it does not matter which subset $S \subseteq M$ we choose above; since all subsets $S$ have the same color, they all lead to the same output. 

    Next, to see that $\fA'$ is valid, we assume for contradiction that it fails at a vertex $u$ in some graph $G$ equipped with some input order. We consider the ball $B(u, t(n) + r)$ and the restriction of the input order to this ball.
    Consider the following thought experiment: In the ball $B(u, t(n)+r)$, we replace the input order with identifiers from $M$. We do it in the unique way that preserves the original order; that is, if $v_1 < v_2$ in the original order, then $ID(v_1) < ID(v_2)$. We observe that for any vertex $v \in B(u, r)$, the algorithm $\fA'$ run on $v$ with access to the original input order returns the same as $\fA$ run on $v$ with access to the identifiers from $M$. This is because $\fA'$ returns the same answer, whatever identifiers from $M$ are chosen to replace the input order. 
    
    But this means that the original algorithm $\fA$ also fails at the node $u$ for the specific choice of identifiers from $M$ above. After adding isolated nodes to the ball $B(u, t(n))$ equipped with these identifiers, we find a graph on $n$ vertices where $\fA$ fails and we thus reach a contradiction. 

    Finally, let us contemplate the proof of \cref{thm:deterministic_speedup} again. The only reason why we could speedup $o(\log n)$-round deterministic algorithm only to $O(\log^* n)$ instead of $O(1)$ was the necessity to compute a coloring that served as ``fake'' identifiers supplied to the simulation of an algorithm $\fA_{n_0}$. But for order-invariant algorithms, this is not necessary. We can simply give $\fA_{n_0}$ exactly the same order as the order that appears on input; thus order-invariant algorithms can be sped up to complexity $O(1)$ and this finishes the proof. 
\end{proof}

\paragraph{Tower-sized identifiers}
We can wonder what would happen if we wanted to make the above proof of \cref{thm:speedup_logstar} work for \emph{all} deterministic local algorithms, not just those that encounter only $o(\log^* n)$ vertices. 
This would force us to change the small-set size from $o(\log^* n)$ to $n$ in the proof. Since the hypergraph Ramsey numbers grow roughly as a tower function of the small-set size, this means that to make the proof work, we would have to start with unique identifiers from the range which is contains numbers as large as the tower function of $n$. The rest of the proof then goes through and allows us to speed up $o(\log n)$-round algorithms to $O(1)$-round order-invariant algorithms. 

This observation nicely complements \cref{thm:bad_id_into_good_id}: While it is true that exponentially-, doubly-exponentially-, etc. sized identifiers have the same power as polynomially-sized ones, there is a transition somewhere around tower-function-sized identifiers. From then on, deterministic algorithms of $o(\log n)$ complexity with such huge identifiers can be turned into order-invariant ones. In other words, the identifiers are so large that no $o(\log n)$-round algorithm can extract any meaningful information from them other than their relative order. %\footnote{Order-invariant algorithms stop being ``useless'' once we start considering $\Omega(\log n)$ round complexities. Problems like sinkless orientation can be solved deterministically in $O(\log n)$ rounds even in the order-invariant model since we can orient edges towards leaves. In general, at $O(\log n)$ complexity, deterministic algorithms stop being extremely reliant on identifiers for symmetry breaking since they start seeing irregularities in the input graph that can sometimes break the symmetry instead. In particular, the local neighborhood can no longer look like an expanding regular tree without leaves anymore. }

\subsection{Classification of Local Problems}
\label{sec:3concrete_graphs}

We can now put all the pieces together and prove \cref{thm:classification_basic}. Let's restate it here for convenience. 

\classification*
\begin{proof}
    Let $\Pi$ be any local problem of randomized local complexity $t(n) = o(\log n)$. We can use \cref{thm:deterministic_speedup} to formulate $\Pi$ as an instance of Lovász local lemma. This instance is then solved either with the deterministic $\tilde{O}(\log^4 n)$-round algorithm from  \cref{cor:deterministic_lll}, or the randomized $\tilde{O}(\log^4 \log n)$-round algorithm from \cref{thm:fischer_ghaffari}. 

    Next, let us assume that the randomized round complexity of $\Pi$ is even $o(\log\log n)$. Then, we can use the derandomization of \cref{thm:exponential_derandomization} to conclude that the deterministic round complexity is $o(\log n)$. This allows us to use the speedup theorem of \cref{thm:speedup_logstar} to conclude that the deterministic round complexity of $\Pi$ is $O(\log^* n)$. Moreover, we notice that for any increasing function $t(n) = O(\log^* n)$, we have $t(2^{O(n^2)}) = O(t(n))$; this means that the derandomization from \cref{thm:exponential_derandomization} tells us that randomized and deterministic round complexities of $\Pi$ are the same. 

    Finally, assume that the randomized local complexity of $\Pi$ is even $o(\log\log^* n)$. Then, we can use \cref{thm:speedup_logstar} to find an order-invariant constant-round local algorithm for $\Pi$ via \cref{thm:speedup_logstar}. 
\end{proof}

\tbox{
Local problems with complexities below $o(\log n)$ are of three types: 
\begin{enumerate}
    \item Those that can be solved in constant rounds by order-invariant algorithms, 
    \item those that are roughly as hard as the basic symmetry-breaking problems such as coloring or maximal independent set, 
    \item those that can be viewed as instances of Lovász local lemma. 
\end{enumerate}
}

\paragraph{Improvements to \cref{thm:classification_basic}}
Can we improve \cref{thm:classification_basic}? The conjecture of \citet*{chang_pettie2019time_hierarchy_trees_rand_speedup} (\cref{conj:chang_pettie}) suggests that in the local lemma regime, we may get rid of the polynomial gap between the lower and upper bounds. According to our current knowledge, there could be a local problem with deterministic and randomized round complexities $t_d(n), t_r(n)$ for any $t_d(n) = \Omega(\log n)$, $t_d(n) = \tilde{O}(\log^4 n)$, $t_r(n) = \Omega(\log\log n)$, $t_r(n) = \tilde{O}(\log^4\log n)$, and $t_d(n) = O(t_r(2^n))$, where the last constraint follows from \cref{thm:slowdown}. 

On the other hand, it is known \cite{balliu2018new_classes-loglog*-log*} that the (extremely narrow) regime between $\Omega(\log\log^* n)$ and $O(\log^* n)$ is densely populated by certain (artificial) local problems. %Roughly speaking, one can design a graph $G$ that looks like a path together with a hierarchy of shortcutting edges so that if a vertex of $G$ looks at distance $O(\log\log^* n)$, it sees $O(\log^* n)$ vertices on the original line. This can ultimately lead to the design of a local problem of local complexity $\Theta(\log\log^* n)$ (with respect to the class of all bounded degree graphs, not just the very special graph $G$). 

\subsubsection{Sequential local complexities}
Let us now discuss how sequential local complexities fit into the picture painted by \cref{thm:classification_basic}. 
We will first observe that the first two classes from our classification in \cref{thm:classification_basic}, i.e., $O(1)$ and $O(\log^* n)$ round complexities, are equal to the class of problems solvable with the sequential local complexity is $O(1)$. 

\begin{theorem}
\label{thm:logstar_distributed_equal_O1_sequential}
    On bounded degree graphs, local problems with round complexity $O(\log^* n)$ are exactly those whose sequential local complexity is $O(1)$. 
\end{theorem}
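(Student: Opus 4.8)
The statement is an equivalence, and the plan is to prove the two implications separately. The direction ``sequential $O(1)$ $\Rightarrow$ distributed $O(\log^* n)$'' is immediate: a sequential local algorithm of constant local complexity is simulated in $O(\log^* n)$ rounds on bounded-degree graphs by \cref{cor:mis_coloring} (and the simulation is deterministic when the sequential algorithm is, by \cref{thm:sequential_vs_distributed_coloring}). So the content is the converse, and the guiding idea I would state up front is this: in the distributed world the whole $\log^* n$ is spent computing a distance coloring to use as ``fake identifiers'' (Linial's algorithm, \cref{thm:coloring}), whereas a distance coloring of a \emph{bounded-radius} power graph is nothing but a greedy $(\Delta+1)$-type coloring and therefore has \emph{constant} sequential local complexity --- so sequentially the $\log^* n$ disappears.

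For the converse, let $\fA$ be a local algorithm for the local problem $\Pi$ with round complexity $t(n) = O(\log^* n)$, and let $r$ be the checkability radius of $\Pi$. First I would reduce to the deterministic case: if only a randomized $\fA$ is available then, since $t(n) = O(\log^* n) = o(\log\log n)$, \cref{thm:exponential_derandomization} produces a deterministic algorithm of round complexity $t\!\left(2^{O(n^2)}\right) = O(\log^* n)$, because applying $\log^*$ to an exponential only adds a constant; so henceforth $\fA$ is deterministic with identifiers from a range $[n^C]$, with $C$ a fixed constant. Then, exactly as at the start of the proof of \cref{thm:deterministic_speedup}, fix a constant $n_0$ so large that
\[
\Delta^{\,2(t(n_0)+r)+1} + 1 \;\le\; n_0 \;\le\; n_0^{C};
\]
such an $n_0$ exists because $t(n) \le c\log^* n$ for all large $n$ and $\Delta^{\,2(c\log^* m + r)+1} = m^{o(1)}$. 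In particular every $(t(n_0)+r)$-hop neighborhood has at most $n_0$ vertices, so $\fA_{n_0}$ is a function whose legitimate inputs are $t(n_0)$-hop neighborhoods on at most $n_0$ vertices carrying unique identifiers from $[n_0^C]$.

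Now I would exhibit the sequential local algorithm $\fA'$ as a composition of two stages, combined via \cref{thm:sequential_pipelining}. Stage~1 greedily computes a proper coloring of the power graph $G^{\,2(t(n_0)+r)}$; since this graph has maximum degree less than $\Delta^{\,2(t(n_0)+r)+1}+1 \le n_0$, the greedy procedure uses at most $n_0 \le n_0^C$ colors, it produces a valid coloring regardless of the order in which vertices are visited, and --- crucially --- since the power radius $2(t(n_0)+r)$ is a fixed constant, it is a sequential local algorithm of constant local complexity (it is just a $(\Delta+1)$-style coloring of a bounded-degree graph, cf.\ \cref{cor:mis_coloring}). Stage~2, at each vertex $u$, reads the Stage-1 colors inside $B(u,t(n_0))$, feeds them to $\fA_{n_0}$ as fake identifiers, and outputs $\fA_{n_0}(B(u,t(n_0)))$; its local complexity is $t(n_0) = O(1)$. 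The resulting $\fA'$ therefore has constant local complexity, and its correctness is the fooling argument of \cref{thm:deterministic_speedup}: whether $\fA'$ satisfies $\Pi$'s constraint at a node $u$ depends only on $B(u,t(n_0)+r)$, which has at most $n_0$ vertices, any two of which lie within distance $2(t(n_0)+r)$, so Stage~1 colors them with pairwise distinct labels from $[n_0^C]$; padding this ball with isolated vertices up to exactly $n_0$ vertices (with fresh distinct colors) and running $\fA_{n_0}$ on it reproduces the same outputs near $u$, so a failure of $\fA'$ at $u$ would be a failure of $\fA_{n_0}$ on a legitimate $n_0$-vertex instance, contradicting its correctness.

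The main obstacle is not any single computation but the realization that the fake-identifier coloring is ``sequentially free'', together with the one place where $t(n)=O(\log^* n)$ is genuinely needed: one must check that a constant-radius power-graph coloring really does fit into the range $[n_0^C]$ that $\fA_{n_0}$ expects, which holds precisely because $\Delta^{O(t(n_0))}$ is subpolynomial in $n_0$ when $t(n)=O(\log^* n)$. This is also exactly why the sequential-$O(1)$ class sits between the distributed $O(1)$ and $O(\log^* n)$ classes and no higher: for a genuinely $\Theta(\log n)$ problem such as sinkless orientation no such constant $n_0$ exists, so the construction never gets started, in accordance with the $\omega(\log^* n)$--$o(\log n)$ gap of \cref{thm:classification_basic}.
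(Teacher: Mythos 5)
Your proof is correct and follows essentially the same route as the paper: one direction via \cref{thm:sequential_vs_distributed_coloring}, and the converse by observing that the fake-identifier power-graph coloring in the proof of \cref{thm:deterministic_speedup} has constant sequential local complexity, so the whole speedup construction becomes a constant-complexity sequential algorithm. Your explicit derandomization step via \cref{thm:exponential_derandomization} and the composition via \cref{thm:sequential_pipelining} are details the paper's sketch leaves implicit, but the underlying argument is the same.
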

\begin{proof}[Proof sketch]
On one hand, consider any local problem with constant sequential local complexity. We can use \cref{thm:sequential_vs_distributed_coloring} to simulate it with a distributed local algorithm in $O(\log^* n)$ rounds.     

On the other hand, consider any local problem $\Pi$ with local checkability $r$ solvable in $O(\log^* n)$ round complexity. Recall that the proof of \cref{thm:deterministic_speedup} says that $\Pi$ can in fact be solved with a local algorithm $\fA$ of round complexity $t = O(1)$ on any graph $G$ if we are provided a proper coloring of a suitable power graph $G^{2(t+r)}$.
Recall that $(\Delta+1)$-coloring has sequential local complexity $O(1)$, meaning that we can construct a sequential local algorithm of constant complexity for $\Pi$. % (using \cref{thm:sequential_pipelining} to combine sequential local algorithms and \cref{thm:bad_id_into_good_id} to argue that deterministic distributed local algorithms imply deterministic sequential local ones).  
\end{proof}

Next, we will show how our distributed speedup theorems imply speedups in the sequential world. 
%The relationship of sequential and distributed local complexity in the local lemma regime is less clear. We can at least prove speedup theorems similar to those we have for local algorithms. 
\begin{theorem}
    \label{thm:sequential_speedup}
    Let $\fA$ be a sequential local algorithm solving a local problem $\Pi$ on bounded degree graphs. Assume that either 
    \begin{enumerate}
        \item $\fA$ is deterministic and its sequential local complexity is $o(\log\log n)$,
        \item or $\fA$ is randomized and its sequential local complexity is $o(\log\log\log n)$. 
    \end{enumerate}
    Then, there is a distributed local algorithm $\fA'$ that solves $\Pi$ in $O(\log^* n)$ round complexity. 
\end{theorem}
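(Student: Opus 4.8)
The plan is to reduce both cases to \cref{thm:deterministic_speedup}(1), which upgrades \emph{any} deterministic distributed algorithm of round complexity $o(\log n)$ for a local problem on bounded-degree graphs to one of round complexity $O(\log^* n)$. So in each case it suffices to manufacture a deterministic distributed algorithm for $\Pi$ of round complexity $o(\log n)$; the only real work is checking that the stated hypotheses are exactly what is needed to land in that regime.

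\textbf{Deterministic case, $t(n)=o(\log\log n)$.} I would feed $\fA$ directly into the bounded-degree simulation \cref{thm:sequential_vs_distributed_coloring}, whose deterministic version produces a deterministic distributed algorithm of round complexity $O(\Delta^{O(t(n))}+t(n)\cdot\log^* n)$. Since $\Delta=O(1)$, the first term is $2^{O(t(n))}$, which is $o(\log n)$ precisely because $t(n)=o(\log\log n)$ (as $2^{a(n)}/\log n\to 0$ whenever $a(n)-\log\log n\to-\infty$); and the second term is $o(\log\log n)\cdot\log^* n=o(\log n)$ because $\log^* n$ grows far more slowly than $\log n/\log\log n$. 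Now \cref{thm:deterministic_speedup}(1) applies and yields $O(\log^* n)$. This also explains the shape of the hypothesis: the binding constraint is keeping $\Delta^{O(t(n))}$ below $\log n$, i.e. $t(n)=o(\log\log n)$.

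\textbf{Randomized case, $t(n)=o(\log\log\log n)$.} Applying the randomized version of \cref{thm:sequential_vs_distributed_coloring} (generating random identifiers in the preprocessing step, as usual) gives a \emph{randomized} distributed algorithm of round complexity $O(\Delta^{O(t(n))}+t(n)\cdot\log^* n)=2^{o(\log\log\log n)}+o((\log\log\log n)\cdot\log^* n)$, which is $o(\log\log n)$ by the same iterated-logarithm estimates (using also $\log^* n=o(\log\log\log n)$). Hence $\Pi$ has randomized round complexity $o(\log\log n)$. I cannot finish here with \cref{thm:deterministic_speedup}(2), since that only yields a randomized algorithm of complexity $\tilde{O}(\log^4\log n)$, not $O(\log^* n)$. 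Instead I would invoke \cref{thm:exponential_derandomization}: taking the randomized-complexity function to be nondecreasing (without loss of generality, by the remark following \cref{thm:slowdown}), a randomized complexity of $o(\log\log n)$ yields a deterministic complexity of $o(\log\log(2^{O(n^2)}))=o(\log n)$. So $\Pi$ has deterministic distributed round complexity $o(\log n)$, and \cref{thm:deterministic_speedup}(1) again gives $O(\log^* n)$, completing the proof.

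\textbf{Main obstacle.} There is no deep obstacle; the statement is a corollary of the earlier machinery. Two choices are what make it go through. First, one must use the \emph{bounded-degree} sequential-to-distributed simulation \cref{thm:sequential_vs_distributed_coloring}, whose $\Delta^{O(t)}$ overhead -- rather than the $\polylog n$ overhead of \cref{thm:sequential_vs_distributed_complexity} -- is exactly what makes the tiny thresholds $o(\log\log n)$ and $o(\log\log\log n)$ sufficient. Second, in the randomized case one must route through \cref{thm:exponential_derandomization} rather than through the randomized speedup: the point is that $o(\log\log n)$ \emph{randomized} distributed complexity converts into $o(\log n)$ \emph{deterministic} distributed complexity, which is precisely the regime where the $O(\log^* n)$ deterministic speedup bites. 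Everything else is the routine comparison of towers of logarithms.
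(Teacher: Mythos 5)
Your proof is correct and follows the same route as the paper: simulate the sequential algorithm distributively via \cref{thm:sequential_vs_distributed_coloring}, then in the deterministic case apply \cref{thm:deterministic_speedup} directly, and in the randomized case first derandomize via \cref{thm:exponential_derandomization} to reach the $o(\log n)$ deterministic regime before applying \cref{thm:deterministic_speedup}. Your extra care about monotonicity of the complexity function and the explicit iterated-logarithm estimates only make explicit what the paper's proof sketch leaves implicit.
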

\begin{proof}

    % We go back to \cref{thm:sequential_vs_distributed_coloring} that showed how to simulate sequential local algorithms with constant local complexity with a $O(\log^* n)$-round local algorithm.
    % If we apply the same argument to simulate any $t(n)$-round algorithm $\fA$, we conclude that the round complexity of the simulation is 
    % \begin{align}
    % \label{eq:seq}
    % O(t(n)) \cdot \left( \Delta^{O(t(n))} + \log^* n\right)    
    % \end{align}
    % where the expression in the bracket is for simulating Linial's algorithm on $G^{O(t(n))}$ and the expression before the bracket is necessary because one round in $G^{O(t(n))}$ corresponds to $t(n)$ rounds in $G$. 

    If $\fA$ is a deterministic sequential algorithm of local complexity $t(n) = o(\log\log n)$, we can simulate it with a distributed local algorithm via \cref{thm:sequential_vs_distributed_coloring}. We get a deterministic algorithm with local complexity $\Delta^{O(t(n))} + O(t(n) \log^* n) = \log^{o(1)} n = o(\log n)$. Such an algorithm is then sped up to $O(\log^* n)$ complexity using \cref{thm:deterministic_speedup}. 

    Similarly, if $\fA$ is randomized sequential algorithm of local complexity $o(\log\log\log n)$, \cref{thm:sequential_vs_distributed_coloring} implies that we get a randomized distributed local algorithm with $\log^{o(1)} \log n = o(\log\log n)$ round complexity. Such an algorithm can then be derandomized by \cref{thm:exponential_derandomization} into a deterministic algorithm of local complexity $o(\log n)$ which is in turn sped up to $O(\log^* n)$ using \cref{thm:deterministic_speedup}.     
\end{proof}

Finally, we sketch how one can construct very fast sequential local algorithms for instances of the local lemma. 
\begin{theorem}
    \label{thm:sequential_lll}
    Let $\Pi$ be a $C$-relaxed Lovász local lemma for sufficiently large $C$. Then, its deterministic sequential local complexity is $\tilde{O}(\log^4 \log n)$ and its randomized sequential local complexity is $\tilde{O}(\log^4\log\log n)$\footnote{It is an interesting exercise to open up the proof that the randomized sequential local complexity of local lemma is $\Theta(\poly\log\log\log n)$ and try to think of theorems we have seen so far that do \emph{not} go into that proof.}, on bounded degree graphs. 
\end{theorem}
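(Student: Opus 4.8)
The plan is to prove the two bounds separately, both times exploiting the shattering structure behind Fischer--Ghaffari's algorithm (\cref{thm:fischer_ghaffari}).

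\textbf{Deterministic sequential bound.} I would simply derandomize the randomized \emph{distributed} algorithm of \cref{thm:fischer_ghaffari}. It solves any $C$-relaxed local lemma instance in $\tilde O(\log^4\log n)$ rounds, and on bounded-degree graphs an instance of the local lemma is morally a local problem with checkability radius $O(1)$ (a bad event is checkable from its constant-radius neighbourhood, and a ``solution'' is an assignment to the random variables). The failure indicator $X(u)$ that $\fE_u$ occurs therefore depends only on the random bits in an $O(\log^4\log n)$-radius ball of $u$, and the expected number of occurring bad events is below $1$ by the error guarantee of \cref{thm:fischer_ghaffari}; hence \cref{thm:derandomization_sequential} applies essentially verbatim and outputs a deterministic sequential local algorithm of local complexity $O(\tilde O(\log^4\log n)) = \tilde O(\log^4\log n)$.

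\textbf{Randomized sequential bound.} Here I would rerun the shattering argument ``one level up'': first shatter, then finish \emph{deterministically} on the small residual components, but using the \emph{sequential} post-processing from the first part instead of the deterministic \emph{distributed} algorithm that \cref{thm:fischer_ghaffari} uses internally. Concretely: (1) run the first phase of Fischer--Ghaffari (\cref{thm:fischer_ghaffari_first_phase}), a constant-complexity randomized sequential local algorithm, but with the dangerous threshold lowered from $1/(6\Delta)$ to roughly $1/\Delta^{C'}$ for a large constant $C'$; since $C$ is taken sufficiently large ($C\gg C'$), the analysis of \cref{thm:fischer_ghaffari_first_phase} goes through unchanged and guarantees, with probability $1-1/\poly(n)$, that the residual instance is again a $C''$-relaxed local lemma instance (with $C''=C'-O(1)$ still large enough) whose dependency-graph components have size $N_0=O(\poly(\Delta)\log n)=O(\log n)$. (2) On each residual component, run the deterministic sequential algorithm from the first part, feeding it the \emph{fake} size $N_0$; by the first part its local complexity on size-$N_0$ instances is $\tilde O(\log^4\log N_0)=\tilde O(\log^4\log\log n)$, and it is correct because a residual component padded with isolated vertices to size exactly $N_0$ is a genuine size-$N_0$ $C''$-relaxed instance, so solving the local problem on the padded graph solves it on the component. (If some residual component exceeds $N_0$ — probability $1/\poly(n)$ — declare failure.) (3) Compose the two phases via \cref{thm:sequential_pipelining}; the total local complexity is $O(1)+\tilde O(\log^4\log\log n)=\tilde O(\log^4\log\log n)$, and the only error, $1/\poly(n)$, comes from the shattering step.

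\textbf{Main obstacle.} The delicate point is step (2): one must verify that finishing on the residual components is genuinely a \emph{sequential local} algorithm of complexity $\tilde O(\log^4\log\log n)$, even though the residual components have diameter up to $\Theta(\log n)$. This is exactly where the first part pays off: handing the deterministic sequential algorithm the small fake size $N_0$ makes its locality $\tilde O(\log^4\log N_0)$, far below the component diameter, so each residual vertex only ever consults an $\tilde O(\log^4\log\log n)$-ball — and the per-component processing order can be taken consistent with the global order, so the outputs of earlier vertices it needs already lie in that ball. A secondary, routine point is the threshold re-tuning in \cref{thm:fischer_ghaffari_first_phase}: without it the residual instance only satisfies the tight criterion, for which we have merely the $\tilde O(\log^4 n)$ deterministic bound, degrading the final answer to $\tilde O(\log^4\log n)$. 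It is also worth noting that we can afford to shatter only \emph{once}: a second shattering level would incur error $1/\poly(\log n)$, which is not $1/\poly(n)$, so the residual components must be handled by the zero-error deterministic algorithm.
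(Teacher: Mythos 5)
Your proposal is correct and follows essentially the same route as the paper's proof sketch: derandomize the Fischer--Ghaffari algorithm via \cref{thm:derandomization_sequential} for the deterministic bound, and for the randomized bound re-implement the shattering phase sequentially (with the threshold re-tuned so the residual instance stays polynomially relaxed, exactly the generalization of \cref{thm:fischer_ghaffari_first_phase} the paper invokes) and finish on the $O(\log n)$-size residual components with the deterministic sequential algorithm from the first part. Your extra care about the fake input size, padding with isolated vertices, and composition via \cref{thm:sequential_pipelining} just fills in details the paper's sketch leaves implicit.
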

\begin{proof}[Proof Sketch]
    To prove the first part of the theorem, we consider the randomized distributed local algorithm for local lemma from \cref{thm:fischer_ghaffari}. We turn it into a deterministic sequential local algorithm of the same complexity by the derandomization result of \cref{thm:derandomization_sequential}.  

    To prove the second part of the theorem, we notice that our randomized distributed local algorithm from \cref{thm:fischer_ghaffari} can certainly be implemented in the model of randomized sequential local algorithms. 
    Recall that the algorithm reduces an instance of the local lemma to instances on graphs of size $O(\log n)$, as proven in \cref{thm:fischer_ghaffari_first_phase}. Then, the best deterministic algorithm is applied to those instances to finish the job. While \cref{thm:fischer_ghaffari_first_phase} is stated as reducing from $C$-relaxed local lemma with some large $C$ to local lemma with condition $p \le 1/(3\Delta)$, we can generalize it to show that starting with a very large $C'$, we can reduce an instance of $C'$-relaxed local lemma to $C$-relaxed local lemma for $C$ that is still arbitrarily large.  
    
    This allows us to run the deterministic sequential algorithm with complexity $\tilde{O}(\log^4\log n)$ constructed in the first part of the proof instead of using the deterministic $O(\poly\log n)$-round algorithm from \cref{cor:deterministic_lll}. % as we did in the proof of \cref{thm:fischer_ghaffari}. 
    The final randomized sequential local complexity is thus $\tilde{O}(\log^4\log\log n)$. 
    %We note that two sequential algorithms can be combined into a single one of $\poly\log\log\log n$ local complexity by \cref{thm:sequential_pipelining}. Finally, we note that 
\end{proof}
We note that for problems like sinkless orientation for which an $O(\log\log n)$-round randomized algorithm is known \cite{ghaffari_harris_kuhn2018derandomizing}, the proofs of above \cref{thm:sequential_speedup,thm:sequential_lll} imply tight bounds of $\Theta(\log\log n)$ and $\Theta(\log\log\log n)$ deterministic and randomized sequential local complexity on bounded degree graphs. In the case of sinkless orientation, the same bounds are also known even on unbounded degree graphs \cite{ghaffari_harris_kuhn2018derandomizing}. If the conjecture of \citet*{chang_pettie2019time_hierarchy_trees_rand_speedup} (\cref{conj:chang_pettie}) is true, then there is no gap between \cref{thm:sequential_speedup,thm:sequential_lll}. 

We also note that \cref{thm:sequential_speedup,thm:sequential_lll} show that while one can turn randomized distributed local algorithms into sequential deterministic ones via \cref{thm:derandomization_sequential}, it is \emph{not} possible to turn randomized sequential local algorithms into deterministic sequential local ones, without loss in their local complexity. 

Putting everything together, we get the following refined classification theorem that also includes sequential local complexities. 

\begin{theorem}[Refined classification of local problems]
    \label{thm:classification_extended}
    Let us fix any $\Delta$ and any class of graphs $\fG$ of degree at most $\Delta$ that is closed under taking subgraphs and adding isolated nodes. Let $t(n)$ be a function such that for any node $u$ in some graph in $\fG$ we have $|B(u, t(n))| = o(n)$. 
    
    Then, any local problem with randomized round complexity at most $t(n)$ has one of the following three complexities. 
    \begin{enumerate}
        \item \emph{Order-invariant regime:} The problem has $O(1)$ deterministic (or randomized) round complexity and sequential local complexity.   
        \item \emph{Symmetry-breaking regime:} The deterministic and randomized round complexity of the problem lies between $\Omega(\log\log^* n)$ and $O(\log^* n)$ (both the deterministic and the randomized complexity is the same function). Moreover, the sequential local complexity is $O(1)$. 
        \item \emph{Lovász-local-lemma regime:} The problem has the following complexities: 
        \begin{enumerate}
            \item deterministic round complexity is between $\Omega(\log n)$ and $\tilde{O}(\log^4 n)$,
            \item randomized round complexity and deterministic sequential local complexity is between $\Omega(\log\log n)$ and $\tilde{O}(\log^4\log n)$, 
            \item randomized sequential local complexity is between $\Omega(\log$ $\log\log n)$ and $\tilde{O}(\log^4\log \log n)$. 
        \end{enumerate}
    \end{enumerate}
    Moreover, if any graph from the class satisfies that for any node $u$, we have $|B(u, r)| = 2^{O(r^{0.249})}$, then there are no local problems in the third class of problems. 
\end{theorem}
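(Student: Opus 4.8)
The plan is to treat \cref{thm:classification_basic} as the skeleton, graft the sequential-complexity statements onto each of its three regimes using theorems already proved, and handle the restricted-growth addendum separately as the one step that does more than quote earlier results.

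First I would re-run the proof of \cref{thm:classification_basic} essentially verbatim: given a local problem $\Pi$ of randomized round complexity $t(n)$ with $|B(u,t(n))| = o(n)$, \cref{thm:deterministic_speedup} reformulates $\Pi$ as an instance of polynomially-relaxed Lovász local lemma, which is solved deterministically in $\tilde O(\log^4 n)$ rounds by \cref{cor:deterministic_lll} and randomly in $\tilde O(\log^4\log n)$ rounds by \cref{thm:fischer_ghaffari}; \cref{thm:exponential_derandomization} together with the identity $t(2^{O(n^2)}) = O(t(n))$ for $t = O(\log^* n)$ then forces the trichotomy on the randomized round complexity: it is $o(\log\log^* n)$ (order-invariant regime, where \cref{thm:speedup_logstar} moreover yields an $O(1)$-round order-invariant algorithm), or between $\Omega(\log\log^* n)$ and $O(\log^* n)$ with deterministic complexity equal to randomized, or between $\Omega(\log\log n)$ and $\tilde O(\log^4\log n)$ with deterministic complexity between $\Omega(\log n)$ and $\tilde O(\log^4 n)$. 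This reproduces every distributed bound in the statement.

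Next I would attach the sequential complexities one regime at a time. In the order-invariant regime the $O(1)$-round deterministic algorithm is converted to a sequential local algorithm of complexity $O(1)$ by the corollary of \cref{thm:bad_id_into_good_id}. In the symmetry-breaking regime, \cref{thm:logstar_distributed_equal_O1_sequential} states exactly that $O(\log^* n)$-round problems are those with sequential local complexity $O(1)$. In the Lovász-local-lemma regime the upper bounds come from \cref{thm:derandomization_sequential} (deterministic sequential complexity $\le$ randomized round complexity $= \tilde O(\log^4\log n)$) and \cref{thm:sequential_lll} (randomized sequential complexity $\tilde O(\log^4\log\log n)$), while the lower bounds are the contrapositives of \cref{thm:sequential_speedup}: a deterministic sequential algorithm of complexity $o(\log\log n)$, or a randomized one of complexity $o(\log\log\log n)$, would (through \cref{thm:sequential_vs_distributed_coloring}, \cref{thm:exponential_derandomization}, and \cref{thm:deterministic_speedup}) produce a deterministic distributed algorithm of round complexity $O(\log^* n) = o(\log n)$, contradicting the $\Omega(\log n)$ deterministic distributed lower bound that defines this regime.

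The restricted-growth addendum — which I expect to be the main obstacle, being the only genuinely new step — runs as follows. Assume every graph in the class has $|B(u,r)| = 2^{O(r^{0.249})}$, and let $\Pi$ be as above. By the skeleton, $\Pi$ is solved by a \emph{deterministic} distributed algorithm of round complexity $t'(n) = \tilde O(\log^4 n) = \log^{4+o(1)} n$ (the algorithm of \cref{cor:deterministic_lll} applied to the reformulation of $\Pi$ as a relaxed local lemma). The key computation is that on this class $|B(u, t'(n))| = 2^{O(t'(n)^{0.249})} = 2^{O(\log^{(4+o(1))\cdot 0.249} n)} = 2^{O(\log^{0.996+o(1)} n)} = n^{o(1)} = o(n)$, where the inequality $4 \cdot 0.249 < 1$ is precisely what keeps the ball sublinear. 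Hence \cref{thm:deterministic_speedup} in its restricted-growth form (whose hypothesis is exactly $|B(u,t'(n))| = o(n)$) applies a second time, now to $t'(n)$, and speeds $\Pi$ up to deterministic round complexity $O(\log^* n)$; so $\Pi$ cannot lie in the Lovász-local-lemma regime and must fall into one of the first two. The subtlety to get right is that the speedup theorem is invoked iteratively — the growth bound has to be checked against the already-sped-up $\tilde O(\log^4 n)$-round algorithm rather than against the original $t(n)$-round one — and that the exponent arithmetic leaves a genuine, if thin, margin below the $\log$ threshold, which is why $0.249$ rather than $0.25$ appears.
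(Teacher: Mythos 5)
Your proposal is correct and follows essentially the same route as the paper: it assembles \cref{thm:classification_basic} with \cref{thm:logstar_distributed_equal_O1_sequential,thm:derandomization_sequential,thm:sequential_lll,thm:sequential_speedup} for the sequential bounds, and proves the restricted-growth addendum by re-applying \cref{thm:deterministic_speedup} to the $\tilde{O}(\log^4 n)$-round deterministic algorithm, using $2^{(\tilde{O}(\log^4 n))^{0.249}} = o(n)$ exactly as the paper does. Your explicit remark that the growth condition must be checked against the already-sped-up algorithm (and that $4\cdot 0.249 < 1$ is the point of the exponent $0.249$) is precisely the intended argument, just spelled out in more detail than the paper's one-line justification.
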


The last part of the theorem follows by applying \cref{thm:deterministic_speedup}: We know that every local problem in the local-lemma regime can be solved with a deterministic local algorithm of round complexity $\tilde{O}(\log^4 n)$; since $2^{\left( \tilde{O}(\log^4 n) \right)^{0.249}} = o(n)$, the speedup to $O(\log^* n)$ complexity can be applied. If the conjecture of \citet*{chang_pettie2019time_hierarchy_trees_rand_speedup} (\cref{conj:chang_pettie}) is true, we can get rid of the polynomial slack in the local-lemma regime complexities and the whole local-lemma class of problems is then empty for the class of subexponential growth graphs (see \cref{subsec:classification}). 

\begin{problem}
    Can local problem solvable with deterministic sequential local complexity $o(\log n)$ be sped up to $\poly\log\log n$ deterministic sequential complexity (i.e., to the local-lemma regime)? What about problems with randomized sequential local complexity $o(\log n)$? 
\end{problem}

\subsubsection{Classification of Local Problems for Concrete Graph Classes}
\label{subsec:classification}

This section compiles results from a long line of work trying to extend the classification of local problems on bounded-degree graphs, i.e., \cref{thm:classification_basic}, to more specific graph classes. For some simple graph classes, we now have an almost complete understanding of the local complexity classes that go even beyond the $\log n$ local complexity threshold. 

\paragraph{Locally checkable labeling problems}
The theorems below are proven for the so-called \emph{locally checkable labeling problems}. % for graph classes like paths, trees, or grids. 
These are local problems that additionally allow one of the finitely many input colors on each node of the input graph. For example, a list coloring is an example of a locally checkable labeling problem. 

\begin{definition}
    \label{def:lcl}
    A \emph{locally checkable labeling problem} (LCL) $\Pi$ is formally a quintuple $(S_{in}, S_{out}, r, \Delta, \fP)$. 
    Here, $S_{in}$ is the finite set of \emph{input labels}, $S_{out}$ is the finite set of \emph{output labels}, and $\fP$ is the set of allowed neighborhoods: Each neighborhood has maximum degree $\Delta$, radius $r$, and each node is labeled with a label from $S_{in}$ and $S_{out}$. 
    Given an input graph of degree at most $\Delta$ and any input coloring of its nodes with $S_{in}$, the task is to find an output coloring by colors from $S_{out}$ so that the $r$-hop neighborhood of each node is in $\fP$.  
\end{definition}

We did not have to distinguish between our definition of a local problem from \cref{def:local_problem} and the locally checkable labelings since, in the general graph classes like the class of all graphs or all bounded-degree graphs, it is straightforward to encode input labels as a part of the input graph. We thus chose the simpler definition that also allowed us to talk about local problems even outside of the setup of bounded-degree graphs. 
In the following theorems, especially for very restrictive classes like paths or grids, the difference between the two definitions matters, since the possibility of allowing inputs may substantially enlarge the number of possible local problems. 

In the following theorems, we will always implicitly assume that they are for \emph{solvable} locally checkable labelings, i.e., for problems where the solution is always guaranteed to exist. Otherwise, the problem does not have a well-defined local complexity class.  

\paragraph{Computational aspect}
We note that an additional dimension for all classification results below is the computational complexity of the classification problem, where the input is a description of locally checkable labeling, and the output is which class it belongs to. On the one hand, it is known that the classification problem is decidable on paths, albeit PSPACE-hard \cite{balliu2019LCLs_on_paths_decidable}. On the other hand, the classification problem is undecidable even for grids \cite{brandt_grids}. The decidability on trees is open; see \cref{prob:tree_decide}. We will not discuss this dimension in the following theorem statements for brevity. 

\paragraph{Additional remarks}
We list the concrete classification theorems below. Unless stated otherwise, the deterministic and the randomized round complexity of the problem is always the same. We are always citing papers proving results specific to the given graph class, for example, general speedup and slowdown theorems of \citet*{chang_pettie2019time_hierarchy_trees_rand_speedup,chang_kopelowitz_pettie2019exp_separation} are as a rule of thumb always an important part of the proof. Also, not all the papers we cite prove a part of the given theorem. Some of them only provide building blocks, analyze the computational complexity of the classification, etc. 

\paragraph{List of known classifications}
Let us now list known classification theorems on concrete graph classes. 

\begin{theorem}[Classification of local problems on paths \cite{brandt_grids,balliu2019LCLs_on_paths_decidable}]
\label{thm:classification_paths} 
Any solvable locally checkable labeling problem on oriented or unoriented paths with inputs has one of the following round complexities:
\begin{enumerate}
    \item $O(1)$,
    \item $\Theta(\log^* n)$,
    \item $\Theta(n)$. 
\end{enumerate}
\end{theorem}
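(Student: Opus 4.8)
The plan is to combine the general speedup machinery with a path-specific ``solution automaton'' analysis. First, record the trivial upper bound: an $n$-vertex path has diameter $n-1$, so any \emph{solvable} locally checkable labeling can be solved in $O(n)$ rounds — each node gathers the entire labeled path and outputs its coordinate in a fixed canonical solution (e.g. the lexicographically smallest one, which exists by solvability), and all nodes agree. Thus every solvable LCL on paths has complexity between $O(1)$ and $O(n)$. By \cref{thm:classification_basic} (applied with $\Delta=2$, as paths form a subclass of bounded-degree graphs closed under subgraphs and adding isolated vertices), every LCL of complexity $o(\log n)$ is either $O(1)$ or lies in the symmetry-breaking regime $[\Omega(\log\log^* n),\, O(\log^* n)]$. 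Hence the three-way classification reduces to two path-specific collapses: (i) any LCL on paths solvable in $o(\log^* n)$ rounds is in fact solvable in $O(1)$ rounds; and (ii) any LCL on paths solvable in $o(n)$ rounds is solvable in $O(\log^* n)$ rounds. Indeed, (ii) shows that no solvable LCL can have a complexity $t(n)$ that is simultaneously $\omega(\log^* n)$ and $o(n)$, so a solvable LCL that is not $O(\log^* n)$ must be $\Theta(n)$; and (i) collapses the middle regime to exactly $\Theta(\log^* n)$: anything not $O(1)$ is $\Omega(\log^* n)$, and anything $O(\log^* n)$ but not $O(1)$ is $\Theta(\log^* n)$.

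The heart of (ii) is a pumping argument via an automaton $M_\Pi$. Let $\ell=2r+1$, and let the states of $M_\Pi$ be the valid fully-$(S_{in},S_{out})$-labeled path segments of length $\ell$; place a directed edge $s\to s'$ whenever $s$ and $s'$ overlap consistently on $\ell-1$ positions and their union (a segment of length $\ell+1$) is valid. A valid labeling of an $n$-vertex path corresponds to a directed walk of length $n-\ell$ in $M_\Pi$ between a legal ``left boundary'' state and a legal ``right boundary'' state; since $\Pi$ must be solvable for \emph{every} input labeling of \emph{every} length, one actually runs this analysis on the sub-automaton compatible with each fixed input word — this is where the ``with inputs'' hypothesis must be handled with care. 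The dichotomy is: either there is a strongly connected component $\mathcal C$ of $M_\Pi$ that is (a) reachable from every legal left boundary, (b) co-reachable to every legal right boundary, and (c) \emph{flexible} — the set of lengths of walks inside $\mathcal C$ between any two of its states contains all sufficiently large integers (equivalently, $\gcd$ of its cycle lengths is $1$) — or no such $\mathcal C$ exists. In the flexible case, $\Pi\in O(\log^* n)$: run the $O(\log^* n)$-round color reduction of \cref{thm:coloring} to obtain a proper distance-$O(1)$ coloring, use it (e.g. via an MIS in a bounded power graph, computable in $O(\log^* n)$ rounds) to select ``anchor'' nodes whose consecutive gaps lie in a constant window $[c_1,c_2]$, label each anchor and its ends by fixed valid segments drawn from $\mathcal C$ (routing the two boundary stretches into/out of $\mathcal C$), and fill each interior gap by a walk of the appropriate length inside $\mathcal C$, which exists by flexibility once $c_1$ exceeds the relevant threshold. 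In the non-flexible case, $\Pi$ requires $\Omega(n)$ by a fooling argument: a $t(n)$-round algorithm with $t(n)<n/3$ commits its output near a central node $u$ using only $B(u,t(n))$, which reaches neither end of the path; failure of a flexible SCC means this committed local pattern cannot be extended to a global solution for every continuation of the path (a parity-style obstruction, exactly as for $2$-coloring), a contradiction. Since in (ii) we assume complexity $o(n)$, the non-flexible case cannot occur, so $\Pi\in O(\log^* n)$.

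For (i), I would show that a $t(n)=o(\log^* n)$-round algorithm on a path is essentially order-invariant, hence — by the final paragraph of the proof of \cref{thm:speedup_logstar} — can be sped up to $O(1)$. Concretely, recall from the proof of \cref{thm:deterministic_speedup} (and \cref{thm:logstar_distributed_equal_O1_sequential}) that an $O(\log^* n)$-round LCL on a path is solved by a \emph{constant}-round rule $\mathcal B$ run on a path that has first been properly colored with a constant number $k$ of colors; the minimal such $k$ is a fixed constant of $\Pi$. If $k=1$ suffices (no coloring needed), then $\mathcal B$ already works without any symmetry breaking and $\Pi\in O(1)$. If every correct $\mathcal B$ genuinely needs $k\ge 2$ colors on some window, then $\Pi$ embeds a $\Delta+1$-coloring-type symmetry-breaking task, and a round-elimination argument exactly in the style of \cref{lem:coloring_elimination} / \cref{thm:coloring_lower_bound} yields an $\Omega(\log^* n)$ lower bound — contradicting the hypothesis $t(n)=o(\log^* n)$. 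Either way the hypothesis forces $O(1)$.

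\textbf{Main obstacle.} The delicate part is the automaton analysis in (ii), and in particular making the dichotomy airtight in the presence of input labels: one must carry out the SCC/flexibility analysis uniformly over all input words and verify that ``solvable for every $n$ and every input'' corresponds \emph{exactly} to existence of the good flexible component, together with the fooling argument in the rigid case, which must rule out every way a non-flexible automaton can fail to admit a walk of length $n-\ell$, not merely a single parity obstruction. The bookkeeping linking the $O(\log^* n)$-round coloring to constant-round postprocessing — boundary effects at the two ends, and, for \emph{unoriented} paths, the extra step of breaking the orientation symmetry — also needs care, though it is routine given \cref{thm:coloring} and \cref{thm:coloring_lower_bound}.
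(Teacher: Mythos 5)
The survey itself does not prove \cref{thm:classification_paths}; it quotes it from the cited literature, and your plan follows essentially the same route as those papers (a solution automaton with a pumping/flexibility dichotomy, combined with the speedup machinery of \cref{chap:2_below_logn}). Your $O(n)$ upper bound and the reduction to the two collapses (i) and (ii) are fine. For (i), the first route you mention is the right one and is sound specifically on paths: a radius-$o(\log^* n)$ ball on a path contains only $o(\log^* n)$ vertices, so the Ramsey argument of \cref{thm:speedup_logstar} applies with threshold $o(\log^* n)$ in place of $o(\log\log^* n)$, yielding an order-invariant algorithm and then an $O(1)$-round one. Your alternative argument for (i) (``if the constant-round rule needs $k\ge 2$ colors then $\Pi$ embeds a coloring-type task, so round-eliminate as in \cref{lem:coloring_elimination}'') is not a proof as stated: round elimination has to be carried out for $\Pi$ itself, and ``embeds a symmetry-breaking task'' is exactly the claim that would need proving; stick with the order-invariance route.

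The genuine gap is in (ii), precisely where you flag it. With input labels, your dichotomy is not yet a well-formed statement: a single flexible strongly connected component of $M_\Pi$, reachable from all legal left boundaries and co-reachable to all legal right boundaries, is the correct criterion only in the input-free case. With inputs, which transitions of $M_\Pi$ are available depends on the input word, so the anchor-filling algorithm cannot place ``fixed valid segments drawn from $\mathcal{C}$'' -- the admissible segments at an anchor depend on the input sitting there -- and flexibility must instead be quantified uniformly over all input windows (e.g., a finite set of output ``types'' such that every sufficiently long input segment admits a valid completion between any prescribed pair of types at its ends). Correspondingly, the rigid-case $\Omega(n)$ lower bound must show that \emph{every} failure of this uniform flexibility can be exploited by gluing two runs of an $o(n)$-round algorithm on suitably chosen inputs, not merely a parity-style obstruction; in your write-up both halves of the dichotomy are asserted rather than proven, and making them airtight with inputs is the actual content of the cited work (and the reason deciding the class is PSPACE-hard there). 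Two further points to close: the theorem, per the survey's convention, also covers randomized complexity, and your ``$o(n)$ rounds implies flexibility'' step must be argued for randomized algorithms directly (e.g., by conditioning on a random string for which the algorithm succeeds on the finitely many instances used in the gluing), since \cref{thm:exponential_derandomization} gives nothing useful at complexity $o(n)$; and the boundary/orientation bookkeeping for unoriented paths is routine but does need to appear.
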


\begin{restatable}[Classification of local problems on grids \cite{naorstockmeyer,brandt_grids,chang_kopelowitz_pettie2019exp_separation,grunau_rozhon_brandt2022speedups_below_log_star}]{theorem}{gridsclassify}
\label{thm:classification_grids} 
Any solvable locally checkable labeling problem on $d$-dimensional oriented grids\footnote{The edges of the input grid are consistently oriented. } has one of the following local complexities:
\begin{enumerate}
    \item $O(1)$,
    \item $\Theta(\log^* n)$,
    \item $\Theta(n^{1/d})$. 
\end{enumerate}
\end{restatable}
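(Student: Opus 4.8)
The plan is to split the statement into an \emph{upper-bound part} — every solvable LCL on a $d$-dimensional oriented grid can be solved in $O(n^{1/d})$ rounds — and a \emph{gap part} — no solvable LCL has local complexity strictly between $\omega(1)$ and $o(\log^* n)$, nor strictly between $\omega(\log^* n)$ and $o(n^{1/d})$ — and then to stitch the two together using the general classification machinery.

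First I would dispose of the universal upper bound. The oriented $d$-dimensional grid on $n$ vertices has diameter $\Theta(n^{1/d})$, so in $O(n^{1/d})$ rounds every node learns the entire input graph together with all identifiers and input labels. Since the LCL is assumed solvable, every node can then output its coordinate of a fixed canonical solution, say the lexicographically smallest labeling by $S_{out}$ that satisfies $\fP$ everywhere; all nodes agree because they evaluate the same function on the same data. This already gives the $O(n^{1/d})$ bound, deterministically, and in particular shows nothing can be harder than $\Theta(n^{1/d})$.

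Next I would handle the range $t(n) = o(\log n)$. Here I appeal to \cref{thm:classification_basic} (and \cref{thm:speedup_logstar}): the class of $d$-dimensional grids has polynomial growth $|B(u,r)| = O(r^d) = 2^{O(r^{0.249})}$, so by \cref{thm:deterministic_speedup} the Lovász-local-lemma regime is vacuous on grids, and a solvable LCL of randomized complexity $o(\log n)$ has complexity either $O(1)$ (in fact order-invariant) or $\Theta(\log^* n)$, with deterministic and randomized complexity equal. A technical point to watch: the general speedup theorems are stated for graph classes closed under taking subgraphs and adding isolated vertices, which the exact class of grids is not; I would work instead with induced subgraphs of grids (``partial grids''), observing that every relevant ball in a grid embeds into a full grid of the appropriate size, so the grid complexity is sandwiched correctly — or equivalently pad to a torus of the right dimensions.

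The heart of the proof is the gap between $\omega(\log^* n)$ and $o(n^{1/d})$, i.e.\ a \emph{grid speedup theorem}: any LCL solvable in $T(n) = o(n^{1/d})$ rounds on $d$-dimensional oriented grids is solvable in $O(\log^* n)$ rounds. Given an algorithm $\fA$ with $T(n) = o(n^{1/d})$, pick a constant $n_0$ so large that $T(n_0) \le \eps\, n_0^{1/d}$ for a tiny $\eps = \eps(\Delta, r, C)$; tile (a shifted copy of) the input grid into axis-aligned bricks of side $n_0^{1/d}$, and inside each brick simulate $\fA_{n_0}$ on a torus of exactly $n_0$ vertices obtained by identifying opposite faces, feeding $\fA_{n_0}$ ``fake identifiers'' taken from an $O(\log^* n)$-round distance coloring of $G^{2(T(n_0)+r)}$ (Linial's algorithm, \cref{thm:coloring}) so that they are unique in every relevant ball. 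Since a brick's side far exceeds $2(T(n_0)+r)$, the output and the constraint check at any node away from brick boundaries depend only on data inside its own brick, so this part of the labeling is valid; the $O(T(n_0)) = O(1)$-width seams between adjacent bricks are then repaired either by a union-over-shifts argument or by a constant-round stitching step that exploits the translation symmetry of the grid. Combining this with the previous paragraph forces every solvable LCL into exactly one of $O(1)$, $\Theta(\log^* n)$, $\Theta(n^{1/d})$.

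I expect the main obstacle to be precisely the seam-repair step of the grid speedup theorem: making the per-brick simulations globally consistent so that \emph{every} $r$-hop neighborhood — including those straddling several bricks and corners — lies in $\fP$. This is where the rigid structure of oriented grids (translation-invariance, the clean restriction of torus solutions, gluing of partial solutions along flat interfaces) is genuinely used, and it is what separates grids from general bounded-degree graphs; I would follow \cite{brandt_grids,chang_kopelowitz_pettie2019exp_separation} for this argument and \cite{grunau_rozhon_brandt2022speedups_below_log_star,naorstockmeyer} for the refinement that the $o(\log^* n)$ class collapses all the way to $O(1)$.
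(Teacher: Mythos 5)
First, note that the paper itself does not prove \cref{thm:classification_grids}; it only states it as a compilation of results from \cite{naorstockmeyer,brandt_grids,chang_kopelowitz_pettie2019exp_separation,grunau_rozhon_brandt2022speedups_below_log_star}. Your overall skeleton -- a diameter-based $O(n^{1/d})$ upper bound, the collapse of everything below $o(\log n)$ to $\{O(1),\Theta(\log^* n)\}$ via polynomial growth and \cref{thm:deterministic_speedup}/\cref{thm:exponential_derandomization} (with the $o(\log^* n)\to O(1)$ refinement taken from \cite{grunau_rozhon_brandt2022speedups_below_log_star}), and a grid-specific speedup closing the gap between $\omega(\log^* n)$ and $o(n^{1/d})$ -- is the right decomposition and matches the structure of the cited works.

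The genuine gap is in your proof of the $\omega(\log^* n)$ vs.\ $o(n^{1/d})$ step. You tile the grid into bricks of side $n_0^{1/d}$, run $\fA_{n_0}$ \emph{separately} inside each brick (wrapped into a torus), and then hope to repair the $O(1)$-width seams by ``a union-over-shifts argument or a constant-round stitching step''. This repair step is not a technicality you can defer: the per-brick simulations are completely unrelated runs of $\fA_{n_0}$, so the labels on two sides of a seam need not be jointly extendable to anything in $\fP$, and mending them is in general as hard as solving the LCL from scratch. Indeed, if a generic constant-round (or even $O(\log^* n)$-round) stitching of locally valid brick solutions existed, \emph{every} solvable LCL on oriented grids would have complexity $O(\log^* n)$, contradicting the very existence of the $\Theta(n^{1/d})$ class you are trying to establish. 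The cited proofs avoid tiling altogether: exactly as in the proof of \cref{thm:deterministic_speedup}, one fixes a large constant $n_0$, computes a distance coloring of $G^{2(T(n_0)+r)}$ by Linial's algorithm (\cref{thm:coloring}) in $O(\log^* n)$ rounds, and has \emph{every} node of the big grid simulate $\fA_{n_0}$ on its own ball using these colors as fake identifiers. There are no seams; correctness is argued per node, by observing that any $(T(n_0)+r)$-radius ball witnessing a failure is a grid ball of diameter $\ll n_0^{1/d}$ carrying locally unique identifiers, and hence embeds into a legitimate $n_0$-vertex oriented grid/torus on which $\fA_{n_0}$ would also fail -- a contradiction. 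This embedding step is precisely where the grid structure replaces the ``add isolated vertices'' padding of the general closure-based theorems (your observation that the class of grids is not closed under subgraphs is correct, and this is how the literature fixes it). So replace the brick-and-stitch construction by the per-node ``lie about $n$'' simulation; the remaining parts of your argument then go through.
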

We note that on unoriented grids, the full classification is not known \cite{grunau_rozhon_brandt2022speedups_below_log_star}. 

\begin{theorem}[Classification of local problems on bounded-degree rooted regular trees \cite{balliu2021rooted_trees,balliu_brandt_chang_olivetti_studeny_suomela2022efficient_regular_trees}]
\label{thm:classification_rooted_trees} 
Any solvable locally checkable labeling problem on bounded-degree rooted regular trees\footnote{Edges are oriented such that every vertex has at most one ingoing edge. } has one of the following round complexities:
\begin{enumerate}
    \item $O(1)$,
    \item $\Theta(\log^* n)$,
    \item $\Theta(\log n)$,
    \item $\Theta(n^{1/k})$ for some $k \in \N$. 
\end{enumerate}
\end{theorem}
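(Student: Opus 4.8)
The plan is to prove the theorem by splitting at the threshold $\log n$: below it I would invoke the general bounded-degree classification, and at or above it I would carry out an analysis that genuinely exploits the rooted and regular structure of the trees. Throughout I work with the (finite) family of \emph{subtree types} of the LCL --- a finite abstraction recording, for the root of a rooted subtree, which output labels it can carry in some valid completion, possibly refined by constantly much further information about the subtree. Regularity is what makes this family finite and closed under the natural ``attach a parent with $\delta$ children'' operation, and that closure property is the combinatorial engine of the whole argument.

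\textbf{The sub-$\log n$ regime.} By \cref{thm:classification_basic}, every solvable LCL of randomized round complexity $o(\log n)$ on graphs of degree at most $\Delta$ lands in the order-invariant regime ($O(1)$), the symmetry-breaking regime ($\Theta(\log^* n)$, deterministic $=$ randomized), or the Lovász-local-lemma regime (deterministic $\Omega(\log n)$, randomized $\Omega(\log\log n)$). To get the first two items of the theorem it suffices to show that the third regime is \emph{empty} for rooted $\delta$-regular trees --- any such LCL solvable in $o(\log n)$ randomized rounds is in fact solvable in $O(\log^* n)$ deterministic rounds --- since the $O(1)$ vs.\ $\Theta(\log^* n)$ dichotomy and the equality of deterministic and randomized complexity there are then inherited verbatim from \cref{thm:classification_basic}. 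The mechanism I would use for emptiness is to run round elimination on the subtree-type structure and show it has no nontrivial fixed point between $O(\log^* n)$ and $\Omega(\log n)$: once an $O(\log^* n)$-cost distance coloring is available (\cref{thm:coloring}), a problem that round elimination does not drive to $0$-round solvability within a constant number of steps must already have deterministic complexity $\Omega(\log n)$. This is exactly where rootedness enters --- information flows in one direction, so the relaxation operator underlying round elimination is monotone on subtree types and cannot stabilize at a sublogarithmic-but-superconstant point, unlike on undirected trees where sinkless orientation (\cref{thm:sinkless_orientation_lb}) witnesses a genuine intermediate regime.

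\textbf{The super-$\log n$ regime.} An LCL solvable in $O(\log n)$ but not $o(\log n)$ rounds is item (3) directly. For an LCL not solvable in $O(\log n)$ rounds I would define, from its subtree-type structure, an integer $k \geq 1$ equal to the length of the longest chain of nested \emph{traps} --- configurations from which the set of completable output labels strictly shrinks as one descends the tree; since the type set is finite, such chains are bounded, so $k$ is finite and well-defined --- and prove matching bounds $\Theta(n^{1/k})$. For the upper bound I would give a recursive top-down algorithm running in $k$ stages, each stage committing labels so as to unwind one level of traps; regularity is essential here, because the $\delta$-fold branching supplies, at each stage, enough independent completions to finish that stage in $O(n^{1/k})$ rounds, turning what on a bare path ($\delta = 1$, which yields $\Theta(n)$, i.e.\ $k = 1$) would be an unavoidable linear propagation into a $k$-fold sped-up one. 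For the lower bound I would run an indistinguishability argument generalizing the one behind \cref{thm:classification_paths}: build two rooted $\delta$-regular trees agreeing on the $t$-hop neighborhood of a node but forcing different outputs there, using a chain of $k$ nested traps to stretch the distance needed to decide to $\Omega(n^{1/k})$. That randomization does not help in this regime follows from the same emptiness result (a randomized $o(\log n)$ speedup of a $\Theta(\log n)$ problem is excluded) together with a union-bound derandomization on the $O(\log n)$-radius relevant neighborhoods for the $n^{1/k}$ problems. Finally I would check that every $k \in \N$ is realized, so the four classes are genuinely distinct; I would not need to settle the complexity of \emph{deciding} which class a given LCL belongs to, which is the separate (and on trees open) question flagged before the theorem.

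\textbf{Main obstacle.} The genuinely hard part is the super-$\log n$ analysis: showing that the polynomial complexities are \emph{exactly} the functions $n^{1/k}$ with nothing in between (no $n^{1/2}\log n$, no $n^{2/3}$, no $2^{\sqrt{\log n}}$), which forces the $k$-stage algorithm and the $k$-nested-trap lower bound to match to leading order and requires extracting the correct parameter $k$ via the right partial order on subtree types. A secondary but indispensable obstacle is establishing emptiness of the Lovász-local-lemma regime on rooted regular trees: it does \emph{not} follow from the general bounded-degree picture --- where that regime is nonempty --- and relies on a rooted-tree-specific round-elimination argument showing that the one-directional flow of information precludes a nontrivial fixed point.
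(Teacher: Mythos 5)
First, a point of comparison: the paper does not prove \cref{thm:classification_rooted_trees} at all — it appears in the survey section that ``compiles results from a long line of work,'' stated with citations to \cite{balliu2021rooted_trees,balliu_brandt_chang_olivetti_studeny_suomela2022efficient_regular_trees} and no argument. So your proposal can only be judged against the cited literature, and there it is a reasonable high-level caricature (the ``subtree types'' closed under the attach-a-parent operation are essentially the certificate/automaton machinery of those papers), but it has two concrete gaps.

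The first gap is your claim that the $O(1)$ versus $\Theta(\log^* n)$ dichotomy is ``inherited verbatim from \cref{thm:classification_basic}.'' It is not: that theorem only places the symmetry-breaking regime somewhere between $\Omega(\log\log^* n)$ and $O(\log^* n)$, and the paper explicitly notes (after \cref{thm:classification_extended}) that on general bounded-degree graphs this band is densely populated by LCLs of intermediate complexity. \Cref{thm:speedup_logstar} only accelerates $o(\log\log^* n)$-round algorithms to $O(1)$. To get the clean dichotomy on rooted trees you need a separate, tree-specific speedup showing that $o(\log^* n)$ implies $O(1)$; emptiness of the Lovász-local-lemma regime does nothing for this, and your plan never supplies it. (Your emptiness argument itself is also only a slogan — ``round elimination has no nontrivial fixed point because information flows one way'' is not a proof; the cited works instead extract a finite certificate for $O(\log^* n)$-solvability from any $o(\log n)$-round algorithm by a pumping argument.)

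The second gap is in the super-$\log n$ regime. The notion of ``nested traps'' is never defined, and, more importantly, the $O(n^{1/k})$ upper bound is missing its essential algorithmic ingredient: a decomposition of the rooted tree into $O(k)$ layers, each inducing paths of length $O(n^{1/k})$ plus constant-size pieces (the rooted analogue of rake-and-compress used for \cref{thm:classification_trees}). ``Regularity supplies enough independent completions to finish each stage in $O(n^{1/k})$ rounds'' is not an argument — the $n^{1/k}$ comes from the layered path decomposition, not from $\delta$-fold branching. You also need, and do not state, the speedup closing the gap between $\omega(\log n)$ and the smallest polynomial (any $n^{o(1)}$-round solvable LCL is $O(\log n)$-round solvable), without which complexities such as $2^{\sqrt{\log n}}$ are not excluded by anything you wrote. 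Finally, the lower-bound sketch ``two trees agreeing on a $t$-hop ball but forcing different outputs'' does not directly apply to LCLs, whose solutions are not unique at any node; the actual arguments are pumping/counting arguments over many nodes.
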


\begin{restatable}[Classification of local problems on bounded-degree trees \cite{MillerReif1989rake_and_compress,chang_pettie2019time_hierarchy_trees_rand_speedup,balliu2018new_classes-loglog*-log*,chang2020n1k_speedups,balliu2020almost_global_problems,grunau_rozhon_brandt2022speedups_below_log_star,balliu_brandt_chang_olivetti_studeny_suomela2022efficient_regular_trees,balliu2020classification_binary}]{theorem}{treesclassify}
\label{thm:classification_trees}    
Any solvable locally checkable labeling problem on trees has one of the following local complexities:
\begin{enumerate}
    \item $O(1)$,
    \item $\Theta(\log^* n)$,
    \item $\Theta(\log\log n)$ randomized, $\Theta(\log n)$ deterministic,
    \item $\Theta(\log n)$ (both randomized and deterministic),
    \item $\Theta(n^{1/k})$ for some $k \in \N$. 
\end{enumerate}
\end{restatable}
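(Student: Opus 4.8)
The plan is to prove the classification by a case analysis on the randomized round complexity $t(n)$ of the given solvable LCL $\Pi$ on bounded-degree trees, splitting into the \emph{sublogarithmic} case $t(n)=o(\log n)$ and the \emph{global} case $t(n)=\Omega(\log n)$, and in each case combining the general speedup/slowdown machinery (\cref{thm:deterministic_speedup,thm:exponential_derandomization,thm:slowdown}) with tree-specific structural arguments built on the rake-and-compress decomposition of \cite{MillerReif1989rake_and_compress}. At the level of a proposal I describe the architecture; the full argument assembles results from the cited line of work.

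\textbf{Sublogarithmic case.} If $t(n)=o(\log n)$, then \cref{thm:classification_basic} already puts $\Pi$ into one of its three regimes: (i) $O(1)$ deterministic and randomized; (ii) the symmetry-breaking regime, both complexities in $[\Omega(\log\log^* n),\,O(\log^* n)]$; or (iii) the Lovász-local-lemma regime, randomized complexity in $[\Omega(\log\log n),\,\tilde{O}(\log^4\log n)]$ and deterministic complexity in $[\Omega(\log n),\,\tilde{O}(\log^4 n)]$. Regime (i) is class~1. For regime (ii) I would invoke the tree-specific speedup phenomenon (a strengthening, available on paths and trees, of \cref{thm:speedup_logstar}) showing that any LCL on a tree with complexity $o(\log^* n)$ is in fact $O(1)$; combined with the $O(\log^* n)$ upper bound of \cref{thm:classification_basic} this collapses the regime to exactly $\Theta(\log^* n)$, which is class~2. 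For regime (iii) I would sharpen both complexities on trees: the deterministic $O(\log n)$ bound follows by processing the $O(\log n)$ levels of a rake-and-compress decomposition one level at a time and brute-forcing the constant-size residual pieces via the existential local lemma; the randomized $O(\log\log n)$ bound then follows from the shattering template of \cref{thm:fischer_ghaffari_first_phase} (fix most variables, leaving residual components of diameter $O(\log n)$) followed by this deterministic $O(\log n)$-round tree algorithm inside each component. The matching lower bounds are $\Omega(\log n)$ deterministic (from \cref{thm:classification_basic}, witnessed e.g.\ by sinkless orientation, \cref{thm:sinkless_orientation_lb}) and $\Omega(\log\log n)$ randomized (were it $o(\log\log n)$, \cref{thm:exponential_derandomization} would give deterministic $o(\log n)$, hence $O(\log^* n)$ by \cref{thm:deterministic_speedup}, contradicting regime (ii)). This is class~3.

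\textbf{Global case.} If $t(n)=\Omega(\log n)$ randomized then $\Pi$ cannot be in class~3 (whose randomized complexity is $o(\log n)$), so it remains to show $\Pi$ is in class~4 ($\Theta(\log n)$ deterministic and randomized) or class~5 ($\Theta(n^{1/k})$ for an integer $k\ge 1$, with $k=1$ the truly global problems). Two ingredients are needed, both driven by the rake-and-compress hierarchy. First, a \emph{gap theorem}: no LCL on trees has complexity $f(n)$ with $f(n)=\omega(\log n)$ and simultaneously $f(n)=n^{o(1)}$. The idea is that a hypothetical $f(n)$-round algorithm, run against a rake-and-compress decomposition, can be pipelined across the levels below depth $O(\log n)$; one pumps repeated structure along long compressed paths to show the algorithm either genuinely needs only $O(\log n)$ rounds or must in fact see a polynomial-radius ball. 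Second, the \emph{$n^{1/k}$ rounding}: any LCL whose complexity is $\omega(\log n)$ (hence $n^{\Omega(1)}$ by the gap theorem) admits a combinatorial certificate --- a labeled rake-and-compress ``pumping scheme'' --- determining an integer $k$; from the certificate one builds an explicit $O(n^{1/k})$-round algorithm by sweeping $k$ coarse strata of the hierarchy, each contributing a factor $n^{1/k}$, and one proves the matching $\Omega(n^{1/k})$ lower bound by an indistinguishability argument along long paths and subtrees. It is convenient to carry this out first on rooted regular trees (\cref{thm:classification_rooted_trees}), where the decomposition is cleanest, and then lift to general bounded-degree trees.

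\textbf{Separating classes~3 and~4, and the main obstacle.} Finally one checks that classes~3 and~4 are genuinely distinct among LCLs of deterministic complexity $\Theta(\log n)$: via a local-lemma-versus-not dichotomy, those that (after the rake-and-compress reduction) can be phrased as a polynomially-relaxed local lemma get sped up to randomized $O(\log\log n)$ as in regime (iii) and land in class~3, whereas the remaining ones --- whose essential difficulty is a parity/consistency constraint that must propagate across $\Theta(\log n)$ levels --- provably still require $\Omega(\log n)$ even with randomness, landing in class~4. I expect the main obstacle to be the global case: the $\omega(\log n)$-versus-$n^{o(1)}$ gap theorem and the exact-integer $n^{1/k}$ classification both require the full rake-and-compress machinery together with a delicate notion of a ``certificate for complexity $n^{1/k}$'' and the explicit construction of algorithms matching the lower bounds --- substantially more involved than the essentially black-box applications of the speedup, slowdown, and derandomization theorems that dispatch the sublogarithmic case.
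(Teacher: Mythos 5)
The paper does not actually prove \cref{thm:classification_trees}: in \cref{subsec:classification} it is stated as a compiled survey result, with the remark that the cited papers (Chang--Pettie, Chang, Balliu et al., Grunau--Rozho\v{n}--Brandt, \dots) collectively supply the proof and that the general speedup/slowdown theorems are ``as a rule of thumb always an important part.'' So there is no in-paper argument to compare yours against; what I can assess is whether your roadmap matches how the literature establishes the theorem, and it essentially does. Your sublogarithmic case is the right assembly: \cref{thm:classification_basic} gives the three coarse regimes, the tree-specific collapse of the $o(\log^* n)$ region to $O(1)$ handles class~2, and the tree-specific sharpenings (deterministic $O(\log n)$ via rake-and-compress, randomized $O(\log\log n)$ via shattering as in \cref{thm:fischer_ghaffari_first_phase} followed by the deterministic tree algorithm on $O(\log n)$-size residual components, lower bounds via \cref{thm:exponential_derandomization} and \cref{thm:deterministic_speedup}) give class~3. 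Your global case correctly identifies the two genuinely hard ingredients: the $\omega(\log n)$-versus-$n^{o(1)}$ gap and the exact $\Theta(n^{1/k})$ spectrum, both resting on rake-and-compress pumping arguments, with the rooted-regular-tree case (\cref{thm:classification_rooted_trees}) as a useful warm-up.

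Two caveats. First, be explicit that this is an architecture and not a proof: every step that is specific to trees --- the $o(\log^* n)\Rightarrow O(1)$ speedup, the $n^{o(1)}\Rightarrow O(\log n)$ deterministic gap, the randomized $o(\log n)\Rightarrow O(\log\log n)$ speedup, and the $\Theta(n^{1/k})$ certificate machinery --- is invoked as a black box, and these are precisely the theorems whose proofs constitute the cited papers. Second, your separation of classes~3 and~4 is better phrased not as a ``local-lemma-versus-not dichotomy'' one must additionally verify, but as a direct consequence of the randomized speedup on trees: any LCL with randomized complexity $o(\log n)$ on trees has randomized complexity $O(\log\log n)$, so among problems of deterministic complexity $\Theta(\log n)$ the randomized complexity is forced to be either $\Theta(\log\log n)$ or $\Theta(\log n)$ with nothing in between; you should also say a word about why no randomized complexity falls strictly between classes~4 and~5, which again needs the tree gap theorem in its randomized form rather than \cref{thm:exponential_derandomization} (whose blow-up $t(2^{O(n^2)})$ is useless for $t(n)=n^{o(1)}$).
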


\begin{problem}
\label{prob:tree_decide}
    Given an input locally checkable labeling problem on bounded-degree trees, is it decidable which class in \cref{thm:classification_trees} it belongs to? 
\end{problem}

\paragraph{More Classifications}
It is an intriguing question of how far these classifications can be extended, see e.g. the recent work extending the theory to minor-closed classes \cite{chang2023classification_beyond_trees} or extending the theory to unbounded-degree graphs \cite{lievonen2023classification_high_degree}. 

Another interesting class is the class of subexponential growth graphs, i.e., the class of bounded degree graphs where for every $\eps > 0$ we can find $r$ such that $|B(u, r)| \le (1+\eps)^r$ \footnote{Formally, this is a family of classes, with one class for each possible dependency $r = r(\eps)$. }. 
In this class of graphs, we can apply \cref{thm:deterministic_speedup} to speed up deterministic algorithms of complexity $O(\log n)$ (instead of $o(\log n)$) to $O(\log^* n)$. If the conjecture of \cite{chang_pettie2019time_hierarchy_trees_rand_speedup} (\cref{conj:chang_pettie}) is true, this has an interesting corollary: the local lemma regime from the classification theorem of \cref{thm:classification_basic} would then not be present in this class of graphs and there would only two classes of local problems there (see \cref{thm:classification_extended}). 
This leads to the following special case of \cref{conj:chang_pettie}:
\begin{problem}
    \label{prob:subexponential_lll}
    Is there a $C$ such that all instances of the $C$-relaxed Lovász local lemma can be solved with round complexity $\Theta(\log^* n)$ on any class of graphs of subexponential growth? 
\end{problem}
Some implications of a positive answer to this problem are known \cite{csoka_grabowski_mathe_pikhurko_tyros2022borel_lll,bernshteyn2023borel_vizing}. %Also, note that if the growth of the graph is sufficiently bounded away from exponential, in particular the volume up at radius $r$ is at most $2^{O(r^{0.249})}$, we can use the deterministic local lemma algorithm \cref{cor:deterministic_lll} and the speedup theorem of \cref{thm:deterministic_speedup} to reach the conclusion asked in the problem. 

\section{Applications}
\label{chap:3_concrete_results}

While the previous two sections, \cref{chap:1_local_complexity_fundamentals,chap:2_below_logn}, discussed the general theory of local algorithms, the following section considers various popular models of distributed, parallel, or sublinear algorithms, as well as the field of descriptive combinatorics, and briefly discusses how techniques from local algorithms can help there, discussing a few cherry-picked examples per section. 
These sections are heavily influenced by the author's particular interests; a different author would choose different applications. 

% In the penultimate \cref{sec:4applications_descriptive_combinatorics}, we discuss a recently understood connection between local algorithms and descriptive combinatorics. 
%In the final \cref{sec:4other_things}, we mention a few of many other models closely connected to local algorithms. 

% \chapter{A Few Applications}
% \label{chap:4_applications}

% \subsection{Sample of Applications to Distributed/Parallel/Sublinear Models}
% \label{sec:4applications_parallel}

\subsection{Distributed Computing (CONGEST)}
\label{subsec:congest}

The CONGEST model \cite{peleg2000book} is a model of distributed computing that is extremely tightly connected to local algorithms\footnote{Local algorithms are often referred to as ``distributed algorithms in the LOCAL model of distributed computing''.} Starting from the definition of a local algorithm as a message-passing protocol between nodes in a graph, the CONGEST model adds an additional requirement that each message is supposed to be \emph{small}. In particular, it should fit into $O(\log n)$ bits. For example,  a node can send its unique identifier in one round. %See \cite{peleg2000book} for the classical introductory book on distributed computing. 

Many local algorithms in the literature are in fact stated as CONGEST algorithms since a number of local algorithms are readily implemented in the CONGEST model. However, not all of them. 
%As an example, for local algorithms, it is often useful that each node can collect all the information from its local neighborhood and perform a local computation on it. %This is in general not possible in the CONGEST model unless the neighborhood is just the $1$-hop neighborhood, i.e., the neighbors of the node. 
%However, in the CONGEST model, even just collecting the size of a $2$-hop neighborhood of every node can take $\Omega(n)$ rounds.\todo{cite} %This makes it hard to solve problems on power graphs. 
For example, the problem of counting for each node the number of triangles that contain it is a trivial problem for local algorithms but a very complex problem in the CONGEST model, where it requires polynomially many rounds \cite{chang2019triangle_detection,chang2019triangle_enumeration}. 
%As another example, coloring the graph $G^2$ with $\Delta(G)^2+1$ colors in the CONGEST model is a substantially more complicated problem \cite{halldorsson2020distance_coloring} than with local algorithms and coloring the graph $G^3$ with a reasonable number of colors is probably hopeless in the CONGEST model (given a candidate coloring of $G^3$, we cannot even check quickly whether it is correct). %\footnote{To check whether a coloring of $G^2$ is correct, every node sends its color to its neighbor; we then look for collisions. } 
%Another example of a new hardness stemming from small messages is derandomization. 
The general derandomization technique of \citet*{ghaffari_harris_kuhn2018derandomizing} from \cref{thm:derandomization} is also not directly applicable to the CONGEST model, since computing conditional expectations in a neighborhood of a node requires collecting a lot of information from that neighborhood. However, the general approach is still helpful, and it was used to derandomize concrete problems like maximal independent set \cite{censor2017deterministic_mis_congest} or $\Delta+1$-coloring \cite{bamberger2020efficient}. 

It is an interesting question for which classes of graphs the definitions of local and CONGEST algorithms coincide for local problems. It is known that this is the case for local problems on trees \cite{balliu2021congest_vs_local_for_lcl}. 

% Finally, we note that in the CONGEST model, it makes sense to study ``global'' problems like maximum spanning tree \cite{kutten_peleg1995mst}, shortest path \cite{elkin2019hopsets,Cao2021ImprovedCONGEST,rozhon_grunau_haeupler_zuzic_li2022deterministic_sssp}, and others. Understanding the techniques from the area of local algorithms can still be useful for the design of such algorithms, though the theoretical framework in which we can understand those problems is quite different \cite{ghaffari_haeupler_zuzic2021hop_constrained_routing,haeupler_wajc_zuzic2021universally_optimal_known_topologies,haeupler_racke_ghaffari2022hopconstrained_expanders,haeupler2022cut_matching_hop_expanders}. 

% Recently, there has been a lot of progress in the development of algorithms for these global problems. In particular, for many such problems, we can now achieve a guarantee called (approximate) universal optimality: We can design algorithms that are competitive in their round complexity with any other correct algorithm on \emph{any} particular graph topology. The main technology used to prove such results relies on the powerful technology of so-called hop-constrained expanders
% \cite{ghaffari_haeupler_zuzic2021hop_constrained_routing,haeupler_wajc_zuzic2021universally_optimal_known_topologies,haeupler_racke_ghaffari2022hopconstrained_expanders,haeupler2022cut_matching_hop_expanders}. 

\subsection{Local Computation Algorithms and the Volume Model}
\label{subsec:lca}

The model of local computation algorithms \cite{rubinfeld2011LCA_definition_paper,alon2012LCA_definition_paper}, often denoted as LCA, is a model of sublinear algorithms closely related to the local model. We will first explain its variant, known as the \emph{Volume} model by \citet*{RosenbaumSuomela2020volume_model}. This model is very similar to local algorithms, but we measure the \emph{volume} instead of the \emph{radius}. A volume algorithm run at a node $u$ starts with a set $S_0 = \{u\}$ and in the $i$-th step, the algorithm can pick a node $u_i \in S_i$ and an arbitrary index $1 \le j \le d$ where $d$ is the degree of $u_i$. Then, it learns the identifier of the $j$-th neighbor $v$ of $u_i$. We then set $S_{i+1} = S_i \cup \{v\}$. The volume complexity of the algorithm is the number of queries it makes until it decides to finish and output the label for $u$. 

In particular, we note that for graphs of maximum degree $\Delta$, any $t(n)$-round local algorithm can be turned into a volume algorithm of complexity $(2\Delta)^{t(n)}$ as the volume algorithm at $u$ can simply query all the nodes in the $t(n)$-hop neighborhood of $u$ \cite{parnas_ron2007approximating}. %Often, we can design more clever volume algorithms than that. 

A local computational algorithm may have several definitions. The stateless local computation algorithm is defined as follows: we start with the definition of the volume model and add the additional requirement that the nodes start with unique identifiers from the range $[n]$, even for randomized algorithms. In each query, the algorithm can additionally ask for the node with the identifier $j$ for any $1 \le j \le n$. The node $u_i$ with identifier $j$ is then added to the set of already seen nodes $S_i$. That is, the algorithm can use additional global queries that are, however, ``blind''. %, the algorithm does not know where the newly queried vertex $u_i$ lies with respect to the already seen vertices $S_i$, unless  $u_i$ is already in $S_i$. 

It is unclear whether global queries can help with solutions to local problems. We know that they do not help for very fast randomized local computation algorithms \cite{goos2016non}. 
\begin{problem}
    \label{prob:far_queries}
    Is there a local problem that distinguishes (either randomized or deterministic) Local computation algorithms and Volume models with respect to a natural class of graphs? 
\end{problem}

As an example problem considered in the LCA model, there is a long line of work on local computation algorithms for maximal independent set and related problems \cite{
nguyen_onak2008matching,yoshida2009improved,onak2012near,behnezhad2022time,rubinfeld2011LCA_definition_paper,alon2012LCA_definition_paper,levi2015lca_mis,ghaffari2016MIS,reingold2016new,ghaffari2019sparsifying,ghaffari2022mis_in_lca} culminating with the randomized algorithm of \citet*{ghaffari2022mis_in_lca} with volume complexity $\poly(\Delta \log n)$. 

For constant degree graphs, we can try to prove classification results similar to the classification of local problem complexities from \cref{thm:classification_basic}. We have some understanding of the volume complexities of the three important local complexity classes from \cref{thm:classification_basic}:

% bounded degree graphs \cite{even_medina_ron2014coloring_lca,RosenbaumSuomela2020volume_model,grunau_rozhon_brandt2022speedups_below_log_star,brandt_grunau_rozhon2021LLL_in_LCA} and bounded degree trees\cite{RosenbaumSuomela2020volume_model}. 

\begin{restatable}[\cite{parnas_ron2007approximating,even_medina_ron2014coloring_lca,RosenbaumSuomela2020volume_model,grunau_rozhon_brandt2022speedups_below_log_star,brandt_grunau_rozhon2021LLL_in_LCA}]{theorem}{volumeclassify}
\label{thm:classification_volume}
In the class of bounded-degree graphs, each local problem satisfies the following:
\begin{enumerate}
    \item If its round complexity is $O(1)$ (order-invariant regime), then its randomized/deterministic volume complexity is $O(1)$. 
    \item If its randomized/deterministic round complexity is $\omega(1)$ and $O(\log^* n)$ (symmetry-breaking regime), its randomized/deterministic volume complexity is $\Theta(\log^* n)$.  
    \item If its randomized local complexity is $\omega(\log^* n)$ but $o(\log n)$ (Lovász local lemma regime), its randomized volume complexity is between $\Omega(\sqrt{\log n})$ and $O(\log n)$. % and its deterministic volume complexity in the model with exponentially sized identifiers is $\Theta(n)$. 
\end{enumerate}
\end{restatable}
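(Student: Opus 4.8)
The plan is to split the proof into the three regimes and, in each case, combine the regime characterizations of \cref{thm:classification_basic} and \cref{thm:deterministic_speedup} with known facts about the volume model. Two elementary observations are used throughout, pulling in opposite directions. First, by \cite{parnas_ron2007approximating}, a $t(n)$-round local algorithm becomes a volume algorithm making $(2\Delta)^{t(n)}$ queries, since a node can simply probe its whole $t(n)$-hop ball. Second, conversely, the queries of a $q(n)$-query volume algorithm form a connected set confined to the $q(n)$-hop ball of the queried node, so such an algorithm is simulated by a $q(n)$-round local algorithm; hence for bounded $\Delta$ the volume complexity always lies between the round complexity and its exponential. Item~1 is then immediate: a problem of $O(1)$ round complexity (the order-invariant regime of \cref{thm:classification_basic}) gives, by the first observation with $t=O(1)$ and $\Delta=O(1)$, a volume algorithm of complexity $O(1)$, deterministic and randomized alike.

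For the symmetry-breaking regime the first observation alone is too weak — it yields only $2^{O(\log^* n)}$ queries — so for the $O(\log^* n)$ upper bound I would use the sharper statement extracted in the proof of \cref{thm:deterministic_speedup}: such a problem is solved by an $O(1)$-round algorithm once a proper $O(1)$-coloring of a fixed power $G^{O(1)}$ is available. One computes this coloring with the deterministic $O(\log^* n)$-query LCA for coloring bounded-degree graphs of \cite{even_medina_ron2014coloring_lca} (invoked a constant number of times, once per node of the relevant $O(1)$-hop ball of the queried node, and noting that a single query to $G^{O(1)}$ costs $O(1)$ queries to $G$), and then runs the $O(1)$-round algorithm; the total is $O(\log^* n)$ queries, and the same algorithm handles the randomized case since the two complexities agree in this regime. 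The matching $\Omega(\log^* n)$ lower bound follows from the below-$\log^* n$ speedup theorem for the volume model of \cite{grunau_rozhon_brandt2022speedups_below_log_star}: a local problem of volume complexity $o(\log^* n)$ has volume complexity $O(1)$, hence — by the second observation above — round complexity $O(1)$, contradicting that the problem is genuinely symmetry-breaking (round complexity $\omega(1)$).

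For the Lovász-local-lemma regime, the $O(\log n)$ upper bound comes from recasting the problem as an instance of polynomially relaxed Lovász local lemma (again via the proof of \cref{thm:deterministic_speedup}) and then invoking the randomized volume algorithm for relaxed local lemma of \cite{brandt_grunau_rozhon2021LLL_in_LCA}; morally, one simulates the constant-locality first phase of the Fischer–Ghaffari algorithm from \cref{thm:fischer_ghaffari_first_phase}, using each node's random bits to supply a processing order, and then — if the queried node lies in an unsettled component — explores and brute-force solves that component, which has $O(\log n)$ nodes with high probability. The $\Omega(\sqrt{\log n})$ lower bound is the step I expect to be the main obstacle: the second observation turns the randomized $\Omega(\log\log n)$ LOCAL lower bound for sinkless orientation only into a volume lower bound of $\Omega(\log\log n)$, which is not enough, so one must prove the genuinely volume-theoretic gap statement of \cite{brandt_grunau_rozhon2021LLL_in_LCA,grunau_rozhon_brandt2022speedups_below_log_star} — that no local problem has volume complexity strictly between $\omega(\log^* n)$ and $o(\sqrt{\log n})$ — which amounts to carrying out a round-elimination-style argument inside the volume model, where it is precisely the tradeoff between the depth and the width of an exploration that produces the square root. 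Given such a gap theorem, any problem in the local-lemma regime has round complexity $\omega(\log^* n)$, hence volume complexity $\omega(\log^* n)$ by the second observation, and is therefore forced up to $\Omega(\sqrt{\log n})$.
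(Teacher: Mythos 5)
There is an important mismatch of expectations here: the paper does not prove \cref{thm:classification_volume} at all --- it is stated as an imported result, with everything delegated to the cited works \cite{parnas_ron2007approximating,even_medina_ron2014coloring_lca,RosenbaumSuomela2020volume_model,grunau_rozhon_brandt2022speedups_below_log_star,brandt_grunau_rozhon2021LLL_in_LCA} --- so there is no in-paper argument to compare you against. Judged as a reconstruction, your outline assembles the right ingredients in the right way: the two-way simulation between round and volume complexity, the reuse of the structural content of \cref{thm:deterministic_speedup} (``$O(1)$ rounds once a proper coloring of a constant power of $G$ is available'' for the symmetry-breaking regime, and the LLL reformulation for the third regime), the deterministic $O(\log^* n)$-probe coloring LCA of \cite{even_medina_ron2014coloring_lca} for the upper bound in item~2, the below-$\log^*$ volume speedup of \cite{grunau_rozhon_brandt2022speedups_below_log_star} for the matching lower bound, and the volume algorithm for relaxed LLL of \cite{brandt_grunau_rozhon2021LLL_in_LCA} (whose shattering-plus-brute-force shape you describe correctly at the level of \cref{thm:fischer_ghaffari_first_phase}) for the $O(\log n)$ upper bound in item~3. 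This is consistent with how the cited results are meant to combine, and is in fact more detailed than the paper's own treatment.

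Two caveats. First, the $\Omega(\sqrt{\log n})$ bound is the one step you invoke rather than prove, and you say so honestly; but be aware that your guess at its mechanism --- ``round elimination inside the volume model'' --- is not how the cited gap is obtained. The argument in \cite{brandt_grunau_rozhon2021LLL_in_LCA} is a speedup/fooling argument in the spirit of \cref{thm:deterministic_speedup,thm:slowdown}: one lies to a $q(n)$-probe algorithm about the size of the graph, and the square root emerges from balancing the $1/\poly(n_0)$ failure guarantee against the roughly $2^{\Theta(q^2)}$ many possible views (exploration shapes together with identifier patterns) that a $q$-probe algorithm can encounter, not from an explicit round-elimination fixed point. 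Second, your ``second observation'' (a $q$-probe volume algorithm is a $q$-round local algorithm) is load-bearing in both lower-bound transfers, and it silently crosses a model boundary: in the volume model of \cite{RosenbaumSuomela2020volume_model} deterministic algorithms expect identifiers from $[n]$ while deterministic local algorithms receive identifiers from $[n^{O(1)}]$, and randomized volume algorithms receive no identifiers at all; the transfer therefore needs either the fooling argument of \cref{thm:bad_id_into_good_id} or a statement of the cited gap theorems that already accounts for this. Neither caveat breaks your outline, but both are exactly where the deferred citations are doing real work.
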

%The speedup from $o(\log^* n)$ to $O(1)$ (for exponential or polynomial identifiers) is proven by \citet*{grunau_rozhon_brandt2022speedups_below_log_star} using an adaption of the argument from \cref{thm:speedup_logstar}. 

We note that as far as we know, all local problems in the local-lemma regime have randomized volume complexity of $\Theta(\log n)$. In particular, sinkless orientation has this randomized volume complexity \cite{brandt_grunau_rozhon2021LLL_in_LCA}. 

\begin{problem}[See Conjecture 1.3 in \cite{brandt_grunau_rozhon2021LLL_in_LCA}]
\label{prob:lca_speedup}
    Is it true on bounded degree graphs that local problems from the local-lemma regime have randomized volume complexity $\Theta(\log n)$?  
\end{problem}

We also note that in the model of deterministic volume complexity where the identifiers come from exponential, instead of polynomial range, we can prove that only the volume complexities $O(1), \Theta(\log^* n), \Theta(n)$ are possible \cite{brandt_grunau_rozhon2021LLL_in_LCA}. The same could be the case also for the standard, polynomial-range, identifiers. 
\begin{problem}
    \label{prob:lca_exponential_vs_polynomial_ids}
    Are there any local problems such that their deterministic volume complexity (with polynomial-sized identifiers) on bounded degree graphs is $\omega(\log^* n)$ but $o(n)$?  
\end{problem}
Equivalently, we could have asked whether there is a local problem whose deterministic volume complexity is different for polynomial-sized and exponential-sized identifiers. 

% \begin{proof}{Proof sketch}
%     To prove the speedup from $o(\log^* n)$ to $O(1)$ (for exponential or polynomial identifiers), \citet*{grunau_rozhon_brandt2022speedups_below_log_star} observe that the speedup theorem \cref{thm:speedup_logstar} itself is, in its nature, a volume argument, so one can readily adapt it to speed up from $o(\log^* n)$ volume to constant complexity.     
% \end{proof}

\subsection{PRAM}
\label{subsec:pram}

PRAM is a classical model of parallel algorithms studied extensively in the past 40 years \cite{jaja1992parallel_algorithms_book}. It simplifies the complexity of practical parallel computing by assuming a simple model of a machine with multiple processors sharing a common memory. 
There are two complexity measures: \emph{work}, i.e., the total number of instructions made by all processors together throughout the execution, and \emph{depth}, i.e., the number of rounds necessary to finish the computation in the case that the machine is equipped with as many processors as the algorithm requires. 

Many techniques developed for distributed and local algorithms are useful in the design of parallel algorithms. 
As an example application for a local problem, \citet*{ghaffari_hauepler2021MIS} develop a randomized algorithm for maximal independent set on the so-called EREW variant of the PRAM model with the optimal depth $O(\log n)$ and $O(m \log^2 n)$ work, building on top of the local maximal independent set algorithm of \citet*{ghaffari2016MIS}. 

As another example application for a non-local problem, we note that one can generalize deterministic local algorithms for network decompositions discussed in \cref{sec:1network_decomposition} to get various clustering results for weighted undirected graphs like the following one. 

\begin{theorem}[Deterministic low-diameter clustering \cite{elkin_haeupler_rozhon_grunau2022Clusterings_LSST}]
    \label{thm:det_sparse_neighborhood_cover}
    Let $G$ be a graph and $w$ its nonnegative weights. Then, there is a parallel algorithm with $\tilde{O}(m+n)$ work and $\tilde{O}(1)$ depth such that, given a parameter $R > 0$, it splits vertices of $G$ into clusters such that 
    \begin{enumerate}
        \item Each cluster has weighted diameter $\tilde{O}(R)$,
        \item The total weight of edges that cross between different clusters is at most $\frac{1}{R} \sum_{e \in E(G)} w(e)$. 
    \end{enumerate}
\end{theorem}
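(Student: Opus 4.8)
The plan is to lift the deterministic ball-carving construction of \cref{thm:network_decomposition_deterministic} to weighted graphs and to the PRAM model, and to combine it with the classical region-growing volume argument, which supplies the sparse-cut guarantee. Throughout, ``diameter'' and ``volume'' mean their weighted versions, and we write $\Phi := \sum_{e \in E(G)} w(e)$; the goal is a partition of $V(G)$ into clusters of weighted diameter $\tilde O(R)$ whose crossing edges have total weight at most $\Phi / R$. The first thing I would do is normalize the weights: round each $w(e)$ down to the nearest multiple of $\delta := \eps R / n$, which changes every pairwise distance by at most $R$ (any shortest path has fewer than $n$ edges) --- absorbed into the $\tilde O(R)$ bound --- and only decreases the cut; edges of weight $0$ after rounding are contracted. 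What remains is controlling the aspect ratio for the exploration subroutine: bucket the surviving weights into $O(\log n)$ scale classes $[2^i\delta,2^{i+1}\delta)$ and use that only $O(\log n)$ scales are relevant within any radius-$\tilde O(R)$ ball, so a bounded-distance weighted exploration runs in $\poly\log n$ depth and near-linear work (alternatively, invoke a black-box near-work-efficient low-depth parallel approximate-shortest-path / hopset routine).

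\textbf{The sparse-cut bound.} I would use the standard ball-growing fact: assign every potential cluster a head-start volume of $\Phi/n$; if the boundary weight $\partial_w B(v,r)$ exceeded $\tfrac1R(\mathrm{vol}_w(B(v,r)) + \Phi/n)$ for all $r$ up to $R\ln(2n) = \tilde O(R)$, then the volume-plus-head-start would grow by a factor $e$ every $R$ units of radius and exceed $\Phi$ --- impossible. Hence every center can be grown to weighted radius $\tilde O(R)$ with boundary weight at most $\tfrac1R(\mathrm{vol}_w(\text{inside}) + \Phi/n)$; summing over at most $n$ vertex-disjoint clusters and using that internal weights total at most $\Phi$ gives crossing weight $\le 2\Phi/R$, and rescaling $R$ by the constant hidden in $\tilde O(R)$ yields $\Phi/R$. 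So it suffices to produce such a partition deterministically and in parallel.

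\textbf{Deterministic parallel implementation.} I would run the bit-by-bit cluster-merging scheme of \cref{thm:network_decomposition_deterministic}, replacing vertex counts by weighted volumes throughout. Clusters carry $O(\log n)$-bit identifiers; over $b = O(\log n)$ phases (one per bit), each ``active'' cluster repeatedly tries to absorb the boundary vertices of neighboring ``inactive'' clusters, growing if the absorbed volume is at least a $1/(2b)$-fraction of its current volume and otherwise \emph{shedding} those vertices and the edges to them, which become crossing edges. The analysis adapts: (a) a cluster grows only $\tilde O(1)$ times before its volume --- multiplied by $1+1/(2b)$ each time --- would exceed $\Phi$, so the diameter grows additively by $\tilde O(R)$ per phase, hence $\tilde O(R)$ total; (b) each shedding event cuts weight at most $1/(2b)$ times a volume that strictly shrinks, and a charging argument bounds the total shed weight, over all phases and the $O(\log n)$ recursive colour rounds needed to cluster every vertex, by $\polylog(n)\cdot\Phi/R$, which a single $\polylog$-rescaling of $R$ (as in the step above) brings down to $\Phi/R$; (c) after the $b$ phases the surviving clusters agree on all ID bits and are thus separated, and recursing on the (at-least-half-by-volume) leftover eventually partitions $V(G)$. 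Each growth step is one radius-$\tilde O(R)$ weighted exploration, implementable in $\tilde O(1)$ depth and $\tilde O(m+n)$ total work by the normalization step, with exploration data reused across phases so the work grows by at most a $\polylog n$ factor; this gives $\tilde O(m+n)$ work and $\tilde O(1)$ depth overall. If strong rather than weak diameter is required, finish with the black-box weak-to-strong conversion mentioned earlier in the text.

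\textbf{Main obstacle.} The crux is making the weighted region-growing charging coexist with the bit-by-bit framework: one must simultaneously keep every exploration radius $\tilde O(R)$, make the shed (crossing) weight telescope to $O(\Phi/R)$ rather than $O(\Phi\,\polylog n/R)$ across all $O(\log n)$ phases and colour rounds, and keep the total work near-linear --- three quantities that trade off against one another, the win coming from the fact that shedding is always charged against a strictly shrinking volume, so one carefully chosen $\polylog n$ rescaling of $R$ balances all three. The secondary difficulty is the parallel bounded-weighted-distance exploration primitive itself under arbitrary weights with near-linear work, which is exactly what the scale-bucketing (or hopset) machinery of the first step handles.
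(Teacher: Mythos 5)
The paper does not actually prove \cref{thm:det_sparse_neighborhood_cover}; it is imported from \cite{elkin_haeupler_rozhon_grunau2022Clusterings_LSST} with only the one-sentence pointer that it ``generalizes the deterministic network decomposition algorithms'' of \cref{sec:1network_decomposition}. Your strategy is the natural instantiation of that pointer, but two of its load-bearing steps have genuine gaps. First, the sparse-cut step is wrong as written: when edge weights serve as \emph{lengths}, the (fractional) volume of a ball grows at a rate equal to the \emph{number} of edges crossing the sphere at radius $r$, not their total weight --- an edge of weight $w(e)$ contributes volume $w(e)$ only after the radius has swept across its entire length $w(e)$, i.e.\ at rate $1$. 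So the differential inequality ``$\partial_w B(v,r) > \frac1R(\mathrm{vol}+\Phi/n)$ for all $r$ forces exponential growth'' does not hold with $\partial_w$ meaning boundary \emph{weight}; region growing only bounds the \emph{count} of crossing edges by $\frac1R\sum_e w(e)$. This is not a cosmetic issue: the literal weight version of item 2 is impossible (a single edge of weight $M$ with $R=\sqrt M$ must either sit inside a cluster of diameter $M \gg \tilde{O}(R)$ or be cut, contributing $M > M/R$), so the theorem must be read as bounding the crossing-edge count, and your proof of part 2 needs to be restated accordingly.

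Second, the parallel implementation is not actually secured. Rounding weights to multiples of $\eps R/n$ does not bound the \emph{hop} length of shortest paths inside a radius-$\tilde{O}(R)$ ball --- a path of $\Theta(n)$ edges of weight $\approx \eps R/n$ fits comfortably --- so a weighted exploration still has linear depth, and invoking a ``black-box near-work-efficient low-depth parallel approximate SSSP routine'' is circular: the deterministic parallel SSSP algorithms mentioned immediately after the theorem (\cite{rozhon_grunau_haeupler_zuzic_li2022deterministic_sssp}) are themselves built on this very clustering. Relatedly, your point (b) and (c) adaptation of \cref{thm:network_decomposition_deterministic} is asserted rather than proved: the unweighted analysis identifies one growth step with one hop with one unit of diameter, and in the weighted setting these three quantities decouple, so both the $\tilde{O}(R)$ diameter bound and the bound on the number of propose/grow/shed iterations per phase need a new mechanism. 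You correctly flag this as ``the crux,'' but flagging it is not resolving it --- this is precisely where the cited paper introduces genuinely new machinery (e.g.\ derandomized exponential-shift clustering and blurry ball growing) rather than a volume-for-vertex-count substitution in the bit-by-bit scheme.
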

Variants of this clustering result can be used for the design of deterministic parallel algorithms for the approximate shortest path problem or various metric embedding problems \cite{rozhon_grunau_haeupler_zuzic_li2022deterministic_sssp,elkin_haeupler_rozhon_grunau2022Clusterings_LSST}.

\subsection{Massively Parallel Computing (MPC)}
\label{subsec:mpc}

The final parallel model we discuss is the model of Massively parallel computing (MPC) \cite{karloff2010mpc}, the theoretical model behind the popular programming framework MapReduce and its variants. While the previous PRAM model focuses on the total work done by processors and ignores the cost of communication, the MPC model ignores the work done by processors and focuses on the communication cost. 

Concretely, let us explain the so-called low-memory regime of MPC: At the beginning of any graph algorithm, the edges of the input graph are split into many machines, each capable of storing $O(n^\eps)$ bits in its memory. 
In one round, each machine can perform an arbitrary computation on its edges, and then it sends arbitrary information to any of its neighbors (the total information sent and received per machine is $O(n^\eps)$). We measure the number of rounds until the solution is computed. 

\citet*{chang2019coloring_in_MPC} constructed a randomized massively parallel algorithm for $\Delta+1$-coloring that works in $O(\log\log\log n)$ rounds by simulating the fastest randomized local algorithm for coloring of round complexity $\poly\log\log n$ (see \cref{chap:3_concrete_results}) with exponential speedup. 
This result is complemented by the work of \citet*{ghaffari2019conditional_hardness_in_mpc} that showed that this result cannot be improved, conditioned on a certain conjecture. 
Notice how understanding where the three logarithms are coming from requires an extensive understanding of coloring with local algorithms: The way we arrive at this complexity is that, first, starting with the trivial constant-round sequential local algorithm for coloring, we turn it into a deterministic distributed $\poly\log n$-round local algorithm using the general translation of \cref{thm:sequential_vs_distributed_complexity}. Next, this algorithm can be turned into an exponentially faster randomized algorithm using the shattering technique \cite{chang_li_pettie2018optimal_coloring}. Finally, this randomized algorithm is simulated, again with exponential speedup, in the massively parallel model. 

There is also a line of work aiming to extend the classification of local problems to the massively parallel algorithms model, though the complexity landscape there is currently much less understood \cite{balliu2022exponential_speedup_mpc}.

\subsection{Descriptive Combinatorics}
\label{sec:4applications_descriptive_combinatorics}

Descriptive combinatorics  \cite{laczkovich1990circle_squaring, marks_unger2017borel_circle_squaring,grabowski_mathe_pikhurko2017circle_squaring,doughertyforeman,marks2016baire,gaboriau,kechris_solecki_todorcevic1999descriptive_combinatorics,marks2016determinacy,millerreducibility,conley_grebik_pikhurko2020divisibility_of_spheres,csoka_grabowski_mathe_pikhurko_tyros2022borel_lll,bernshteyn2020LLL} is an area at the intersection of combinatorics, measure theory, and set theory (see e.g. the surveys \cite{kechris_marks2016descriptive_comb_survey,pikhurko2021descriptive_comb_survey,bernshteyn2022descriptive_vs_distributed_survey}). It studies graphs that arise when one manipulates uncountably-infinitely large mathematical objects equipped with measure. The main object of interest is a \emph{measurable graph}. Instead of a formal definition, let us discuss a particular example. 

\paragraph{An example of a measurable graph}
Consider a circle, i.e., a set of points of distance $1$ from the origin in the plane $\R^2$. A rotation of this circle by $1$ radian counterclockwise induces a graph $G_0$: We can draw an oriented edge from each point $v$ on the circle to the point $v'$ such that the rotation maps $v$ to $v'$. 
This graph has uncountably many connected components, each of which is a doubly-infinite oriented path: this is because $1$ is an irrational multiple of $2\pi$ and thus if we start ``jumping'' from $v$ around the cycle with jumps of length $1$, we never return to $v$. 
Moreover, the standard Lebesgue measure on the circle is telling us which subsets of vertices of $G_0$ we are ``allowed to talk about''. The graph $G_0$, together with the measure on top of it, is an example of a measurable graph. 

Given a local problem $\Pi$, we can now ask whether it can be solved on a particular measurable graph. For example, consider the following problem: Color each vertex of $G_0$ with one of $c$  colors such that the coloring is proper and each set of vertices of the same color is measurable. It is a folklore fact that this type of coloring is not possible with $c = 2$ colors, but it is possible with $c = 3$ colors (this is a special case of the following \cref{thm:cycle}).

\paragraph{Formal connection between the two worlds}
% on some measurable graph $G$ now means that we label the vertices of $G$ with labels from $\Pi$. On the concrete example of graph $G_0$, we can e.g. say that `` On one hand, we want all local constraints to be satisfied everywhere, up to a ``small'' set. 
% On the other hand, the set of vertices labeled with the same label should be measurable. Specifying what a ``small'' set is formally leads to several possible definitions of what a solution is, like measurable solutions (solution is correct everywhere, up to a set of measure $0$), Borel measurable solutions, or Baire measurable solutions, all of which are studied in descriptive combinatorics.  %? I.e., the set of vertices with the same label should be in the family of subsets we are allowed to talk about. 
There is a close connection between local algorithms and descriptive combinatorics.  Although this was to some extent understood earlier (see \citet*{elek2010borel}, \citet*[Chapter 23.3]{lovasz2012large}), it was only a breakthrough paper of \citet*{bernshteyn2020LLL} that first realized the full power of the connection. Bernshteyn was able to prove general theorems of the type ``if a local problem $\Pi$ can be solved with a local algorithm of round complexity $O(\log^* n)$, then any measurable graph admits a Borel solution to $\Pi$''.

As we did not define the terms ``measurable graph'' or ``Borel solution'' precisely, we will next only state perhaps the simplest concrete example manifesting the connection, and we point an interested reader to the recent survey of \citet{bernshteyn2022descriptive_vs_distributed_survey} that covers this very exciting and recent topic in depth. 

%, here we will only mention a few known results. 

% But what the author finds the most exciting is the fact that the two worlds of local algorithms and descriptive combinatorics seems to be tightly connected not only in terms of techniques, but, more generally, in terms of the local complexity classes in the spirit of results 

% For the class of bounded degree graphs, we understand that local problems solvable in $O(\log^* n)$ rounds admit Borel solutions \cite{bernshteyn2020LLL}. Problems that one can formulate as an instance of the local lemma (i.e., problems with $o(\log n)$ randomized complexity) admit measurable and Baire-measurable solutions \cite{bernshteyn2020LLL}. 

% Sometimes, we understand that a class of local problems defined using the language of descriptive combinatorics exactly coincides with a class of problems defined using the language of local algorithms: On Cayley graphs, problems solvable in $O(\log^* n)$ rounds are exactly those allowing so-called continuous solutions \cite{bernshteyn2021localcont,gao_jackson_krohne_seward2018continuous_grids}. On regular trees, local problems with complexity $O(\log n)$ are exactly those allowing Baire-measurable solutions \cite{brandt_chang_grebik_grunau_rozhon_vidnyanszky2021trees}. Finally, and this is perhaps the most easily understandable example, local problems solvable on the graph $G_0$ defined above are exactly those for which we have an $o(n)$-round local algorithm on oriented paths \cite[Section 2]{grebik_rozhon2021paths}.  

%Let us give 

\begin{theorem}[\cite{grebik_rozhon2021paths}]
    \label{thm:cycle}
    Let $\Pi$ be any local problem (without inputs). Then, the following two statements are equivalent: 
    \begin{enumerate}
        \item There exists an $o(n)$-round local algorithm solving $\Pi$ on sufficiently large oriented cycles. 
        \item One can label each vertex of $G_0$ with a label of $\Pi$ such that all vertices satisfy the constraints of $\Pi$, and for each label $\ell$, the set of vertices with that label is Lebesgue-measurable\footnote{One can replace ``Lebesgue measurable'' with ``a union of finitely many intervals'' and the theorem still holds. }. 
    \end{enumerate}
\end{theorem}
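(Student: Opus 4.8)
The plan is to prove the two implications separately, first reducing, via the classification of local problems on paths (\cref{thm:classification_paths}) --- or rather its routine analogue for oriented cycles --- the statement ``there is an $o(n)$-round algorithm on large oriented cycles'' to the more rigid statement ``there is an $O(\log^* n)$-round algorithm on large oriented cycles''. Indeed, on oriented cycles the only possible complexities are $O(1)$, $\Theta(\log^* n)$ and $\Theta(n)$, so every $o(n)$-round algorithm can be sped up to $O(\log^* n)$ rounds. Under this reduction the theorem becomes an instance of the general principle ``$O(\log^* n)$ distributed $=$ measurable'' for graphs generated by a free measurable $\Z$-action --- here the irrational rotation $T$ of the circle that generates $G_0$ --- and the task is to verify this principle concretely for $G_0$.

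\textbf{From (1) to (2).} Suppose $\Pi$ has an $O(\log^* n)$-round algorithm $\fA$ on large oriented cycles. By the proof of \cref{thm:deterministic_speedup}, $\fA$ can be brought into the following normal form: there is a constant $k$ and a rule $g$ of radius $O(1)$ such that, given \emph{any} proper coloring of the power graph $G^k$ with a constant number of colors (used as ``fake identifiers''), applying $g$ at every vertex solves $\Pi$. Now $G_0^k$ is again a bounded-degree measurable graph generated by $T$, and such graphs admit a measurable proper coloring with a constant number of colors; on the circle this coloring can even be taken to be a finite union of arcs (a standard marker / first-return construction, using that none of $1,2,\dots,k$ is an integer multiple of $2\pi$). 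Applying $g$ to this coloring produces a labeling $\phi$ of $V(G_0)$, and each color class of $\phi$ is the preimage of one of finitely many bounded patterns under a measurable bounded-radius map, hence measurable (and a finite union of arcs in the stronger version mentioned in the footnote). Finally $\phi$ solves $\Pi$: for any $v\in V(G_0)$ the ball $B(v,r)$ together with $\phi$ and the underlying coloring is isomorphic, as a labelled neighbourhood, to one occurring inside a sufficiently long oriented cycle equipped with a valid proper $G^k$-coloring, and since $g$ solves $\Pi$ there, $B(v,r)\in\fP$. This is the same fooling argument as in \cref{thm:bad_id_into_good_id}.

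\textbf{From (2) to (1).} We argue the contrapositive. If $\Pi$ has no $o(n)$-round algorithm on large oriented cycles, then by the cycle classification its complexity is $\Theta(n)$, and the structural content of being in the $\Theta(n)$ regime (rather than $O(\log^* n)$) is that there is no \emph{flexible} bi-infinite valid configuration: writing $\Pi$ through its finite set of allowed windows on oriented paths and forming the associated window-compatibility automaton, a $\Theta(n)$ problem admits no valid labeling of $\Z$ in which some window recurs at two distinct positions --- for such a recurrence, together with an $O(\log^* n)$-round ``locate yourself'' preprocessing, would solve $\Pi$ on all sufficiently long cycles. Now assume toward a contradiction that $G_0$ carries a measurable solution $\phi$. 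Restricting $\phi$ to a single $T$-orbit $\{T^m x\}_{m\in\Z}$ gives, for a.e.\ $x$, a valid labeling of the oriented line $\Z$. Consider the measurable window map $x\mapsto(\phi(T^{-r}x),\dots,\phi(T^{r}x))$; its range is finite, so some value $w$ is attained on a set of positive Lebesgue measure, and by the pointwise ergodic theorem for the (ergodic) irrational rotation, $w$ occurs along almost every orbit with positive frequency, in particular at two distinct positions. Thus a typical orbit yields a valid $\Z$-labeling with a repeated window, i.e.\ a flexible configuration, contradicting the previous paragraph. Hence $G_0$ admits no measurable solution.

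\textbf{Main obstacle.} The delicate point is the exact equivalence, used in both directions, between the algorithmic notion ``solvable in $o(n)$ rounds on all large oriented cycles'' and the dynamical/combinatorial notion ``a flexible bi-infinite valid configuration exists'' --- this is precisely the content of the cycle version of \cref{thm:classification_paths}, and one must check that the repeated-window configuration extracted from the ergodic argument is flexible enough to reconstruct an algorithm working for \emph{every} sufficiently large cycle length, not merely for a periodic subsequence of lengths. The remaining ingredients --- existence of measurable (indeed finite-union-of-arcs) proper colorings of bounded powers of $G_0$, and measurability of preimages of bounded patterns under bounded-radius rules --- are routine, but should be stated carefully, especially for the ``union of finitely many intervals'' strengthening.
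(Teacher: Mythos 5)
The paper does not prove \cref{thm:cycle} (it only cites \cite{grebik_rozhon2021paths}), so I am judging your argument on its own terms. Your direction $(1)\Rightarrow(2)$ is sound and follows the standard route: upgrade to $O(\log^* n)$ via the cycle classification, normalize to a constant-radius rule fed by a proper coloring of a bounded power graph, and construct that coloring measurably (indeed as a finite union of arcs) on $G_0$; the fooling argument then transfers correctness. No complaints there.

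The direction $(2)\Rightarrow(1)$, however, rests on a false intermediate claim. You assert that a $\Theta(n)$ problem ``admits no valid labeling of $\Z$ in which some window recurs at two distinct positions,'' because such a recurrence would yield an $O(\log^* n)$ algorithm on all sufficiently long cycles. A single recurrence of a window $w$ at gap $\ell_0$ only produces closed walks of length $\ell_0$ through $w$ in the compatibility automaton, hence an algorithm for cycles of length divisible by $\ell_0$ (up to boundary adjustments), not for all large lengths. Proper $2$-coloring is a concrete counterexample: it is $\Theta(n)$ on cycles, yet its valid $\Z$-labelings exist and every window recurs (at every even gap). Consequently your ergodic step, which extracts only ``$w$ occurs at two distinct positions along a.e.\ orbit,'' cannot even rule out a measurable $2$-coloring of $G_0$ --- the paper's own flagship example of a problem with no measurable solution. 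What you actually need is that the set of return times $R=\{m\ge 1:\mu(A\cap T^{-m}A)>0\}$ of the positive-measure window-preimage $A$ contains two \emph{coprime} integers (equivalently $\gcd R=1$), so that the numerical-semigroup argument gives closed walks through $w$ of every sufficiently large length, i.e.\ a genuinely flexible state. This is true for the irrational rotation but requires a dedicated argument --- e.g.\ pass to a Lebesgue density point of $A$ to reduce to the return times of a short arc, which by the continued-fraction expansion of $1/(2\pi)$ contain consecutive convergent denominators $q_k,q_{k+1}$, and these are coprime. You correctly flagged this as the ``main obstacle,'' but flagging it is not the same as closing it: as written, the contrapositive argument does not go through, and supplying the coprime-return-times lemma is the actual mathematical content of this direction.
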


By now, we know of several more results of this type, both for general bounded-degree graphs \cite{bernshteyn2020LLL,bernshteyn2021localcont} and for special cases like trees \cite{brandt_chang_grebik_grunau_rozhon_vidnyanszky2021homomorphism_graphs} and grids \cite{gao_jackson_krohne_seward2018continuous_grids,grebik_rozhon2023grids}. 

\paragraph{Transfer of techniques}
Many recent results in the area of descriptive combinatorics are adopting techniques from local algorithms that are, by now, familiar to readers of this text: network decomposition \cite{bernshteyn_yu2023polynomial_growth}, Lovász local lemma \cite{bernshteyn2020LLL,bernshteyn2021localcont,csoka_grabowski_mathe_pikhurko_tyros2022borel_lll,bernshteyn_weilacher2023finite_asymptotic_separation_index,bernshteyn_yu2023polynomial_growth}, derandomizations \cite{bernshteyn2021localcont,grebik_rozhon2023grids}, ID graphs \cite{brandt_chang_grebik_grunau_rozhon_vidnyanszky2021trees,brandt_chang_grebik_grunau_rozhon_vidnyanszky2021homomorphism_graphs}, and others. 

This connection also led to some progress in the area of local algorithms.  
For example, \citet*{bernshteyn2022vizing} used ideas from a measurable Vizing theorem of \citet*{grebik_pikhurko2020measurable_vizing} to construct a fast local algorithm for $(\Delta+1)$-edge coloring problem. 
\citet*{brandt_chang_grebik_grunau_rozhon_vidnyanszky2021trees} adapted a lower bound of \citet*{marks2016determinacy} to give a different (and similarly simple) lower bound for sinkless orientation from \cref{thm:sinkless_orientation_lb}. % and leads to the currently only known example of a problem on trees with an unknown round-elimination-based proof. 

\subsection{Other Models}
\label{sec:4other_things}

There are many more models of local/distributed/parallel/sublinear computation that are, in one way or another, related to local algorithms. 
In many of these models, it not only makes sense to try to solve concrete problems but often, one can also hope that the theory of local algorithms, such as the classification of the local problems on bounded-degree graphs, can be extended similarly to, e.g., \cref{thm:classification_volume}. 

\paragraph{Uniform algorithms}
One favorite model of the author is the model of \emph{uniform} local complexity. In this model, a randomized local algorithm does not know the size of the input graph, $n$. It simply looks at larger and larger neighborhoods of a given vertex, until it decides that it has seen enough to compute the output label at the node. This model was independently studied by local algorithms community \cite{Korman_Sereni_Viennot2012Pruning_algorithms_+_uniform_coloring} and community of probabilists \cite{holroyd_schramm_wilson2017,holroyd2017one_dependent_coloring,Spinka2020finitely_dependent_are_finitary} where it is known as \emph{finitary factors of i.i.d}. 

In the $o(\log n)$-local-complexity regime, uniform algorithms can be seen as a more powerful version of classical local algorithms. 
This is because there are certain local problems such that their solution requires a few nodes to see the whole graph, while most of the nodes can output their solution after seeing their $O(1)$-radius neighborhood \cite{grebik_rozhon2023grids}. 
On the other hand, local problems solvable by uniform local algorithms still often admit measurable solutions defined in descriptive combinatorics \cite{grebik_rozhon2021paths,grebik_rozhon2023grids,brandt_chang_grebik_grunau_rozhon_vidnyanszky2021trees}. Thus, uniform local algorithms can be seen as interpolating between the extremely clean picture painted by the classification of $o(\log n)$-round algorithms (\cref{thm:classification_basic}), and the much less well-understood complexity classes coming from descriptive combinatorics.  

The connection between classical local algorithms and uniform ones would be much cleaner if the following problem was resolved (see \cite{brandt_chang_grebik_grunau_rozhon_vidnyanszky2021trees}):

\begin{problem}
\label{prob:uniform_lll}
    Is there a $C$ and a uniform randomized local algorithm on bounded degree graphs for $C$-relaxed Lovász local lemma such that after $\poly\log\log(1/\eps)$ rounds, at least $1-\eps$ fraction of nodes know the solution? 
\end{problem}

\paragraph{Other models}
There are many more models of locality studied in the literature. Let us list a sample of them. 
The list includes the averaged local complexity \cite{feuilloley2017node_avg_complexity_of_col_is_const,barenboim2018avg_complexity,chatterjee2020sleeping_MIS,balliu2023node_averaged_complexity} (closely connected to uniform algorithms), online local model \cite{suomela2023online_local_model,chang2023online_local_model,akbari2024online_local_quantum}, dynamic local model \cite{suomela2023online_local_model,akbari2024online_local_quantum}, local mending model \cite{melnyk2023mending}, supported local model \cite{schmid_suomela2013supported_model,korhonen_paz_rybicki_schmid_suomela2021supported_model}, quantum local model \cite{gavoille2009quantum_physical_locality,gall2018quantum,coiteux2023no_quantum_advantage,akbari2024online_local_quantum}, awake (energy) complexity \cite{chang2018energy_complexity_exp_separation,chang2020energy_complexity_bfs,chatterjee2020sleeping_MIS,ghaffari_portmann2022sleeping_mis,ghaffari_portmann2023energy}, local certification \cite{goeoes11,fraigniaud2012impact,fraigniaud2013deciding_without_ids,Feuilloley2019local_certification_survey}, finitely dependent colorings \cite{holroydliggett2015finitely_dependent_coloring,holroyd2023symmetrization}, or computable combinatorics \cite{qian_weilacher2022descriptive}.

\subsection*{Final Remarks}

This text aims to present the area of local algorithms in a way that highlights what the author sees as its key aspect: Unlike typical subareas of computer science and discrete mathematics that are usually unified by a set of useful techniques and important results, the area of local algorithms extends beyond this norm. 
We have a clean theory of the model of local algorithms that leads to a clear complexity-theoretical picture. We also understand that local algorithms have applications to a number of other models studied by the broader algorithmic community.  
In this sense, local complexity is similar to much more established fields like communication complexity. 
The author believes that the theory has the potential for diverse extensions and applications and encourages the reader to identify and explore the next one. 

% new opportunities in this evolving field.

%Second, the final chapter aims to underscore that an understanding of local complexity is beneficial in many related areas, especially for various models of distributed, parallel, and sublinear algorithms. If we ever build a coherent theory behind these algorithms, local complexity is probably going to be an important part of it. The applications of local complexity also extend further to fields like descriptive combinatorics. 

% \bibliographystyle{plainnat}
% \bibliography{ref}
\printbibliography
\end{document}